\numberwithin{equation}{section}
\theoremstyle{plain}
\newtheorem{definition}{Definition}[section]
\newtheorem{Theorem}[definition]{Theorem}
\newtheorem{Proposition}[definition]{Proposition}
\newtheorem{Lemma}[definition]{Lemma}
\newtheorem{Corollary}[definition]{Corollary}
\theoremstyle{remark}
\newtheorem{remark}[definition]{Remark}
\newtheorem{exm}[definition]{Example}
\newcommand{\R}{\mathbb R}
\newcommand{\N}{\mathbb N}
\newcommand{\Vol}{\mathrm{Vol}}
\newcommand{\eps}{\varepsilon}
\newcommand{\Ric}{\mathrm{Ric}}
\newcommand{\comp}{\Subset}
\newcommand{\X}{\mathfrak{X}}
\newcommand{\sse}{\subseteq}
\newcommand{\edge}{\mathrm{edge}}
\newcommand{\D}{\mathcal{D}}
\newcommand{\diag}{\mathrm{diag}}
\newcommand{\Dpk}{\mathcal{D}'{}^{(k)}}
\newcommand{\Dpo}{\mathcal{D}'{}^{(1)}}
\newcommand{\trsm}{\mathcal{T}^r_s(M)}
\newcommand{\ltl}{L^2_{\mathrm{loc}}}
\newcommand{\lpl}{L^p_{\mathrm{loc}}}
\newcommand{\Cloc}{C^0_{\mathrm{loc}}}
\newcommand{\Celoc}{C^1_{\mathrm{loc}}}
\newcommand{\Om}{\Omega}
\newcommand{\pt}{\partial}
\newcommand{\F}{\mathcal F}
\newcommand{\G}{\mathcal G}
\newcommand{\HH}{\mathcal H}
\newcommand{\enumlabelformat}{\roman}
\newcommand{\enumlabelfont}[1]{#1}
\newlength{\thelabelsep}
\setlist{labelsep=\thelabelsep}
\setlist[enumerate,1]{font=\enumlabelfont,label=(\enumlabelformat*),leftmargin=2.5em}
\setlist[itemize]{leftmargin=2.5em,label=$-$}
\newcounter{inlineenum}
\renewcommand{\theinlineenum}{\enumlabelformat{inlineenum}}
\let\epsilon\varepsilon
\let\phi\varphi
\title{The Hawking--Penrose singularity theorem for $C^1$-Lorentzian metrics}
\author{Michael Kunzinger\footnote{University of Vienna, Faculty of Mathematics, Oskar-Morgenstern-Platz 1, A-1090 Wien, Austria,
michael.kunzinger@univie.ac.at, argam.ohanyan@univie.ac.at, benedict.schinnerl@univie.ac.at, roland.stein\-bauer@univie.ac.at}, \\
Argam Ohanyan${}^*$,\\ 
Benedict Schinnerl${}^*$,\\
Roland Steinbauer${}^*$,\\ 
}
\begin{document}

\date{\today}


\maketitle

\begin{abstract}
We extend both the Hawking-Penrose Theorem and its generalisation due to Galloway and Senovilla to Lorentzian
metrics of regularity $C^1$. For metrics of such low regularity, two main obstacles have to be addressed. On the one
hand, the Ricci tensor now is distributional, and on the other hand, unique solvability of the geodesic equation is lost.
To deal with the first issue in a consistent way, we develop a theory of tensor distributions of finite order, which 
also provides a framework for the recent proofs of the theorems of Hawking and of Penrose for $C^1$-metrics \cite{G20}.
For the second issue, we study geodesic branching and add a further alternative to causal geodesic incompleteness to 
the theorem, namely a condition of maximal causal non-branching. The genericity condition is re-cast in a distributional form that
applies to the current reduced regularity while still being fully compatible with the smooth and $C^{1,1}$-settings.
In addition, we develop refinements of the comparison techniques used in the proof of the $C^{1,1}$-version of the theorem \cite{GGKS}. The necessary results from low regularity causality theory are collected in an appendix.

\vskip 1em

\noindent
\emph{Keywords:} Singularity theorems, low regularity, regularisation, causality theory
\medskip

\noindent
\emph{MSC2010:} 83C75, 
        53B30 

\end{abstract}

\section{Introduction}\label{sec:intro}

The classical singularity theorems\footnote{See \cite[Ch.\ 8]{HE}, \cite[Ch.\ 9]{Krie}, \cite{Seno1,SenGar,Seno2} for extensive treatments.} of General Relativity (GR) collect sufficient and physically reasonable conditions that lead to causal geodesic incompleteness of spacetime, hence to the occurrence of a singularity in the spirit of Roger Penrose's Nobel Prize-winning approach \cite{P65}. Being rigorous statements in pure Lorentzian geometry, they were first formulated within the smooth category. As they form a body of essential results in GR, the quest for low regularity versions of the theorems is eminent and was already explicitly discussed in \cite[Sec.\ 8.4]{HE}. In fact, the (smooth) theorems can be read as predicting a mere drop of the differentiability of the metric below $C^2$, rather than incompleteness. 

\medskip

However, only with the rather recent advent of systematic studies in low regularity GR \cite{CGKM:18} and, in particular, causality theory \cite{CG,M,S14,KSSV}, a rigorous extension of the singularity theorems for metrics below the $C^2$-class became feasible, cf.\ \cite[Sec.\ 6.2]{Seno1}. This applies first of all to metrics of regularity $C^{1,1}$, which model situations where the matter variables possess finite jumps. Indeed the Hawking, the Penrose, and, finally, the Hawking-Penrose theorem were proven in this regularity \cite{hawkingc11,penrosec11,GGKS}. These results rule out that a smooth spacetime that is incomplete by the classical theorems, can be extended to a complete $C^{1,1}$-spacetime. Also, they imply that the spacetime either is incomplete or the differentiability of the metric is below $C^{1,1}$, and consequently the curvature becomes unbounded.

Technically, in $C^{1,1}$ the exponential map is still available \cite{M,KSS} and the curvature is still locally bounded, which allows for a natural extension of the energy and the genericity conditions. However, the curvature tensor is only defined almost everywhere, which forbids the use of Jacobi fields and conjugate points, both essential tools in the classical proofs. Instead one uses regularisation techniques  which allow one to derive weakened versions of the energy conditions for approximating smooth metrics with controlled causality. Using careful comparison techniques for the (matrix) Riccati equation, it is then possible to show that these (still) lead to the occurrence of conjugate or focal points along causal geodesics for the approximating metrics. This in turn forces the geodesics of the $C^{1,1}$-metric to stop maximising. 
Similarly, a natural extension of the initial and the energy conditions leads to the formation of a trapped set. This, together with an extension of the causal parts of the classical proofs, allows one to finally derive the results.

\medskip

The next step in lowering the regularity assumptions was undertaken by Graf \cite{G20}, who extended the Hawking and the Penrose theorems to $C^1$-metrics, with the Gannon-Lee theorem following in \cite{SS21}.  $C^1$-regularity is at the moment the lowest possible class where classical singularity theorems have been established, and in this paper we complete this effort by proving the most refined of these statements, namely the Hawking-Penrose theorem.  

\medskip

In this regularity class one faces the following added severe complications arising from the fact that the Levi-Civita connection is only continuous:
\begin{enumerate}
    \item[(a)] The curvature is no longer locally bounded, but is merely a distribution \emph{of order one}.
    \item[(b)] The initial value problem for the geodesic equation is solvable, but \emph{not uniquely} so.
    \item [(c)] The exponential map is no longer defined.
\end{enumerate}

The first item is especially relevant when formulating the energy conditions. While the strong energy condition can be extended in a straightforward manner, a formulation of the null energy condition is more subtle, cf. \cite[Sec.\ 5]{G20}, see also Section \ref{sec:EC}, below. The genericity condition needed for the Hawking-Penrose theorem is still more delicate. Actually it turns out to be necessary to apply the Ricci tensor to vector fields constructed via parallel transport, which means that they are merely of $C^1$-regularity, see Definition \ref{definition: strongdistributionalgenericity}. This can only be done since the curvature is a distribution of order one, and we present the required distributional setting in Section \ref{sec:dc}, below. Whereas in the $C^{1,1}$-context the use of distributional methods could be avoided, it becomes essential
in the present work, and at the same time has the benefit of providing a global framework also for the earlier results by Graf \cite{G20} on which we build.

Then, while we can use the weakened versions of the strong and null energy conditions for approximating smooth metrics derived in \cite{G20} (employing a refined version of the Friedrichs Lemma), we have to derive from the distributional genericity condition an appropriate weakened version for smooth approximations that allows us to employ an extension of the Riccati comparison techniques developed in \cite{GGKS}, see Lemmas  \ref{Lemma: genericityfriedrichs}, and \ref{Lemma: tidalforcematrixestimateapprox}.
The key technique of regularisation by smooth metrics with adapted causality as put forward in \cite{CG} is recalled in Section \ref{sec:regularisation}.

The first issue connected to item (b) is that it necessitates a decision on how to extend the notion of geodesic completeness, and following \cite{G20} we say that a spacetime is complete if  \emph{every} inextendible solution to the geodesic equation is complete (rather than merely demanding only one complete solution for any choice of initial data).

Next, item (b) together with item (c) forbids the use of an essential argument in the context of approximating maximising causal geodesics by maximising causal geodesics of the approximating metrics.  This issue has been solved in \cite[Sec.\ 2]{G20} in the \emph{globally hyperbolic} case. In the context of the present work, however, we have to establish more general results, see Section \ref{sec:branching}. In fact, we have to introduce an additional assumption, namely a non-branching condition for maximising causal geodesics (Definition \ref{def:branching}), to derive the corresponding approximation results in Proposition \ref{prop:approxmcnb}. Here it is also essential to apply a classical ODE result (Proposition \ref{Proposition: HartmanODEversion}), which we reformulate for our purpose in Corollary \ref{Corollary: Hartmangeodesicversion}. 
Moreover, since the interrelation between causal geodesics (i.e., solutions of the geodesic equation) and maximising curves becomes more subtle below $C^{1,1}$ \cite{HW51,SS18}, we rely in particular on the fact that maximising causal curves are indeed (unbroken) geodesics and hence $C^2$-curves \cite{LLS20,SS21}.

The new non-branching assumption is well motivated by similar conditions used in metric geometry, where due to the lack of a differentiable structure the geodesic equation is not available. Also a null version of the condition was essential in the proof of the $C^1$-Gannon-Lee theorem, see \cite[Sec.\ 3]{SS21}. Moreover it adds a novel facet to the interpretation of the $C^1$-version of the theorem: It predicts either geodesic incompleteness or branching
of maximising causal geodesics, with both alternatives physically signifying a catastrophic event for the observer corresponding to the geodesic. Of course, there is the alternative that the regularity of the metric drops below $C^1$, which renders the curvature a distribution of higher order. Again, the result also forbids the extension of the spacetime to a complete one of regularity $C^1$ without (maximal causal) geodesic branching. This once more complements the recent $C^0$-inextendibility results of \cite{Sbierski1,Sbierski2}.

\medskip

Next we briefly discuss the further prospects of low-regularity singularity theorems. Concerning the causality part of the results it is expected that they extend to $C^{0,1}$-metrics\footnote{A class for which Hawking and Ellis \cite[p.\ 268, 287]{HE} still speculate the singularity theorems to hold.}, while some features of causality theory fail below this class \cite{CG,GKS20}. Nevertheless, the causal core of some of the singularity theorems has been established in the very general setting of closed cone structures, see \cite[Sec.\ 2.15]{Min_closed_cone}.

On the analytic side, already in $C^{0,1}$ one faces the problem that the right hand side of the geodesic equation is merely locally bounded and one would have to resort to non-classical solution concepts. Also another analytical technique at the core of the arguments, i.e.\ the Friedrichs Lemma which is used to go from the distributional energy conditions of the singular metric to useful surrogate conditions for the smooth approximations, appears to be sharp, see \cite[Lem.\ 4.8]{G20} for the currently most advanced version. This is actually a long way from the largest possible class where the curvature can be (stably) defined in an analytical (distributional) way, i.e.\ $H^1\cap L^\infty$ locally \cite{GT87,LeFMar07,SV09}. However, the quest is there to at least go into the direction of regularity classes more closely linked to the classical existence results for the field equations, $H^s$ locally for $s>5/2$, or to current formulations of cosmic censorship, i.e.\ the connection lying locally in $L^2$.

Very recently, also synthetic formulations of the singularity theorems have appeared: 
In \cite{AGKS}, Lorentzian length spaces that are warped products are studied and 
a Hawking singularity theorem is derived from suitable sectional curvature bounds
(implying Ricci curvature bounds in such geometries), based on triangle comparison.
Moreover, Cavalletti and Mondino prove a version of Hawking's singularity theorem 
in Lorentzian metric measure spaces, where Ricci curvature bounds are implemented
using methods from optimal transport \cite{CV20}. However, it remains unclear to date how these results precisely relate to the analytical approach to low regularity  pursued in this work. 

\medskip

The Hawking-Penrose theorem classically comes in a causal version \cite[p.\ 538, Thm.]{HP}, which asserts the incompatibility of the following three conditions for a $C^2$-spacetime $(M,g)$: chronology, the fact that every inextendible causal geodesic stops maximising, and the existence of a (future or past) trapped set.
The analytic main result is then a corollary \cite[Sec.\ 3, Cor.]{HP} which collects sufficient conditions (energy and genericity conditions, and initial conditions) to derive the above incompatible items under the assumption of causal geodesic completeness. We will generally follow this path, but prove a more general version of the theorem due to Galloway and Senovilla \cite{GS}, which adds the existence of a trapped submanifold of arbitrary codimension to the list of initial conditions.

Our version of the causal result is Theorem \ref{thm: HPcausalityversion} and, for the sake of completeness and for the convenience of the reader we collect all results from $C^1$-causality theory\footnote{It should be mentioned that these results all follow from the more general approaches of \cite{CG,S14,Min_closed_cone}.} needed here (and elsewhere) in Appendix \ref{app:C11causality}. The main result is then Theorem \ref{Theorem: HPC1}. The results needed to infer from its analytic conditions (the distributional energy and genericity conditions) and the non-branching assumptions the inexistence of lines are proven in Section \ref{sec:max_geods}. The existence of a trapped set is derived from the various sets of initial conditions (and the energy conditions) in Section \ref{sec:trapped}. Finally, some technical results on extending vector fields in $C^1$-spacetimes are collected in Appendix \ref{app:b}. They allow us to put to use the distributional genericity condition in the context of regularisations.

\medskip

To conclude this introduction, we fix the notations and conventions used throughout the paper.
By a manifold $M$ we will mean a smooth, connected, second countable Hausdorff manifold of dimension $n$ (with $n\geq 3$).
When such an $M$ is endowed with a Lorentzian metric $g$ of regularity $C^1$ of signature $(-,+\dots,+)$ 
and is time-oriented via a smooth timelike vector field we call it a $C^1$-spacetime (hence lowering the regularity
will always apply to the spacetime metric only, not to the underlying manifold).
The Levi-Civita connection of a smooth or $C^1$-spacetime will be denoted by $\nabla$ (for its distributional definition 
in the $C^1$-case, see the next section).
We will always assume that $M$ is also endowed with a smooth (complete, background) Riemannian metric $h$ with associated Riemannian distance function $d_h$ and norm $\|\ \|_h$ which we use to estimate tensor fields. Since all such estimates will be on compact sets only, they are independent of the choice of $h$.

A curve $\gamma : I\to M$ is called timelike (causal, null, future or past 
directed) if it is locally Lipschitz and $\dot{\gamma}(t)$  
is timelike (causal, null, future or past directed) almost everywhere.
Concerning causality theory we use standard notation, cf.\ e.g., \cite{ON83,MinguzziLivingReview}. 
Thus $p\ll q$ (resp.\ $p\leq q$) means that there exists a future directed timelike (resp.\ causal)
curve from $p$ to $q$, $I^+(A):=\{q\in M:\, p\ll q\ 
\mathrm{for\,some}\,p\in A\}$ and $J^+(A):=\{q\in M:\, p\leq q\ 
\mathrm{for\,some}\,p\in A\}$. These definitions do not change if one defines the causality relations
via piecewise smooth curves \cite{M,KSSV}. The Lorentzian distance (or time-separation function) associated to
a Lorentzian metric $g$ will be denoted by $d_g$. 
A $C^{1}$-spacetime $(M,g)$ is globally hyperbolic if it is causal (i.e., contains no closed causal curves) and 
$J(p,q):=J^+(p)\cap J^-(q)$ is compact for all $p,q\in M$. A Cauchy hypersurface in $(M,g)$ is a closed acausal set that is met by each inextendible causal curve.

For the Riemann curvature tensor we use the convention $R(X,Y)Z=[\nabla_X,\nabla_Y]Z-\nabla_{[X,Y]}Z$ and the Ricci tensor
is given by 
$\Ric(X,Y)=\sum_{i=1}^n\langle E_i,E_i\rangle\langle R(E_i,X)Y,E_i\rangle$ (again we refer to the next section for 
a distributional interpretation in the case of $C^1$-metrics). Here  $(E_i)_{i=1}^n$ is a 
(local) orthonormal frame field, while by $(e_i)_{i=1}^n$ we will denote frames in individual 
tangent spaces $T_pM$. For an embedded submanifold $N$ of $M$ of
codimension $m$ we define the second fundamental form by 
$\mathrm{II}(V,W)=\mbox{nor}(\nabla_VW)$ and the shape operator derived from a 
normal unit field $\nu$ by $S(X)=\nabla_X\nu$. Using \eqref{eq:ext_C0} below it follows
that for $g\in C^1$ and $V, W \in C^1$, $\mathrm{II}(V,W)$ is continuous.

For $g$ smooth (resp.\ $C^1$), $1<m<n$ and $S$ a smooth (resp.\ $C^2$) spacelike $(n-m)$-dimensional
submanifold let $e_1(q),\dots,e_{n-m}(q)$ be an orthonormal basis for $T_q S$, 
 varying  smoothly (resp.\ $C^1$) with $q$ in a neighborhood (in $S$) of $p \in S$. 
Then $H_S :=
\frac{1}{n-m}\sum_{i=1}^{n-m}\mathrm{II}(e_i,e_i)$ denotes the mean curvature
vector field of $S$, and $\mathbf{k}_S(v):=g(H_S,v)$ the convergence of $v\in TM|_S$. 
For $g$ in $C^1$, both $H_S$ and $\mathbf{k}_S$ are continuous. A closed spacelike submanifold
$S$ is called \emph{(future) trapped\/} if for any future-directed 
null vector $\nu \in T S^\perp$ the convergence $\mathbf{k}_S(\nu)$ is positive, or equivalently that the mean curvature vector field $H_S$ is past pointing 
timelike on $S$.

One final convention we shall require is that of tensor classes (cf.\ \cite[Sec.\ 4.6.3]{Krie}):
Given a causal geodesic $\gamma$  in a
$C^{1}$-spacetime $(M,g)$, we set $[\dot\gamma(t)]^\perp:=(\dot\gamma(t))^\perp/\R\dot\gamma(t)$.
For $\gamma$ null, $[\dot\gamma(t)]^\perp$ is an
$(n-2)$-dimensional subspace of the hypersurface
$(\dot\gamma(t))^\perp$. On the other hand, for $\gamma$ timelike, $[\dot\gamma(t)]^\perp$ equals $(\dot{\gamma}(t))^{\perp}$.
To obtain a unified notation, throughout the paper we will denote the dimension of
$[\dot\gamma(t)]^\perp$ by $d$, which amounts to setting $d=n-2$ in the null
case and $d=n-1$ in the timelike case. Furthermore, 
we set $[\dot\gamma]^\perp=\bigcup_t[\dot\gamma(t)]^\perp$.
Then for every normal tensor field $A$ along $\gamma$ we obtain a well-defined tensor
class $[A]$ along $\gamma$, as well as a well-defined tensor class representing the induced covariant derivative
$\nabla_{\dot\gamma}$. We then set $[\dot A]=[\nabla_{\dot\gamma}A]$.
The metric $g|_{[\dot\gamma]^\perp}$ is positive definite in both the null and the timelike case.
For smooth metrics, the curvature (or tidal force) operator is given by $[R](t):\,
[\dot\gamma(t)]^\perp\to [\dot\gamma(t)]^\perp$,
$[v]\mapsto[R(v,\dot\gamma(t))\dot\gamma(t)]$.

\section{Distributional curvature of $C^1$-spacetimes} 
\label{sec:DC}
\subsection{Curvature tensors of $C^1$-metrics}\label{sec:dc}
In what follows we discuss the general distributional framework in which to understand the curvature quantities associated to metric tensors (of arbitrary signature) of regularity below $C^2$, as is required for our further analysis. 
The general mathematical framework goes back to \cite{Mar68}, while \cite{GT87} provided the first study specific to GR. We  mainly follow \cite{LeFMar07}, cf.\ also \cite{GKOS,S08}, but put a special emphasis on metrics of regularity $C^1$ and distributions of finite order.

As in \cite[Sec.\ 3.1]{GKOS}, for $k\in \N_0\cup \{\infty\}$ we denote by $\mathrm{Vol}(M)$ the volume bundle over $M$, and by $\Gamma^k_c(M,\mathrm{Vol}(M))$ the 
space of compactly supported one-densities on $M$ (i.e., sections of $\mathrm{Vol}(M)$) that are $k$-times continuously differentiable. Then the space
of distributions of order $k$ on $M$ is the topological dual of $\Gamma^k_c(M,\mathrm{Vol}(M))$,
\[
\Dpk(M) := \Gamma^k_c(M,\mathrm{Vol}(M))'.
\]
For $k=\infty$ we omit the superscript $(k)$. Clearly there are topological embeddings $\Dpk(M) \hookrightarrow {\D'}{}^{(k+1)}(M)
\hookrightarrow \D'(M)$ for all $k$. As we shall see, all curvature quantities of $C^1$-metrics are distributional tensor fields of order $1$.
The space of distributional $(r,s)$-tensor fields of order $k$ is defined as
\begin{equation}
\Dpk\mathcal{T}^r_s(M) \equiv \Dpk(M,T^r_s M) := \Gamma^k_c(M,\mathcal{T}^s_r(M) \otimes \mathrm{Vol}(M))'. 
\end{equation}
By \cite[3.1.15]{GKOS},
\begin{equation}\label{eq:Cinf-ext}
\D'\mathcal{T}^r_s(M) \cong \D'(M) \otimes_{C^\infty(M)} \mathcal{T}^r_s(M) \cong L_{C^\infty(M)}(\Omega^1(M)^r\times \X(M)^s; \D'(M)).
\end{equation}
This algebraic isomorphism ultimately rests on the fact that the $C^\infty(M)$-module of $C^\infty$-sections $\Gamma(M,F)$ in any smooth vector bundle $F\to M$ is 
finitely generated and projective, which allows one to apply \cite[Ch.\ II \S4.2, Prop.\ 2]{Bou74}. These properties persist if we consider $\Gamma_{C^k}(M,F)$
as a $C^k(M)$-module for any finite $k$, so we also have
\begin{equation}\label{eq:Ck-ext}
\begin{split}
\Dpk\mathcal{T}^r_s(M) &\cong \Dpk(M) \otimes_{C^k(M)} (\mathcal{T}^r_s)_{C^k}(M)\\ 
&\cong L_{C^k(M)}(\Omega^1_{C^k}(M)^r\times \X_{C^k}(M)^s; \Dpk(M)).
\end{split}
\end{equation}
In other words, distributional tensor fields of order $k$ act as $C^k$-balanced multilinear maps on one forms and vector fields of regularity $C^k$ to give a scalar distribution of order $k$. For finite $k$, the multiplication $C^k(M)\times\D'^{(k)}(M)\to\D'^{(k)}(M)$ is jointly continuous (w.r.t.\ the strong topology), 
whereas for $k=\infty$ it is only hypocontinuous \cite[p.\ 362]{H66}, but still jointly sequentially continuous \cite[p.\ 233]{O86}
since $C^\infty(M)$ is Fr\'echet, hence barrelled. The corresponding continuity properties therefore also hold for the tensor operations introduced above.
The isomorphisms \eqref{eq:Cinf-ext}, \eqref{eq:Ck-ext} are not topological, but still bornological by \cite[Thm.\ 15]{N13} (which remains valid
for spaces of $C^k$-sections with $k$ finite).

Smooth, and in fact even $L^1_{\mathrm{loc}}$-tensor fields are continuously and densely embedded via
\begin{align*}
\mathcal{T}^r_s(M) &\hookrightarrow   \Dpk\mathcal{T}^r_s(M) \\
t &\mapsto [(\theta_1,\dots,\theta_r,X_1,\dots,X_s) \mapsto [\omega \mapsto\int_M t(\theta_1,\dots,\theta_r,X_1,\dots,X_s)\omega]].
\end{align*}
Here, $\omega$ is a one-density. If $M$ is orientable, densities can be canonically identified with $n$-forms. 
The fact that 
$\mathcal{T}^r_s(M)$ is dense in $\Dpk\mathcal{T}^r_s(M)$ uniquely fixes all the operations on distributional tensor fields
to be introduced below in a way compatible with smooth pseudo-Riemannian geometry.

For any $t\in \trsm$ there is a unique extension that accepts one distributional argument. E.g., if $\tilde\theta_1 \in \D'\mathcal{T}^0_1(M)$,
then since $t(\,.\,,\theta_2,\dots,X_s)\in \X(M)$ we may set
\begin{equation}\label{eq:dist_ext}
t(\tilde \theta_1,\theta_2,\dots,X_s) := \tilde\theta_1(t(\,.\,,\theta_2,\dots,X_s)) \in \D'(M),
\end{equation}
and analogously for the other slots.

\begin{definition}\label{def:dist_conn} 
A \emph{distributional connection} is a map $\nabla: \X(M)\times \X(M) \to \D'\mathcal{T}^1_0(M)$ satisfying for $X,X',Y,Y'\in \X(M)$ and $f\in C^\infty(M)$
the usual computational rules: $\nabla_{f X+X'}Y = f\nabla_XY + \nabla_{X'}Y$, $\nabla_X(Y+Y') = \nabla_X Y +\nabla_X Y'$, $\nabla_X(f Y) = X(f)Y + f\nabla_X Y$.
\end{definition}

 Using \eqref{eq:dist_ext} any distributional connection can be extended to the entire tensor algebra, i.e., to a map $\nabla: \X(M)\times\mathcal{T}^r_s(M)\to\D'\mathcal{T}^r_s(M)$, e.g.\ for $\Theta\in\Omega^1(M)$ we set
\begin{equation}\label{eq:dist_conn_ta}
    (\nabla_X\Theta)(Y):=X(\Theta(Y))-\Theta(\nabla_X Y)\quad (X,Y\in\X(M)).
\end{equation}
Let ${\mathcal G}$ be any of the spaces $C^k$ $(0\leq k)$ or $\lpl$ $(1\leq p)$, then we call a distributional connection a \emph{${\mathcal G}$-connection} if $\nabla_X Y$ is a ${\mathcal G}$-vector field for any $X,Y\in \X(M)$. A particularly important case are $\ltl$-connections since they form the largest class that allows for a stable definition of the curvature tensor in distributions, cf.\ \cite{GT87,LeFMar07,S08}.

Using a local frame one easily sees that by virtue of the respective computational rules any ${\mathcal G}$-connection can be (uniquely) extended to a map
$\nabla: \X_{\mathcal F}(M)\times \X_{\mathcal H}(M) \to \X_{\mathcal G}(M)$,
provided that $\F$ and $\mathcal{H}$ are function spaces such that $\F\cdot \G\subseteq \G$, $D\HH\subseteq \G$, and $\HH\cdot\G\subseteq \G$. Hence, in particular, each $\ltl$-connection extends in this way to a map
\begin{equation}\label{eq:ext_ltl}
    \nabla: \X_{C^0}(M)\times \X_{C^1}(M) \to \X_{\ltl}(M).
\end{equation}

Moreover, by a similar reasoning, $\G$-connections allow one to insert even less regular vector fields in the first and second slot at the price of a less regular outcome. In particular, any $\ltl$-connection can be extended to a map 
\begin{equation}\label{eq:L2loc_ext}
\nabla: \X(M) \times \X_{\ltl}(M) \to \D' \mathcal{T}^1_0(M).
\end{equation}
Using this extension, we have \cite[Def.\ 3.3]{LeFMar07}:

\begin{definition}\label{def:dist_riem} The distributional Riemann tensor of an $\ltl$-connection $\nabla$ is the map $R: \X(M)^3 \to \D'\mathcal{T}^1_0(M)$,
\[
R(X,Y,Z)(\theta) \equiv (R(X,Y)Z)(\theta) := (\nabla_X\nabla_Y Z - \nabla_Y\nabla_X Z- \nabla_{[X,Y]} Z)(\theta) 
\]
for $X,Y,Z\in \X(M)$ and $\theta\in \Om^1(M)$,
\end{definition}
If $F_i$ is a local frame in $\X(U)$ and $F^j\in \Om^1(U)$ its dual frame, then the Ricci tensor corresponding to $\nabla$ is given by
\begin{equation}\label{eq:Ric_def}
\Ric(X,Y) := (R(X,F_i)Y)(F^i) \in \D'(U) \qquad (X,Y\in \X(U)).
\end{equation}
Recall the Koszul formula for the Levi-Civita connection of a smooth metric $g$ on $M$:
\begin{equation}\label{eq:Koszul}
\begin{split}
2g(\nabla_X Y,Z) &= X(g(Y,Z)) + Y(g(Z,X)) -Z(g(X,Y)) \\
&- g(X,[Y,Z]) + g(Y,[Z,X]) + g(Z,[X,Y]) =: F(X,Y,Z)
\end{split}
\end{equation}
Focusing now on the case at hand in this work, suppose that $g$ is merely $C^1$ (still allowing arbitrary signature). Then $F(X,Y,Z)$
is a well-defined element of $C^0(M)$, and 
\begin{equation}\label{eq:nabla_flat}
\nabla^\flat_X Y := Z \mapsto \frac{1}{2} F(X,Y,Z) \in \Om^1_{C^0}(M) \subseteq \D'\mathcal{T}^0_1(M)
\end{equation}
defines the \emph{distributional Levi-Civita connection} of $g$ \cite[Def.\ 4.2]{LeFMar07}. Note that this is not (yet) a distributional connection
in the sense of Definition \ref{def:dist_conn} since it is of order $(0,1)$ instead of $(1,0)$. In addition to the standard product rules it also possesses the properties (corresponding to \cite[Thm.\ 3.11]{ON83})
\begin{equation}\label{eq:D4D5}
\begin{split}
\nabla^\flat_X Y - \nabla^\flat_Y X &= [X,Y]^\flat,\ \text{i.e.,} \ (\nabla^\flat_X Y - \nabla^\flat_Y X)(Z) = g([X,Y],Z),\\
X(g(Y,Z)) &= (\nabla^\flat_X Y)(Z) + (\nabla^\flat_X Z)(Y)
\end{split}
\end{equation}
for all $X, Y, Z\in \X(M)$. These equalities hold distributionally, hence also in $C^0$.

In order to obtain an $\ltl$-connection from $\nabla^\flat$ (which in turn will allow us to define the curvature tensors we require) we need to raise the index via $g$, setting
\begin{equation}\label{eq:raise_index}
  g(\nabla_X Y,Z) := (\nabla^\flat_X Y)(Z) \qquad (X, Y, Z\in \X(M)).
\end{equation}

This defines a continuous vector field $\nabla_X Y$, hence even a $C^0$-connection. More generally, by \cite[Sec.\ 4]{LeFMar07} this procedure yields an $\ltl$-connection even for the significantly larger Geroch-Traschen class of metrics ($g\in H^1_{\mathrm{loc}}\cap L^\infty_{\mathrm{loc}}$ and locally uniformly non-degenerate).

The Riemann tensor of $g\in C^1$ is now given by Definition \ref{def:dist_riem}. Furthermore, since in this case $\nabla$ is a ${C^0}$-connection, it follows that $(R(X,Y)Z)(\theta) \in \D'{}^{(1)}(M)$, and consequently, $R\in \Dpo\mathcal{T}^1_3(M)$.
Indeed by the same reasoning that led to \eqref{eq:ext_ltl}, \eqref{eq:L2loc_ext} we may extend the $C^0$-Levi-Civita connection of $g$ to a map
\begin{equation}\label{eq:ext_C0}
    \nabla: \X_{C^0}(M)\times \X_{C^1}(M) \to \X_{C^0}(M), \ \mbox{and}\     
    \nabla: \X_{C^1}(M)\times \X_{C^0}(M) \to \D'^{(1)}\mathcal{T}^1_0(M),
\end{equation}
respectively. Here,
the second equation shows that the term $([\nabla_X,\nabla_Y]Z)(\theta)$ yields an order one distribution,
while $(\nabla_{[X,Y]}Z)(\theta) \in C^0$ since $\nabla$ is a $C^0$-connection.

For $W,X,Y,Z \in \X(M)$ we define $R(W,X,Y,Z)\in \Dpo(M)$ by
\begin{align*}
R(W,X,Y,Z) := &X(g(W,\nabla_Y Z)) - Y(g(W,\nabla_X Z))\\ 
&- g(\nabla_X W,\nabla_Y Z) + g(\nabla_Y W,\nabla_X Z) - g( W,\nabla_{[X,Y]} Z).
\end{align*}
Using \eqref{eq:dist_ext}, \eqref{eq:L2loc_ext} and \eqref{eq:D4D5} it is straightforward to check that cf.\ \cite[Rem.\ 4.5]{LeFMar07}
\[
R(W,Z,X,Y) = g(W,R(X,Y)Z).
\]
Alternatively, one may verify this identity in local coordinates using the fact that there is a well-defined multiplication 
of distributions of first order with $C^1$-functions.

The Ricci tensor of a $C^1$ metric $g$ is given by \eqref{eq:Ric_def}, hence, in particular, is again a distributional tensor field of order $1$. Note  that by \eqref{eq:Ck-ext} this allows us to apply it to $C^1$-vector fields, which will be essential in the context of the genericity condition below. Similarly, since the Riemann tensor, again by \eqref{eq:Ck-ext}, acts on $C^1$-vector fields, we may in particular use $g$-orthonormal frames (which generically are only $C^1$) to calculate the Ricci tensor, which is also vital below. Explicitly, \eqref{eq:Ric_def} holds for $F_i$ a $g$-orthonormal frame and $F^i$ its $g$-dual. 
All these facts can alternatively be  derived directly from the extension of $\nabla$ in \eqref{eq:ext_C0}.

Again, by the same observations, we have that the scalar curvature $S$ is a distribution of order $1$ (this answers a concern from \cite[Sec.\ 3.2]{G20}) and that the standard local formulae hold in $\Dpo$:
\begin{align*}
R^m_{\,ijk} &= \partial_j\Gamma^m_{\,ik} - \partial_k\Gamma^m_{\,ij} + \Gamma^m_{\,js}\Gamma^s_{\,ik} - \Gamma^m_{\,ks}\Gamma^s_{\,ij}, \\
\Ric_{ij} &= R^m_{\,imj},\\
S&=\Ric^i_{\,i}.
\end{align*}
This shows that the local definitions given in \cite[(3.2),(3.3)]{G20} are compatible with the global approach developed here.

\subsection{Regularisation}\label{sec:regularisation}
As in previous works on the generalisation of singularity theorems to metrics of differentiability below $C^2$ \cite{KSSV,hawkingc11,penrosec11,G20}, 
an essential tool in our approach will be suitable regularisations, both of the metrics themselves and of the derived distributional curvature quantities.
A general regularisation scheme based on chart-wise convolution with a mollifier $\rho\in \D(B_1(0))$, $\int \rho = 1$, $\rho\ge 0$ 
(cf.\ \cite[3.2.10]{GKOS}, \cite[Sec.\ 2]{KSSV}, \cite[Sec.\ 3.3]{G20}) is as follows: 
Cover $M$ by a countable and locally finite family of relatively compact chart neighbourhoods 
$(U_i,\psi_i)$ ($i\in \N$). Let $(\zeta_i)_i$ be a subordinate partition of unity with $\mathrm{supp}(\zeta_i)\subseteq U_i$ for all $i$
and choose a family of cut-off functions $\chi_i\in\mathscr{D}(U_i)$ with $\chi_i\equiv 1$ on a
neighbourhood of $\mathrm{supp}(\zeta_i)$. Finally, for $\eps\in (0,1]$ set $\rho_{\eps}(x):=\eps^{-n}\rho\left (\frac{x}{\eps}\right)$.
Then denoting by $f_*$ (respectively $f^*$) push-forward (resp.\ pull-back) of distributions under a diffeomorphism $f$, for any $\mathcal{T} \in \D'\mathcal{T}^r_s(M)$ consider the expression
\begin{equation}\label{eq:M-convolution}
\mathcal{T}\star_M \rho_\eps(x):= \sum\limits_i\chi_i(x)\,\psi_i^*\Big(\big(\psi_{i\,*} (\zeta_i\cdot \mathcal{T})\big)*\rho_\eps\Big)(x).
\end{equation}
Here, $\psi_{i\,*} (\zeta_i\cdot \mathcal{T})$ is viewed as a compactly supported distributional tensor field on $\R^n$, so componentwise convolution
with $\rho_\eps$ yields a smooth field on $\R^n$. The cut-off functions $\chi_i$ secure that $(\eps,x) \mapsto \mathcal{T}\star_M \rho_\eps(x)$
is a smooth map on $(0,1] \times M$. For any compact set $K\comp M$ there is an $\eps_K$ such that for all $\eps<\eps_K$ and
all $x\in K$ \eqref{eq:M-convolution} reduces to a finite sum with all $\chi_i\equiv 1$ (hence to be omitted from the formula), 
namely when $\eps_K$ is less than the distance between the support of $\zeta_i\circ\psi_i^{-1}$ and the boundary of $\psi_i(U_i)$
for all $i$ with $U_i\cap K\neq \emptyset$.

Just as is the case for smoothing via convolution in the local setting, $\mathcal{T}\star_M \rho_\eps$ converges to $\mathcal{T}$ as $\eps\to 0$ in 
$\D'\mathcal{T}^r_s$, and indeed in $C^k_{\mathrm{loc}}$ or $W^{k,p}_{\mathrm{loc}}$ $(p<\infty)$ if $\mathcal{T}$ is contained in these spaces \cite[Prop.\ 3.5]{G20}.

For a given $C^1$ Lorentzian metric $g$ we will in addition require concrete regularisations that are adapted to the causal structure induced
by $g$. This construction goes back to Chrusciel and Grant \cite[Prop.\ 1.2]{CG}, cf.\ also \cite[Prop.\ 2.5]{KSSV}. The version we will use is 
\cite[Lem.\ 4.2, Cor.\ 4.3]{G20}:
\begin{Lemma}\label{Lemma: approximatingmetrics}Let $(M,g)$ be a $C^1$-spacetime. Then for any $\eps>0$ there exist smooth Lorentzian metrics $\check{g}_\eps$, $\hat g_\eps$ on $M$,
time orientable by the same timelike vector field as $g$, and satisfying:
\begin{itemize}
\item[(i)] $\check g_\eps \prec g \prec \hat g_\eps$ for all $\eps$.
\item[(ii)] $\check g_\eps$ and $\hat g_\eps$ converge to $g$ in $C^1_{\mathrm{loc}}$ as $\eps\to 0$.
\item[(iii)] $\check g_\eps - g\star_M \rho_\eps \to 0$ in $C^\infty_{\mathrm{loc}}$ and for any $K\comp M$ there exist $c_K>0$ and
$\eps_0(K)$ such that $\|\check g_\eps -g\star_M \rho_\eps \|_{\infty,K} \le c_K \eps$ and $\|g - g \star_M \rho_\eps\|_{\infty,K} \leq c_K \eps$ for all $\eps < \eps_0$. An analogous statement holds for 
$\hat g_\eps$ as well as for the inverse metrics $g^{-1},(\check g_\varepsilon)^{-1}$ and $(\hat g_\varepsilon)^{-1}$.
\end{itemize}
\end{Lemma}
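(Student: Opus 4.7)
The plan is to build $\check g_\eps$ and $\hat g_\eps$ as small rank-one perturbations of the chart-wise mollification $g_\eps := g \star_M \rho_\eps$ in the direction of the time-orienting vector field $T$. Since $g \in C^1$, the mean-value estimate applied in each chart gives $\|g_\eps - g\|_{\infty, K} \le c_K \eps$ on any $K \comp M$, while $g_\eps \to g$ in $C^1_{\mathrm{loc}}$ by \cite[Prop.\ 3.5]{G20}. Consequently, for all sufficiently small $\eps$, $g_\eps$ is a smooth Lorentzian metric of signature $(-,+,\ldots,+)$ and the distinguished smooth timelike vector field $T$ remains $g_\eps$-timelike on any fixed compact set.

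Set $\tau := g(T,\cdot) \in \Omega^1(M)$, the $g$-dual of $T$. For any non-zero $g$-causal $X$, decomposing $X = aT + V$ with $V \perp_g T$, the causality condition together with positive-definiteness of $g$ on $T^{\perp_g}$ forces $a \neq 0$, hence $\tau(X) = a\, g(T,T) \neq 0$; so $\tau \otimes \tau$ is strictly positive on every non-zero $g$-causal vector. To keep the constructed metrics genuinely smooth, I work instead with the $g_\eps$-dual $\tau_\eps := g_\eps(T,\cdot)$, which is smooth and converges to $\tau$ in $C^1_{\mathrm{loc}}$. Define
\begin{equation*}
\check g_\eps := g_\eps + C_\eps\, \tau_\eps \otimes \tau_\eps, \qquad \hat g_\eps := g_\eps - C_\eps\, \tau_\eps \otimes \tau_\eps,
\end{equation*}
where $C_\eps \in C^\infty(M)$ is strictly positive, $O(\eps)$, and chosen (via a partition of unity if $M$ is non-compact) with implicit constant large enough on each compact $K$ to dominate the constant $c_K$ above. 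Both tensors are smooth; by smallness of the perturbation they are Lorentzian of the correct signature and still time-oriented by $T$ for small $\eps$.

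Properties (ii) and (iii) then follow quickly. For (ii), both $g_\eps \to g$ and $C_\eps\, \tau_\eps \otimes \tau_\eps \to 0$ in $C^1_{\mathrm{loc}}$. For (iii), $\check g_\eps - g_\eps = C_\eps\, \tau_\eps \otimes \tau_\eps$ is smooth and $O(\eps)$, and the inverse-metric statement comes from $A^{-1} - B^{-1} = B^{-1}(B-A)A^{-1}$ together with local boundedness of the inverses of $\check g_\eps$ and $g_\eps$ on compact sets. The crucial step is (i): for $p \in K$ and a $g$-null $X \neq 0$ at $p$,
\begin{equation*}
\check g_\eps(X,X) = (g_\eps - g)(X,X) + C_\eps\, \tau_\eps(X)^2 \ge - c_K\, \eps\, \|X\|_h^2 + C_\eps\, \tau_\eps(X)^2.
\end{equation*}
On the $g$-null cone, both $\|X\|_h^2$ and $\tau_\eps(X)^2 \to \tau(X)^2 = a^2\, g(T,T)^2$ are uniformly equivalent to $a^2$ on $K$, so a sufficiently large implicit constant in $C_\eps$ makes the right-hand side strictly positive, yielding $\check g_\eps \prec g$. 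The argument for $g \prec \hat g_\eps$ is the same with the sign flipped.

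The main obstacle is exactly this balancing in (i) and (iii): in the $C^1$-setting both the mollifier error and the cone-shifting perturbation are genuinely $O(\eps)$ in $C^0$, so the coefficient $C_\eps$ must be chosen locally uniformly large. This is precisely where the assumption $g \in C^1$ is used in an essential way; a merely $C^0$-metric would only afford a mollifier error of order $o(1)$ and would correspondingly force a weaker $o(1)$ bound in (iii), as in the original version of \cite{CG}.
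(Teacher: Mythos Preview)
The paper does not prove this lemma; it cites \cite[Lem.\ 4.2, Cor.\ 4.3]{G20}, which in turn refines the Chru\'sciel--Grant construction \cite[Prop.\ 1.2]{CG}. Your approach---mollify, then add or subtract a rank-one cone-shifting term of size $O(\eps)$---is exactly the Chru\'sciel--Grant idea, and your treatment of (i), (ii), and the $C^0$-part of (iii) is correct in outline.

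There is, however, a genuine gap in your argument for the $C^\infty_{\mathrm{loc}}$-convergence in (iii). You take the correction to be $C_\eps\,\tau_\eps\otimes\tau_\eps$ with $\tau_\eps=g_\eps(T,\cdot)$. Since $g$ is only $C^1$, the higher derivatives of $g_\eps=g\star_M\rho_\eps$ are not uniformly bounded: in a chart $\partial^{k}g_\eps=(\partial g)*\partial^{k-1}\rho_\eps$, so $\|\partial^k g_\eps\|_\infty=O(\eps^{-(k-1)})$ for $k\ge 1$, and hence $\|\partial^k\tau_\eps\|_\infty=O(\eps^{-(k-1)})$ as well. With $C_\eps\sim\eps$ (which is forced by your argument for (i)) this gives $\|\partial^k(C_\eps\,\tau_\eps\otimes\tau_\eps)\|_\infty=O(\eps^{2-k})$, which is merely $O(1)$ for $k=2$ and blows up for $k\ge 3$. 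Thus $\check g_\eps-g_\eps\not\to 0$ in $C^\infty_{\mathrm{loc}}$, and the same obstruction kills the $C^\infty_{\mathrm{loc}}$-statement for the inverse metrics via your identity $A^{-1}-B^{-1}=B^{-1}(B-A)A^{-1}$.

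The fix---and what the cited references actually do---is to make the cone-shifting direction \emph{independent of $\eps$} and smooth in its own right: replace $\tau_\eps$ by a fixed smooth one-form $\omega$ that is non-vanishing on $g$-causal vectors (any smooth one-form sufficiently $C^0$-close to $g(T,\cdot)$ will do, constructed globally via a partition of unity; equivalently, \cite{CG} works chart-by-chart with constant-coefficient corrections). Then $\check g_\eps-g_\eps=C_\eps\,\omega\otimes\omega$ with $C_\eps=\eps f$ for a fixed $f\in C^\infty(M)$, every derivative of the correction is $O(\eps)$, and (iii) follows. Your verification of (i) goes through verbatim with $\omega$ in place of $\tau_\eps$.
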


In this work we will generally use $\check{g}_\eps$, $\hat g_\eps$ to denote the regularisations of Lemma \ref{Lemma: approximatingmetrics}, which are of utmost importance throughout. 

Next we recall the following essential result on convergence of the curvature under regularisations, which is a special case of several so-called stability results, see \cite[Thm.\ 2]{GT87}, \cite[Thm.\ 4.6]{LeFMar07}, \cite[Thm.\ 5.1]{SV09}.

\begin{Lemma}\label{lem:symm} Let $(M,g)$ be a $C^1$-spacetime, and let $g_\eps$ be either $\check{g}_\eps$ or $\hat g_\eps$. Then we have for the Riemann and the Ricci tensor
\begin{equation}
    R[g_\varepsilon]\to R[g]\quad\mbox{and}\quad\Ric[g_\varepsilon]\to\Ric[g]\quad
    \mbox{distributionally.}
\end{equation}
In particular, the Riemann and the Ricci tensor possess the usual symmetries.
\end{Lemma}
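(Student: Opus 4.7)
The strategy is to work locally in a chart $(U,\psi)$ and exploit the explicit coordinate formulae for $\Gamma$ and $R$ displayed just before the statement. Since by Lemma \ref{Lemma: approximatingmetrics}(ii) we have $g_\eps\to g$ in $C^1_{\mathrm{loc}}$, and by (iii) together with standard arguments on inverting $C^1$-matrices we also get $(g_\eps)^{-1}\to g^{-1}$ in $C^1_{\mathrm{loc}}$, the Christoffel symbols
\[
\Gamma^k_{ij}[g_\eps]=\tfrac{1}{2}g_\eps^{kl}\bigl(\partial_i g_{\eps,jl}+\partial_j g_{\eps,il}-\partial_l g_{\eps,ij}\bigr)
\]
converge to $\Gamma^k_{ij}[g]$ in $C^0_{\mathrm{loc}}(U)$, because each factor is a continuous function of the coefficients of $g_\eps$ and $\partial g_\eps$, which converge uniformly on compacta. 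In particular the nonlinear products $\Gamma^m_{js}[g_\eps]\,\Gamma^s_{ik}[g_\eps]$ converge to $\Gamma^m_{js}[g]\,\Gamma^s_{ik}[g]$ in $C^0_{\mathrm{loc}}$, hence a fortiori in $\D'^{(1)}(U)$.

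For the remaining (linear) derivative terms $\partial_j\Gamma^m_{ik}[g_\eps]$, we use that distributional differentiation is sequentially continuous on $\D'{}^{(1)}$: since $\Gamma^m_{ik}[g_\eps]\to\Gamma^m_{ik}[g]$ in $C^0_{\mathrm{loc}}(U)$, it follows that $\partial_j\Gamma^m_{ik}[g_\eps]\to\partial_j\Gamma^m_{ik}[g]$ in $\D'^{(1)}(U)$. Combining the two types of terms in the local formula $R^m_{ijk}=\partial_j\Gamma^m_{ik}-\partial_k\Gamma^m_{ij}+\Gamma^m_{js}\Gamma^s_{ik}-\Gamma^m_{ks}\Gamma^s_{ij}$ yields $R^m_{ijk}[g_\eps]\to R^m_{ijk}[g]$ in $\D'^{(1)}(U)$. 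The passage from local convergence of the coefficients to global distributional convergence of the tensor $R[g_\eps]\to R[g]$ in $\D'^{(1)}\mathcal{T}^1_3(M)$ is routine via a partition of unity subordinate to a chart atlas, using the characterisation \eqref{eq:Ck-ext}. The convergence of the Ricci tensor $\Ric[g_\eps]\to\Ric[g]$ in $\D'^{(1)}\mathcal{T}^0_2(M)$ is then immediate from \eqref{eq:Ric_def}, i.e.\ from contraction with (convergent, smooth) frame fields; alternatively one may simply trace the index-level formula $\Ric_{ij}=R^m_{imj}$.

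For the symmetries, one observes that each $R[g_\eps]$ is the Riemann tensor of a smooth Lorentzian metric, hence enjoys the usual pointwise antisymmetries, pair-symmetry, and the first and second Bianchi identities; analogously $\Ric[g_\eps]$ is symmetric. Each such identity is a $C^\infty(M)$-linear relation between smooth tensor fields, so it carries over to the distributional limit by the (sequential) continuity of the tensor operations on $\D'^{(k)}\mathcal{T}^r_s(M)$ discussed after \eqref{eq:Ck-ext}. Therefore $R[g]$ and $\Ric[g]$ inherit the corresponding symmetries as identities in $\D'^{(1)}$.

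The only point requiring a little care is the first one—making the transition from $C^1_{\mathrm{loc}}$-convergence of $g_\eps$ to distributional convergence of the Christoffel derivatives $\partial\Gamma[g_\eps]$, since it is precisely here that the low regularity manifests itself: $\Gamma[g]$ is merely $C^0$, and the quadratic terms $\Gamma\cdot\Gamma$ could in principle have caused trouble, but they do not because both factors converge uniformly on compacta. The result of course also follows from the more general stability theorems cited in the excerpt; the present proof is the streamlined version applicable in the $C^1$-case.
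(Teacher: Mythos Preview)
Your argument is correct. The paper itself does not give a proof of this lemma at all: it simply records that the statement is a special case of general stability results for distributional curvature (Geroch--Traschen, LeFloch--Mardare, Steinbauer--Vickers) and cites those references. What you have written is precisely the streamlined $C^1$-specific argument that underlies those stability theorems in this regularity: $C^1_{\mathrm{loc}}$-convergence of $g_\eps$ and $g_\eps^{-1}$ gives $C^0_{\mathrm{loc}}$-convergence of the Christoffel symbols, whence the $\Gamma\cdot\Gamma$ terms converge in $C^0_{\mathrm{loc}}$ and the $\partial\Gamma$ terms converge in $\D'^{(1)}$ by continuity of the distributional derivative $C^0_{\mathrm{loc}}\hookrightarrow\D'^{(0)}\xrightarrow{\partial}\D'^{(1)}$. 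Your derivation of the symmetries by passing the smooth identities to the distributional limit is exactly right and is in fact the reason the paper includes this lemma here. Two minor remarks: the $C^1_{\mathrm{loc}}$-convergence of the inverse metrics already follows from Lemma~\ref{Lemma: approximatingmetrics}(ii) alone (smooth dependence of matrix inversion), so invoking (iii) is unnecessary; and strictly speaking differentiation maps $\D'^{(0)}$ to $\D'^{(1)}$ rather than acting within $\D'^{(1)}$, but your conclusion is unaffected.
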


We will frequently need the following result from ODE theory (cf.\ \cite[Ch.2, Thm.\ 3.2]{Har02}), which we will mainly use in the context of regularisation given below in Corollary \ref{Corollary: Hartmangeodesicversion}.

\begin{Proposition}\label{Proposition: HartmanODEversion}
Let $f$ and $f_m$, $m \in \mathbb{N}$ be continuous functions on an open set $E \subseteq \R \times \R^n$ with $f_m \to f$ in $\Cloc$. Let $y_m=y_m(t)$ be a solution to the initial value problem
\begin{align*}
    y_m'(t) = f_m(t,y_m(t)), \quad y_m'(t_m) = y_{m_0}
\end{align*}
with data $(t_m,y_{m_0}) \in E$ and maximal interval of existence $(\omega^-_m,\omega^+_m)$. Suppose that the sequence of initial data converges,
\begin{align*}
    (t_m,y_{m_0}) \to (t_0,y_0) \in E.
\end{align*}
Then there exist a solution $y=y(t)$ for the initial value problem
\begin{align*}
    y'(t) = f(t,y(t)), \quad y(t_0)=y_0,
\end{align*}
defined on a maximal interval $(\omega^-,\omega^+)$, and a subsequence $m_l$ with the property that for any $s_1<s_2$ with $(s_1,s_2) \subseteq (\omega^-,\omega^+)$ we have $(s_1,s_2) \subseteq (\omega^-_{m_l},\omega^+_{m_l})$ for all large $l$, and $y_{m_l} \to y$ in 
$\Celoc(\omega^-,\omega^+)$.\footnote{In \cite[Ch.2, Thm.\ 3.2]{Har02}, convergence is only stated in $C^0_{\mathrm{loc}}$, but inserting this into the
equivalent integral equations immediately yields the claim as given here.} In particular,
\begin{align*}
    \limsup_m \omega^-_m \leq \omega^- < \omega^+ \leq \liminf_m \omega^+_m.
\end{align*}
\end{Proposition}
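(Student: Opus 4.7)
The plan is to follow a classical Arzelà--Ascoli/integral-equation argument, starting locally around $(t_0,y_0)$ and then bootstrapping to the full maximal interval via compactness. First I would rewrite each IVP as an integral equation
\begin{equation*}
y_m(t) = y_{m_0} + \int_{t_m}^t f_m(s,y_m(s))\, ds,
\end{equation*}
and use that $f_m\to f$ in $\Cloc(E)$ to obtain a uniform bound on $(f_m)$ on a compact cylinder $[t_0-\delta,t_0+\delta]\times\overline{B_r(y_0)}\comp E$. This gives a uniform Lipschitz-in-$t$ bound on the $y_m$ for $m$ large, so the family is uniformly bounded and equicontinuous on $[t_0-\delta,t_0+\delta]$. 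Arzelà--Ascoli produces a subsequence $y_{m_l}$ converging in $C^0$ to some $y$, and local uniform convergence of $f_{m_l}(\cdot,y_{m_l}(\cdot))\to f(\cdot,y(\cdot))$ lets us pass to the limit inside the integral, yielding a local solution of $y'=f(t,y)$, $y(t_0)=y_0$.

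Next I would extend $y$ to its maximal interval $(\omega^-,\omega^+)$ and, given $[s_1,s_2]\sse(\omega^-,\omega^+)$, cover the compact graph $\{(t,y(t)):t\in[s_1,s_2]\}\sse E$ by finitely many cylinders on each of which the local argument applies. Starting at $t_0$ and chaining these cylinders in both directions, at each step the $y_{m_l}$ (possibly after passing to a further subsequence) are forced to exist and remain close to $y$: indeed, as long as they track $y$, they cannot escape the fixed compact cylinder before traversing it, so their maximal interval extends across the next cylinder. A standard diagonal extraction along an exhaustion $[s_1^{(k)},s_2^{(k)}]\nearrow(\omega^-,\omega^+)$ produces a single subsequence converging in $\Cloc(\omega^-,\omega^+)$ to $y$. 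The upgrade from $C^0_{\mathrm{loc}}$ to $\Celoc$ convergence is then automatic: on $[s_1,s_2]$, $y_{m_l}'=f_{m_l}(\cdot,y_{m_l}(\cdot))$ converges uniformly to $f(\cdot,y(\cdot))=y'(\cdot)$.

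Finally, the $\limsup/\liminf$ bounds on the maximal intervals follow by contradiction: if $\liminf_l \omega^+_{m_l}<\omega^+$, pick $s_2$ strictly between these, apply the previous step on $[t_0,s_2]\comp(\omega^-,\omega^+)$ to conclude $s_2<\omega^+_{m_l}$ for large $l$, contradicting the choice of $s_2$; the bound on $\omega^-$ is symmetric.

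The main obstacle is the globalisation step, namely upgrading the local Arzelà--Ascoli statement to convergence on every relatively compact subinterval of $(\omega^-,\omega^+)$ while simultaneously forcing $y_{m_l}$ to exist there. This requires exploiting the compactness of the graph of $y|_{[s_1,s_2]}$ inside the open set $E$ to rule out premature escape of the $y_{m_l}$, combined with a diagonal subsequence construction along an exhaustion of $(\omega^-,\omega^+)$; once in place, the $\Celoc$-convergence and the interval inequalities drop out readily from the equations.
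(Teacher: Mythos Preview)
The paper does not actually prove this proposition; it is quoted from \cite[Ch.~2, Thm.~3.2]{Har02}, with only the footnote remark that $C^0_{\mathrm{loc}}$-convergence upgrades to $C^1_{\mathrm{loc}}$ via the integral equation. Your outline is essentially the classical argument one finds in Hartman, and the local step (Arzel\`a--Ascoli on the integral form), the upgrade to $C^1_{\mathrm{loc}}$, and the final $\limsup/\liminf$ inequalities are all fine.

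There is, however, a genuine ordering problem in your globalisation step. You write ``extend $y$ to its maximal interval $(\omega^-,\omega^+)$'' and then argue that the $y_{m_l}$ are ``forced to \ldots\ remain close to $y$'' along this extension. But $f$ is only assumed continuous, so the initial value problem for $y$ need not be uniquely solvable; there is no reason a subsequence of $(y_{m_l})$ should converge to a \emph{pre-chosen} maximal extension of the local limit rather than to some other branch. The confinement part of your sketch only shows that the $y_{m_l}$ stay in the cylinder long enough for Arzel\`a--Ascoli to apply; it does not force the resulting sublimit to coincide with the $y$ you fixed in advance. The fix is to reverse the order: at each chaining step pass to a further subsequence, \emph{define} $y$ on the new piece as its limit (which automatically extends the previous piece, since subsequential limits agree on the overlap), and then perform the diagonal extraction along an exhaustion. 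In this way $y$ and its maximal interval $(\omega^-,\omega^+)$ are produced by the limiting process itself, exactly as in Hartman's proof, rather than being selected beforehand.
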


We will mostly use the previous result to conclude that $g_{\varepsilon}$-geodesics, where $g_{\varepsilon} \in \{\check{g}_{\varepsilon},\hat{g}_{\varepsilon}\}$, converge uniformly on compact intervals to a $g$-geodesic:

\begin{Corollary}\label{Corollary: Hartmangeodesicversion}
Let $(M,g)$ be a $C^1$-spacetime and let $g_m = \check{g}_{\varepsilon_m}$ or $g_m = \hat{g}_{\varepsilon_m}$ with $\varepsilon_m \to 0$. Let $\gamma_m:(a_m,b_m) \to M$ be inextendible $g_m$-geodesics. Suppose that $\gamma_m(t_0) \to p \in M$ and $\dot{\gamma}_m(t_0) \to v\in T_pM$, where $t_0 \in \bigcap (a_m,b_m)$. Then there exist an inextendible $g$-geodesic $\gamma:(a,b) \to M$ with $\gamma(t_0)=p$, $\dot \gamma(t_0)=v$ and a subsequence $m_l$ such that whenever $(s_1,s_2) \subseteq (a,b)$, we have $(s_1,s_2) \subseteq (a_{m_l},b_{m_l})$ for all large $l$ and $\gamma_{m_l}$ converges in $C^2_{\mathrm{loc}}(a,b)$ to $\gamma$.
Moreover,
\begin{align*}
    \limsup_m a_m \leq a < b \leq \liminf_m b_m.
\end{align*}
\end{Corollary}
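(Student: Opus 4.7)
The strategy is to recast the geodesic equation as a first-order ODE system in local coordinates, apply Proposition \ref{Proposition: HartmanODEversion} in a chart around $p$, and then iterate chart-by-chart together with a diagonal extraction to reach the inextendible $g$-geodesic.

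First, in a chart $(U,\psi)$ around $p$, each $g_m$-geodesic $\gamma_m$ satisfies $\ddot\gamma_m^k=-\Gamma[g_m]^k_{ij}(\gamma_m)\dot\gamma_m^i\dot\gamma_m^j$, which we rewrite as the first-order system $\dot z=f_m(t,z)$ on $\psi(U)\times\R^n$ with $z=(y,w)$ and
\[
f_m(t,(y,w))=\bigl(w,\,-\Gamma[g_m]^k_{ij}(\psi^{-1}(y))w^iw^j\bigr).
\]
By Lemma \ref{Lemma: approximatingmetrics}(ii), $g_m\to g$ in $C^1_{\mathrm{loc}}$, so the Christoffel symbols converge in $C^0_{\mathrm{loc}}$, and hence $f_m\to f$ in $C^0_{\mathrm{loc}}$, with $f$ the analogous right-hand side for $g$. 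The assumption $\gamma_m(t_0)\to p$, $\dot\gamma_m(t_0)\to v$ translates into convergence of the initial data in the chart, so Proposition \ref{Proposition: HartmanODEversion} delivers a solution $z$ of $\dot z=f(t,z)$ on some maximal interval $(\omega_0^-,\omega_0^+)\ni t_0$---equivalently, a $g$-geodesic $\gamma_0$ in $U$ with $\gamma_0(t_0)=p$, $\dot\gamma_0(t_0)=v$---together with a subsequence such that $\gamma_m\to\gamma_0$ and $\dot\gamma_m\to\dot\gamma_0$ in $C^0_{\mathrm{loc}}((\omega_0^-,\omega_0^+))$. Inserting this into the geodesic equation and using the $C^0_{\mathrm{loc}}$-convergence of $\Gamma[g_m]\circ\gamma_m$ upgrades the convergence of $\ddot\gamma_m$ to $C^0_{\mathrm{loc}}$, hence the convergence $\gamma_m\to\gamma_0$ takes place in $C^2_{\mathrm{loc}}$.

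To pass from this local statement to the inextendible $g$-geodesic, we iterate. Suppose a $g$-geodesic $\gamma$ and a subsequence $(\gamma_{m_l})$ have already been constructed on some open interval $I\ni t_0$ with $\gamma_{m_l}\to\gamma$ in $C^2_{\mathrm{loc}}(I)$. Pick $t_1\in I$ close to the right endpoint of $I$; then $\gamma_{m_l}(t_1)\to\gamma(t_1)=:p_1$ and $\dot\gamma_{m_l}(t_1)\to\dot\gamma(t_1)=:v_1$, so the argument of the first paragraph applies in a fresh chart around $p_1$ with initial time $t_1$, yielding a further subsequence converging in $C^2_{\mathrm{loc}}$ to some $g$-geodesic $\tilde\gamma$ on a larger interval with $\tilde\gamma(t_1)=p_1$, $\dot{\tilde\gamma}(t_1)=v_1$. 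Concatenating $\gamma$ and $\tilde\gamma$ produces the extended limit (the two coincide on the overlap $I\cap\mathrm{dom}(\tilde\gamma)$ by construction). Exhausting the eventual maximal domain by a countable increasing family of compact subintervals, covering each such subinterval by finitely many charts, and performing a standard diagonal extraction produces an inextendible $g$-geodesic $\gamma:(a,b)\to M$ with $\gamma(t_0)=p$, $\dot\gamma(t_0)=v$, along with a single subsequence $m_l$ such that $\gamma_{m_l}\to\gamma$ in $C^2_{\mathrm{loc}}((a,b))$. The inequalities $\limsup_m a_m\le a<b\le\liminf_m b_m$ then follow directly from the corresponding conclusion of Proposition \ref{Proposition: HartmanODEversion} applied at each iteration.

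The main subtlety is the diagonal patching: since the $C^1$-geodesic equation need not have unique solutions, it is not \emph{a priori} clear which continuation of $\gamma$ past the boundary of a chart one should select. This is circumvented by noting that we are free to \emph{define} $\gamma$ to be the inextendible $g$-geodesic produced by the successive Hartman applications; non-uniqueness of $g$-geodesics plays in our favour since we only need to exhibit \emph{some} inextendible limit. The remaining arguments---$C^2$-upgrade, inheritance of initial data across iterations, and the limsup/liminf bounds---are routine once the local Hartman step is in place.
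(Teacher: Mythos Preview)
Your proof is correct and follows essentially the same approach as the paper: rewrite the geodesic equation as a first-order system in coordinates, observe that the Christoffel symbols converge in $C^0_{\mathrm{loc}}$ by Lemma \ref{Lemma: approximatingmetrics}(ii), and apply Proposition \ref{Proposition: HartmanODEversion}; the $C^2_{\mathrm{loc}}$-upgrade then comes from substituting back into the equation. The paper simply asserts that ``this is a local statement, so we may assume $M=\R^n$'' and invokes the Proposition once, whereas you spell out the chart-by-chart iteration and diagonal extraction needed to obtain an inextendible limit on the manifold---your version is more explicit about the globalisation the paper leaves implicit, but the underlying argument is identical.
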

\begin{proof}
As this is a local statement, we may assume $M = \R^n$. We rewrite the geodesic equations for $g_m$ as a first order initial value problem (with $\Gamma_m$ the Christoffel symbols of $g_m$) and $1\le i \le n$:
\begin{align*}
    &\dot{\gamma}_m^i(t) = \eta_{m}^i(t),\\
    &\dot{\eta}_m^i = - (\Gamma_m)^i_{jk}(\gamma_m(t))\eta_m^j(t) \eta_m^k(t),\\
    &\gamma_m^i(t_0) = p_m^i,\\
    & \eta_m^i(t_0)  = v_m^i.
\end{align*}
We define the continuous functions $f_m:\R^{2n} \to \R$, $f_m(x,y):=(y,-(\Gamma_m)^i_{jk}(x) y^j y^k)$, and similarly $f:\R^{2n} \to \R$, $f(x,y):=(y,-\Gamma_{jk}(x) y^j y^k)$. Then $f_m \to f$ uniformly on compact subsets of $\R^{2n}$, and by assumption $(p_m,v_m) \to (p,v) \in \R^{2n}$. The existence of an inextendible $g$-geodesic sublimit $\gamma$ of the sequence $\gamma_m$ now follows from Proposition \ref{Proposition: HartmanODEversion}, and the $C^2_{\mathrm{loc}}$-convergence follows suit.
\end{proof}

In the case of only future or past inextendibility, an obvious analogue of
Corollary \ref{Corollary: Hartmangeodesicversion} holds.
\begin{remark}\label{remark:parallel_transport} (Parallel transport)

\noindent
Under the assumptions of Corollary \ref{Corollary: Hartmangeodesicversion}, let $w_m\in T_{\gamma_m(t_0)}M$, $w_m \to w \in T_pM$. 
Then since the ODEs governing parallel transport are linear, there are unique smooth vector fields $W_m$ along $\gamma_m$ that are global solutions 
of the corresponding initial value problem, and by the standard results on continuous dependence of solutions on the right hand side and the data, they converge in $C^1_{\mathrm{loc}}$ to the unique $C^1$-vector field $W$ along $\gamma$ resulting from $g$-parallel transport of $w$ along $\gamma$.
\end{remark}

\subsection{Energy Conditions}
\label{sec:EC}
Let us briefly recall the timelike and null energy conditions for $C^1$-spacetimes. They are formulated in a way so that they imply corresponding conditions for the curvatures of approximating metrics. If the metric is $C^{1,1}$ or better, then the energy conditions presented here are equivalent to the usual pointwise (or pointwise a.e.) energy conditions. We follow \cite[Sec.\ 3-5]{G20}, to which we also refer for proofs.\\
We start with the timelike energy condition, whose distributional generalisation is straightforward, cf.\ \cite[Def.\ 3.3]{G20}.

\begin{definition} (Distributional timelike energy condition)
\label{definition: distributionaltimelikeenergycondition}

\noindent Let $(M,g)$ be a $C^1$-spacetime. We say that $(M,g)$ satisfies the \textit{distributional timelike energy condition} if for any timelike
$X \in \mathfrak{X}(M)$, $\Ric(X,X) \in \mathcal{D}^{'(1)}(M)$ is a nonnegative scalar distribution on $M$.
\end{definition}

Recall that a scalar distribution $u\in\D'(M)$ is nonnegative, $u\geq 0$, if $u(\omega)\equiv\langle u,\omega\rangle\geq 0$ for all nonnegative test densities $\omega\in\Gamma_c(M,\Vol(M))$. A nonnegative distribution is always a measure \cite[Thm.\ 2.1.7]{Hoe90} and hence a distribution of order $0$. Moreover, non-negativity is stable with respect to regularisation \cite[Thm.\ 2.1.9]{Hoe90}. Finally, for $u,v\in\D'(M)$ we write $u\geq v$ if $u-v\geq 0$. 

\begin{remark} We note that an equivalent formulation of the distributional timelike energy condition consists in imposing that, 
for any $U\subseteq M$ open and any $X\in \X(U)$ timelike, $\Ric(X,X)\ge 0$ in $\mathcal{D}^{'(1)}(U)$. Indeed, given a timelike $X\in \X(U)$, let $V\comp U$ be open and choose a cut-off function $\chi\in \D(U)$ that equals $1$ on a neighbourhood of $\overline V$. By time-orientability of $M$ there exists some timelike $T\in \X(M)$ of the same time-orientation as $X$, so $Y:=\chi X +(1-\chi)T$ is a global timelike vector field on $M$. Then for any non-negative $\omega\in \Gamma_c(M,\Vol(M))$ with support in $V$ we have $0\le \langle \Ric(Y,Y),\omega\rangle = \langle \Ric(X,X),\omega\rangle$, and by the sheaf property of $\D^{'(1)}$, we obtain $\Ric(X,X)\ge 0$ in $\D^{'(1)}(U)$.
\end{remark}

The usefulness of Definition \ref{definition: distributionaltimelikeenergycondition} is highlighted by its implications for the curvatures of 
approximating metrics, cf.\ \cite[Lem.\ 4.1, Lem.\ 4.6]{G20}.

\begin{Lemma}\label{Lemma: timelikeenergyconditionapproxmetrics1}
Let $(M,g)$ be a $C^1$-spacetime satisfying the distributional timelike energy condition and let $K$ be compact in $M$. Then
\begin{align*}
    &\forall C>0 \ \forall \delta > 0 \ \forall \kappa < 0 \ \exists \varepsilon_0 > 0 \ \forall  \varepsilon < \varepsilon_0 \ \forall  X \in TM|_K\\
    &\text{with } g(X,X) \leq \kappa \text{ and } \|X\|_h \leq C: \ \Ric[\check{g}_{\varepsilon}](X,X) > - \delta.
\end{align*}
\end{Lemma}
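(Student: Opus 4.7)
The plan is to reduce the uniform lower bound for $\Ric[\check{g}_\varepsilon](X,X)$ to the distributional nonnegativity of $\Ric[g](\tilde X,\tilde X)$ for suitable smooth timelike extensions $\tilde X$ of the tangent vectors $X$, via a Friedrichs-type commutator argument combined with the quantitative closeness of $\check{g}_\varepsilon$ to $g\star_M\rho_\varepsilon$ from Lemma \ref{Lemma: approximatingmetrics}(iii).

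First I would cover $K$ by finitely many relatively compact coordinate charts and, by compactness, reduce to proving the estimate for tangent vectors with foot point in a single such chart $U\comp M$. For any admissible $(p,X_0)$ extend $X_0$ to the constant vector field $\tilde X:=X_0^i\partial_i$ in coordinates. Since $g$ is continuous and $g(X_0,X_0)\leq\kappa<0$, a neighbourhood $V\ni p$ of size uniform in $(p,X_0)$ exists on which $g(\tilde X,\tilde X)\leq\kappa/2$; by the $C^0_{\mathrm{loc}}$-convergence $\check{g}_\varepsilon\to g$, $\tilde X$ is also $\check{g}_\varepsilon$-timelike on $V$ for all small $\varepsilon$, and one checks easily that $\Ric[\check g_\varepsilon](X_0,X_0)(p) = \Ric[\check g_\varepsilon](\tilde X,\tilde X)(p)$.

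Next I would split
\[
\Ric[\check{g}_\varepsilon](\tilde X,\tilde X)=\bigl(\Ric[g](\tilde X,\tilde X)\bigr)\star_M\rho_\varepsilon+R^{(1)}_\varepsilon+R^{(2)}_\varepsilon,
\]
with $R^{(1)}_\varepsilon:=\bigl(\Ric[\check{g}_\varepsilon]-\Ric[g\star_M\rho_\varepsilon]\bigr)(\tilde X,\tilde X)$ and $R^{(2)}_\varepsilon:=\Ric[g\star_M\rho_\varepsilon](\tilde X,\tilde X)-\bigl(\Ric[g](\tilde X,\tilde X)\bigr)\star_M\rho_\varepsilon$. The leading term is nonnegative on compact subsets of $V$ for small $\varepsilon$: the distributional timelike energy condition yields $\Ric[g](\tilde X,\tilde X)\geq 0$ in $\Dpo(V)$, and a nonnegative distribution is a nonnegative Radon measure \cite[Thm.\ 2.1.7]{Hoe90}, whose convolution with the nonnegative kernel $\rho_\varepsilon$ is again nonnegative. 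For $R^{(1)}_\varepsilon$ one uses that $\Ric$ is schematically of the form $(\partial^2 g)\ast(g^{-1})+(\partial g)\ast(\partial g)\ast(g^{-1})\ast(g^{-1})$, so Lemma \ref{Lemma: approximatingmetrics}(iii) (applied to $g$, to its inverse, and to their derivatives, with the uniformly bounded coefficients $\tilde X^i\tilde X^j$) yields $\|R^{(1)}_\varepsilon\|_{\infty,K'}=O(\varepsilon)$ on any $K'\comp V$. The Friedrichs commutator $R^{(2)}_\varepsilon$ is handled by applying \cite[Lem.\ 4.8]{G20} slotwise to each product $\text{(smooth coefficient)}\cdot\partial^\alpha g\cdot\partial^\beta g$, $|\alpha|+|\beta|\leq 2$, appearing in the coordinate expression of $\Ric[g](\tilde X,\tilde X)$; this gives locally uniform convergence $R^{(2)}_\varepsilon\to 0$.

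Finally I would combine these bounds and pass to uniformity: the constants produced depend only on $\|g\|_{C^1(\bar V)}$, the upper bound $C$ on $\|X_0\|_h$, and the gap $\kappa<0$, so a compactness argument on the admissible set of $(p,X_0)$ together with the finite chart cover of $K$ yields a single $\varepsilon_0$ such that $\Ric[\check{g}_\varepsilon](X,X)>-\delta$ throughout. I expect the main obstacle to be the quantitative treatment of $R^{(2)}_\varepsilon$: since $\Ric[g]$ is only a distribution of order one, the Friedrichs convergence must be made \emph{locally uniform in the base point and in $\tilde X$}, with error bounds depending only on a $C^1$-norm of $g$ and the $C^0$-norm of $\tilde X$, so that the final $\varepsilon_0$ truly is uniform over the compact family of admissible tangent vectors.
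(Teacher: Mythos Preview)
Your proposal is correct and follows essentially the same route as the paper's cited source \cite[Lem.\ 4.1, Lem.\ 4.6]{G20} (the paper itself does not reprove this lemma): extend the admissible tangent vectors to constant-coefficient local fields, use the distributional timelike energy condition together with positivity preservation under $\star_M\rho_\varepsilon$, and control the two remainder terms via the Friedrichs-type commutator lemma and Lemma \ref{Lemma: approximatingmetrics}(iii). One minor point: your claim $\|R^{(1)}_\varepsilon\|_{\infty,K'}=O(\varepsilon)$ is slightly too strong as stated---the terms containing $\partial^2(\check g_\varepsilon-g\star_M\rho_\varepsilon)$ go to zero by the $C^\infty_{\mathrm{loc}}$-convergence in Lemma \ref{Lemma: approximatingmetrics}(iii) but need not be $O(\varepsilon)$; the $O(\varepsilon)$ estimate is only needed (and used) for the cross terms where a factor $\partial^2(g\star_M\rho_\varepsilon)=O(1/\varepsilon)$ multiplies a metric difference of order $O(\varepsilon)$. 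This does not affect the argument, since only $R^{(1)}_\varepsilon\to 0$ uniformly is required.
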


The following is an immediate consequence of the previous result.

\begin{Corollary}\label{Corollary: timelikeenergyconditionapproxmetrics2}
Let $(M,g)$ be a $C^1$-spacetime satisfying the distributional timelike energy condition and let $K$ be compact in $M$. Then
\begin{align*}
    \forall \delta > 0 \ \exists \varepsilon_0 > 0 \ \forall \varepsilon < \varepsilon_0 \ \forall X \in TM|_K \text{ with } \check{g}_{\varepsilon}(X,X) = -1:\ \Ric[\check{g}_{\varepsilon}](X,X) > - \delta.
\end{align*}
\end{Corollary}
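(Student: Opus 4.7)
The plan is to reduce Corollary~\ref{Corollary: timelikeenergyconditionapproxmetrics2} to Lemma~\ref{Lemma: timelikeenergyconditionapproxmetrics1} by showing that, for $\varepsilon$ sufficiently small, every $X \in TM|_K$ satisfying $\check{g}_\varepsilon(X,X)=-1$ falls into the compact set $\{X : g(X,X) \le \kappa,\ \|X\|_h \le C\}$ on which the Lemma provides its uniform bound, for appropriately fixed $\kappa<0$ and $C>0$.

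Concretely, given $\delta>0$, I would fix $\kappa := -1/2$ and, after selecting a constant $C>0$ serving as a uniform upper bound for $\|X\|_h$ on the $\check{g}_\varepsilon$-unit hyperboloid over $K$ (see below), apply Lemma~\ref{Lemma: timelikeenergyconditionapproxmetrics1} with parameters $(C, \delta, \kappa)$ to obtain $\varepsilon_1>0$. For any $X \in TM|_K$ with $\check{g}_\varepsilon(X,X)=-1$ and $\|X\|_h \le C$, the identity
\[
g(X,X) \;=\; \check{g}_\varepsilon(X,X) + (g-\check{g}_\varepsilon)(X,X) \;=\; -1 + (g-\check{g}_\varepsilon)(X,X),
\]
combined with the estimate $|(g-\check{g}_\varepsilon)(X,X)| \le \|g-\check{g}_\varepsilon\|_{\infty,K}\,C^2$ and the $C^0_{\mathrm{loc}}$-convergence $\check{g}_\varepsilon \to g$ from Lemma~\ref{Lemma: approximatingmetrics}(ii), forces $g(X,X) \le -1/2 = \kappa$ once $\varepsilon$ is sufficiently small. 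The desired bound $\Ric[\check{g}_\varepsilon](X,X) > -\delta$ then follows from a direct application of Lemma~\ref{Lemma: timelikeenergyconditionapproxmetrics1}, with $\varepsilon_0$ taken to be the minimum of $\varepsilon_1$ and the threshold from the previous step.

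The main obstacle is establishing the uniform upper bound $\|X\|_h \le C$ on the $\check{g}_\varepsilon$-unit hyperboloid, since this hyperboloid is not compact in the $h$-norm: writing $X = \lambda u$ with $u$ in the compact $h$-unit sphere bundle $S_hK$ over $K$ yields $\lambda^2 = 1/|\check{g}_\varepsilon(u,u)|$, which becomes arbitrarily large as $u$ approaches the $\check{g}_\varepsilon$-null cone. Overcoming this requires exploiting the compactness of $S_hK$ together with the uniform $C^0$-closeness of $\check{g}_\varepsilon$ and $g$ from Lemma~\ref{Lemma: approximatingmetrics}(ii)--(iii) to ensure that the relevant $\check{g}_\varepsilon$-timelike directions remain uniformly bounded away from the null cone as $\varepsilon$ varies; once this is arranged, $C$ may be taken as any upper bound on $1/\sqrt{|\check{g}_\varepsilon(u,u)|}$ over these directions. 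With the $\|X\|_h$-bound in place, the remainder of the argument is routine.
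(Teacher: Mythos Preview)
Your identification of the obstacle is correct, but the resolution you sketch cannot work: there is no uniform bound $\|X\|_h \le C$ on the $\check g_\varepsilon$-unit timelike hyperboloid over $K$. For any Lorentzian metric the unit timelike hyperboloid in each fibre is non-compact; writing $X = \lambda u$ with $\|u\|_h = 1$ gives $\lambda = (-\check g_\varepsilon(u,u))^{-1/2} \to \infty$ as $u$ approaches the $\check g_\varepsilon$-null cone from inside. The set of $h$-unit $\check g_\varepsilon$-timelike directions is open with boundary exactly the $\check g_\varepsilon$-null directions, so it always contains directions with $\check g_\varepsilon(u,u)$ arbitrarily close to zero, and no amount of $C^0$-closeness of $\check g_\varepsilon$ to $g$ changes this. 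Consequently the hypotheses of Lemma~\ref{Lemma: timelikeenergyconditionapproxmetrics1} cannot be verified with a fixed pair $(C,\kappa)$ for all $X$ on the hyperboloid, and your proposed choice of $C$ as ``any upper bound on $1/\sqrt{|\check g_\varepsilon(u,u)|}$ over these directions'' does not exist.

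The paper itself offers no details beyond calling the Corollary an ``immediate consequence'' of Lemma~\ref{Lemma: timelikeenergyconditionapproxmetrics1}, and in its actual applications (see the proof of Theorem~\ref{Theorem: notimelikelines}) the Lemma is invoked directly, with an explicit $h$-norm bound supplied by the specific geodesic family under consideration; the unrestricted form of the Corollary is never used. If one does want to prove the Corollary as literally stated, homogeneity reduces it to showing $\Ric[\check g_\varepsilon](u,u) > \delta\,\check g_\varepsilon(u,u)$ for all $\check g_\varepsilon$-timelike $h$-unit $u$ over $K$; this at least lives on a compact set, but as $u$ approaches the null cone the right-hand side tends to zero, so one effectively needs an approximate null energy bound for $\check g_\varepsilon$ as well---something not delivered by Lemma~\ref{Lemma: timelikeenergyconditionapproxmetrics1} alone, since its $\varepsilon_0$ depends on $\kappa$.
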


Let us now turn to the proper distributional formulation of the null energy condition. The naive approach of defining it in analogy to Definition \ref{definition: distributionaltimelikeenergycondition} does not yield analogues of Lemma \ref{Lemma: timelikeenergyconditionapproxmetrics1} and Corollary \ref{Corollary: timelikeenergyconditionapproxmetrics2} because vector fields that are $\check{g}_{\varepsilon}$-null are only almost $g$-null. This necessitates a definition of the following form. 

\begin{definition}(Distributional null energy condition, \cite[Def.\ 5.1]{G20})
\label{definition: distributionalnullenergycondition}

\noindent Let $(M,g)$ be a $C^1$-spacetime. We say $(M,g)$ satisfies the \textit{distributional null energy condition} if for any compact set $K$ and any $\delta > 0$ there exists $\varepsilon = \varepsilon(\delta,K) > 0$ such that $\Ric(X,X) > - \delta$ (as distributions) for any local smooth vector field 
$X \in \mathfrak{X}(U)$ ($U \subseteq K$ open) with $\|X\|_h = 1$ and $|g(X,X)| < \varepsilon$ on $U$.
\end{definition}

The above requirement $|g(X,X)| < \varepsilon$ may equivalently be replaced by $\|X - N\|_h < \varepsilon$, where $N$ is a $C^1$ $g$-null vector field on $U$, cf.\ \cite[Lem.\ 5.2]{G20}. Yet another useful equivalent reformulation is the following rescaled version, cf.\ \cite[Def.\ 5.3]{G20}.

\begin{Lemma}\label{Lemma: distribnullequivalent} A $C^1$-spacetime $(M,g)$ satisfies the distributional null energy condition if and only if for any compact $K\subseteq M$, any $c_1,c_2 > 0$ and any $\delta > 0$ there is $\varepsilon=\varepsilon(\delta,K,c_1,c_2) > 0$ such that $\Ric(X,X) > - \delta$ (as distributions) for any local smooth vector field $X \in \mathfrak{X}(U)$, $U \subseteq K$, with $0 < c_1 \leq \|X\|_h \leq c_2$ and $|g(X,X)| < \varepsilon$ on $U$.
\end{Lemma}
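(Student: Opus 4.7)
The reverse implication is trivial: taking $c_1 = c_2 = 1$ in the rescaled condition reproduces Definition \ref{definition: distributionalnullenergycondition} verbatim. For the forward implication, my plan is to rescale an arbitrary vector field with $h$-norm in $[c_1,c_2]$ down to unit $h$-norm, apply the original null energy condition to the rescaled field, and then rescale back, exploiting the $C^\infty$-bilinearity of the Ricci tensor together with the fact that nonnegative distributions are measures.

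Concretely, given compact $K \subseteq M$ and $c_1, c_2, \delta > 0$, I would set $\delta_1 := \delta/c_2^2$ and feed $(\delta_1, K)$ into Definition \ref{definition: distributionalnullenergycondition} to obtain some $\varepsilon_1 > 0$. I then define $\varepsilon := \varepsilon_1 c_1^2$ and claim this $\varepsilon$ works for the rescaled condition. Indeed, let $X \in \X(U)$ ($U \subseteq K$ open) be smooth with $c_1 \le \|X\|_h \le c_2$ and $|g(X,X)| < \varepsilon$ on $U$. Since $\|X\|_h^2 = h(X,X)$ is smooth and bounded below by $c_1^2 > 0$, the vector field $Y := X/\|X\|_h$ is smooth on $U$, satisfies $\|Y\|_h \equiv 1$, and obeys $|g(Y,Y)| = |g(X,X)|/\|X\|_h^2 < \varepsilon/c_1^2 = \varepsilon_1$. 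The hypothesis then yields $\Ric(Y,Y) \ge -\delta_1$ as distributions on $U$.

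To conclude, I would write $X = \|X\|_h\, Y$ and use the $C^\infty$-bilinearity of $\Ric$ (available by \eqref{eq:Ck-ext} with $k=1$, applied to the smooth factor $\|X\|_h$) to obtain the pointwise identity of order-$1$ distributions $\Ric(X,X) = \|X\|_h^2\,\Ric(Y,Y)$. The key observation is now that $\Ric(Y,Y) + \delta_1 \ge 0$ distributionally, and any nonnegative distribution is a Radon measure by \cite[Thm.\ 2.1.7]{Hoe90}; multiplying such a measure by the nonnegative continuous function $\|X\|_h^2$, which is bounded above by $c_2^2$, preserves nonnegativity and hence produces the distributional inequality $\|X\|_h^2\,\Ric(Y,Y) \ge -\delta_1\, \|X\|_h^2 \ge -\delta_1 c_2^2 = -\delta$, as required.

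There is no genuine obstacle here; the only point worth flagging is the legitimacy of multiplying the distributional inequality $\Ric(Y,Y) + \delta_1 \ge 0$ by the smooth nonnegative function $\|X\|_h^2$. Once one invokes the measure representation of nonnegative distributions, the rest is bookkeeping with constants, and the symmetric choice $\delta_1 = \delta/c_2^2$, $\varepsilon = \varepsilon_1 c_1^2$ is dictated by matching the lower and upper bounds on $\|X\|_h$ to the two scaling invariances of the problem.
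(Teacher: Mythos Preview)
Your argument is correct and is the natural rescaling proof; the paper itself does not supply a proof but simply refers to \cite[Def.\ 5.3]{G20}, where precisely this normalisation idea is used. Two minor remarks: (i) you do not need the measure representation of nonnegative distributions---for any nonnegative test density $\omega$ one has $\langle \|X\|_h^2(\Ric(Y,Y)+\delta_1),\omega\rangle = \langle \Ric(Y,Y)+\delta_1,\|X\|_h^2\,\omega\rangle \ge 0$ directly from the definition of nonnegativity; (ii) with $\delta_1=\delta/c_2^2$ your chain of inequalities yields $\Ric(X,X)\ge -\delta$ rather than the strict $>-\delta$ appearing in the statement, so if one insists on the strict sign it is cleaner to take $\delta_1=\delta/(2c_2^2)$, which gives $\Ric(X,X)\ge -\delta/2 > -\delta$.
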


The following analogue of Lemma \ref{Lemma: timelikeenergyconditionapproxmetrics1} and Corollary \ref{Corollary: timelikeenergyconditionapproxmetrics2} shows that the above definition of the null energy condition is the correct one in the $C^1$-case (see \cite[Lem.\ 5.5]{G20}).

\begin{Lemma}\label{Lemma: nullenergyconditionapproxmetrics}
Let $(M,g)$ be a $C^1$-spacetime satisfying the distributional null energy condition. Let $K\subseteq M$ be compact and let $c_1,c_2 > 0$. Then for all $\delta > 0$ there is $\varepsilon_0 = \varepsilon_0(\delta,K,c_1,c_2) > 0$ such that
\begin{align*}
    \forall \varepsilon < \varepsilon_0 \ \forall X \in TM|_K \text{ with } 0 < c_1 \leq \|X\|_h \leq c_2 \text{ and } \check{g}_{\varepsilon}(X,X) = 0:\ \Ric[\check{g}_{\varepsilon}](X,X) > - \delta.
\end{align*}
    
\end{Lemma}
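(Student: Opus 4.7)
The plan is to argue by contradiction using the rescaled distributional null energy condition (Lemma \ref{Lemma: distribnullequivalent}) combined with a Friedrichs-type transfer between the distributional Ricci of $g$ and the smooth Ricci of $\check g_\varepsilon$. Suppose the conclusion fails: there exist $\delta>0$, a compact $K$, constants $c_1,c_2>0$, a sequence $\varepsilon_j\to 0$, and vectors $X_j\in TM|_K$ with $c_1\le\|X_j\|_h\le c_2$, $\check g_{\varepsilon_j}(X_j,X_j)=0$ and $\Ric[\check g_{\varepsilon_j}](X_j,X_j)\le -\delta$ at $p_j:=\pi(X_j)$. After a finite-cover reduction we may work in a single coordinate chart $U$ containing $K$ with compact closure.

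First I would build a family of smooth local extensions. For each $X_j$, extend it to a smooth vector field $\tilde X_j$ on a small chart-ball $U_j\ni p_j$ by declaring its coordinate components constant and equal to those of $X_j$ at $p_j$. By $h$-compactness of the set $\{X\in TM|_K:\|X\|_h\le c_2\}$, the radius of $U_j$ can be chosen uniformly in $j$, and one has $c_1/2\le\|\tilde X_j\|_h\le 2c_2$ on $U_j$ for $j$ large. The key estimate on $|g(\tilde X_j,\tilde X_j)|$ on $U_j$ comes from splitting
\[
g(\tilde X_j,\tilde X_j)=\bigl(g-\check g_{\varepsilon_j}\bigr)(\tilde X_j,\tilde X_j)+\check g_{\varepsilon_j}(\tilde X_j,\tilde X_j),
\]
where the first term is $O\bigl(\|g-\check g_{\varepsilon_j}\|_{\infty,K}\bigr)=o(1)$ by Lemma \ref{Lemma: approximatingmetrics}(ii), and the second term vanishes at $p_j$ and is controlled on $U_j$ by the $C^1$-norm of $\check g_{\varepsilon_j}$ (bounded on compacta by Lemma \ref{Lemma: approximatingmetrics}(ii)) times the (uniform) radius of $U_j$, which can be made arbitrarily small by shrinking. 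Consequently $\sup_{U_j}|g(\tilde X_j,\tilde X_j)|\to 0$.

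Next I would invoke Lemma \ref{Lemma: distribnullequivalent}: for all $j$ large, $\Ric[g](\tilde X_j,\tilde X_j)\ge -\delta/3$ as a distribution of order $1$ on $U_j$. To convert this into a pointwise bound at $p_j$ for the smoothed metric, I use the regularisation structure in two steps. Lemma \ref{Lemma: approximatingmetrics}(iii) gives $\check g_{\varepsilon_j}-g\star_M\rho_{\varepsilon_j}\to 0$ in $C^\infty_{\mathrm{loc}}$, so $\Ric[\check g_{\varepsilon_j}]-\Ric[g\star_M\rho_{\varepsilon_j}]\to 0$ uniformly on compact subsets of $U$. The Friedrichs-type lemma (\cite[Lem.\ 4.8]{G20}), applied component-wise in the chart to the quadratic differential expression for Ricci, yields $\Ric[g\star_M\rho_{\varepsilon_j}](\tilde X_j,\tilde X_j)-\bigl(\Ric[g](\tilde X_j,\tilde X_j)\bigr)\star\rho_{\varepsilon_j}\to 0$ in $C^0_{\mathrm{loc}}$, uniformly in the parameter $j$ because the family $\{\tilde X_j\}$ has uniformly bounded coefficients with vanishing derivatives. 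Since convolution with a non-negative mollifier preserves the distributional inequality $\Ric[g](\tilde X_j,\tilde X_j)+\delta/3\ge 0$, one obtains $\bigl(\Ric[g](\tilde X_j,\tilde X_j)\bigr)\star\rho_{\varepsilon_j}(p_j)\ge -\delta/3$.

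Chaining the three comparisons, $\Ric[\check g_{\varepsilon_j}](\tilde X_j,\tilde X_j)(p_j)\ge -\delta/2$ for $j$ large, and since $\tilde X_j(p_j)=X_j$ this is precisely $\Ric[\check g_{\varepsilon_j}](X_j,X_j)\ge -\delta/2$, contradicting the standing assumption. The main obstacle is the uniform handling of the Friedrichs-lemma remainder: one must make sure the error terms produced by commuting convolution past the (nonlinear, second-order) Ricci operator can be bounded independently of $j$, so that the chart radii and shrinking procedure from step two are compatible with the mollification scale $\varepsilon_j$. This uniformity is the essential ingredient carrying the contradiction through.
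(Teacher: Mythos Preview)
The paper does not give its own proof of this lemma: it simply cites \cite[Lem.\ 5.5]{G20}. Your proposal essentially reconstructs that argument---contradiction via a constant-coefficient local extension, invoking the rescaled null energy condition (Lemma \ref{Lemma: distribnullequivalent}) on the almost-null field, then transferring to $\check g_\varepsilon$ through the Friedrichs-type commutation of convolution with the Ricci operator together with Lemma \ref{Lemma: approximatingmetrics}(iii). This is the correct route, and your closing remark correctly isolates the one genuinely delicate point: the Friedrichs remainder and the $\Ric[\check g_\varepsilon]-\Ric[g\star_M\rho_\varepsilon]$ comparison both involve products of a term of size $O(\varepsilon)$ with second derivatives of the mollified metric, which are only $O(\varepsilon^{-1})$ for $g\in C^1$; one needs precisely the quantitative estimate $\|\check g_\varepsilon-g\star_M\rho_\varepsilon\|_{\infty,K}\le c_K\varepsilon$ from Lemma \ref{Lemma: approximatingmetrics}(iii) (not merely $C^\infty_{\mathrm{loc}}$-convergence of the difference) to make this product go to zero. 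Once that is made explicit, the compatibility of the chart-ball radius (fixed by the threshold from Lemma \ref{Lemma: distribnullequivalent}) with the mollification scale $\varepsilon_j\to 0$ is automatic.
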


The above distributional versions of the timelike and null energy conditions were successfully used in \cite{G20} to prove the singularity theorems of Hawking and of Penrose for $C^1$-spacetimes. For our $C^1$-proof of the Hawking-Penrose singularity theorem, we need a distributional version of the genericity condition along geodesics. 

\begin{definition}(Distributional genericity condition)
\label{definition: strongdistributionalgenericity}

\noindent Let $(M,g)$ be a $C^1$-spacetime and let $\gamma:I \to M$ be a causal geodesic. We say that the \textit{distributional genericity condition} holds at $\gamma(t_0)$, $t_0 \in I$, if there is a neighbourhood $U$ of $\gamma(t_0)$, $C^1$-vector fields $X,V$ on $U$ with the property that $(X \circ \gamma)(t) = \dot{\gamma}(t)$, $(V \circ \gamma)(t) \perp \dot{\gamma}(t)$ for all $t \in I$ with $\gamma(t) \in U$, and there exist $c > 0$ and $\delta > 0$ such that for all $C^1$-vector fields $\tilde{X}$, $\tilde{V}$ on $U$ with $\|X - \tilde{X}\|_h < \delta$ and $\|V - \tilde{V}\|_h < \delta$ we have
\begin{equation}\label{eq:genericity}
    g(R(\tilde{X},\tilde{V})\tilde{V},\tilde{X}) > c \quad \text{in } \mathcal{D}^{'(1)}(U).
\end{equation}
\end{definition}

This condition extends the one used in the $C^{1,1}$-case in \cite[Def.\ 2.2]{GGKS} as we shall discuss in Remark \ref{remark: distribgenericityequivalentinC11}, below. Again it allows us to derive an approximate genericity condition for regularisations. Note, however, that it will turn out to be essential to use $C^1$-vector fields (and their approximations), which can be inserted into  $R\in{\D'}^{(1)}{\mathcal T}^1_3(M)$, by \eqref{eq:Ck-ext}.

Also the distributional genericity condition appropriately restricts to smaller sets, a technical point which we clarify first.
\begin{remark}\label{rem:localizegen}
In Definition \ref{definition: strongdistributionalgenericity} we may shrink the neighbourhood $U$ while retaining the constants $c$ and $\delta$. More precisely, let $U$, $X$, $V$, $c$, and $\delta$ be as in the definition and let $W\comp U$, then \eqref{eq:genericity} holds for all $C^1$-vector fields $\tilde X_W$, $\tilde V_W$ on $W$ which satisfy $\|X|_W-\tilde X_W\|_h<\delta$ and $\|V|_W-\tilde V_W\|_h<\delta$. Indeed, by the sheaf property of distributions, it suffices to establish that \eqref{eq:genericity} then holds locally around each point $x\in W$. To establish this let $\chi\in\D(W)$ be a plateau function that equals $1$ in a neighbourhood $W'$ of $x$ and set $\tilde X=X+\chi(\tilde X_W- X)$ and likewise for $\tilde V$. Then $\tilde X$ and $\tilde V$ are $C^1$-vector fields on $U$  with $\|X - \tilde{X}\|_h < \delta$ and $\|V - \tilde{V}\|_h < \delta$ 
 \eqref{eq:genericity}. So we obtain on $W'$ 
\begin{equation}
  c< g(R(\tilde{X},\tilde{V})\tilde{V},\tilde{X}) = g(R(\tilde{X}_W,\tilde{V}_W)\tilde{V}_W,\tilde{X}_W).
\end{equation}
\end{remark}

\begin{Lemma}
\label{Lemma: genericityfriedrichs}
Let $(M,g)$ be a $C^1$-spacetime and let $g_{\varepsilon}$ be either
$\check{g}_{\varepsilon}$ or $\hat{g}_{\varepsilon}$. Suppose $\gamma$ is a
$g$-causal $g$-geodesic and assume that the distributional genericity condition holds at $\gamma(t_0)$ (with neighbourhood $U$, vector fields $X,V$ and constants $c,\delta$). Suppose that $X_{\varepsilon}$, $V_{\varepsilon}$ are $C^1$-vector fields on $U$ with $X_{\varepsilon}, V_{\varepsilon} \to X,V$ in
$C^1_{{\mathrm{loc}}}(U)$. Then for any compact subset $K\comp U$
there is $\tilde{\varepsilon} > 0$ such that for all $\varepsilon < \tilde{\varepsilon}$:
\begin{align*}
    g_{\varepsilon}(R[g_{\varepsilon}](X_{\varepsilon},V_{\varepsilon})V_{\varepsilon},X_{\varepsilon}) > \frac{c}{2}
\end{align*}
on $K$.
\end{Lemma}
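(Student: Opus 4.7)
The overall strategy is a Friedrichs-type regularisation argument modelled on the proofs of Lemmas \ref{Lemma: timelikeenergyconditionapproxmetrics1} and \ref{Lemma: nullenergyconditionapproxmetrics} from \cite{G20}. I would apply the distributional genericity condition using (cut-off versions of) $X_\varepsilon, V_\varepsilon$ themselves as test vector fields, and then transfer the resulting distributional inequality for $g$ to a pointwise inequality for the smooth metric $g_\varepsilon$ by comparing with a suitable chartwise mollification.

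For the first step, choose an open set $U_0$ with $K \comp U_0 \comp U$ and a cut-off $\chi \in \mathscr{D}(U)$ with $\chi \equiv 1$ on $\overline{U_0}$. Set $\tilde X_\varepsilon := \chi X_\varepsilon + (1-\chi) X$ and $\tilde V_\varepsilon := \chi V_\varepsilon + (1-\chi) V$, which are $C^1$-vector fields on $U$ agreeing with $X_\varepsilon, V_\varepsilon$ on $U_0$ and with $X, V$ outside $\mathrm{supp}(\chi)$. Since $X_\varepsilon \to X$, $V_\varepsilon \to V$ in $\Celoc(U)$, uniform convergence on $\mathrm{supp}(\chi) \comp U$ yields $\|X - \tilde X_\varepsilon\|_h < \delta$ and $\|V - \tilde V_\varepsilon\|_h < \delta$ on all of $U$ for $\varepsilon$ sufficiently small. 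The distributional genericity condition then gives
\[
F_\varepsilon := g(R(\tilde X_\varepsilon, \tilde V_\varepsilon) \tilde V_\varepsilon, \tilde X_\varepsilon) - c \geq 0 \quad \text{in } \Dpo(U),
\]
so $F_\varepsilon$ is a non-negative distribution, hence a non-negative measure. Restricted to $U_0$ (where the cut-offs are trivial) $F_\varepsilon$ coincides with $g(R(X_\varepsilon, V_\varepsilon) V_\varepsilon, X_\varepsilon) - c$. Working chart-by-chart on $U_0$ (with a partition of unity to cover $K$), convolution of this non-negative measure with the non-negative mollifier $\rho_\varepsilon$ produces a non-negative continuous function, so for $\varepsilon$ small enough one has
\[
[g(R(X_\varepsilon, V_\varepsilon) V_\varepsilon, X_\varepsilon)] \star \rho_\varepsilon \geq c \quad \text{pointwise on } K.
\]

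The remaining step, and the expected main obstacle, is the Friedrichs-type estimate
\[
g_\varepsilon(R[g_\varepsilon](X_\varepsilon, V_\varepsilon) V_\varepsilon, X_\varepsilon) - [g(R(X_\varepsilon, V_\varepsilon) V_\varepsilon, X_\varepsilon)] \star \rho_\varepsilon \longrightarrow 0 \quad \text{uniformly on } K.
\]
In local coordinates the curvature is a polynomial in $g, g^{-1}, \partial g, \partial^2 g$ with the second derivatives entering linearly, so I would split the difference into two pieces: (i) the contribution of replacing $g_\varepsilon$ by $g\star_M \rho_\varepsilon$, controlled via Lemma \ref{Lemma: approximatingmetrics}(iii) together with $g_\varepsilon - g\star_M \rho_\varepsilon \to 0$ in $C^\infty_{\mathrm{loc}}$; and (ii) the commutator between convolution with $\rho_\varepsilon$ and multiplication by the $C^1$-regular factors built from $g, \partial g, X_\varepsilon, V_\varepsilon$, which is the content of the refined Friedrichs lemma \cite[Lem.\ 4.8]{G20}. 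The delicate part is (i): coefficients involving $\partial^2(g\star_M\rho_\varepsilon)$ are only $O(\varepsilon^{-1})$ and must be paired with the correspondingly small factors from $g_\varepsilon - g\star_M \rho_\varepsilon$, for which the $C^\infty_{\mathrm{loc}}$-convergence (rather than merely $C^2$) is decisive. Combining this with the mollified inequality yields $g_\varepsilon(R[g_\varepsilon](X_\varepsilon, V_\varepsilon) V_\varepsilon, X_\varepsilon) \geq c - o(1)$ uniformly on $K$, hence the desired strict bound $> c/2$ for all sufficiently small $\varepsilon$.
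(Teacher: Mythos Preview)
Your proposal is correct and follows essentially the same route as the paper's proof: apply the genericity condition to $X_\varepsilon,V_\varepsilon$ themselves, mollify the resulting distributional inequality with the non-negative kernel $\rho_\varepsilon$, and then close the gap to $g_\varepsilon(R[g_\varepsilon](X_\varepsilon,V_\varepsilon)V_\varepsilon,X_\varepsilon)$ by a Friedrichs-type commutator argument combined with Lemma~\ref{Lemma: approximatingmetrics}(iii). Your cut-off construction for $\tilde X_\varepsilon,\tilde V_\varepsilon$ is a cleaner way to guarantee the $\|\cdot\|_h$-closeness on all of $U$ than the paper's shortcut of taking $U$ relatively compact; otherwise the two arguments are the same, including the split of the main estimate into the commutator handled via \cite[Lem.~4.8]{G20} and the replacement $g\star_M\rho_\varepsilon \leadsto g_\varepsilon$ handled by pairing the $O(\varepsilon^{-1})$ second-derivative terms with the $O(\varepsilon)$ bound on $g_\varepsilon-g\star_M\rho_\varepsilon$ (and using the $C^\infty_{\mathrm{loc}}$-convergence for the remaining term $\partial^2 g_\varepsilon-\partial^2(g\star_M\rho_\varepsilon)$).
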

\begin{proof}
We may take $U$ to be relatively compact. By assumption, for all $C^1$-vector fields $\tilde{X},\tilde{V}$ on $U$ with $\|X - \tilde{X}\|_h < \delta$ and $\|V - \tilde{V}\|_h < \delta$, we have
\begin{align*}
     g(R(\tilde{X},\tilde{V})\tilde{V},\tilde{X}) > c \quad \text{in } \mathcal{D}^{'(1)}(U).
\end{align*}
Since $X_{\varepsilon} \to X$ and $V_{\varepsilon} \to V$ in $C^1_{{\mathrm loc}}$, there is $\varepsilon_0 > 0$ such that for all $\varepsilon < \varepsilon_0$, $\|X_{\varepsilon} - X\|_h < \delta$ and $\|V_{\varepsilon} - V\|_h < \delta$. Let from now on $\varepsilon < \varepsilon_0$. Hence, the assumption gives
\begin{align*}
    g(R(X_{\varepsilon},V_{\varepsilon})V_{\varepsilon},X_{\varepsilon}) > c \quad \text{in } \mathcal{D}^{'(1)}(U).
\end{align*}
Since $\star_M$-convolution with a non-negative mollifier $\rho_{\varepsilon}$
respects positivity, we have
\begin{align*}
    g(R(X_{\varepsilon},V_{\varepsilon})V_{\varepsilon},X_{\varepsilon}) \star_M \rho_{\varepsilon} > c.
\end{align*}
We claim that
\begin{align*}
    g(R(X_{\varepsilon},V_{\varepsilon})V_{\varepsilon},X_{\varepsilon}) \star_M \rho_{\varepsilon} - (g \star_M \rho_{\varepsilon}) (R[g\star_M \rho_{\varepsilon}](X_{\varepsilon},V_{\varepsilon})V_{\varepsilon},X_{\varepsilon}) \to 0 \quad \text{in } C^0_{\mathrm{loc}}(U).
\end{align*}
This is seen by first showing that
\begin{align*}
    g(R(X_{\varepsilon},V_{\varepsilon})V_{\varepsilon},X_{\varepsilon}) \star_M \rho_{\varepsilon} - g((R[g] \star_M \rho_{\varepsilon})(X_{\varepsilon},V_{\varepsilon})V_{\varepsilon},X_{\varepsilon}) \to 0 \quad \text{in } C^0_{\mathrm{loc}}(U).
\end{align*}
The latter locally corresponds to the question whether given a net $h_{\varepsilon}$ of $C^1$-functions
that converges in $C^1_{\mathrm{loc}}$ to $h \in C^1$ and a first order distribution $T$ we can show
$(h_{\varepsilon}T) *\rho_{\varepsilon} - h_{\varepsilon} (T * \rho_{\varepsilon}) \to 0$ in
$C^0_{\mathrm{loc}}$. This is shown by arguing along the lines of \cite[Lem.\ 4.8]{G20} and using that $T$ can locally be
written as a first order distributional derivative of a $C^0$-function.
By a similar argument
(noting the fast convergence speed of $g \star_M \rho_{\varepsilon} \to g$), the right hand side
above is $C^0_{\mathrm{loc}}$-equivalent (i.e.\ the difference goes to $0$ in $C^0_{\mathrm{loc}}$) to
\begin{align*}
    (g \star_M \rho_{\varepsilon})(R[g] \star_M \rho_{\varepsilon}(X_{\varepsilon},V_{\varepsilon})V_{\varepsilon},X_{\varepsilon}).
\end{align*}
This is $C^0_{\mathrm{loc}}$-equivalent to
\begin{align*}
    (g \star_M \rho_{\varepsilon})(R[g \star_M \rho_{\varepsilon}](X_{\varepsilon},V_{\varepsilon})V_{\varepsilon},X_{\varepsilon})
\end{align*}
by the same arguments as in \cite[Lem.\ 4.6]{G20} and uniform boundedness of $X_\eps$, $V_{\varepsilon}$. 

Next, we show that
\begin{align*}
    (g \star_M \rho_{\varepsilon})(R[g \star_M \rho_{\varepsilon}](X_{\varepsilon},V_{\varepsilon})V_{\varepsilon},X_{\varepsilon}) - g_{\varepsilon}(R[g_{\varepsilon}](X_{\varepsilon},V_{\varepsilon})V_{\varepsilon},X_{\varepsilon}) \to 0
\end{align*}
in $C^0_{\mathrm{loc}}(U)$. Upon writing everything out in coordinates, we see that only terms of the form
\begin{align*}
    (g \star_M \rho_{\varepsilon}) \partial \Gamma[g \star_M \rho_{\varepsilon}] X_{\varepsilon} V_{\varepsilon} V_{\varepsilon} X_{\varepsilon} - g_{\varepsilon} \partial \Gamma[g_{\varepsilon}] X_{\varepsilon} V_{\varepsilon} V_{\varepsilon} X_{\varepsilon}
\end{align*}
are of interest, because for any of the other terms in the difference, both summands converge in $C^0_{\mathrm{loc}}(U)$ to the respective expressions for $g$. Since the $X_{\varepsilon}$, $V_{\varepsilon}$ are uniformly bounded, we may omit them in our estimates without loss of information. But for terms as above, we only need to consider those expressions where we have second derivatives of the metrics, since $\partial_j (g \star_M \rho_{\varepsilon})^{mk} \to \partial_j g^{mk}$ in $C^0_{\mathrm{loc}}(U)$ and similarly for $\partial_j (g_{\varepsilon})^{mk}$. Hence, we work in a chart, fix a compact set $K$ and are left with estimating the following expression on $K$:
\begin{align*}
    &|(g \star_M \rho_{\varepsilon})_{bc} (g \star_M \rho_{\varepsilon})^{mk} \partial_s \partial_r (g \star_M \rho_{\varepsilon})_{ij} - (g_{\varepsilon})_{bc} (g_{\varepsilon})^{mk} \partial_s \partial_r (g_{\varepsilon})_{ij}|\\
    &\leq |(g \star_M \rho_{\varepsilon})_{bc} (g \star_M \rho_{\varepsilon})^{mk}| \cdot |\partial_s \partial_r (g \star_M \rho_{\varepsilon})_{ij} - \partial_s \partial_r (g_{\varepsilon})_{ij}| + II.
\end{align*}
The first term goes to $0$ uniformly on $K$ by Lemma \ref{Lemma: approximatingmetrics}(iii). We continue estimating the remaining term $II$:
\begin{align*}
    II &= |(g \star_M \rho_{\varepsilon})_{bc} (g \star_M \rho_{\varepsilon})^{mk} \partial_s \partial_r (g_{\varepsilon})_{ij} - (g_{\varepsilon})_{bc} (g_{\varepsilon})^{mk} \partial_s \partial_r (g_{\varepsilon})_{ij}|\\
    &\leq |\partial_s \partial_r (g_{\varepsilon})_{ij}| \cdot |(g \star_M \rho_{\varepsilon})_{bc} (g \star_M \rho_{\varepsilon})^{mk}  - (g_{\varepsilon})_{bc} (g_{\varepsilon})^{mk}|.
\end{align*}
To show that this goes to $0$ uniformly on $K$, it suffices to show that the second factor can be
estimated by a constant times $\varepsilon$ by the same argument as in the proof of \cite[Lem.\ 4.5]{G20}. Since $\|g_{\varepsilon} - (g \star_M \rho_{\varepsilon})\|_{\infty,K}$ and $\|g_{\varepsilon}^{-1} - (g \star_M \rho_{\varepsilon})^{-1}\|_{\infty,K}$ can both be estimated by
a constant times $\varepsilon$ again by Lemma \ref{Lemma: approximatingmetrics}(iii), this is easily seen to hold.

Hence, there is $\varepsilon_1 > 0$ such that for all $\varepsilon < \min(\varepsilon_0,\varepsilon_1)$,
\begin{align*}
    g_{\varepsilon}(R[g_{\varepsilon}](X_{\varepsilon},V_{\varepsilon})V_{\varepsilon},X_{\varepsilon}) > \frac{c}{2} \quad \text{on } K.
\end{align*}
\end{proof}

Next we derive from the distributional genericity condition an estimate on the tidal force operator for the regularised metrics. It is this result which allows us to enter into the Riccati comparison techniques in Section \ref{sec:max_geods}.

\begin{Lemma}
\label{Lemma: tidalforcematrixestimateapprox}
    Let the assumptions of Lemma \ref{Lemma: genericityfriedrichs} hold and 
    let $\gamma_\varepsilon$ be $g_{\varepsilon}$-causal $g_{\varepsilon}$-geodesics, whose
    $g_{\varepsilon}$-causal character is the same as the $g$-causal character of $\gamma$
    and which converge in $C^1_{\mathrm{loc}}$ to $\gamma$. 
    Then there are $\tilde c>0$, $r>0$, $\eps_0>0$ and a neighbourhood $U$ of $\gamma(t_0)$
    such that for any $0<\eps<\eps_0$ there
    exist vector fields $E_i^\eps$ on $U$ such that $E_i^\eps \circ \gamma_\eps$ constitutes a
    $g_\eps$-orthonormal frame along $\gamma_\eps$ and such that $E_i^\eps \to E_i$ in
    $C^1_{\mathrm{loc}}(U)$, where $E_i\circ \gamma$ is an orthonormal frame along $\gamma$.
    Further, there  exists some
    $C=C(\eps)>0$ such that along $\gamma_\eps$ the $\eps$-tidal force operator
    $[R_\eps](t):= [R_\eps(.,\dot{\gamma}_\eps(t))\dot{\gamma}_\eps(t)]:[\dot{\gamma}_\eps(t)]^\perp \to
    [\dot{\gamma}_\eps(t)]^\perp$ fulfills  
	\begin{equation}
		[R_\eps](t) > \mathrm{diag}(\tilde c,-C,\ldots,-C) \text{ on } [t_0-r,t_0+r]
	\end{equation}
	in terms of the frame $E_i^\eps$ and where we have used the shorthand $R_\varepsilon$ for $R[g_\varepsilon]$.
\end{Lemma}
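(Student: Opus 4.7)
The plan is to realise $E_i^\varepsilon$ via $g_\varepsilon$-Gram--Schmidt starting from $C^1$-extensions $X_\varepsilon,V_\varepsilon$ on $U$ of $\dot\gamma_\varepsilon$ and of the genericity direction, to apply Lemma~\ref{Lemma: genericityfriedrichs} to the pair $(X_\varepsilon,V_\varepsilon)$ in order to get a uniform pointwise lower bound on one distinguished component of the $\varepsilon$-tidal force operator, and finally to combine this with crude continuity bounds on all other components (for each fixed $\varepsilon$) so as to assemble the stated matrix inequality.

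For the first step, after shrinking $U$ to a relatively compact neighbourhood of $\gamma(t_0)$, I would construct $C^1$-vector fields $X_\varepsilon,V_\varepsilon$ on $U$ with $X_\varepsilon\circ\gamma_\varepsilon=\dot\gamma_\varepsilon$, $V_\varepsilon\circ\gamma_\varepsilon\perp_{g_\varepsilon}\dot\gamma_\varepsilon$, and with $X_\varepsilon\to X$, $V_\varepsilon\to V$ in $C^1_{\mathrm{loc}}(U)$. Since $\gamma_\varepsilon\to\gamma$ in $C^2_{\mathrm{loc}}$ by Corollary~\ref{Corollary: Hartmangeodesicversion} and $g_\varepsilon\to g$ in $C^1_{\mathrm{loc}}$, these extensions are produced using the $C^1$-extension machinery of Appendix~\ref{app:b}: a tubular neighbourhood of $\gamma$ deforms $C^1$-continuously to a tubular neighbourhood of each nearby $\gamma_\varepsilon$, in which the vector-field extensions along $\gamma_\varepsilon$ can be transversally propagated, and where the $g_\varepsilon$-orthogonality of $V_\varepsilon$ along $\gamma_\varepsilon$ is enforced by projecting out the $\dot\gamma_\varepsilon$-component before the transversal extension. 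A $g_\varepsilon$-Gram--Schmidt procedure starting from $(X_\varepsilon,V_\varepsilon)$ -- replaced by a pseudo-orthonormal variant in the null case where $\dot\gamma_\varepsilon$ is self-orthogonal, and where $X_\varepsilon$ is first completed to a Lorentzian 2-plane by an auxiliary $C^1$-timelike vector field that converges in $C^1_{\mathrm{loc}}$ -- then produces the $g_\varepsilon$-orthonormal frame $E_i^\varepsilon$ that converges to $E_i$ in $C^1_{\mathrm{loc}}(U)$.

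For the second step, Lemma~\ref{Lemma: genericityfriedrichs} applied to $X_\varepsilon,V_\varepsilon$ gives
\[
g_\varepsilon\bigl(R[g_\varepsilon](X_\varepsilon,V_\varepsilon)V_\varepsilon,X_\varepsilon\bigr)>\tfrac{c}{2}
\]
on a compact set $K\subseteq U$ which, by uniform convergence, contains $\gamma_\varepsilon([t_0-r,t_0+r])$ for all small $\varepsilon$. Expanding $X_\varepsilon$ and $V_\varepsilon$ in the frame $E_i^\varepsilon$ -- where by construction $X_\varepsilon$ lies purely along the first frame vector and $V_\varepsilon$ in the span of the first two -- and exploiting the antisymmetry and the standard symmetries of $R[g_\varepsilon]$ (Lemma~\ref{lem:symm}), the cross-terms in the expansion vanish and the left-hand side reduces to the product of uniformly bounded, uniformly positive scalar factors (the squared $g_\varepsilon$-norms of $X_\varepsilon$ and of the genericity component of $V_\varepsilon$) times precisely the single sectional-curvature entry of $[R_\varepsilon]$ corresponding to the $V$-direction in $[\dot\gamma_\varepsilon]^\perp$. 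Dividing by the scalar factors extracts a uniform, $\varepsilon$-independent lower bound $\tilde c>0$ for this distinguished diagonal entry on $K$. For each fixed $\varepsilon>0$ the remaining entries of $[R_\varepsilon]$ are smooth and hence bounded on $K$ by some $K_1(\varepsilon)$, and a standard Cauchy--Schwarz / AM--GM estimate then shows that for $C=C(\varepsilon)$ chosen sufficiently large,
\[
\langle [R_\varepsilon] w,w\rangle_{g_\varepsilon} > \tilde c\,w_{\mathrm V}^2 - C\sum_{j\neq \mathrm V} w_j^2
\]
for all $w\in[\dot\gamma_\varepsilon]^\perp$, which is precisely the claimed matrix inequality on the full interval $[t_0-r,t_0+r]$.

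The main obstacle is the first step: producing the extensions $X_\varepsilon,V_\varepsilon$ simultaneously with the correct trace on $\gamma_\varepsilon$, the correct $g_\varepsilon$-orthogonality, and $C^1_{\mathrm{loc}}$-convergence to $X,V$. This is exactly the role of Appendix~\ref{app:b}, and the null case requires additional care because self-orthogonality of $\dot\gamma_\varepsilon$ forbids a naive normalisation and forces the pseudo-orthonormal variant of Gram--Schmidt. A secondary but essential bookkeeping point in the second step is to ensure that the coefficient of $V_\varepsilon$ along the distinguished frame direction is uniformly bounded away from zero in $\varepsilon$, so that the extracted lower bound $\tilde c$ is genuinely $\varepsilon$-independent.
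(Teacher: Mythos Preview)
Your high-level strategy matches the paper's: extend the relevant vector fields to $U$ via Appendix~\ref{app:b}, apply Lemma~\ref{Lemma: genericityfriedrichs} to obtain a uniform pointwise lower bound on the distinguished sectional-curvature entry, and then choose $C(\varepsilon)$ large to dominate the remaining (smooth, hence bounded for fixed $\varepsilon$) entries of $[R_\varepsilon]$. The matrix-inequality step you describe is essentially the same eigenvalue argument the paper borrows from \cite[Lem.~3.7]{GGKS}.

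The difference is in how the frame $E_i^\varepsilon$ is built, and here your approach has a genuine gap. You construct $E_1^\varepsilon$ by Gram--Schmidt on the extended pair $(X_\varepsilon,V_\varepsilon)$ and then divide the bound from Lemma~\ref{Lemma: genericityfriedrichs} by ``uniformly bounded, uniformly positive scalar factors''. But in the null case nothing prevents $V(\gamma(t_0))$ from being a scalar multiple of $\dot\gamma(t_0)$ (both lie in $(\dot\gamma(t_0))^\perp$), in which case the component of $V_\varepsilon$ transverse to $X_\varepsilon$ vanishes at $t_0$: your Gram--Schmidt step degenerates, and the scalar factor you want to divide by is not bounded away from zero on the interval. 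You flag this as a ``secondary bookkeeping point'', but it is exactly the obstruction. The paper handles it by a preliminary argument: one first shows (via the symmetries of $R$, Lemma~\ref{lem:symm}, and cylindrically constant extension) that $V\circ\gamma$ cannot be proportional to $\dot\gamma$ \emph{everywhere} near $t_0$, picks a nearby $t_1$ where it is not, and \emph{replaces} $V$ by the $g$-parallel transport $V_1$ of its normalised value at $t_1$. Because the genericity condition is $C^0$-stable in $V$ (Definition~\ref{definition: strongdistributionalgenericity}), $V_1$ still satisfies it, and now $E_1:=V_1$ is a unit vector in $[\dot\gamma]^\perp$. Lemma~\ref{Lemma: genericityfriedrichs} is then applied directly with $V_\varepsilon=E_1^\varepsilon$, yielding $R^\varepsilon_{11}>c/2$ with no division at all. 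The remaining frame vectors are obtained by $g_\varepsilon$-parallel transport of a converging sequence of orthonormal bases at $\gamma_\varepsilon(0)$ (Remark~\ref{remark:parallel_transport}), and only then is Lemma~\ref{lem:A2} invoked to extend them to $U$.

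A further point: the paper's parallel-transport construction produces frames that are \emph{parallel} along $\gamma_\varepsilon$. The statement of the Lemma does not demand this, but the downstream application (condition~(ii) of Lemma~\ref{Lemma: conjpointssmoothmetric}) does. Your Gram--Schmidt frame on $U$ restricted to $\gamma_\varepsilon$ is orthonormal but not parallel, so even where your argument goes through it would not feed into Theorems~\ref{Theorem: notimelikelines} and~\ref{Theorem: nonulllines} without reworking.
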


\begin{proof}
    
    Let w.l.o.g.\ $t_0=0$. We deal with the case of $\gamma$ being null or timelike 
    simultaneously and construct a $g$-orthonormal frame for $[\dot{\gamma}]^\perp$ along $\gamma$. First note
    that, if $\gamma$ is timelike, $V\circ\gamma$
    is nowhere proportional to $\dot{\gamma}$, but the same may not be true in the null case.
    First we will show that we can replace $V$ by a vector field whose restriction to $\gamma$ is parallel  in $[\dot{\gamma}]^\perp$ and still retain
    \eqref{eq:genericity}.
    
    There must exist some $t$ such that $V\circ \gamma(t)\not \in \text{span}(\dot{\gamma}(t))$. Indeed, suppose 
    to the contrary that $V(\gamma(t)) = f(t)\cdot X(\gamma(t))$ for some real-valued $C^1$ function $f$ and all $t$. 
    Then extending both $X\circ \gamma$ and $V\circ \gamma$ in a 
    cylindrically constant fashion as given by Lemma \ref{lem:A1}, we obtain vector fields $\tilde V, \tilde X$ on a neighbourhood $W$
    of $\gamma(0)$ that are proportional and can be made arbitrarily close to $V$ resp.\ $X$ in $C^0$ when shrinking $W$
    since they restrict to the same vector fields on $\gamma$. According to Definition \ref{definition: strongdistributionalgenericity}
    they therefore satisfy \eqref{eq:genericity} close to $\gamma(0)$. However, by the symmetry properties of $R$ (cf.\ Lemma \ref{lem:symm}) and the proportionality of $\tilde V$ and $\tilde X$ we must have $R(\tilde X,\tilde V)\tilde V =0$ in $\D'(W)$, a contradiction.

    Hence there exists some $v:=V(t_1)$ with $v\in [\dot{\gamma}(t_1)]^\perp$. Defining $V_1$ as the
    parallel translate of $v$ along $\gamma$ we obtain $V_1\in [\dot{\gamma}]^\perp$ along all of $\gamma$. 
    By the same continuity argument as above, \eqref{eq:genericity}  
    then still holds for the vector fields $V_1$ and $X$ (note that a $t_1$ as above must exist arbitrarily close to $0$). 
    In the timelike case we may assume that $\gamma$ is
    parametrised to unit speed, and we set $e_n:=\dot{\gamma}(0)$. If $\gamma$ is null we can write $\dot{\gamma}(0)=e_{n-1} + e_n$
    for orthonormal vectors $e_{n-1}, e_n$ with $e_n$ timelike.

    In the null case, extend $e_1:=V_1(0)\in [\dot{\gamma}(0)]^\perp$ to a $g$-orthonormal basis $\{e_i\}_{i=1}^n$ (with $e_{n-1}, e_n$ as above)
    and define $E_i$ to be the $g$-parallel translate of $e_i$ along $\gamma$ (so that $E_1=V_1$ and $X=E_{n-1}+E_n$ along $\gamma$).
    For $\gamma$ timelike we also construct an orthonormal
    frame $E_1=V_1, E_2,\dots, E_n=X$ along $\gamma$ via parallel transport.
    As $\gamma_\eps\to \gamma$ in $C^1_{\mathrm{loc}}$, we can find a $g_\eps$ orthonormal basis $\{e^\eps_i\}$ at
    $\gamma_\eps(0)$ with $e^\eps_i\to e_i$ in $TM$, and such that $\dot{\gamma}_\eps(0) = e^\eps_n$ if $\gamma_\eps$ is $g_\eps$-timelike, while
    $\dot{\gamma}_\eps(0) = e^\eps_{n-1}+e^\eps_n$ with $e^\eps_n$ timelike if $\gamma_\eps$ is $g_\eps$-null.
    For every $\eps$ let $E_i^\eps$ be the $g_\eps$-parallel
    translate of $e^\eps_i$ along $\gamma_\eps$. By Remark \ref{remark:parallel_transport}, $E_i^\eps \to E_i$ in $C^1_{\mathrm{loc}}$ 
    (i.e.\ in $T(TM)$, uniformly on compact time intervals) and we can use Lemma \ref{lem:A2} to extend $E_i^\eps$ to vector fields on $U$,
    converging in $C^1_{\mathrm{loc}}(U)$ to the extensions of $E_i$ to $U$.
    
    As we have now found vector fields $E^\eps_i$ that restrict to a $g_\eps$-frame along $\gamma_\eps$ and converge to the $g$-frame $E_i$ along $\gamma$, 
    we are in a position to apply Lemma \ref{Lemma: genericityfriedrichs} and hence estimate the $g_\eps$-tidal force operator along $\gamma_\eps$.
    
    For the remainder of the proof we roughly follow the layout of the proof of \cite[Prop. 3.6]{GGKS}. There, for a
    $C^{1,1}$-metric $g$ the statement is first shown for the $g$-tidal force operator along $\gamma$ and then carried over for small $\eps$ by an approximation argument. However, in the $C^1$-setting this is no longer possible, as $R$ is now merely distributional and for a matrix with distributional entries eigenvalues are not defined. Also, a direct approximation procedure of that form is no longer feasible. 
    Instead we use Lemma \ref{Lemma: genericityfriedrichs}, and argue
    separately for any $\epsilon$, making the procedure (and hence the constant $C$) dependent on $\epsilon$\footnote{This is, however, sufficient in our proof of the main result.}. More precisely:
	
	Let $c$ be the constant given by the genericity condition and let $\tilde c:=\frac{c}{2}$ and $0<c_1<\tilde c$.
	Set $R^\epsilon_{ij}:= \langle R_\epsilon(E^\epsilon_i, X_\epsilon) X_\epsilon, E^\epsilon_j
	\rangle$, $i,j=1,\dots,d$, where $X_\eps = E_n^\eps$ in the timelike case and $X_\eps = E_{n-1}^\eps + E_n^\eps$ in the null case.
	As in the proof of $(3.5)$ in \cite[Lemma 3.7]{GGKS}, we choose $C(\epsilon) >0$ such that 	$\lambda^\eps_\text{min}(x) \geq \frac{\|(R^\eps_{1j})_j\|_e}{\tilde c-c_1}$ for all $x\in U$,
	where $\lambda^\eps_\text{min}(x)$ is the smallest eigenvalue 
	of $(R^\eps_{ij})_{i,j=2}^d+C(\epsilon) \text{Id}_{d-1}$ at $x\in U$. By Lemma \ref{Lemma: genericityfriedrichs} there exists some $\tilde \eps>0$
	such that $R^\epsilon_{11}>\tilde c$ for all $0<\eps<\tilde \eps$ on $U$ (shrinking $U$ once more if necessary). For these $\eps$ it then follows 
	that $R^\eps_{ij}-\text{diag}(c_1, - C(\epsilon), \ldots-C(\epsilon))$ is positive definite on $U$. Then picking 
	$r>0$ such that $\gamma_\epsilon([-r,r]) \subseteq U$ for all $\epsilon$ small, evaluation along $\gamma_\eps$ gives the claim.
\end{proof}

\begin{remark}\label{remark: distribgenericityequivalentinC11} \ 
Note that while Definition \ref{definition: strongdistributionalgenericity} may at first sight appear to be
stronger than a straightforward generalisation of the genericity condition given for $C^{1,1}$-metrics in
\cite{GGKS}, due to the local boundedness of $R$ in that regularity it is actually equivalent:

For $C^{1,1}$-metrics Definition \ref{definition: strongdistributionalgenericity} clearly implies 
Definition 2.2 of \cite{GGKS}. For the reverse implication, let $X, V$ be continuous vector fields
in a neighbourhood of $\gamma(t_0)$ such that $X(\gamma(t))=\dot{\gamma}(t)$ and $V(\gamma(t))\in (\dot{\gamma}(t))^\perp$ for all
$t$ and such that $\langle R(V,X)X,V\rangle>c>0$. Arguing exactly as in the proof of Lemma \ref{Lemma: tidalforcematrixestimateapprox}
we can then use parallel transport and cylindrically constant extension to find $C^1$-vector fields 
that are uniformly close to $X$ and $V$.  It therefore suffices to show that the genericity condition
from \cite{GGKS} is stable under locally uniform perturbation. Thus let
$\tilde V, \tilde X $ be $C^1$ vector fields that are uniformly close to $V, X$. Then one can locally in a coordinate system estimate
$|\tilde V_i - V_i| \le \delta $, $|\tilde X_i - X_i| \le \delta $ for some small $\delta$. Consequently, 
\begin{align*}
    \langle R(\tilde V, \tilde X)\tilde X,\tilde V \rangle =  g_{ij}R^i_{klm} \tilde V^l \tilde
    X^m \tilde X^k \tilde V^j \geq  g_{ij}R^i_{klm}V^l X^m X^k V^j+ \mathcal{O} (\delta)
\end{align*}
where the last term collects all $\delta$-contributions and takes into account the local boundedness of $R$. Consequently, choosing $\delta$ small enough (and appropriately shrinking the neighbourhood of $\gamma(t_0)$) we can secure that the sum still remains $>c/2$.
\end{remark}

\section{Branching}\label{sec:branching}

In metric geometry, where geodesics are defined as (local) minimisers of the length functional, local uniqueness of geodesics
is expressed in the form of a \emph{non-branching} condition. Here, a branch point is defined as an element of a minimiser
at which the curve splits into two minimisers that on some positive parameter interval do not have another point in common, cf., e.g., \cite{shio, vil09}. Similarly, in the synthetic Lorentzian setting \cite{KS18}, the role of causal geodesics is
taken on by maximising causal curves, and non-branching is formulated analogously. In both cases, lower synthetic sectional 
curvature bounds (formulated via triangle comparison in constant curvature model spaces) imply non-branching \cite{shio,KS18}. 
Synthetic Ricci curvature bounds in metric measure spaces (curvature-dimension
conditions), on the other hand, do not imply non-branching of geodesics \cite{ohta14}. Similarly, in recent work on synthetic
Ricci curvature bounds in Lorentzian pre-length spaces \cite{CV20}, a timelike non-branching condition is required in addition to
a timelike-curvature dimension condition to obtain a version of the Hawking singularity theorem.

In the $C^1$-setting we are concerned with in this paper, the coincidence between causal local maximisers and geodesics that 
is familiar from smooth Lorentzian geometry ceases to hold (cf.\ Example \ref{ex:branching} below), although it is still 
true that causal maximisers are geodesics (Lemma \ref{Lemma: maxunbrokengeodesic}). In addition, one generically
cannot expect unique local solvability of the geodesic initial value problem. Nevertheless, on physical grounds, it seems reasonable
to assign a privileged role to causal geodesics that \emph{are} locally maximising. We therefore introduce a (weak notion of a) non-branching 
condition that is intended to preclude locally maximising geodesics from branching (where we do not require the second
branch to be maximising as well):

\begin{definition}\label{def:branching} (Non-branching conditions) 

\noindent    Let $(M,g)$ be a $C^1$-spacetime. A geodesic $\gamma:[a,b]\to M$ branches at $t_0\in (a,b)$ if there 
    exist $\eps>0$ and some geodesic $\sigma$ with $\gamma|_{[t_0-\eps, t_0]}\subseteq \sigma$, but
    $\gamma|_{(t_0,t_0+\eps)}\cap \sigma= \emptyset$.
    
    \noindent $(M,g)$ is called maximally causally (resp.\ timelike, resp.\ null) non-branching (MCNB, MTNB, MNNB), if no maximal causal (resp.\ timelike, resp.\ null) 
    geodesic branches in the above sense.
\end{definition}

\begin{exm}\label{ex:branching}
A $C^1$-spacetime in which maximal causal branching occurs can be constructed from the second Riemannian example given in \cite{HW51}. There, a $C^{1,\alpha}$-Riemannian metric ($\alpha<1$) in the $(u,v)$-plane is given with non-unique solutions to the geodesic initial value problem starting at $\{v=0\}$. Further it is shown that the geodesic boundary value problem for geodesics $\gamma$ starting at the surface $\{v=0\}$ is uniquely solvable and hence such geodesics are at least initially minimizing: If they were not, by properties of any $C^1$-Riemannian manifold as a locally compact length space, there would exist a minimiser $\sigma$ between two points on $\gamma$. However in $C^1$-Riemannian manifolds minimisers are geodesics and by uniqueness (of the boundary value problem) $\gamma=\sigma$ (cf.\ \cite{SS18} for proofs and references).

In \cite{HMSSS} it is shown that causal geodesics in static spacetimes are maximising if and only if their Riemannian parts are minimizing and so one can easily construct both maximising timelike and null geodesics that branch.

\end{exm}

\begin{Lemma}
\label{Lemma: nonbranchingmeettangentiallyequal}
    If the $C^1$-spacetime $(M,g)$ is MCNB (resp.\ MTNB, MNNB), then any two maximal causal (resp.\ timelike, null) geodesics that meet tangentially at an interior point must coincide on their maximal domain of definition.
\end{Lemma}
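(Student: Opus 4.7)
The plan is to show that the set $A := \{t \in (a,b) : \gamma_1(t) = \gamma_2(t) \text{ and } \dot{\gamma}_1(t) = \dot{\gamma}_2(t)\}$, where $(a,b)$ denotes the intersection of the two maximal domains, is both open and closed in $(a,b)$. Since $A$ contains $t_0$ by hypothesis and $(a,b)$ is connected this yields $A=(a,b)$; maximality of each $\gamma_i$ then forces their maximal domains to coincide, since otherwise the longer geodesic would constitute a proper extension of the other.

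Closedness of $A$ is immediate from continuity of the $\gamma_i$ and $\dot\gamma_i$. For openness I argue by contradiction: suppose some $s_0\in A$ is not an interior point. By the time-reversal symmetry (applied to $\gamma_i(-t)$ if needed), WLOG there is a sequence $s_n\searrow s_0$ with $s_n\notin A$; put $s^+:=\sup\{r\geq s_0:[s_0,r]\subseteq A\}$. By closedness $s^+\in A$, so $\gamma_i(s^+)=p$, $\dot{\gamma}_i(s^+)=v$, and $s^+$ is interior to $(a,b)$. The task is then to exhibit a branching of $\gamma_1$ (or $\gamma_2$) at $s^+$ in the sense of Definition \ref{def:branching}, contradicting the relevant non-branching hypothesis.

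Taking $\sigma$ to be a small piece of $\gamma_2$ around $s^+$, the past inclusion $\gamma_1|_{[s^+-\epsilon,s^+]}\subseteq\sigma$ is immediate from $\gamma_1=\gamma_2$ on $[s_0,s^+]$. The crucial observation for the future image-disjointness $\gamma_1|_{(s^+,s^++\epsilon')}\cap\sigma=\emptyset$ is that whenever $\gamma_1(r)=\gamma_2(\phi(r))$ on an interval (with $\gamma_2$ locally injective), differentiating twice and using the geodesic equation for both $\gamma_i$ yields $\phi''=0$, so $\phi$ is affine. Tangent matching at $s^+$ further pins down $\phi'=1$ and $\phi(s^+)=s^+$ on the component of $B:=\gamma_1^{-1}(\mathrm{Im}(\gamma_2))$ abutting $s^+$, forcing $\gamma_1=\gamma_2$ there. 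Consequently, either a gap in $B$ opens just past $s^+$ (directly yielding the branching), or $\gamma_1$ and $\gamma_2$ coincide on a right-interval of $s^+$ (contradicting $s^+=\sup$).

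The main obstacle is ruling out the residual case where components of $B$ accumulate at $s^+$ from the right with nontrivial affine reparametrizations $\phi_n$. In the timelike case, this is blocked by the length identity $L(\gamma_i|_{[s^+,r]})=d_g(p,\gamma_i(r))$ together with $|\dot{\gamma}_i|\equiv|v|$, which force each $\phi_n$ to be the identity, contradicting the assumption that its component is not in $A$. In the null case the length argument degenerates, so one combines it with the symmetric analysis on $B':=\gamma_2^{-1}(\mathrm{Im}(\gamma_1))$: whichever side of $s^+$ the affine reparametrizations $\phi_n$ shift their components to, the opposite side produces a gap for either $B$ or $B'$, yielding the required branching of $\gamma_1$ or $\gamma_2$ and contradicting MNNB (or MCNB). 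The timelike (MTNB) and null (MNNB) arguments together cover the combined causal (MCNB) statement.
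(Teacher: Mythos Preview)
Your open-closed argument hinges on the claim that ``the past inclusion $\gamma_1|_{[s^+-\epsilon,s^+]}\subseteq\sigma$ is immediate from $\gamma_1=\gamma_2$ on $[s_0,s^+]$,'' and this is a genuine gap. By your own setup the sequence $s_n\searrow s_0$ with $s_n\notin A$ forces $s^+=s_0$, so $[s_0,s^+]=\{s_0\}$ is a single point. You therefore never establish that $\gamma_1$ and $\gamma_2$ coincide on any nondegenerate interval; you only know they meet tangentially at one point. But Definition~\ref{def:branching} requires the image $\gamma_1|_{[s^+-\eps,s^+]}$ to lie inside the image of the second geodesic, and tangential contact at a single point does not supply this. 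In effect the argument assumes precisely the conclusion of the lemma. The subsequent analysis of affine reparametrisations $\phi$ and length identities does not close this primary gap (and the components of $B=\gamma_1^{-1}(\mathrm{Im}\,\gamma_2)$ you invoke may themselves be single points, in which case the differentiation argument for $\phi''=0$ is vacuous).

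The paper's proof sidesteps this difficulty by a different device: instead of taking the second geodesic in Definition~\ref{def:branching} to be a piece of $\gamma_2$, it takes the \emph{concatenation} $\gamma_1|_{(-\eps,0]}\cup\gamma_2|_{(0,\eps)}$. Because the tangents match at $0$ (and hence, via the geodesic equation, so do the second derivatives), this concatenation is itself a $C^2$ geodesic, and the past inclusion of $\gamma_1|_{[-\eps,0]}$ into it is then automatic. MCNB therefore forces the images $\gamma_1|_{(0,\eps)}$ and $\gamma_2|_{(0,\eps)}$ to intersect for every $\eps>0$. The paper then locates nearby meeting points, uses maximality together with a broken-geodesic argument (Lemma~\ref{Lemma: maxunbrokengeodesic}) to show the tangents must agree there as well, and from this produces an honest branching configuration with a nondegenerate shared segment, contradicting MCNB. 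Your approach could be repaired along these lines, but as written it is missing this concatenation step.
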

\begin{proof}
    We only show this for MCNB spacetimes, the other cases are
    proven in exactly the same way by replacing all occurrences of
    ``causal" with ``timelike" resp.\ ``null".  Let $\gamma, \sigma$
    be two maximising, causal geodesics meeting tangentially at $p$,
    which is not an endpoint of either curve. W.l.o.g.\ we may assume that $\gamma(0)=\sigma(0)=p$. It suffices to show that there is $\eps>0$ such that $\gamma|_{[0,\eps)}=\sigma|_{[0,\eps)}$. Indeed since then $\gamma$ and $\sigma$ meet tangentially at $t=\eps$, the maximal such interval coincides with the maximal domain of definition (to the right and analogously to the left).
    
    Assume to the contrary that $\gamma|_{[0,\eps)} \neq \sigma|_{[0,\eps)}$ for any $\eps$. As the curves both satisfy the geodesic equation at $p$ and their tangents agree, so do their second derivatives. Thus $\gamma$ and $\gamma|_{(-\eps,0]}\cup \sigma|_{[0,\eps)}$ are both (unbroken) geodesics and by MCNB $\gamma$ cannot branch at $t=0$. So for every $\eps>0$ we have $\gamma|_{(0,\eps)}\cap \sigma|_{(0,\eps)}\neq \emptyset$.
    Hence there exist $t_k \downarrow 0$ such that $\gamma(t_k)=\sigma(t_k)$,
    and by our indirect assumption there exists $s_k \downarrow 0$ such that $\gamma(s_k) \neq \sigma(s_k)$. For any $k$ set 
    \begin{equation*}
        \eta_k:= \sup \{a\in[0,s_k]: \gamma|_{(s_k-a,s_k]} \cap \sigma|_{(s_k-a,s_k]}= \emptyset\}.
    \end{equation*}
    Further choose some $\tilde \eta_k>0$ such that for $I_k:=(s_k-\eta_k,s_k+\tilde \eta_k)$ we have $\gamma|_{I_k}\cap \sigma|_{I_k}= \emptyset$.
    Since $t_k \downarrow 0$ and $\gamma(t_k)=\sigma(t_k)$, we know that $s_k>\eta_k$. 
    However, $\gamma(s_k-\eta_k)=\sigma(s_k-\eta_k)$ by our choice of $\eta_k$. Set $s_k-\eta_k=r_k$ and for simplicity of notation assume $r_k<t_k$.
     If $\dot{\gamma}(r_k)\neq \dot{\sigma}(r_k)$, the curve $\alpha:= \gamma|_{[0,r_k]}\cup \sigma|_{[r_{k},t_k]}$ is a broken geodesic from $p$ to $\gamma(t_k)$ and hence not maximising by 
     \cite[Lem.\ 3.2]{SS21}, i.e.,
    $d(p,\gamma(t_k))> L(\alpha)$. Since $\gamma$ and $\sigma$ are maximisers we have
    $d(p,\gamma(r_{k}))= L(\gamma|_{[0,r_k]})=L(\sigma|_{[0,r_k]})$
    Consequently,
    \[
    d(p,\sigma(t_k))=L(\sigma|_{[0,t_k]})= L(\sigma|_{[0,r_k]})+L(\sigma|_{[r_k,t_k]})=L(\alpha)< d(p,\sigma(t_k)), 
    \]
    a contradiction, and hence $\dot{\gamma}(r_k)=\dot{\sigma}(r_k)$.
    
    Again employing the geodesic equation, also the second derivatives of $\gamma$ and $\sigma$ coincide at $r_k$, and so 
    the curves $\gamma|_{[0,s_k]}$ and $\gamma|_{[0,r_k]}\cup \sigma|_{[r_k,s_k]}$ display maximal causal branching, a contradiction.
\end{proof}

The following result shows that in maximally causally non-branching spacetimes, maximal causal geodesics can always be approximated by geodesics for 
the regularised spacetimes $\hat g_\eps$ and $\check g_\eps$.
Whether such a result is true in general is questionable and very likely regularisation dependent. 

\begin{Proposition}\label{prop:approxmcnb}
    Let $(M,g)$ be a $C^1$-spacetime that is MCNB.
    \begin{enumerate}
        \item Suppose that $M$ is globally hyperbolic and let $\varepsilon_k\searrow 0$ ($k\to\infty$). Set $g_k := \hat g_{\eps_k}$ or $g_k := \check g_{\eps_k}$ for all $k$. If
        $\gamma:[0,a]\to M$ is a maximising, timelike $g$-geodesic
        then for any small $\delta>0$ there exists a subsequence $g_{k_l}$ of $g_k$ and $g_{k_l}$-maximising, timelike geodesics $\gamma_l$ converging in $C^1$ to $\gamma|_{[0,a-\delta]}$.
        \item Let $g_k = \check g_{\eps_k}$ for all $k$. If $M$ is causal
        and $\gamma:[0,a]\to M$ is a $g$-maximizing null geodesic, then for
        any small $\delta>0$, there exists a subsequence $g_{k_l}$ of
        $g_k$, $t_l\downarrow 0$ and $g_{k_l}$-null geodesics
        $\gamma_l:[t_l, a-\delta]\to M$ contained in $\partial
        I_l^+(\gamma(0))$ (hence in particular
        $g_{k_l}$-maximising), which converge in $C^1_{\mathrm{loc}}$ to
        $\gamma|_{[0,a-\delta]}$.

        If, furthermore, $M$ is strongly causal and $S$ is an acausal set such that
        $\gamma \subseteq E^+(S)$, then even $\gamma_l:[t_l,a-\delta]\to E^+_l(S)$ and $\gamma_l(t_l) \in S$ with $\gamma_l(t_l)\to \gamma(0)$.
    \end{enumerate}
\end{Proposition}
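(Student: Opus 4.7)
The guiding idea is to construct, for each $k$, an approximating $g_k$-geodesic $\gamma_k$—maximising in Part (1), and generating the $\check g_k$-achronal boundary (hence again locally maximising) in Part (2)—extract a $C^1_{\mathrm{loc}}$-subsequential limit $\tilde\gamma$ via Corollary \ref{Corollary: Hartmangeodesicversion}, and then use the MCNB hypothesis together with Lemma \ref{Lemma: nonbranchingmeettangentiallyequal} and the fact (recalled in the introduction and already used in the proof of Lemma \ref{Lemma: nonbranchingmeettangentiallyequal}) that maximising causal curves are unbroken $C^2$-geodesics, to conclude $\tilde\gamma=\gamma$ on $[0,a-\delta]$.

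For Part (1) with $g_k=\hat g_{\eps_k}$, the cone inclusion $g\prec\hat g_k$ makes $\gamma$ itself $\hat g_k$-timelike, so $\gamma(0)\ll_{\hat g_k}\gamma(a)$. Working inside a relatively compact globally hyperbolic $g$-neighbourhood of $\gamma([0,a])$ and invoking stability of global hyperbolicity under small cone perturbations, one obtains for small $\eps_k$ a $\hat g_k$-maximising timelike geodesic $\gamma_k$ from $\gamma(0)$ to $\gamma(a)$. For $g_k=\check g_{\eps_k}$, where $\gamma$ need no longer be $g_k$-timelike, one picks $p_k\to\gamma(0)$, $q_k\to\gamma(a-\delta/2)$ with $p_k\ll_{\check g_k}q_k$—such pairs exist thanks to $\gamma(0)\ll_g\gamma(a-\delta/2)$ and convergence of $\check g_k$-cones to those of $g$—and takes $\check g_k$-maximising geodesics $\gamma_k$ between them (global hyperbolicity of $\check g_k$ being automatic since $\check g_k\prec g$). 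Corollary \ref{Corollary: Hartmangeodesicversion} then yields a $g$-geodesic sublimit $\tilde\gamma$ on an interval containing $[0,a-\delta]$, and upper semicontinuity of Lorentzian length under $g_k\to g$ forces $\tilde\gamma$ to be $g$-maximising on that interval.

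For Part (2) the role of ``maximising between chosen endpoints'' is taken over by the $\check g_k$-achronal boundary $\partial I^+_{\check g_k}(\gamma(0))$: since $\gamma$ is a $g$-maximising null geodesic in a causal spacetime, $\gamma\subseteq\partial I^+_g(\gamma(0))$, and cone convergence $\check g_k\to g$ delivers points $r_k\in\partial I^+_{\check g_k}(\gamma(0))$ close to $\gamma(a-\delta)$. The past-directed $\check g_k$-null generator $\gamma_k$ of $\partial I^+_{\check g_k}(\gamma(0))$ through $r_k$ is $\check g_k$-achronal, hence $\check g_k$-maximising; Corollary \ref{Corollary: Hartmangeodesicversion} produces a $g$-null sublimit lying in $\partial I^+_g(\gamma(0))$ by upper semicontinuity of the causal relations. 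For the addendum with an acausal $S$ satisfying $\gamma\subseteq E^+(S)$, one replaces $\partial I^+_{\check g_k}(\gamma(0))$ by $E^+_{\check g_k}(S)$; strong causality forces past-endpoints of null generators of $E^+_{\check g_k}(S)$ to lie on $S$, and achronality of the limit at $\gamma(0)$ yields $\gamma_l(t_l)\in S$ with $\gamma_l(t_l)\to\gamma(0)$.

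The decisive step is identifying $\tilde\gamma$ with $\gamma$. Any interior intersection $\tilde\gamma(t_0)=\gamma(s_0)$ must be \emph{tangential}, for otherwise the concatenation of the two initial maximising segments would itself be $g$-maximising but contain a corner, contradicting the fact that maximising causal curves are unbroken $C^2$-geodesics; once the velocities agree, Lemma \ref{Lemma: nonbranchingmeettangentiallyequal} gives $\tilde\gamma=\gamma$. The delicate remaining case—in which $\tilde\gamma$ and $\gamma$ share only their endpoints—is handled by arranging the approximations so that $\gamma_k(0)=\gamma(0)$ and $\dot\gamma_k(0)\to\dot\gamma(0)$, giving $\tilde\gamma(0)=\gamma(0)$ and $\dot{\tilde\gamma}(0)=\dot\gamma(0)$; extending both $\tilde\gamma$ and $\gamma$ slightly to the past of $0$ as $C^2$-geodesic solutions (the extended $\gamma$ remaining locally maximising) turns $t=0$ into an interior tangential meeting point, and Lemma \ref{Lemma: nonbranchingmeettangentiallyequal} closes the argument. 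The main technical obstacle I anticipate is precisely this control of $\dot\gamma_k(0)$ (or, in the null case, of the initial tangent of the generator $\gamma_k$ as $t_k\downarrow 0$): ensuring that no ``velocity spike'' develops at the starting endpoint during regularisation will require a careful interplay between cone convergence, the global hyperbolicity / achronal boundary structure, and sequential compactness of unit-speed causal vectors.
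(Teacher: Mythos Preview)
Your overall architecture is correct and close to the paper's: build $g_k$-maximising (resp.\ achronal-boundary) geodesics, extract a $g$-geodesic sublimit $\sigma$ via Corollary \ref{Corollary: Hartmangeodesicversion}, verify that $\sigma$ is $g$-maximising, and then invoke MCNB through Lemma \ref{Lemma: nonbranchingmeettangentiallyequal} to force $\sigma=\gamma$. The gap is precisely in the identification step you yourself flag as ``delicate''.

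Your plan for the case where $\sigma$ and $\gamma$ might share only endpoints is to ``arrange the approximations so that $\gamma_k(0)=\gamma(0)$ and $\dot\gamma_k(0)\to\dot\gamma(0)$'' and then extend both curves a little to the past so that $t=0$ becomes an interior tangential meeting point. This does not work. First, the initial velocities $\dot\gamma_k(0)$ are not at your disposal: the $\gamma_k$ are determined (up to parametrisation) by being $g_k$-maximisers between prescribed endpoints, and the only information you get from compactness is that some $h$-unit subsequence of $\dot\gamma_k(0)$ converges to \emph{some} causal $v$---but $v=\dot\gamma(0)$ is exactly the identification $\sigma=\gamma$ you are trying to prove, so the argument is circular. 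Second, even granting tangency at $t=0$, Lemma \ref{Lemma: nonbranchingmeettangentiallyequal} requires both curves to be maximising with the meeting point in the interior; $\gamma$ is only assumed maximising on $[0,a]$, and in $C^1$ a geodesic extension of $\gamma$ past $0$ is neither unique nor a priori maximising, so you cannot manufacture an interior point this way.

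The paper resolves this with a concatenation trick that avoids any endpoint analysis. One takes the approximating geodesics from $p=\gamma(0)$ to $q_\delta:=\gamma(a-\delta)$ (not to $\gamma(a)$), so the limit $\sigma$ also runs from $p$ to $q_\delta$. Then set $\eta:=\sigma\cup\gamma|_{[a-\delta,a]}$. Since $L(\sigma)=d(p,q_\delta)=L(\gamma|_{[0,a-\delta]})$, one has $L(\eta)=L(\gamma)=d(p,q)$, so $\eta$ is a maximising causal curve from $p$ to $q$ and therefore an \emph{unbroken} geodesic; in particular $\sigma$ and $\gamma$ meet tangentially at $q_\delta$. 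Now $q_\delta$ is an interior point of both $\eta$ and $\gamma$, and Lemma \ref{Lemma: nonbranchingmeettangentiallyequal} applies directly to give $\eta=\gamma$, hence $\sigma=\gamma|_{[0,a-\delta]}$. The null case (ii) uses the same device: if the concatenation $\eta=\sigma\cup\gamma|_{[a-\delta,a]}$ had a corner it would fail to maximise, forcing $q\in I^+(\sigma)\subseteq I^+(p)$ and contradicting $\gamma\subseteq E^+(p)$; thus the meeting at $q_\delta$ is tangential and interior, and Lemma \ref{Lemma: nonbranchingmeettangentiallyequal} again yields $\sigma=\gamma$ on their common domain. This is the missing idea; once you replace your velocity-control/past-extension scheme by this concatenation argument, the rest of your outline goes through essentially as in the paper.
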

\begin{proof}
Fix $\delta \in (0,a)$ and set $p:=\gamma(0)$, $q:=\gamma(a)$, and $q_\delta:= \gamma(a-\delta)$. 
\begin{enumerate}
    \item Since $\gamma$ is timelike, $q_\delta \in I^+(p)$ and without loss of generality we may assume that also $q_\delta \in I^+_k(p)$ for all $k$. 
    Since $g$ is globally hyperbolic we can also assume w.l.o.g.\ that $g_k$ are globally hyperbolic as well (see \cite[Prop.\ 2.3 (iv)]{penrosec11}, which remains valid for $g\in C^1$).
    Thus there exist $g_k$-maximising, timelike geodesics $\gamma_k$ from $p$ to $q_\delta$ \cite[Prop.\ 6.5]{S14}. By Corollary \ref{Corollary: Hartmangeodesicversion} an affinely reparametrised subsequence, without loss of generality $\gamma_k$ itself, converges in 
    $C^1_{\mathrm{loc}}$ to a $g$-causal geodesic $\sigma$. By Lemma \ref{Lemma: limitsofmaximizers} $\sigma$ is a $g$-maximising timelike geodesic from $p$ to $q_\delta$, in particular $d(p,q_\delta)=L(\gamma|_{[0,a-\delta]})= L(\sigma)$. We cannot yet apply Lemma \ref{Lemma: nonbranchingmeettangentiallyequal} since 
    it is not clear whether $\sigma$ is maximising beyond $q_\delta$. 
    However denote by $\eta := \sigma\cup \gamma|_{[a-\delta,a]}$ the concatenation of $\sigma$ and
    $\gamma$. Since $d(p,q)=L(\gamma)=L(\gamma|_{[0,a-\delta]})+L(\gamma|_{[a-\delta,a]})=
    L(\sigma)+L(\gamma|_{[a-\delta,a]})= L(\eta)$, it is a maximising, timelike geodesic from $p$ to $q$. We can now apply Lemma \ref{Lemma: nonbranchingmeettangentiallyequal} to the curves $\eta$ and $\gamma$, to obtain $\eta=\gamma$ and hence $\sigma=\gamma|_{[0,a-\delta]}$.

    \item  A similar statement was shown in
    \cite[Cor.\ 3.5]{SS21}, but we need to adapt some of the main points
    of the argument to our assumptions.
    
    Choose a sequence of points $q_\delta^k \in \pt I^+_{g_k}(p)$ converging to
    $q_\delta$. There exist future directed, $g_k$-null maximising geodesics
    $\gamma_k$ ending at $q_\delta^k$ and contained in $\pt I^+_{g_k}(p)$. 
    W.l.o.g\ we may assume their terminal velocities to be $h$-normalized,
    so we can apply Corollary \ref{Corollary: Hartmangeodesicversion}
    to obtain a subsequence of $\gamma_k$ (denoted in the same way)
    converging to a $g$-null geodesic $\sigma$. Since $\pt
    I^+_{g_k}(p)\sse \overline{I^+(p)}$ (recall that $g_k=\check
    g_{\varepsilon_k}$) we know that $\sigma \sse \pt I^+(p)$, since
    otherwise $q_\delta \in I^+(p)$, hence $\sigma$ is maximising (cf.\ Lemma \ref{Lemma: nulllines}). 
    Note that $\gamma$ and $\sigma$ meet tangentially at $q_\delta$,
    because otherwise $\eta:=\sigma\cup \gamma|_{[a-\delta,a]}$ would not be maximising by
    Lemma \ref{Lemma: maxunbrokengeodesic} and so $q \in I^+(\sigma)\subseteq I^+(\overline{I^+(p)})\subseteq I^+(p)$, a
    contradiction. 
    Lemma \ref{Lemma: nonbranchingmeettangentiallyequal} now shows that 
    $\gamma=\eta$ and so $\sigma=\gamma$, wherever both curves are defined. There are two possibilities for $\sigma$, 
    either it is past inextendible or it reaches $p$. However, if it
    is past inextendible, we must have $\sigma\supseteq
    \gamma|_{[0,a-\delta]}$. 
    Hence in both cases the $\gamma_k$ converge to
    $\gamma$ in $C^1_\text{loc}$. Further this means that there are $t_k\in
    [0,a-\delta]$ with $\gamma_k(t_k)\to p$, and we have $t_k \to t_0= 0$ since
    otherwise $\gamma(t_0)=p=\gamma(0)$, contradicting the causality of $M$.

    Now assume $M$ to be strongly causal and $\gamma\subseteq E^+(S)$ for $S$ acausal, then
    there exists a $g$-causally convex, $g$-globally hyperbolic
    neighbourhood $U$ of $p$ by \cite[Lem.\ 3.21]{MinSan08} (this result
    holds also for $C^1$-spacetimes). In such neighbourhoods it holds that
    $I_U^+(S)=I^+(S)\cap U$, $J_U^+(S)=J^+(S)\cap U$ and hence also
    $E_U^+(S)=\pt I_U^+(S)=\pt I^+(S)\cap U$. Clearly $U$ is also
    $g_k$-causally convex and $g_k$-globally hyperbolic and so these
    equalities also hold for $g_k$. Note that in the same fashion as for the case of a single point $p$
    one can show that $\gamma_k \to \gamma|_{[0,a-\delta]}$. If $\gamma_k$
    reaches $S$, we immediately obtain $\gamma_k \subseteq 
    E^+_k(S)$. In the other case $\gamma_k\subseteq \pt I^+_k(S)\backslash
    J^+_k(S)$ is past inextendible. However, as they converge to 
    $\gamma|_{[0,a-\delta]}$, there must be $p_k\in \gamma_k$, with 
    $p_k\to p$. Hence, w.l.o.g.\ $p_k \in \pt I_k^+(S)\cap U= E_{U,k}^+(S)$ and so $p_k\in J^+_k(S)$, a contradiction.

    We have shown that $\gamma_k$ reach $S$, and it only remains to show that they
    do so close to $0$. Again let $\gamma_k(t_k)\in S$ with $\gamma_k(t_k)
    \to p$. If $t_k\to t_0$, then $\gamma(t_0)\in S$ and by acausality of $S$ we must have $t_0=0$.
    \end{enumerate}
\end{proof}

\begin{remark}\label{remark:approxmcnb} 
\begin{itemize}
\item[(i)] As the proof of Proposition \ref{prop:approxmcnb} shows, for (i) it suffices to assume $(M,g)$ to be MTNB, while for (ii) MNNB
would suffice.
\item[(ii)] We will make use of Proposition \ref{prop:approxmcnb}(ii)
in two different scenarios:  Thm.\ \ref{Theorem: nonulllines} and Proposition
\ref{Proposition: lightraysfromsubmfd} below.
In the first case ($M$ 
causal) all we need to ensure is that $\gamma_l$ are maximizing for long
enough, i.e.\ $a-\delta-t_l$ is close to $a$. In the second ($M$ strongly causal) it is
important to also obtain that $\gamma_l$ reaches $S$, in order to be able
to use the initial conditions assumed on $S$.
\end{itemize}
\end{remark}

\section{No lines}
\label{sec:max_geods}

In this section we will prove that, under suitable causality and energy conditions, complete causal geodesics stop being maximising also in (MCNB) $C^1$-spacetimes. To this end we first prove the existence of conjugate points for causal geodesics in smooth spacetimes under weakened versions of the energy conditions. More precisely we will invoke the energy conditions derived for regularisations in Lemma \ref{Lemma: timelikeenergyconditionapproxmetrics1} and in Lemma \ref{Lemma: nullenergyconditionapproxmetrics} from the distributional energy conditions, as well as in Lemma \ref{Lemma: tidalforcematrixestimateapprox} from the distributional genericity condition.
The following is a strengthened version of \cite[Prop.\ 4.2]{GGKS}:

\begin{Lemma} 
\label{Lemma: conjpointssmoothmetric}
    Let $(M,g)$ be a smooth spacetime. Given some $c>0$ and $0<r< \frac{\pi}{4 \sqrt{c}}$,
    there exist
    $\delta(c,r)>0 $ and $T(c,r)>0$ such that any causal geodesic $\gamma$ defined on $[-T,T]$ for which
    the following hold
    \begin{itemize}
        \item[(i)] $\text{Ric}(\dot{\gamma},\dot{\gamma})\geq - \delta $ on $[-T,T]$ and 
        \item[(ii)] there exists a smooth parallel orthonormal frame for $[\dot{\gamma}]^\perp$ and some
        $C>0$ such that w.r.t.\ this frame the tidal force operator satisfies 
        $[R](t) > \mathrm{diag}(c,-C,\ldots,-C)$ on $[-r,r]$,
    \end{itemize}
    possesses a pair of conjugate points on $[-T,T]$.
\end{Lemma}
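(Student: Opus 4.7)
The plan is to argue by contradiction via the matrix Riccati equation for a Jacobi tensor along $\gamma$. Suppose $\gamma$ admits no conjugate pair on $[-T,T]$, and let $A(t)$ be a Jacobi tensor on $[\dot\gamma]^\perp$ with initial data $A(-T) = 0$, $A'(-T) = \mathrm{Id}$. The no-conjugate-point assumption means $A$ remains invertible on $(-T,T]$, so $U := A'A^{-1}$ is smooth and self-adjoint there and satisfies the matrix Riccati equation $U' + U^2 + [R] = 0$; in the parallel orthonormal frame $(E_i)$ provided by hypothesis (ii), this is a symmetric matrix equation.

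From this, I extract two scalar consequences. Setting $u := U_{11}$, the $(1,1)$-component of Riccati combined with $(U^2)_{11} = \sum_k U_{1k}^2 \geq u^2$ and $[R]_{11} \geq c$ on $[-r,r]$ yields the scalar Riccati inequality $u' \leq -u^2 - c$ on $[-r,r]$. Taking the trace of the matrix Riccati equation, Cauchy--Schwarz ($\mathrm{tr}(U^2) \geq \theta^2/d$) together with the Ricci bound $\mathrm{Ric}(\dot\gamma,\dot\gamma) \geq -\delta$ gives $\theta' \leq -\theta^2/d + \delta$ for $\theta := \mathrm{tr}(U)$. Starting from $\theta(t) \to +\infty$ as $t \to -T^+$, Riccati comparison yields $\theta(t) \leq \sqrt{d\delta}\coth(\sqrt{\delta/d}\,(t+T))$, so $\theta(-r)$ can be made arbitrarily close to $\sqrt{d\delta}$ by choosing $T$ large.

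Sturm/Riccati comparison of $u' \leq -u^2 - c$ on $[-r,r]$ with the explicit solution of $v' = -v^2 - c$ forces $u(t_*) = -\infty$ at some $t_* \in (-r,r]$ provided that $u(-r) \leq M_* := -\sqrt{c}\,\cot(2r\sqrt{c})$. Here the hypothesis $r < \pi/(4\sqrt{c})$ is precisely what ensures $M_*$ is a finite negative number and that the tangent singularity of the comparison solution lies within the interval when starting from $M_*$. A blow-down of $u$ contradicts smoothness of $U$ on $(-T,T]$, and thereby yields a conjugate point to $\gamma(-T)$ in $[-T,T]$.

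The main technical obstacle, which I expect to be the core difficulty, is to establish the pointwise bound $u(-r) \leq M_*$ from the trace-level control on $\theta(-r)$. A bound on $\theta$ constrains only the sum of eigenvalues of $U(-r)$, whereas $u = U_{11}$ is a single diagonal entry that could in principle be large positive, balanced by other negative diagonal entries. Closing this gap likely requires a refined matrix Riccati argument exploiting the isotropic initial condition $U \sim \frac{1}{t+T}\mathrm{Id}$ as $t \to -T^+$ together with the symmetry-preserving structure of the Riccati flow on $[-T,-r]$, or else an Eschenburg-type two-sided comparison using a second Jacobi tensor anchored at $+T$. Once this is in place, the constants $T(c)$ and $\delta(c)$ are chosen to make the required bound quantitative, uniformly in the auxiliary constant $C$ from hypothesis (ii).
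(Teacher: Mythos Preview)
Your overall architecture matches the paper's: an indirect argument via the matrix Riccati equation for $[B]=[\dot A][A]^{-1}$, Raychaudhuri control of the expansion $\theta=\mathrm{tr}[B]$ on $[-T,-r]$, and a blow-down argument on $[-r,r]$. You have also correctly located the genuine obstacle: a bound on $\theta$ does not control a single diagonal entry $u=U_{11}$, and in fact your target $u(-r)\le M_*<0$ cannot be reached from trace information alone.

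Neither of your proposed remedies is what closes the gap. The isotropic asymptotics $U\sim (t+T)^{-1}\mathrm{id}$ do not propagate usefully along the Riccati flow, and no two-sided comparison anchored at $+T$ is invoked. Instead, the paper (following \cite[Prop.\ 4.2]{GGKS}) controls the \emph{shear} $\sigma=[B]-\tfrac{\theta}{d}\,\mathrm{id}$. The full Raychaudhuri equation $\dot\theta+\tfrac{1}{d}\theta^2+\mathrm{tr}(\sigma^2)+\mathrm{tr}[R]=0$, together with two-sided bounds on $\theta$ on $[-r,0]$ (upper from your $\coth$-estimate, lower from the absence of conjugate points up to $T$) and $\mathrm{tr}[R]\ge-\delta$, yields an integral bound on $\int_{-r}^0\mathrm{tr}(\sigma^2)$. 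By the mean value theorem there is then some $t_1\in[-r,0]$ with $\mathrm{tr}(\sigma^2)(t_1)\le \tfrac{2\nu}{r}+\delta$, where $\nu=4d/T$. Since the largest eigenvalue of $[B](t_1)$ is at most $\tfrac{\theta(t_1)}{d}+\sqrt{\mathrm{tr}(\sigma^2)(t_1)}$, one obtains the \emph{matrix} bound $[B](t_1)\le f\cdot\mathrm{id}$ with $f=\sqrt{\tfrac{2\nu}{r}+\delta}+\tfrac{\nu}{d}$.

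From here one runs matrix (not scalar) Riccati comparison on $[t_1,r]$ against $[\tilde B]$ solving $[\dot{\tilde B}]+[\tilde B]^2+[\tilde R]=0$ with $[\tilde R]=\mathrm{diag}(c,-C,\dots,-C)$ and $[\tilde B](t_1)=f\cdot\mathrm{id}$. Because both the initial datum and $[\tilde R]$ are diagonal, $[\tilde B]$ stays diagonal and decouples; its first entry $H_{c,f}(t)=d\sqrt{c}\cot\bigl(\sqrt{c}(t-t_1)+\mathrm{arccot}(f/\sqrt{c})\bigr)$ blows down to $-\infty$ within $[t_1,r]$ once $4r\sqrt{c}<\pi$ and $f$ is small enough, which is what fixes $T(c)$ and $\delta(c)$. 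Matrix comparison then forces $[B]\le[\tilde B]\to-\infty$, contradicting smoothness of $[B]$ on $(-T,T]$. The constant $C$ enters only the remaining diagonal entries $H_{-C,f}$, which stay bounded, so the conclusion is indeed uniform in $C$. In short: do not try to isolate $U_{11}$; bound the whole matrix at a well-chosen interior time via the shear, then let a diagonal comparison handle the first direction.
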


\begin{proof} This actually follows by a more careful choice of constants in the proof of \cite[Prop.\ 4.2]{GGKS}. To see this 
we briefly recall the main steps of that proof. The argument proceeds indirectly, assuming
that for any $\delta>0$ and $T>0$ there is some $\gamma$ satisfying (i) and (ii) without conjugate points in $[-T,T]$. 
Denote by $[A]$ the unique Jacobi tensor class along $\gamma$ with $[A](-T)=0$ and $[A](0)=\mathrm{id}$. 
Taking $[E_1](t),\dots,$ $[E_{d}](t)$ as in (ii), linear endomorphisms
of $[\dot \gamma]^\perp$ are written as matrices in this basis, and we set
$[\tilde R](t):=\diag(c,-C,\dots,-C)$ with $C$ as in (ii). Then by (ii), $[\tilde R](t)<[R](t)$ on $[-r,r]$.

The self-adjoint operator $[B]:=[\dot A]\cdot [A]^{-1}$satisfies the matrix Riccati equation
\begin{equation}\label{riccati} 
[\dot B] + [B]^2 + [R] = 0, 
\end{equation}
and we denote by $[\tilde B]$ the solution to~\eqref{riccati}, with $[\tilde R]$ instead of
$[R]$ and initial value prescribed at some $t_1\in [-r,r]$. 
We may even choose $t_1$ in $[-r,0]$ and $[\tilde B](t_1) := 
\tilde \beta(t_1)\cdot \mathrm{id}$, where $\tilde \beta(t_1)$ is greater or
equal than the largest eigenvalue of $[B](t_1)$. More precisely, examining the Raychaudhuri equation
\begin{equation}
\dot \theta + \frac{1}{d}\theta^2 + \mathrm{tr}(\sigma^2) + \mathrm{tr}([R]) =
0,
\end{equation}
for the expansion $\theta=\mathrm{tr}([B])$ (with $\sigma=[B]-\frac{1}{d}\theta\cdot
\mathrm{id}$), it follows that one can pick $\tilde \beta(t_1):=f(\nu,\delta,r)$ and $[\tilde
B](t_1):=f(\nu,\delta,r)\cdot\mathrm{id}$ to indeed achieve that $[B](t)\le
[\tilde B](t)$ on $[t_1,r]$. Here, $f(\nu,\delta,r) = \sqrt{\frac{2\nu}{r}+\delta} + \frac{\nu}{d}$
with $\nu=4d/T$.

Moreover, due to the fact that both 
$[\tilde R]$ and $[\tilde B](t_1)$ are diagonal, the Riccati
equation for $[\tilde B]$ decouples.  Indeed it can be explicitly solved by
\[
[\tilde B](t) = \frac{1}{d}\diag(H_{c,f}(t),H_{-C,f}(t),\dots,H_{-C,f}(t)),
\]
where
\[
H_{c,f}(t) = d \sqrt{c}\cot(\sqrt{c}(t-t_1) + \mathrm{arccot}(f/\sqrt{c})),
\]
and
\[
H_{-C,f}(t)=d \sqrt{C}\tanh\big(\sqrt{C}(t-t_1)+\mathrm{artanh}(f/\sqrt{C})\big).
\]
From this point on, the proof of \cite[Prop.\ 4.2]{GGKS} proceeds by exclusively 
analysing the function $H_{c,f}$, and the only requirement on $r$ turns out to be $4r\sqrt{c}<\pi$.
Once this is granted, $\delta$ and $T$ can be chosen depending only on $c$ to arrive at the
desired contradiction. 
\end{proof}

Recall that a line is an inextendible causal geodesic maximising the Lorentzian distance between any of its points.

\begin{Theorem}(No timelike lines)\label{Theorem: notimelikelines}

\noindent Let $(M,g)$ be a globally hyperbolic, MTNB $C^1$-spacetime satisfying the distributional timelike energy condition and the distributional genericity condition along any inextendible timelike geodesic. Then there is no complete timelike line in $M$.
\end{Theorem}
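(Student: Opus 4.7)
The plan is to argue by contradiction: assume $\gamma:\R\to M$ is a complete timelike line (parametrised by $g$-arclength) and produce a pair of conjugate points along a maximising approximating geodesic, which is a contradiction. Fix an arbitrary $t_0\in\R$. Since the distributional genericity condition holds at $\gamma(t_0)$, Definition \ref{definition: strongdistributionalgenericity} provides a neighbourhood $U$ of $\gamma(t_0)$, $C^1$-vector fields $X,V$ on $U$ (with $X\circ\gamma=\dot\gamma$, $V\circ\gamma\perp\dot\gamma$) and a constant $c>0$. Setting $\tilde c:=c/2$ (as in Lemma \ref{Lemma: tidalforcematrixestimateapprox}) and fixing $r<\pi/(4\sqrt{\tilde c})$, Lemma \ref{Lemma: conjpointssmoothmetric} yields constants $\delta_0=\delta_0(\tilde c)>0$ and $T_0=T_0(\tilde c)>0$ such that any smooth causal geodesic defined on $[-T_0,T_0]$ satisfying the Ricci and tidal-force hypotheses of that lemma (with $c,\delta,r$ replaced by $\tilde c,\delta_0,r$) must contain a pair of conjugate points.

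Since $\gamma$ is a complete line, the segment $\gamma|_{[t_0-T_0-1,\,t_0+T_0+1]}$ is a maximising timelike $g$-geodesic. By MTNB and Proposition \ref{prop:approxmcnb}(i) (see Remark \ref{remark:approxmcnb}(i)), passing to a subsequence, we obtain $g_k$-maximising timelike $g_k$-geodesics $\gamma_k$ (with $g_k=\check g_{\eps_k}$ or $\hat g_{\eps_k}$) converging in $C^1$ to $\gamma|_{[t_0-T_0,\,t_0+T_0]}$. Reparametrising by $g_k$-arclength so that $g_k(\dot\gamma_k,\dot\gamma_k)=-1$, which is preserved under $C^1$-convergence, Corollary \ref{Corollary: timelikeenergyconditionapproxmetrics2} applied to a fixed compact neighbourhood of $\gamma([t_0-T_0,t_0+T_0])$ and to the distributional timelike energy condition guarantees
\[
\Ric[g_k](\dot\gamma_k(t),\dot\gamma_k(t))>-\delta_0\qquad\text{for all }t\in[t_0-T_0,t_0+T_0],
\]
provided $k$ is sufficiently large.

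Next, I apply Lemma \ref{Lemma: tidalforcematrixestimateapprox} at $t_0$: shrinking $U$ and using the approximating sequence $\gamma_k$ (of the same $g_k$-causal character as the $g$-timelike $\gamma$ and converging to $\gamma$ in $C^1_{\mathrm{loc}}$), it produces, for each large $k$, a $g_k$-parallel orthonormal frame $(E_i^{\eps_k})$ for $[\dot\gamma_k]^\perp$ along $\gamma_k$ and a constant $C_k>0$ such that the $g_k$-tidal force operator satisfies
\[
[R_k](t)>\mathrm{diag}(\tilde c,-C_k,\ldots,-C_k)\qquad\text{on }[t_0-r,t_0+r].
\]
Thus for all sufficiently large $k$, the smooth timelike geodesic $\gamma_k$ meets the hypotheses of Lemma \ref{Lemma: conjpointssmoothmetric} on $[t_0-T_0,t_0+T_0]$ with parameters $\tilde c,\delta_0,r,C_k$. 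Lemma \ref{Lemma: conjpointssmoothmetric} then yields a pair of conjugate points along $\gamma_k$ inside $[t_0-T_0,t_0+T_0]$. But a smooth timelike geodesic with interior conjugate points cannot be maximising across them, contradicting the fact that $\gamma_k$ is $g_k$-maximising on $[t_0-T_0,t_0+T_0]$.

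The main obstacle is the careful orchestration of constants and scales: the parameter $T_0$ at which conjugate points must appear has to be fixed \emph{before} the regularisation scale $\eps_k$ is chosen, because the $\eps$-dependence of the constant $C_k$ controlling the negative entries of the tidal force operator would otherwise be unmanageable. It is essential here that the constants $\delta_0$ and $T_0$ delivered by Lemma \ref{Lemma: conjpointssmoothmetric} depend only on $\tilde c$ and not on $C_k$, and that the approximation step (Proposition \ref{prop:approxmcnb}(i)) produces geodesics that are maximising \emph{on the entire interval} $[t_0-T_0,t_0+T_0]$; this is precisely where the MTNB hypothesis enters, ruling out the pathological branching behaviour exemplified in Example \ref{ex:branching} which would otherwise destroy the $C^1$-convergence of approximating maximisers needed to invoke Lemma \ref{Lemma: tidalforcematrixestimateapprox}.
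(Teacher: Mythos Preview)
Your proof is correct and follows essentially the same route as the paper's: fix the genericity constant, then $T$ and $\delta$ from Lemma \ref{Lemma: conjpointssmoothmetric}, approximate $\gamma$ on a slightly larger interval via Proposition \ref{prop:approxmcnb}(i), and derive a contradiction from conjugate points on $[t_0-T_0,t_0+T_0]$ versus maximality of the approximants. The only point to tighten is the final sentence: the contradiction is with $\gamma_k$ being $g_k$-maximising on the \emph{strictly larger} interval produced by applying Proposition \ref{prop:approxmcnb}(i) to $[t_0-T_0-1,t_0+T_0+1]$ (this is precisely why you extended by $1$, and why the paper takes $\tilde T>T$); Lemma \ref{Lemma: conjpointssmoothmetric} does not guarantee the conjugate pair lies in the \emph{open} interval, so invoking maximality on $[t_0-T_0,t_0+T_0]$ alone would not suffice.
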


\begin{proof}
Suppose $\gamma:\R \to M$ is a complete timelike line. We may assume that distributional genericity
holds at $t_0=0$ along $\gamma$. We approximate $g$ by $\check{g}_{\varepsilon} \prec g$, which are hence also globally hyperbolic. 
By Lemma \ref{Lemma: tidalforcematrixestimateapprox} 
there exist $c > 0$, $0 < r < \frac{\pi}{4\sqrt{c}}$ with the following property: For any $C^1$-approximation
$\gamma_{\varepsilon}$ of $\gamma$ by $\check{g}_{\varepsilon}$-timelike geodesics, there exists
$\varepsilon_0 > 0$ such that for all $\varepsilon < \varepsilon_0$, there is $C=C(\varepsilon)>0$
so that (ii) in Lemma \ref{Lemma: conjpointssmoothmetric} is satisfied for $R_{\varepsilon}$. Now for the above choice of $c$ and $r$, pick
$\delta > 0$ and $T > 0$ as in Lemma \ref{Lemma: conjpointssmoothmetric}, and let $\tilde{T} > T$. By
Proposition \ref{prop:approxmcnb}(i), for the curve $\gamma|_{[-T,\tilde T]}$ and $\delta< \tilde T-T$, we obtain a subsequence
$\check g_{\eps_k}$ and $\check g_{\eps_k}$-maximal timelike geodesics $\gamma_k$ from $\gamma(-T)$
to $\gamma(\tilde T)$ converging to $\gamma|_{[-T,\tilde T]}$ in $C^1$.

Let $K$ be a compact neighbourhood of  
$\gamma([-T,T])$. Since $\gamma_{\varepsilon_k} \to \gamma$ in $C^1([-T,T])$, there are $k_0 \in \mathbb{N}$, $\tilde{C} > 0$ and $\kappa < 0$ such that for all $k \geq k_0$, we have $\gamma_{\varepsilon_k}([-T,T]) \subseteq K$, $\|\dot{\gamma}_{\varepsilon_k}\|_h \leq \tilde{C}$ and $\check g_{\varepsilon_k}(\dot{\gamma}_{\varepsilon_k},\dot{\gamma}_{\varepsilon_k}) < \kappa$ on $[-T,T]$. Hence by \cite[Lem.\ 4.6]{G20} (and the remark preceding it) 
we have $\Ric_{\varepsilon_k}(\dot{\gamma}_{\varepsilon_k},\dot{\gamma}_{\varepsilon_k})\geq - \delta$ on $[-T,T]$ for large $k$. Therefore, also (i) in Lemma \ref{Lemma: conjpointssmoothmetric} is
satisfied for $\gamma_{\eps_k}$, yielding the existence of conjugate points for any $\gamma_{\varepsilon_k}$ ($k$ large) in $[-T,T]$, a contradiction since they are maximising even in the strictly larger interval $[-T,\tilde{T}]$.
\end{proof}

\begin{Theorem}(No null lines)
\label{Theorem: nonulllines}

\noindent Let $(M,g)$ be a causal, MNNB 
$C^1$-spacetime that satisfies the distributional null energy condition and the distributional genericity condition along any inextendible null geodesic. Then there is no complete null line in $M$.
\end{Theorem}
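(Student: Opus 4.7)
The strategy closely mirrors the proof of Theorem \ref{Theorem: notimelikelines}, substituting the null analogues of the tools assembled in Sections \ref{sec:EC}, \ref{sec:branching} and earlier in this section. Assume for contradiction that $\gamma:\R\to M$ is a complete null line, and after reparametrisation arrange that the distributional genericity condition holds at $t_0=0$. Apply Lemma \ref{Lemma: tidalforcematrixestimateapprox} (with $g_\eps=\check g_\eps$, noting that $\check g_\eps$-null approximants inherit the same causal character as the $g$-null $\gamma$) to obtain constants $\tilde c>0$ and $r\in(0,\pi/(4\sqrt{\tilde c}))$ such that, for every sequence of $\check g_{\eps_k}$-null geodesics converging in $C^1_{\mathrm{loc}}$ to $\gamma$, condition (ii) of Lemma \ref{Lemma: conjpointssmoothmetric} is satisfied on $[-r,r]$ for all large $k$ (with an $\eps_k$-dependent constant $C$). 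For these $\tilde c$ and $r$, fix $\delta>0$ and $T>0$ as furnished by Lemma \ref{Lemma: conjpointssmoothmetric}, and choose some $\tilde T>T$.

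Next, produce a maximising approximating sequence. Restrict $\gamma$ to a segment $[a_0,b_0]$ with $a_0$ sufficiently below $-\tilde T$ and $b_0$ sufficiently above $\tilde T$, and translate the parameter so that the segment starts at $0$. Proposition \ref{prop:approxmcnb}(ii) applies under the standing hypotheses (causality together with MNNB, which by Remark \ref{remark:approxmcnb}(i) suffices) and, for any small auxiliary $\delta'>0$, yields a subsequence $\check g_{\eps_k}$ together with $\check g_{\eps_k}$-null maximising geodesics $\gamma_k$ defined on intervals $[t_k,(b_0-a_0)-\delta']$ with $t_k\downarrow 0$, converging in $C^1_{\mathrm{loc}}$ to the shifted $\gamma$. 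Reshifting back to the original parametrisation, each $\gamma_k$ is $\check g_{\eps_k}$-maximising on an interval $[\alpha_k,\beta_k]$ which for all large $k$ strictly contains $[-\tilde T,\tilde T]$.

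It remains to verify hypothesis (i) of Lemma \ref{Lemma: conjpointssmoothmetric} for the $\gamma_k$. On a compact neighbourhood $K$ of $\gamma([-T,T])$, $C^1_{\mathrm{loc}}$-convergence gives uniform upper and lower positive bounds on $\|\dot\gamma_k\|_h$ for large $k$; since $\dot\gamma_k$ is $\check g_{\eps_k}$-null along $K$, Lemma \ref{Lemma: nullenergyconditionapproxmetrics} produces $\Ric[\check g_{\eps_k}](\dot\gamma_k,\dot\gamma_k)>-\delta$ on $[-T,T]$ for all large $k$. Combined with the tidal-force estimate from step one, Lemma \ref{Lemma: conjpointssmoothmetric} then forces each such $\gamma_k$ to possess a pair of conjugate points in $[-T,T]$, contradicting $\check g_{\eps_k}$-maximality on the strictly larger interval $[-\tilde T,\tilde T]$. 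The principal technical subtlety is the bookkeeping in the second paragraph: since Proposition \ref{prop:approxmcnb}(ii) unavoidably forces the loss of a small initial segment $[0,t_k]$ and a terminal segment of length $\delta'$, one must approximate a segment of $\gamma$ strictly larger than $[-\tilde T,\tilde T]$ at both ends so that the shrunken parameter intervals still enclose $[-\tilde T,\tilde T]$ for all large $k$.
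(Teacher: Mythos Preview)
Your proposal is correct and follows essentially the same approach as the paper: contradiction via a complete null line, Lemma \ref{Lemma: tidalforcematrixestimateapprox} for the tidal-force estimate, Lemma \ref{Lemma: conjpointssmoothmetric} to fix $\delta$ and $T$, Proposition \ref{prop:approxmcnb}(ii) (with the MNNB reduction from Remark \ref{remark:approxmcnb}(i)) to produce $\check g_{\eps_k}$-null maximising approximants, and Lemma \ref{Lemma: nullenergyconditionapproxmetrics} for the Ricci bound. The only difference is that the paper applies Proposition \ref{prop:approxmcnb}(ii) directly to $\gamma|_{[-T,\tilde T]}$ and glosses over the endpoint loss, whereas you explicitly approximate a strictly larger segment $[a_0,b_0]\supset[-\tilde T,\tilde T]$ to absorb both the initial loss $t_k$ and the terminal loss $\delta'$; your bookkeeping here is in fact slightly more careful than the paper's.
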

\begin{proof}
Suppose $\gamma:\R \to M$ is a complete null line. We will again make use of the approximation
$\check{g}_{\varepsilon}$. Suppose without loss of generality that the distributional genericity condition holds along
$\gamma$ at $t_0=0$. By Lemma \ref{Lemma: tidalforcematrixestimateapprox} there exist $c > 0$ and $0 < r <
\frac{\pi}{4\sqrt{c}}$ satisfying: For any $C^1$-approximation $\gamma_{\varepsilon}$ of $\gamma$ by
$\check{g}_{\varepsilon}$-null geodesics, there exists $\varepsilon_0 > 0$ such that for all
$\varepsilon < \varepsilon_0$, there is $C=C(\varepsilon)>0$ so that (ii) in Lemma \ref{Lemma:
conjpointssmoothmetric} is satisfied for $R_{\varepsilon}$. Choose $\delta > 0$ and $T > 0$ as in Lemma
\ref{Lemma: conjpointssmoothmetric} for this pair $(c,r)$ and let $\tilde{T} > T$. Using Proposition
\ref{prop:approxmcnb}(ii) for the curve $\gamma|_{[-T,\tilde T]}$ and $S =\gamma(-T)$ (cf.\ Remark \ref{remark:approxmcnb}(ii)), there exists a subsequence $\check
g_{\eps_k}$ and $\check g_{\eps_k}$-maximal null geodesics $\gamma_k: [-T,\tilde T] \to M$ 
converging to $\gamma|_{[-T,\tilde T]}$ in $C^1$.

Since $\gamma_k \to \gamma$ in $C^1([-T,T])$, once we choose a compact neighbourhood $K$ of $\gamma([-T,T])$, there are $k_0 \in \mathbb{N}$ and $\tilde{C}_2 > \tilde{C}_1 > 0$ such that for all $k \geq k_0$, we have $\gamma_{\varepsilon_k}([-T,T]) \subseteq K$, $\tilde{C}_1 < \|\dot{\gamma}_k\|_h < \tilde{C}_2$ in $[-T,T]$. Hence, Lemma \ref{Lemma: nullenergyconditionapproxmetrics} implies that $\Ric_{\varepsilon_k}(\dot{\gamma}_k,\dot{\gamma}_k) > - \delta$ in $[-T,T]$. But then Lemma \ref{Lemma: conjpointssmoothmetric} may be invoked to give the existence of conjugate points on $\gamma_k|_{[-T,T]}$ for large $k$, a contradiction since they were supposed to be maximising even on $[-T,\tilde{T}]$.
\end{proof}

\section{Initial conditions} 
\label{sec:trapped}

The classical Hawking-Penrose theorem for spacetimes of dimension $4$ assumes three alternative initial conditions: the existence of a compact, spacelike hypersurface, a trapped ($2$-)surface, or a ``trapped point'', i.e., a point from where all future (or past) null geodesics converge. The second condition was later  generalised in \cite{GS} to trapped submanifolds of arbitrary codimension $m$ with $1<m<n=\dim(M)$. 

While the first condition does not need any special attention here we will start by generalising the trapped submanifold-case to $C^1$-spacetimes. As for $C^{1,1}$-spacetimes (see \cite[Sec.\ 6.2]{GGKS}), trapped submanifolds are defined in the support sense. However, to show that normal null geodesics emanating from them stop being maximising is more delicate now due to the lack of an exponential map, cf.\ (the proof of) Proposition \ref{Proposition: lightraysfromsubmfd}. At the end of this section we will deal with the case of a ``trapped point''.

\begin{definition}(Support submanifolds)
\label{definition: supportsubmfd}

\noindent Let $(M,g)$ be a $C^1$-spacetime and let $S,\tilde{S} \subseteq M$ be submanifolds. We say that $\tilde{S}$ is a \textit{future support submanifold for $S$ at $q \in S$} if $\dim S = \dim \tilde{S}$, $q \in \tilde{S}$, and there is a neighbourhood $U$ of $q$ in $M$ such that $\tilde{S} \cap U \subseteq J^+_U(S)$.
\end{definition}

\begin{definition}(Trapped $C^0$-submanifolds)
\label{definition: trappedc0submfd}

\noindent Let $(M,g)$ be a $C^1$-spacetime of dimension $n$. We say that a $C^0$-submanifold $S
\subseteq M$ of codimension $1 \le m<n$ 
is a \textit{future trapped submanifold} if it is compact
without boundary and for any $p \in S$ there exists a neighbourhood $U$ of $p$ such that 
$U \cap S$ is achronal in $U$ and moreover $S$ has past pointing timelike mean curvature 
in the sense of support submanifolds, i.e.\ for any $q \in S$ there exists a future
$C^2$-support submanifold $\tilde{S}$ for $S$ at $q$ whose mean curvature vector at $q$ is
past-pointing timelike.
\end{definition}

Now, to show that lightrays from trapped submanifolds stop maximising, we reduce the problem to a question about smooth approximating metrics, which are understood much better. The essential results that deal with the corresponding situations for smooth metrics are \cite[Lem.\ 6.4]{GGKS} and \cite[Lem.\ 5.6]{G20}. To give a precise formulation we first introduce some notation.

Suppose $S \subseteq M$ is a spacelike $C^2$-submanifold of codimension $1<m<n$, 
$p \in S$ and $\nu \in T_p M$ is a future null normal to $S$. Let $\gamma$ be
a geodesic with $\gamma(0) = p$, $\dot\gamma(0)=\nu$. Let $e_1,\dots,e_{n-m}$ be an
ON-basis on $S$ around $p$. Further let $E_1,\dots,E_{n-m}$ be the parallel translates of
$e_1(p),\dots,e_{n-m}(p)$ along $\gamma$, which are $C^1$ since they satisfy the parallel
transport equation. We will use this notation for the following results. It will be
clear from the context which surface the $E_i$ refer to.

\begin{Lemma}\label{Lemma: smoothcodimnot2} Let $(M,g)$ be a smooth spacetime and let $S$
be a codimension-$m$ $(1<m<n)$ spacelike $C^2$ submanifold of $M$. Let $\gamma$ be a
geodesic starting in $S$ such that $\nu:=\dot{\gamma}(0) \in TM|_S$ is a future null normal
to $S$. Suppose that $c:=\mathbf{k}_S(\nu) > 0$ and let $b>1/c$. Then there is $\delta=\delta(b,c) > 0$ such that, if for $E_i$ as described above
\begin{align*}
    \sum_{i=1}^{n-m} g(R(E_i,\dot{\gamma})\dot{\gamma},E_i) \geq - \delta
\end{align*}
along $\gamma$, then $\gamma|_{[0,b]}$ is not maximising from $S$, provided that $\gamma$ exists up to $t=b$.
\end{Lemma}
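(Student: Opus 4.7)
The plan is a Raychaudhuri-type focal point argument performed on the $(n{-}m)$-dimensional subspace of $[\dot\gamma]^\perp$ obtained by parallel transport of $T_pS$, following the template used in Lemma \ref{Lemma: conjpointssmoothmetric} for conjugate points. Since $S$ is spacelike and $\nu$ is a null normal to $S$, one has $T_pS\cap \R\nu=\{0\}$, so the natural map $T_pS\hookrightarrow [\dot\gamma(0)]^\perp$ is injective; its image is spanned by $e_1,\dots,e_{n-m}$, and its parallel translate along $\gamma$ is $\mathrm{span}(E_1(t),\dots,E_{n-m}(t))$. I would then introduce the $S$-Jacobi tensor $A(t)\colon T_pS\to \mathrm{span}(E_1(t),\dots,E_{n-m}(t))$, determined by the Jacobi equation $A''+\mathcal{R}\circ A=0$ with $A(0)=\mathrm{id}$ and $A'(0)=-W_\nu$, where $W_\nu$ is the shape operator of $S$ at $p$ along $\nu$ (so that $\mathrm{tr}(W_\nu)=(n-m)\,\mathbf{k}_S(\nu)=(n-m)c$). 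In the frame $E_i$ this becomes an $(n{-}m)\times(n{-}m)$ matrix-valued ODE with smooth coefficients on $[0,b]$.

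On any subinterval where $A$ is invertible, i.e.\ before the first $S$-focal point along $\gamma$, the shape tensor $B:=A'A^{-1}$ satisfies the matrix Riccati equation $B'+B^2+\mathcal{R}=0$, where $\mathcal{R}_{ij}:=g(R(E_i,\dot\gamma)\dot\gamma,E_j)$. Since $B(0)=-W_\nu$ is symmetric and $\mathcal{R}$ is symmetric by the Riemann tensor symmetries, the Riccati flow preserves symmetry, so $B(t)$ remains symmetric. Setting $\theta:=\mathrm{tr}(B)$ and tracing the Riccati equation, Cauchy--Schwarz gives $\mathrm{tr}(B^2)\geq \theta^2/(n-m)$, and the hypothesis $\mathrm{tr}(\mathcal{R})=\sum_i g(R(E_i,\dot\gamma)\dot\gamma,E_i)\geq -\delta$ turns this into the scalar differential inequality
\begin{equation*}
\theta'\leq -\frac{\theta^2}{n-m}+\delta,\qquad \theta(0)=-(n-m)c<0.
\end{equation*}

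The final step is a comparison with the scalar ODE $u'=-u^2/(n-m)+\delta$, $u(0)=-(n-m)c$. For $\delta=0$ this is explicitly solvable and $u$ blows up to $-\infty$ precisely at $t^{\ast}_0=1/c$. By continuous dependence of ODE solutions on parameters, for $\delta>0$ sufficiently small the solution $u=u_\delta$ still diverges to $-\infty$ in finite time $t^{\ast}(\delta)$, with $t^{\ast}(\delta)\to 1/c$ as $\delta\to 0$. Since $b>1/c$, I pick $\delta=\delta(b,c)>0$ so small that $t^{\ast}(\delta)<b$. A standard one-sided comparison (Gronwall-type) then yields $\theta(t)\leq u_\delta(t)$ wherever both are defined, so $\theta$ must blow up to $-\infty$ in $[0,b)$; this forces $A$ to become singular, i.e., $\gamma$ admits an $S$-focal point strictly before parameter $b$. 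The classical fact that a causal geodesic from $S$ ceases to maximise past its first $S$-focal point then completes the argument.

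The main bookkeeping hurdle is keeping track of the symmetry of $B$ and the sign conventions that give $\theta(0)=-(n-m)c$ and the correct form of the Riccati equation; once these are in place, the $\delta$-perturbed comparison reduces to continuous dependence of a one-parameter family of scalar ODEs, exactly as in the proof of Lemma \ref{Lemma: conjpointssmoothmetric}.
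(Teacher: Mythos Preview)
Your Raychaudhuri/Riccati approach has a genuine gap that only surfaces when $m\neq 2$. The tidal force operator $[R](t)\colon [\dot\gamma(t)]^\perp\to[\dot\gamma(t)]^\perp$ acts on the full $(n-2)$-dimensional quotient and in general does \emph{not} preserve the $(n-m)$-dimensional subspace $\mathrm{span}(E_1(t),\dots,E_{n-m}(t))$. Consequently your ``$S$-Jacobi tensor'' cannot simultaneously (a) take values in $\mathrm{span}(E_1,\dots,E_{n-m})$ and (b) have columns that are genuine Jacobi fields along $\gamma$. If you solve the $(n-m)\times(n-m)$ system $A''+\mathcal{R}A=0$ with $\mathcal{R}_{ij}=g(R(E_i,\dot\gamma)\dot\gamma,E_j)$, you get a square matrix and a Riccati equation, but a zero of $\det A$ is \emph{not} an $S$-focal point, so the last step (``singularity of $A$ $\Rightarrow$ focal point $\Rightarrow$ not maximising'') fails. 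If instead you solve the true Jacobi equation with $A(0)=\mathrm{id}_{T_pS}$, $A'(0)=-W_\nu$, then $A(t)$ lands in the $(n-2)$-dimensional $[\dot\gamma(t)]^\perp$, i.e.\ is rectangular, and $A'A^{-1}$ is undefined. For $m=2$ the two spaces coincide and your argument is essentially the proof of Lemma \ref{Lemma: smoothcodim2}; for $m>2$ it collapses.

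The paper itself does not prove this lemma but quotes it from \cite[Lem.\ 6.4]{GGKS}, whose proof (following Galloway--Senovilla \cite{GS}) avoids Riccati altogether and uses the null index form with explicit test fields. One sets $V_i(t):=f(t)E_i(t)$ with $f(0)=1$, $f(b)=0$; since the $E_i$ are parallel, $V_i'=f'E_i$, and the second-variation/index form sums to
\[
\sum_{i=1}^{n-m} I(V_i,V_i)
= -(n-m)\,\mathbf{k}_S(\nu) + (n-m)\int_0^b (f')^2\,dt - \int_0^b f^2\sum_{i=1}^{n-m} g(R(E_i,\dot\gamma)\dot\gamma,E_i)\,dt.
\]
With $f(t)=1-t/b$ one gets $\int_0^b (f')^2=1/b$ and $\int_0^b f^2=b/3$, hence under the hypothesis the right hand side is at most $(n-m)(1/b-c)+\delta b/3$, which is negative once $\delta<3(n-m)(c-1/b)/b$. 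Thus some $I(V_i,V_i)<0$, producing a variation of $\gamma$ through curves from $S$ that enter $I^+(S)$ before $t=b$, so $\gamma|_{[0,b]}$ is not $S$-maximising. No invariant subspace for $[R]$ is needed because the index form only sees $g(R(V_i,\dot\gamma)\dot\gamma,V_i)$, which is exactly the quantity controlled by the hypothesis.
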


\begin{Lemma}\label{Lemma: smoothcodim2} Let $(M,g)$ be a smooth spacetime and let $S$ be a codimension-$2$ spacelike $C^2$ submanifold of $M$. Let $\gamma$ be a geodesic starting in $S$ such that $\nu:=\dot\gamma(0)$ is a future null normal to $S$. Suppose that $c:=\mathbf{k}_S(\nu) > 0$ and let $b>1/c$. Then there is $\delta = \delta(b,c) > 0$ such that, if
\begin{align*}
    \Ric(\dot{\gamma},\dot{\gamma}) \geq -\delta
\end{align*}
along $\gamma$, then $\gamma|_{[0,b]}$ is not maximising from $S$, provided that $\gamma$ exists up to $t=b$.
    
\end{Lemma}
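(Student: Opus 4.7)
The plan is to reduce to the classical Raychaudhuri argument for the null expansion of the normal congruence from $S$, with a uniform perturbation coming from the Ricci lower bound. Along the null geodesic $\gamma$ I would introduce the null Weingarten operator $B(t)$ acting on the $(n-2)$-dimensional screen $[\dot\gamma(t)]^\perp$ and consider its trace $\theta(t):=\mathrm{tr}\,B(t)$, the associated null expansion. At $t=0$, working in the $g$-orthonormal frame $e_1,\dots,e_{n-2}$ of $T_pS$ and using $g(\nu,e_i)\equiv 0$ on $S$ together with $g(\nu,\mathrm{II}(e_i,e_i))=g(\nu,\nabla_{e_i}e_i)$, I obtain
\begin{equation*}
\theta(0) \;=\; \sum_{i=1}^{n-2} g(\nabla_{e_i}\nu,e_i) \;=\; -\sum_{i=1}^{n-2} g(\nu,\mathrm{II}(e_i,e_i)) \;=\; -(n-2)\,\mathbf{k}_S(\nu) \;=\; -(n-2)\,c \;<\; 0,
\end{equation*}
so the congruence is initially strictly contracting.

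Next I would invoke Raychaudhuri's equation
\begin{equation*}
\dot\theta + \frac{\theta^2}{n-2} + \mathrm{tr}(\sigma^2) + \mathrm{Ric}(\dot\gamma,\dot\gamma) \;=\; 0,
\end{equation*}
where $\sigma$ is the trace-free part of $B$. Since $\mathrm{tr}(\sigma^2)\geq 0$ and $\mathrm{Ric}(\dot\gamma,\dot\gamma)\geq-\delta$ by hypothesis, this gives the differential inequality $\dot\theta\leq \delta-\theta^2/(n-2)$. The natural comparison object is the scalar ODE $\dot u_\delta = \delta-u_\delta^2/(n-2)$ with $u_\delta(0)=-(n-2)c$, which forces $\theta(t)\leq u_\delta(t)$ on their common interval of existence. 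For $\delta=0$ separation of variables yields the explicit solution $u_0(t)=(n-2)c/(ct-1)$, blowing up to $-\infty$ precisely at $t=1/c$. For small $\delta>0$ the solution $u_\delta$ still escapes to $-\infty$ in finite time $t_*(\delta)$, with $t_*(\delta)\to 1/c$ as $\delta\to 0$ (either by continuous dependence on parameters, or directly from the separated integral). Since $b>1/c$, I can then choose $\delta=\delta(b,c)>0$ small enough that $t_*(\delta)<b$, which forces $\theta$, and hence the Jacobi determinant of the $S$-normal congruence, to blow up at some $t_0\in(0,b)$.

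To conclude, the blow-up of $\theta$ at $t_0$ is equivalent to the existence of a focal point of $S$ along $\gamma$ at $t_0<b$ (cf.\ e.g.\ \cite[Prop.\ 10.43]{ON83}), and by the standard focal-point theory for spacelike submanifolds this implies that $\gamma$ ceases to maximise the Lorentzian distance from $S$ past $t_0$; in particular $\gamma|_{[0,b]}$ is not maximising from $S$. The step requiring most care is the uniform dependence of $\delta$ on the parameters $b$ and $c$ (and not on the individual geodesic $\gamma$); this, however, is built into the scalar comparison ODE, whose initial datum and whose blow-up time depend only on $c$, $\delta$ and $n$, so any $\delta(b,c)>0$ producing $t_*(\delta)<b$ works, and such a choice exists by continuity precisely because of the strict inequality $b>1/c$.
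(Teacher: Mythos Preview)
Your argument is correct and is precisely the standard null Raychaudhuri focusing argument with a uniform Ricci lower bound perturbation. The paper itself does not give a proof of this lemma; it is quoted from \cite[Lem.\ 5.6]{G20}, where the same approach is used. Your computation of $\theta(0)=-(n-2)c$ matches the paper's sign conventions for $\mathbf{k}_S$ and $\mathrm{II}$, the scalar comparison ODE and its blow-up analysis are accurate, and you correctly isolate the point that $\delta$ depends only on $(b,c,n)$ through the comparison solution $u_\delta$, not on the particular geodesic $\gamma$.
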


The following result is the $C^1$-analogue of \cite[Prop.\ 6.5]{GGKS}. As was the case for the distributional genericity condition, cf.\ Definition \ref{definition: strongdistributionalgenericity}, also here we have to assume that the distributional curvature condition at hand is stable under $C^0$-perturbations of the vector fields involved. This ensures that we can derive the necessary curvature conditions for the smooth approximating metrics in order to be able to use Lemmas \ref{Lemma: smoothcodimnot2} and \ref{Lemma: smoothcodim2}.

\begin{Proposition}(Light rays from a submanifold)
\label{Proposition: lightraysfromsubmfd}

\noindent Let $(M,g)$ be a strongly causal, MNNB $C^1$-spacetime and let
$\tilde{S} \subseteq M$ be a $C^2$-spacelike submanifold of codimension $1<m<n$. Suppose
$\mathbf{k}_{\tilde{S}}(\nu) = g(\nu,H_p) > c > 0$ and let $b>1/c$. 
Suppose there is a null geodesic $\gamma$ with $\dot{\gamma}(0)=\nu$, a neighbourhood
$U$ of $\gamma|_{[0,b]}$ and $C^1$-extensions $\overline{E}_i$ of $E_i$ and $\overline{N}$
of $\dot{\gamma}$ to $U$ such that for each $\delta >0$ there exists $\eta > 0$ such
that for all collections of $C^1$-vector fields
$\{\tilde{E}_1,\dots,\tilde{E}_{n-m},\tilde{N}\}$ on $U$ with $\|\tilde{E}_i -
\overline{E}_i\|_h < \eta$ for all $i$ and $\|\tilde{N}-\overline{N}\|_h < \eta$, we
have
\begin{align}\label{eq:initialcurvature}
    \sum_{i=1}^{n-m} g(R(\tilde{E}_i,\tilde{N})\tilde{N},\tilde{E}_i) \geq -\delta \quad \text{in } \mathcal{D}^{'(1)}(U).
\end{align}
Then $\gamma|_{[0,b]}$ is not maximising from $\tilde{S}$.
\end{Proposition}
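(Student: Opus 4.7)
The plan is to argue by contradiction and reduce the statement to the smooth case handled by Lemmas \ref{Lemma: smoothcodimnot2} and \ref{Lemma: smoothcodim2}. Suppose $\gamma|_{[0,b]}$ is maximising from $\tilde S$. Since $\tilde S$ is spacelike and $(M,g)$ is strongly causal, I may shrink $\tilde S$ around $p=\gamma(0)$ to an acausal piece (still denoted $\tilde S$). Fix $c'\in (1/b,\,c)$ and $\delta_0>0$ with $b-\delta_0>1/c'$. Applying Proposition \ref{prop:approxmcnb}(ii) with $S=\tilde S$ yields a subsequence $\check g_{\varepsilon_k}$ and $\check g_{\varepsilon_k}$-null geodesics $\gamma_k\colon [t_k,b-\delta_0]\to M$, contained in $E^+_k(\tilde S)$, with $\gamma_k(t_k)\in\tilde S$, $t_k\searrow 0$, and $\gamma_k\to \gamma|_{[0,b-\delta_0]}$ in $C^1_{\mathrm{loc}}$. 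In particular $\gamma_k$ is $\check g_{\varepsilon_k}$-maximising from $\tilde S$, and $\dot\gamma_k(t_k)$ is a $\check g_{\varepsilon_k}$-null normal to $\tilde S$ at $\gamma_k(t_k)$.

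Since $\check g_{\varepsilon_k}\to g$ in $C^1_{\mathrm{loc}}$, the shape operator and mean curvature vector of $\tilde S$ computed with $\check g_{\varepsilon_k}$ converge in $C^0$ to those of $g$ near $p$. Combined with $\dot\gamma_k(t_k)\to \nu$ and $\mathbf{k}^g_{\tilde S}(\nu)>c$, this gives $\mathbf{k}^{\check g_{\varepsilon_k}}_{\tilde S}(\dot\gamma_k(t_k))>c'$ for all large $k$.

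Next, I construct $\check g_{\varepsilon_k}$-parallel $\check g_{\varepsilon_k}$-orthonormal frames $(E_i^k)_{i=1}^{n-m}$ along $\gamma_k$ with initial values approaching a $g$-ON basis $(e_i)$ of $T_p\tilde S$. By Remark \ref{remark:parallel_transport}, $E_i^k\to E_i$ in $C^1_{\mathrm{loc}}$ along the curves. Using the extension Lemma \ref{lem:A2} of Appendix \ref{app:b}, I lift the $E_i^k$ and $\dot\gamma_k$ to $C^1$-vector fields on $U$ converging in $C^1_{\mathrm{loc}}(U)$ to $\overline E_i$ and $\overline N$, respectively. Plugging these into the stability hypothesis \eqref{eq:initialcurvature} and running the Friedrichs-type argument of the proof of Lemma \ref{Lemma: genericityfriedrichs} (invoking Lemma \ref{lem:symm} and the fast convergence $\|\check g_{\varepsilon_k}-g\star_M\rho_{\varepsilon_k}\|_{\infty,K}=O(\varepsilon_k)$ from Lemma \ref{Lemma: approximatingmetrics}(iii)) yields, for any prescribed $\delta>0$ and all sufficiently large $k$,
\begin{equation*}
    \sum_{i=1}^{n-m}\check g_{\varepsilon_k}\bigl(R[\check g_{\varepsilon_k}](E_i^k,\dot\gamma_k)\dot\gamma_k,\,E_i^k\bigr)\;\geq\;-\delta
\end{equation*}
along $\gamma_k$.

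Now I pick $\delta=\delta(b-\delta_0,c')>0$ as provided by Lemma \ref{Lemma: smoothcodimnot2} when $m>2$, respectively by Lemma \ref{Lemma: smoothcodim2} when $m=2$ (noting that in the latter case the sum above coincides with $\Ric[\check g_{\varepsilon_k}](\dot\gamma_k,\dot\gamma_k)$, since $[\dot\gamma_k]^\perp$ is $(n-2)$-dimensional in the null case). Applied to the smooth spacetime $(M,\check g_{\varepsilon_k})$ at the initial point $\gamma_k(t_k)\in \tilde S$, these lemmas show that $\gamma_k|_{[t_k,t_k+b']}$ fails to be $\check g_{\varepsilon_k}$-maximising from $\tilde S$ for any $b'>1/c'$ for which the geodesic still exists. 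Because $t_k\to 0$ and $b-\delta_0>1/c'$, such $b'\le b-\delta_0-t_k$ is available for $k$ large, contradicting the first step. The main obstacle is the curvature transfer carried out in the third paragraph: the stability condition \eqref{eq:initialcurvature} is designed precisely to absorb the $C^1$-perturbations of the frame and the tangent field introduced by the geodesic approximation and by the convolution, while the remaining estimate parallels Lemma \ref{Lemma: genericityfriedrichs} line by line, leaning on the order-one distributional nature of $R$ and the joint continuity of multiplication of order-one distributions with $C^1$-functions.
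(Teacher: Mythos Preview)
Your argument is essentially the paper's own proof: contradiction, approximation by $\check g_{\varepsilon_k}$-null geodesics emanating from $\tilde S$, convergence of the mean curvature, construction and extension of parallel frames via Lemma~\ref{lem:A2}, transfer of the curvature bound through a Friedrichs-type argument, and application of Lemma~\ref{Lemma: smoothcodimnot2} (resp.\ \ref{Lemma: smoothcodim2}). Your direct invocation of the second clause of Proposition~\ref{prop:approxmcnb}(ii) is in fact tidier than the paper's route, which first produces the $\gamma_k$ by tracing back from points $q_k\in\partial J^+_k(\overline U\cap\tilde S)$ and only then appeals to Proposition~\ref{prop:approxmcnb}(ii) to identify the limit.

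Two small technical points deserve tightening. First, Lemma~\ref{lem:A2} produces extensions converging in $C^1_{\mathrm{loc}}$ to the \emph{cylindrically constant} extension of $E_i$, not to the given $\overline E_i$; since both restrict to $E_i$ along $\gamma$, one further shrinking of $U$ makes them $\|\cdot\|_h$-close, which is all \eqref{eq:initialcurvature} requires---the paper makes this step explicit. Second, your choice $\delta=\delta(b-\delta_0,c')$ is slightly off: the $\gamma_k$ are only defined on $[t_k,b-\delta_0]$, an interval of length $b-\delta_0-t_k<b-\delta_0$, so Lemma~\ref{Lemma: smoothcodimnot2} cannot be applied with first argument $b-\delta_0$. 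Fix this by choosing an intermediate $b'$ with $1/c'<b'<b-\delta_0$, setting $\delta=\delta(b',c')$, and noting that $[t_k,t_k+b']\subseteq[t_k,b-\delta_0]$ for $k$ large (this is exactly the role of the paper's auxiliary parameters $b'<b''$).
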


Note that condition (\ref{eq:initialcurvature}) localises in a similar manner as the genericity condition. For the latter see Remark \ref{rem:localizegen}.
\begin{proof}
Let $g_{\varepsilon} = \check{g}_{\varepsilon}$,
$R_{\varepsilon}=R[\check{g}_{\varepsilon}]$. Clearly, $\mathbf{k}_{\Tilde{S}}$ is continuous and
$\mathbf{k}_{\Tilde{S}, \varepsilon} \to \mathbf{k}_{\Tilde{S}}$ uniformly on compact sets. Hence, there is 
a neighbourhood $V$ of $\nu$ in $TM|_{\Tilde{S}}$ and $\varepsilon_0$ such that
$\forall \varepsilon \leq \varepsilon_0$ and $\forall v \in V$:
$\mathbf{k}_{\Tilde{S},\varepsilon}(v) > c$.
Note that $g$-spacelikeness implies $g_{\varepsilon}$-spacelikeness. Hence $U \cap
\Tilde{S}$ is $g_{\varepsilon}$-spacelike and we may assume that $W := \pi(V)$ is contained
in $U \cap \Tilde{S}$. 

Suppose now to the contrary that $\gamma|_{[0,b]}$ maximises the distance from $\overline{U}\cap\Tilde{S}$. Let
$1/c<b'<b''<b$ and let $q=\gamma(b'')$.
Find $q_k \in \partial J_k^+(\overline{U} \cap \Tilde{S})$
(where $\partial J_k:=\partial J_{g_{\varepsilon_k}}$ and $\varepsilon_k\to 0$) with $q_k \to q$. By Corollary \ref{Corollary: curvesinlightcone} there are $g_{\varepsilon_k}$-null geodesics $\gamma_k: I_k \to M$ (in future directed parametrisation) with $\gamma_k(b'') = q_k$ either intersecting $\overline{U} \cap \Tilde{S}$ or past inextendible. We may assume that $\|\dot{\gamma}_k(b'')\|_h$ are all equal to $\|\dot{\gamma}(b'')\|_h$, hence we may assume that the $\dot{\gamma}_k(b'')$ converge to a $g$-null vector $v$ and the geodesics converge to a $g$-null geodesic $\gamma_v$ in $C^2_{\mathrm{loc}}$ by Corollary \ref{Corollary: Hartmangeodesicversion}. Strong causality allows us to invoke Lemma \ref{Lemma: equivalentdefofstrongcausality}, so we
can find a neighbourhood $W$ of $\gamma(0)$ with $W \subseteq U$ such that $W \cap \tilde S$ is acausal in $M$. 
By (the proof of) Proposition \ref{prop:approxmcnb}(ii) (see also Remark \ref{remark:approxmcnb}(i)) we obtain that $\gamma = \gamma_v$, and that
(up to picking a subsequence) $\gamma_k$ reaches $\tilde S$ for $k$ large at $t_k \downarrow 0$. 

Since any of the $g_{\eps_k}$-geodesics $\gamma_k$ reaches $\tilde S$
for all $\eps_k<\eps_0$ (for some suitable $\eps_0>0$) at points $\gamma_k(t_k) \in
\tilde S$, by $C^1_{\mathrm{loc}}$ convergence of $\gamma_k$ to $\gamma$
we also have $\dot{\gamma}_k(t_k)\in V$.
Note that, due to $\gamma_k(t_k)\to \gamma(0)$, we can pick $g_{\eps_k}$-orthonormal bases for
$T_{\gamma_k(t_k)}\tilde S$ converging to a $g$-orthonormal basis for $T_{\gamma(0)}\tilde S$.
Denote by $E_i^{\eps_k}$ and $E_i$ the $g_{\eps_k}$- and $g$-parallel
transports of these bases along $\gamma_k$ and $\gamma$, respectively. 
By Lemma \ref{lem:A2} there exist $C^1$-extensions
$\tilde E_i^{\eps_k}$ and
$\tilde{E_i}$ to the (possibly shrunk) neighbourhood $U$. By a
similar argument one can extend the velocity vector fields 
along $\gamma_k$ and $\gamma$ to all of $U$, they will be denoted by
$\tilde{ N}^\eps$ and $\tilde{N}$, respectively.
 
Assume also without loss of generality that $U$ is relatively compact and that $\gamma_k([t_k,b])
\subseteq U$ for large $k$.
Pick $\delta=\delta(b',c)$ 
according to Lemma \ref{Lemma: smoothcodimnot2}.
Since they restrict to the same vector fields along $\gamma$, 
by shrinking $U$ and by continuity of $\overline{E_i}$ and $\tilde E_i$ resp.\
$\overline{N}$ and $\tilde N$,
we can assume these vector fields to be
arbitrarily close in $\|\,.\,\|_h$ on $U$. Moreover, by the convergence of
$\tilde E_i^{\varepsilon_k}$ and $\tilde N^{\varepsilon_k}$ to
$\tilde E_i$ and $\tilde{N}$, respectively, also these vector fields can be made 
arbitrarily $\|\,.\,\|_h$-close to $\overline{E_i}$ and $\overline{N}$, respectively,
on $U$ for large $k$. The assumption of the Proposition  therefore implies that
\begin{align*}
    \sum_{i=1}^{n-m} g(R(\tilde E_i^{\varepsilon_k},\tilde{N}^{\varepsilon_k})\tilde{N}^{\varepsilon_k},\tilde E_i^{\varepsilon_k}) \ge -\delta/2
\end{align*}
in  $\D'(U)$ for $k$ large. Since $\star_M \rho_{\varepsilon_k}$ respects inequalities, we conclude that also
\begin{align*}
    \sum_{i=1}^{n-m} g(R(\tilde E_i^{\varepsilon_k},\tilde{N}^{\varepsilon_k})\tilde{N}^{\varepsilon_k},\tilde E_i^{\varepsilon_k}) \star_M \rho_{\varepsilon_k} \ge -\delta/2
\end{align*}
on $U$ for large $k$.

By the same reasoning as in Lemma \ref{Lemma: genericityfriedrichs} we obtain (once again writing $R_{\varepsilon_k}$ for $R[g_{\varepsilon_k}]$)
\[
g(R(\tilde E_i^{\varepsilon_k},\tilde{N}^{\varepsilon_k})\tilde{N}^{\varepsilon_k},\tilde E_i^{\varepsilon_k}) \star_M \rho_{\varepsilon_k} -
g_{\varepsilon_k}(R_{\varepsilon_k}(\tilde E_i^{\varepsilon_k},\tilde{N}^{\varepsilon_k})
\tilde{N}^{\varepsilon_k},\tilde E_i^{\varepsilon_k}) \to 0
\] 
uniformly on $U$ as $\varepsilon_k \to 0$, $i=1,\dots,n-m$.
But then, for large $k$,
\begin{align*}
 \sum_{i=1}^{n-m}   g_{\varepsilon_k}(R_{\varepsilon_k}(\tilde E_i^{\varepsilon_k},\tilde{N}^{\varepsilon_k})
\tilde{N}^{\varepsilon_k},\tilde E_i^{\varepsilon_k}) \ge -\delta.
\end{align*}
In particular, the above holds along $\gamma_k$, so
\begin{align*}
  \sum_{i=1}^{n-m}  g_{\varepsilon_k}(R_{\varepsilon_k}(E_i^{\varepsilon_k}(t),\dot{\gamma}_k(t))\dot{\gamma}_k(t),E_i^{\varepsilon_k}(t)) \ge -\delta
\end{align*}
for all $t \in [t_k,b']$, where we may assume that $k$ is so large that all $\gamma_k$ are defined on $[t_k,b']$.

Due to our choice of $\delta$, Lemma \ref{Lemma: smoothcodimnot2} implies that, for $k$ sufficiently large, 
each $\gamma_k$  
stops maximising at parameter $b'+t_k$ at the latest (if $\gamma_k$ is not
$g_{\varepsilon}$-normal to $\tilde{S}$, then $\gamma_k$ stops maximising the distance immediately, see Remark \ref{remark: notnullnormal} below).
By construction the $\gamma_k$ maximize from $t_k$ to $b''$, hence if $k$ is so large that $b'+t_k<b''$ we obtain a contradiction.

\end{proof}

\begin{remark}
By arguments analogous to those given for the genericity condition, cf.\ Remark
\ref{remark: distribgenericityequivalentinC11}, namely essentially by boundedness of $R$
on compact sets, one can see that if $g \in C^{1,1}$, then the curvature condition \eqref{eq:initialcurvature} in Proposition
\ref{Proposition: lightraysfromsubmfd} is equivalent to the one given in \cite[Prop.\
6.5]{GGKS}.
\end{remark}

\begin{Proposition}(The case of trapped surfaces)
\label{Proposition: caseoftrappedsurfaces}

\noindent Let $(M,g)$ be a strongly causal, MNNB $C^1$-spacetime satisfying the distributional null energy condition. Let $\tilde{S} \subseteq M$ be a codimension-$2$ $C^2$-spacelike submanifold such that $\mathbf{k}_{\tilde{S}}(\nu) > c > 0$, where $\nu$ is a null normal to $\tilde{S}$, and let $b>1/c$. If $\gamma$ is a null geodesic with $\dot{\gamma}(0) = \nu$, then $\gamma|_{[0,b]}$ is not maximising from $\tilde{S}$.
\end{Proposition}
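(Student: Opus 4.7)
The plan is to mimic the structure of Proposition \ref{Proposition: lightraysfromsubmfd}, but since codimension $2$ allows us to appeal to Lemma \ref{Lemma: smoothcodim2} rather than Lemma \ref{Lemma: smoothcodimnot2}, we need only the Ricci curvature along approximating null geodesics, not a full distributional control of $\sum g(R(\tilde E_i,\tilde N)\tilde N,\tilde E_i)$. In particular, the distributional null energy condition (via Lemma \ref{Lemma: nullenergyconditionapproxmetrics}) will replace the extra assumption \eqref{eq:initialcurvature}, and the elaborate $\star_M \rho_\eps$--based transfer argument from Lemma \ref{Lemma: genericityfriedrichs} can be omitted.

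Proceeding by contradiction, I would assume $\gamma|_{[0,b]}$ is maximising from $\tilde S$. Approximate with $\check g_\eps \prec g$. By continuity of the convergence $\mathbf{k}_{\tilde S,\eps}\to \mathbf{k}_{\tilde S}$ (uniformly on compact subsets of $TM|_{\tilde S}$), there is a neighbourhood $V$ of $\nu$ in $TM|_{\tilde S}$ and $\eps_0>0$ such that $\mathbf{k}_{\tilde S,\eps}(v)>c$ for all $v\in V$ and $\eps<\eps_0$. Spacelikeness is preserved, and by Lemma \ref{Lemma:local_Cauchy_hypersurface} and Lemma \ref{Lemma: equivalentdefofstrongcausality} we may shrink a neighbourhood $U$ of $\gamma|_{[0,b]}$ so that $U\cap \tilde S$ is an acausal Cauchy hypersurface in $U$ (both for $g$ and for $\check g_\eps$, $\eps$ small).

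Next, pick $1/c<b'<b''<b$, set $q=\gamma(b'')$, and choose $q_k\to q$ with $q_k\in \partial J^+_k(\overline U\cap \tilde S)$ (where $\partial J^+_k:=\partial J^+_{\check g_{\eps_k}}$, $\eps_k\downarrow 0$). By Corollary \ref{Corollary: curvesinlightcone} there exist past directed $\check g_{\eps_k}$-null geodesics $\gamma_k$ with $\gamma_k(b'')=q_k$, either past inextendible or reaching $\overline U\cap \tilde S$. After $h$-normalising the velocities at $b''$, Corollary \ref{Corollary: Hartmangeodesicversion} produces a $g$-null sublimit $\gamma_v$ through $q$. Since the MNNB assumption is in force and we have arranged acausality of $U\cap \tilde S$, Proposition \ref{prop:approxmcnb}(ii) (with Remark \ref{remark:approxmcnb}(ii)) forces $\gamma_v=\gamma$ on the overlap; consequently, passing to a subsequence, $\gamma_k\to \gamma|_{[0,b']}$ in $C^1_{\mathrm{loc}}$, and each $\gamma_k$ reaches $\tilde S$ at $t=0$ with $\gamma_k(0)\to \gamma(0)$ and $\dot\gamma_k(0)\to \nu$; in particular, $\dot\gamma_k(0)\in V$ and hence $\mathbf{k}_{\tilde S,\eps_k}(\dot\gamma_k(0))>c$ for $k$ large.

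Finally, choose $\delta=\delta(b',c)>0$ from Lemma \ref{Lemma: smoothcodim2}. Since $\gamma_k([0,b'])$ lies in a compact set $K\subseteq U$ for large $k$ and $\|\dot\gamma_k\|_h$ is uniformly bounded above and below away from zero (by $C^1_{\mathrm{loc}}$-convergence to the null curve $\gamma$ and the non-vanishing of $\|\dot\gamma\|_h$), Lemma \ref{Lemma: nullenergyconditionapproxmetrics} yields $\Ric[\check g_{\eps_k}](\dot\gamma_k,\dot\gamma_k)>-\delta$ on $[0,b']$ for all sufficiently large $k$. Lemma \ref{Lemma: smoothcodim2} then shows that $\gamma_k|_{[0,b']}$ fails to maximise from $\tilde S$ (if $\dot\gamma_k(0)$ happens not to be $\check g_{\eps_k}$-normal to $\tilde S$, cf.\ the remark below Proposition \ref{Proposition: lightraysfromsubmfd}, it stops maximising immediately). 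This contradicts the fact that $\gamma_k|_{[0,b'']}\subseteq \partial J^+_k(\overline U\cap \tilde S)$, exactly as in the endgame of \cite[Prop.\ 6.5]{GGKS}.

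The only step requiring genuine care is guaranteeing that each $\gamma_k$ actually reaches $\tilde S$ and that $\dot\gamma_k(0)$ lands in the neighbourhood $V$ on which $\mathbf{k}_{\tilde S,\eps_k}>c$ holds; this is precisely what the MNNB assumption together with Proposition \ref{prop:approxmcnb}(ii) delivers. Beyond this, no distributional curvature transfer is needed: the codimension-$2$ case collapses cleanly onto the null energy condition via Lemma \ref{Lemma: smoothcodim2}, which is the structural simplification over Proposition \ref{Proposition: lightraysfromsubmfd}.
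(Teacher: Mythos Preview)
Your proposal is correct and follows essentially the same approach as the paper: mimic the proof of Proposition \ref{Proposition: lightraysfromsubmfd}, replacing Lemma \ref{Lemma: smoothcodimnot2} by Lemma \ref{Lemma: smoothcodim2}, and obtain the required Ricci bound on the approximating null geodesics directly from the distributional null energy condition via Lemma \ref{Lemma: nullenergyconditionapproxmetrics}, thereby bypassing the $\star_M\rho_\eps$-transfer argument. The paper's own proof gives only a brief outline of precisely this strategy.
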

\begin{proof}
The proof is completely analogous to that of Proposition \ref{Proposition: lightraysfromsubmfd}, with the sole difference that one
needs to use Lemma \ref{Lemma: smoothcodim2} instead of Lemma \ref{Lemma: smoothcodimnot2}. Let us give a brief outline:
Approximate $g$ again by $g_{\eps_k} \equiv\check{g}_{\varepsilon_k}$ and the compact segment $\gamma|_{[0,b]}$ as
before by $g_{\eps_k}$-null geodesics $\gamma_k$. Then by Lemma \ref{Lemma: nullenergyconditionapproxmetrics} (see also the proof of
\cite[Thm.\ 5.7]{G20}), for all $\delta > 0$ and large $k$ we have
$\Ric_{\eps_k}(\dot \gamma_k,\dot\gamma_k) > - \delta$.
Using this, combined with Lemma \ref{Lemma: smoothcodim2} and proceeding as in Proposition \ref{Proposition: lightraysfromsubmfd}, one arrives at a contradiction as before.
\end{proof}

\begin{remark}
    The corresponding $C^{1,1}$-version of Proposition
    \ref{Proposition: caseoftrappedsurfaces} was dealt with in \cite[Remark 6.6(i)]{GGKS},
    however, a similar line of reasoning is not available in our case of $C^1$-metrics:
    While it is still possible to extend an ON-frame along a curve to a neighbourhood,
    this does not suffice to derive condition \eqref{eq:initialcurvature} of Proposition 
    \ref{Proposition: lightraysfromsubmfd}, as the needed rigidity of the condition cannot be
    guaranteed. More precisely: For any ON-frame $\{\bar E_i, \bar N \}$ on $U$ the equality (as before
    $\bar N$ denotes the extension of $\gamma'$ in the frame)
    \begin{equation*}
        \sum_{i=1}^{n-m} g(R(\bar{E}_i,\bar{N})\tilde{N},\bar{E}_i) = \Ric[g](\bar N, \bar N) 
        \geq - \delta
    \end{equation*}
    still holds and the final inequality is due to the distributional null energy condition.
    However, to carry out a proof similar to the one of Proposition \ref{Proposition: lightraysfromsubmfd}, this
    estimate would need to hold for all collections of vector fields close to the frame in the sense
    given in the Proposition. This cannot be guaranteed for $C^1$-metrics anymore as $R$ is, in
    general, unbounded. So instead, the slightly different focusing result used in Proposition
    \ref{Proposition: caseoftrappedsurfaces} is required.
    \end{remark}

\begin{remark}
\label{remark: notnullnormal}
If, in the situation of Proposition \ref{Proposition: lightraysfromsubmfd}, $\nu$ is null but not a null normal, then geodesics in that direction immediately enter $I^+$: This is shown in the same way as in the smooth case, e.g.\ by using ideas from \cite[Lem.\ 10.45, Lem.\ 10.50]{ON83}: Let $c$ be a geodesic in direction $\nu$, hence in particular $C^2$. 
By \cite[Lem.\ 3.1]{SS21} for a (piecewise) $C^1$-vector field $X$ along $c$ with
$g(X',\dot{c})<0$, any variation $c_s$ with variation field $X$ is timelike and longer than $c$
(for small $s$).

Now for any vector $v$ at $p$ not a null normal to a spacelike surface $S$, w.l.o.g.\ there exists $y\in T_pS$ with $g(y,v)> 0$. Define $V(t) =(1- \frac{t}{b}) Y(t)$, where $Y$ is the parallel translate of $y$ along $c$.
A straightforward calculation shows $g(V',\dot{c})<0$. Now choose a variation $c_s$ of $c$ with variational vector field $V$ such that $c_s(0)\in S$. We obtain that $c_s$ is timelike from $S$ to $c(b)$ and as $b$ was arbitrary the statement follows.
\end{remark}

The main result on trapped submanifolds in $C^1$-spacetimes now is the following:

\begin{Proposition}
\label{Proposition: submfdnullcompletenessortrappedset}
Let $(M,g)$ be a strongly causal, MNNB $C^1$-spacetime and let $S
\subseteq M$ be a trapped $C^0$-submanifold of codimension $m$, $1<m<\dim(M)=n$. If $m = 2$,
suppose that $(M,g)$ satisfies the distributional null energy condition, and if $m \neq 2$,
suppose that any support submanifold $\tilde{S}$ of $S$ satisfies the condition in
Proposition \ref{Proposition: lightraysfromsubmfd} in any null normal direction $\nu$ and for any null geodesic $\gamma$ with $\dot{\gamma}(0)=\nu$. Then $E^+(S)$ is
compact or $(M,g)$ is null geodesically incomplete.
\end{Proposition}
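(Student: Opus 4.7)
The plan is to argue the contrapositive: assume $(M,g)$ is future null geodesically complete and deduce that $E^+(S)$ is compact. This is the standard Hawking-Penrose/Galloway-Senovilla scheme, with the exponential map replaced by the ODE-stability statement of Corollary \ref{Corollary: Hartmangeodesicversion}.

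\textbf{Uniform focusing parameter.} Using compactness of $S$ together with the support-submanifold definition of trapped $C^0$-submanifold, a finite-cover argument combined with continuity of the convergence produces constants $c>0$ and $b>1/c$ so that for every $p\in S$ a distinguished future $C^2$-support submanifold $\tilde S_p$ at $p$ can be selected with $\mathbf{k}_{\tilde S_p}(\nu)>c$ for every $h$-unit future null normal $\nu$ to $\tilde S_p$ at $p$. Then Proposition \ref{Proposition: caseoftrappedsurfaces} (if $m=2$) or Proposition \ref{Proposition: lightraysfromsubmfd} (if $m\neq 2$) implies that any null geodesic $\gamma$ with $\gamma(0)=p$ and $\dot\gamma(0)=\nu$ fails to be maximising from $\tilde S_p$ on $[0,b]$.

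\textbf{Reduction to a compact initial-data set.} Let $q\in E^+(S)\subseteq\partial I^+(S)$. Standard $C^1$-causality (the limit curve theorem in strongly causal spacetimes and the null generator structure of $\partial J^+(S)$, as used in the proof of Proposition \ref{Proposition: lightraysfromsubmfd}) produces a null geodesic $\gamma$ from some $p\in S$ to $q$ lying in $\partial I^+(S)$. Since $\tilde S_p\subseteq J^+(S)$ locally one has $I^+(\tilde S_p)\subseteq I^+(J^+(S))=I^+(S)$; hence $\gamma$ is maximising from $\tilde S_p$ (any timelike curve from $\tilde S_p$ to $q$ would place $q$ in $I^+(S)$), and Remark \ref{remark: notnullnormal} then forces $\dot\gamma(0)$ to be a null normal to $\tilde S_p$. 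Normalising by $\|\dot\gamma(0)\|_h=1$ and invoking the uniform focusing result of the previous step, $\gamma$ must reach $q$ at some parameter $t_q\in[0,b]$.

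\textbf{Sequential compactness.} Given $q_n\in E^+(S)$, the above furnishes $p_n\in S$, future $g$-null vectors $v_n\in T_{p_n}M$ with $\|v_n\|_h=1$, and $t_n\in[0,b]$ such that $q_n=\gamma_n(t_n)$ for the null geodesic $\gamma_n$ with data $(p_n,v_n)$. Compactness of the $h$-unit sphere bundle of $TM|_S$ yields, along a subsequence, $(p_n,v_n)\to(p,v)$ with $v$ future $g$-null and $h$-unit, and $t_n\to t\in[0,b]$. Null geodesic completeness ensures each $\gamma_n$ extends to $[0,b+1]$, and Corollary \ref{Corollary: Hartmangeodesicversion} then provides a $g$-null geodesic $\gamma:[0,b+1]\to M$ with $(\gamma(0),\dot\gamma(0))=(p,v)$ to which a subsequence converges in $C^2_{\mathrm{loc}}$. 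In particular $q_n\to\gamma(t)\in M$. Since $q_n\in\partial I^+(S)$ (a closed set) and $\gamma(t)\in J^+(S)$ via the causal segment $\gamma|_{[0,t]}$, the limit lies in $E^+(S)$. Hence $E^+(S)$ is sequentially compact.

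\textbf{Main obstacle.} The two delicate points are (i) promoting the pointwise support-submanifold hypothesis to a \emph{uniform} focusing bound $c>0$, which rests on a finite-cover argument and the continuity of $\mathbf{k}$; and (ii) realising each $q\in E^+(S)$ by an initial data triple $(p,v,t_q)$ with $t_q\le b$ in the absence of an exponential map, which requires the $C^1$-causality generator results from the appendix. Once both are in hand, the MNNB assumption enters only implicitly (through Propositions \ref{Proposition: lightraysfromsubmfd} and \ref{Proposition: caseoftrappedsurfaces}), and Corollary \ref{Corollary: Hartmangeodesicversion} closes the compactness argument.
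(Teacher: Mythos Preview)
Your strategy differs from the paper's. You try to bound the affine parameter $t_q$ uniformly over $E^+(S)$ and then read off sequential compactness from ODE stability. The paper instead argues by contradiction: assuming $q_k\to\infty$, the limit curve theorem produces a future complete null $S$-ray $\gamma$ from some $p\in S$; this ray then maximises from the \emph{single} support submanifold $\tilde S_p$, contradicting Proposition \ref{Proposition: lightraysfromsubmfd} or \ref{Proposition: caseoftrappedsurfaces}. Closedness is handled separately. The key advantage of the paper's route is that it only ever invokes a support submanifold at the limit point.

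This is exactly where your argument has a genuine gap: the uniform focusing bound. Definition \ref{definition: trappedc0submfd} supplies, for each $q\in S$, a support submanifold $\tilde S_q$ with past-pointing timelike mean curvature \emph{at $q$ only}; there is no continuity or coherence between the support submanifolds chosen at different points of $S$ (remember $S$ is merely $C^0$). A finite cover does not help: the convergence $\mathbf{k}_{\tilde S_{p}}$ is continuous on $\tilde S_{p}$, not on $S$, and $\tilde S_p$ need not contain any other point of $S$. Nothing in the definition prevents $c_{p_n}\to 0$ along a sequence $p_n\to p$, since the submanifolds $\tilde S_{p_n}$ bear no relation to $\tilde S_p$. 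Hence your uniform $b$ is unjustified as stated. Your Steps 2--3 are essentially sound (though you should cite Proposition \ref{Proposition: HartmanODEversion} rather than Corollary \ref{Corollary: Hartmangeodesicversion}, which is formulated for the approximations $\check g_\eps,\hat g_\eps$), but they rest on Step 1. The natural repair is to split into cases: if the $t_n$ stay bounded, proceed as you do; if not, pass to a limit ray and contradict the focusing result at the limit basepoint---but that is precisely the paper's argument, and Steps 2--3 become superfluous.
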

\begin{proof}
Suppose $(M,g)$ is null geodesically complete.\\
\textit{$E^+(S)$ is relatively compact}: Assume, to the contrary, that there are $q_k \in E^+(S)$
with $q_k \to \infty$. Lemma \ref{Lemma: nullgeodinlightcone} guarantees the existence of future directed maximising null geodesics $\gamma_k:[0,t_k] \to M$ connecting $p_k \in S$ to $q_k$, which we may assume to be  parametrised by $h$-unit speed. By compactness we may further assume that $\gamma_k(0)=p_k \to p \in S$, and we write $\gamma_k(t_k)= q_k.$
Due to $q_k \to \infty$, only finitely many $\gamma_k$ are contained in a fixed neighbourhood $U$ of $p$. So we may apply Theorem \ref{Theorem: limitcurvetheorem}, implying that (a subsequence, not relabeled, of) $\gamma_k$ converges to a $g$-null curve $\gamma:[0,\infty)\to M$ (in $h$-unit speed parametrisation) from $p$. Since the
$\gamma_k$ are $S$-maximising, so is $\gamma$. In particular, $\gamma$ is a $g$-null pre-geodesic. Reparametrising $\gamma$ as a geodesic, it is still
defined on $[0,\infty)$ due to null geodesic completeness and so $\gamma:[0,\infty) \to M$
is a future complete $g$-null $S$-ray. 
Let $\Tilde{S}$ be a future support submanifold of $S$
at $p$, then $\gamma$ maximises the distance from $\Tilde{S}$ everywhere, because
otherwise there would be some $T$ such that $\gamma(T) \in I^+(\Tilde{S}) \subseteq
I^+(S)$, contradicting the fact that $\gamma$ is an $S$-ray. As this cannot happen 
due to Proposition \ref{Proposition: lightraysfromsubmfd} resp.\ 
Proposition \ref{Proposition: caseoftrappedsurfaces}, $E^+(S)$ is relatively compact.

\textit{$E^+(S)$ is closed}: Let $q_k \in E^+(S)$, $q_k \to q$. Let $\gamma_k:[0,t_k]
\to M$ be null pre-geodesics (in $h$-unit speed parametrisation) from some $p_k \in S$ to
$q_k$. By compactness of $S$, we may assume that $p_k \to p \in S$. If $q = p$, then we are done. Otherwise, there are two
possibilities: If the $t_k$ are bounded, then by going over to subsequences, $t_k \to t \in (0,\infty)$. Since $q \neq p$, we are in a position to invoke Theorem \ref{Theorem: limitcurvetheoremtwoconvergingendpoints} to get a maximising
causal limit geodesic $\gamma:[0,t] \to M$ connecting $p$ to $q$. $\gamma$ cannot be timelike, as this would imply $q \in I^+(S)$ and hence $q_k \in I^+(S)$ for large $k$. 
Hence $q\in J^+(S)\setminus I^+(S) = E^+(S)$.
The other possibility is that the $t_k$ are unbounded, i.e.\ w.l.o.g.\ $t_k \to
\infty$. Since $(M,g)$ is strongly causal, this means that the $\gamma_k$ leave every
compact set, i.e., $\gamma_k(t_k) \to \infty$ (after possibly relabelling $t_k$).
However, as $E^+(S)$ is contained in some compact set, this is not possible.
\end{proof}

\begin{Corollary}
Let the assumptions be as in Proposition \ref{Proposition: submfdnullcompletenessortrappedset}. Then $E^+(S) \cap S$ is achronal, and $E^+(E^+(S) \cap S)$ is compact or $M$ is null geodesically incomplete.
\end{Corollary}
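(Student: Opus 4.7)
The first statement, that $T := E^+(S) \cap S$ is achronal, will follow from the standard fact that $E^+(S)$ itself is an achronal subset of $M$. Indeed, suppose $p \ll q$ with $p, q \in E^+(S)$; then $p \in J^+(S)$ yields some $s \in S$ with $s \leq p$, and the push-up property gives $s \ll q$, so $q \in I^+(S)$, contradicting $q \in E^+(S)$. Hence $T \subseteq E^+(S)$ inherits achronality.

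For the second statement, I assume $(M,g)$ is null geodesically complete, so that $E^+(S)$ is compact by Proposition \ref{Proposition: submfdnullcompletenessortrappedset}. The plan is to establish the stronger equality $E^+(T) = E^+(S)$, from which compactness of $E^+(T)$ is immediate. In view of the trivial inclusions $J^+(T) \subseteq J^+(S)$ and $I^+(T) \subseteq I^+(S)$, this reduces to proving $S \subseteq J^+(T)$: once $S \subseteq J^+(T)$ is known, one has $J^+(S) = J^+(J^+(T)) = J^+(T)$ and, by push-up, $I^+(S) \subseteq I^+(T)$, so that $E^+(S) = E^+(T)$.

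To prove $S \subseteq J^+(T)$, I would fix $s \in S$ and endow the compact set $C_s := S \cap J^-(s)$ with the chronological partial order $\ll$. The key claim is that every $\ll$-chain in $C_s$ admits a lower bound: given a decreasing sequence $(s_n)_n \subseteq C_s$, compactness of $S$ yields a subsequential limit $s_* \in S$, and the limit curve theorem (Theorem \ref{Theorem: limitcurvetheoremtwoconvergingendpoints}) applied to the causal connecting curves from $s_{n_k}$ to $s_n$ (for $n_k > n$) and from $s_{n_k}$ to $s$ furnishes causal curves from $s_*$ to each $s_n$ and from $s_*$ to $s$, so $s_* \in C_s$ is indeed a lower bound. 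Zorn's Lemma then supplies a $\ll$-minimal element $t \in C_s$, which must lie in $T$: were $t \in S \cap I^+(S)$, there would exist $\sigma \in S$ with $\sigma \ll t \leq s$, whence $\sigma \in C_s$ with $\sigma \ll t$, contradicting minimality. Consequently $s \in J^+(t) \subseteq J^+(T)$.

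The main technical point lies in the limit-curve step, where one must verify that in the $C^1$ strongly causal setting chronologically decreasing sequences in compact sets genuinely admit causal past-bounds, via the limit curve machinery collected in Appendix \ref{app:C11causality}. Once this is granted, the remainder of the argument is essentially formal order-theoretic manipulation; no further appeals to curvature or to the initial conditions on $S$ are needed, since the compactness of $E^+(T) = E^+(S)$ is imported directly from Proposition \ref{Proposition: submfdnullcompletenessortrappedset}.
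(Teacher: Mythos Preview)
Your reduction is sound: showing $S \subseteq J^+(T)$ (hence $J^+(S)=J^+(T)$ and $I^+(S)=I^+(T)$ via push-up, so $E^+(T)=E^+(S)$) is exactly the right target, and the achronality of $T$ via $E^+(S)$ is correct. The gap lies in the Zorn/limit-curve step. First, $\ll$ is not reflexive, so the order-theoretic setup needs care. More seriously, to apply Theorem \ref{Theorem: limitcurvetheoremtwoconvergingendpoints} to curves from $s_{n_k}$ to $s_n$ you need the $h$-arclengths $b_k$ to converge in $(0,\infty)$; under mere strong causality there is no a priori bound on the $h$-length of causal curves between two nearby compact sets, so the case $b_k\to\infty$ cannot be excluded and the limit curve may be past-inextendible rather than emanate from $s_*$. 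Your proposal flags this as ``the main technical point'' but does not resolve it.

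The paper's route (via \cite[Prop.\ 4]{GS}, \cite[Prop.\ 4.3]{Seno1}) sidesteps all of this by using one hypothesis you never invoke: the \emph{local achronality} of $S$ built into Definition \ref{definition: trappedc0submfd}. Combined with strong causality and compactness of $S$, it forces every $\ll$-decreasing chain in $S$ to be \emph{finite}: an infinite chain would accumulate at some $s_*\in S$, and by strong causality (Lemma \ref{Lemma: equivalentdefofstrongcausality}) one can choose a neighbourhood $V\subseteq U$ of $s_*$ with $S\cap U$ achronal in $U$ and every causal curve with endpoints in $V$ contained in $U$; two chain members in $V$ would then be $U$-chronologically related, a contradiction. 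Thus for any $s\in S$ the descent $s\gg s_1\gg\ldots$ terminates after finitely many steps at some $t\in S\setminus I^+(S)=T$ with $t\le s$, giving $S\subseteq J^+(T)$ without any appeal to limit curves or Zorn.
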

\begin{proof}
    This is shown as in the smooth case, see \cite[Prop.\ 4]{GS} or
    \cite[Prop.\ 4.3]{Seno1} and using that for any $p\in S$ there is some neighbourhood $U$ such that $S\cap U$ is achronal in $U$.
\end{proof}

We now start with our discussion of ``trapped points''. In \cite[Sec.\ 6.3]{GGKS}, a faithful way to generalise the classical condition, which uses Jacobi tensor classes---a tool no longer available already for $C^{1,1}$-metrics---is presented: It uses the mean curvature of spacelike $2$-surfaces given as the level sets of the exponential map that generate the light cone. While the latter tool presently is not at our disposal, the very formulation of the corresponding condition again uses support manifolds and can be adopted verbatim. However, it is now more delicate to derive that the horismos of a trapped point is a trapped set.

\begin{definition}(Trapped points)
\label{definition: trappedset}

\noindent A point $p \in M$ is \textit{future trapped} if for any future pointing null vector $v \in T_pM$ and for any null geodesic $\gamma$ with $\gamma(0) = p$, $\dot{\gamma}(0)=v$, there exists a parameter $t$ and a spacelike $C^2$-submanifold of codimension $m=2$ with $\Tilde{S}\subseteq J^+(p)$, $\gamma(t) \in \Tilde{S}$ and $\mathbf{k}_{\Tilde{S}}(\dot{\gamma}(t)) > 0$.
\end{definition}

\begin{Proposition}
\label{Proposition: trappedpoint}
Let $(M,g)$ be a strongly causal, MNNB $C^1$-spacetime
satisfying the distributional null energy condition. If $p \in M$ is a future
trapped point and $(M,g)$ is null geodesically complete then $E^+(p)$ is compact.
\end{Proposition}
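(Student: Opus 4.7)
The plan is to mirror the two-step strategy used in Proposition \ref{Proposition: submfdnullcompletenessortrappedset}, replacing the global trapped submanifold $S$ by the codimension-$2$ spacelike $C^2$-support submanifolds supplied by Definition \ref{definition: trappedset} along each null geodesic from $p$. Since those support submanifolds have codimension $2$, the focusing input will be Proposition \ref{Proposition: caseoftrappedsurfaces}, which relies on the distributional null energy condition rather than on a distributional genericity-type hypothesis.

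For relative compactness of $E^+(p)$, suppose to the contrary that there are $q_k\in E^+(p)$ with $q_k\to\infty$. By Lemma \ref{Lemma: nullgeodinlightcone} there exist $g$-maximising future-directed null pre-geodesics $\gamma_k$ from $p$ to $q_k$, which I parametrise by $h$-arclength. Since $q_k\to\infty$, the $\gamma_k$ leave every compact set, and the limit curve theorem (Theorem \ref{Theorem: limitcurvetheorem}) together with Lemma \ref{Lemma: limitsofmaximizers} yields a future-inextendible maximising null limit curve $\gamma:[0,\infty)\to M$ starting at $p$; by Lemma \ref{Lemma: maxunbrokengeodesic} and the null geodesic completeness assumption, a reparametrisation of $\gamma$ is a complete null $p$-ray. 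Applying Definition \ref{definition: trappedset} to this ray produces a parameter $t_0>0$ and a codimension-$2$ spacelike $C^2$-submanifold $\tilde S\subseteq J^+(p)$ with $\gamma(t_0)\in\tilde S$ and $\mathbf{k}_{\tilde S}(\dot\gamma(t_0))>c>0$ for some $c$. Because $\tilde S\subseteq J^+(p)$ we have $I^+(\tilde S)\subseteq I^+(p)$, and as $\gamma$ never enters $I^+(p)$ it never enters $I^+(\tilde S)$; hence $t\mapsto\gamma(t_0+t)$ maximises from $\tilde S$ on all of $[0,\infty)$. Fix $b>1/c$: Proposition \ref{Proposition: caseoftrappedsurfaces} applied to this reparametrised null geodesic and to $\tilde S$ contradicts maximisation on $[0,b]$, since it would force $\gamma(t_0+b)\in I^+(\tilde S)\subseteq I^+(p)$.

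For closedness I would argue exactly as in the second half of the proof of Proposition \ref{Proposition: submfdnullcompletenessortrappedset}. Let $q_k\in E^+(p)$ with $q_k\to q$; if $q=p$ there is nothing to prove, so assume $q\neq p$, and let $\gamma_k:[0,t_k]\to M$ be $h$-arclength parametrised maximising null pre-geodesics from $p$ to $q_k$. If $(t_k)$ is bounded, Theorem \ref{Theorem: limitcurvetheoremtwoconvergingendpoints} delivers a maximising causal limit geodesic from $p$ to $q$, which cannot be timelike (else $q_k\in I^+(p)$ for large $k$, contradicting $q_k\in E^+(p)$), so $q\in E^+(p)$. If instead $t_k\to\infty$, strong causality forces the $\gamma_k$ to leave every compact set, which contradicts the relative compactness of $E^+(p)$ just established (since each $\gamma_k\subseteq E^+(p)$).

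The main subtlety I expect is the interpretation of Definition \ref{definition: trappedset}: in order for Proposition \ref{Proposition: caseoftrappedsurfaces} to apply at $\gamma(t_0)$, one needs $\dot\gamma(t_0)$ to be a null \emph{normal} to $\tilde S$, which is the intended reading of the definition (paralleling the smooth situation where $\tilde S$ is a slice of the light cone from $p$ and $\dot\gamma$ is automatically normal to it). Given this, no further non-branching or genericity input is required for the point case, since the focusing is driven purely by $\mathbf{k}_{\tilde S}(\dot\gamma(t_0))>0$ together with the null energy condition, and the non-branching assumption MNNB has already been used above through Proposition \ref{prop:approxmcnb}(ii) inside the proofs of Proposition \ref{Proposition: lightraysfromsubmfd} and Proposition \ref{Proposition: caseoftrappedsurfaces}.
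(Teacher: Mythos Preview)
Your proposal is correct and follows essentially the same route as the paper: extract a null $p$-ray via the limit curve theorem, invoke the trapped-point definition to obtain a codimension-$2$ support surface $\tilde S$ with positive convergence along the ray, observe that the ray is an $\tilde S$-ray because $\tilde S\subseteq J^+(p)$, and derive a contradiction via Proposition \ref{Proposition: caseoftrappedsurfaces}; closedness is handled verbatim as in Proposition \ref{Proposition: submfdnullcompletenessortrappedset}.

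Two small remarks. First, the reference to Lemma \ref{Lemma: limitsofmaximizers} is misplaced: that lemma concerns limits of $g_k$-maximisers for approximating metrics $g_k$, whereas here all curves are $g$-curves and Theorem \ref{Theorem: limitcurvetheorem} already delivers a maximising limit. Second, your worry about $\dot\gamma(t_0)$ being a null \emph{normal} to $\tilde S$ is legitimate but resolved by Remark \ref{remark: notnullnormal}: if $\dot\gamma(t_0)$ were null but not normal to $\tilde S$, the geodesic would enter $I^+(\tilde S)\subseteq I^+(p)$ immediately, again contradicting that $\gamma$ is a $p$-ray. The paper's phrase ``by the previous results in this section'' tacitly covers both alternatives.
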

\begin{proof}
Since trapped points are defined by means of codimension-$2$ submanifolds $\Tilde{S}$, we will use  Proposition \ref{Proposition: caseoftrappedsurfaces} for which (only) the distributional null energy condition is needed. The proof is analogous to that of Proposition \ref{Proposition: submfdnullcompletenessortrappedset}.

\textit{$E^+(p)$ is relatively compact:} Suppose $q_j \in E^+(p)$, $q_j \to
\infty$. Let $\gamma_j:[0,t_j] \to M$ be maximising null geodesics connecting $p$
to $q_j$, cf.\ Lemma \ref{Lemma: nullgeodinlightcone}. Then, up to a subsequence, the $\gamma_j$ converge to a $g$-null ray
$\gamma:[0,\infty) \to M$ from $p$ (the domain is $[0,\infty)$ even in $g$-affine
parametrisation since $(M,g)$ is null geodesically complete). By assumption,
there is a $C^2$-spacelike submanifold $\Tilde{S}$ of codimension $2$ and a
parameter $t$ such that $\mathbf{k}_{\Tilde{S}}(\dot{\gamma}(t)) > c > 0$ and we let $b >
1/c$. $\gamma$ is indeed an $\tilde{S}$-ray: If not, it would enter $I^+(\tilde{S}) \subseteq I^+(p)$, 
a contradiction. But by the previous results in this section, $\gamma$ cannot maximise from $\tilde{S}$
past $b$, a contradiction since it is a ray.

\textit{$E^+(p)$ is closed:}
This is shown in exactly the same way as closedness was shown in the proof of Proposition \ref{Proposition: submfdnullcompletenessortrappedset}.
\end{proof}

\section{The main result}\label{sec:mainresult}

 Given the results established so far, the general mechanics of the proof of the Hawking--Penrose theorem remains the same as in the smooth case. There do, however, remain notable differences in the details in this lower regularity
(e.g.\ the use of \cite[Prop.\ 2.13]{G20} in the proof of Theorem \ref{thm: HPcausalityversion}), so we include the full argument.

\begin{Lemma}
\label{Lemma: EminusF}
Let $(M,g)$ be a strongly causal $C^1$-spacetime such that no inextendible null geodesic in $M$ is 
maximising. Let $A \subseteq M$ be achronal such that $E^+(A)$ is compact, and let $\gamma$ be a future
inextendible timelike curve contained in $D^+(E^+(A))^{\circ}$. Then $F:=E^+(A) \cap 
\overline{J^-(\gamma)}$ is achronal and $E^-(F)$ is compact.
\end{Lemma}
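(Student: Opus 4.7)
The plan is to establish three things in turn: (i) $F$ is achronal and compact; (ii) $E^-(F)$ is closed; (iii) $E^-(F)$ is bounded. For (i), I would first observe that $E^+(A)$ itself is achronal: any $p\ll q$ with $p,q\in E^+(A)$ would give $q\in I^+(p)\subseteq I^+(J^+(A))=I^+(A)$, contradicting $q\in E^+(A)$. Hence $F\subseteq E^+(A)$ is achronal, and as a closed subset of the compact set $E^+(A)$ it is itself compact.

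For (ii), I take a sequence $q_n\in E^-(F)$ with $q_n\to q$. By Lemma~\ref{Lemma: nullgeodinlightcone} there are maximising null geodesics $\mu_n$ from $q_n$ to points $p_n\in F$, and by compactness of $F$ I may pass to a subsequence with $p_n\to p\in F$. If $q=p$, achronality of $F$ gives $q\in F\subseteq E^-(F)$; otherwise Theorem~\ref{Theorem: limitcurvetheoremtwoconvergingendpoints}, together with Lemma~\ref{Lemma: limitsofmaximizers}, yields a maximising null limit geodesic from $q$ to $p$, so $q\in E^-(p)\subseteq E^-(F)$.

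For (iii) I argue by contradiction: suppose $q_n\in E^-(F)$ exits every compact set. Running the construction of (ii) in $h$-unit parametrisation gives null maximisers $\mu_n$ whose parameter length is unbounded, so Theorem~\ref{Theorem: limitcurvetheorem} produces a past-inextendible null maximiser $\mu:(-\infty,0]\to M$ with $\mu(0)=p\in F$. To contradict the no-null-line hypothesis I must extend $\mu$ to a complete null line. Since $p\in\overline{J^-(\gamma)}$, pick $y_k\to p$ with $y_k\leq\gamma(s_k)$; future-inextendibility of $\gamma$ combined with strong causality force $\gamma(s_k)$ to exit every compact set along a subsequence $s_k\to\infty$. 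A second application of the limit curve theorem to causal curves from $y_k$ to $\gamma(s_k)$ yields a future-inextendible causal curve $\sigma:[0,\infty)\to M$ starting at $p$. A push-up argument rules out any timelike segment of $\sigma$: a timelike point $\sigma(t_0)$ would, via concatenation with $\mu$, provide a timelike shortcut for some $\mu_n$ with $n$ large, contradicting its maximality. Hence $\sigma$ is everywhere null and, fitted together with $\mu$, produces a complete inextendible null maximising geodesic, contradicting the hypothesis.

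The main obstacle is the last step, namely verifying that the future limit curve $\sigma$ is a null geodesic which continues $\mu$ to a genuine null line. The push-up step is standard in smooth causality but requires care in the present $C^1$ regularity, where geodesics are not locally unique and the curvature is only distributional; one has to coordinate the past limit (built from the $\mu_n$) with the future limit (built from the $\sigma_k$) so that they match at $p$ in such a way that their concatenation is a single inextendible null geodesic, rather than just a broken causal curve.
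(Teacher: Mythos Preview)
Your outline for (i) is fine and matches the paper. Two minor issues elsewhere: in (ii) you cite Lemma~\ref{Lemma: limitsofmaximizers}, which concerns limits of $\check g_\eps$- or $\hat g_\eps$-maximisers, not $g$-maximisers, so it does not apply; the maximising property of the limit should instead come directly from Theorem~\ref{Theorem: limitcurvetheoremtwoconvergingendpoints}. Also, in (ii) you do not address the possibility that the $h$-arclengths of the $\mu_n$ are unbounded even though their endpoints converge; in the paper this case is disposed of by first proving relative compactness and then invoking non-imprisonment.

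The genuine gap is in (iii), precisely where you flag it. Your plan is to glue the past-inextendible null limit $\mu$ to a future-inextendible causal limit $\sigma$ and obtain an inextendible maximising null geodesic. But even if your push-up argument showed that $\sigma$ is null, nothing in your construction forces the concatenation $\mu\cup\sigma$ to be \emph{maximising}: $\mu$ and $\sigma$ are obtained from completely unrelated sequences of curves, and there is no achronal set that contains both. Your sketch ``a timelike point $\sigma(t_0)$ would provide a timelike shortcut for some $\mu_n$'' does not work as stated, since $\sigma(t_0)\notin F$ and $q_n\in E^-(F)$ only forbids $q_n\ll f$ for $f\in F$.

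The paper sidesteps this difficulty entirely by first observing (via \cite[Lem.\ 9.3.4]{Krie}, valid in $C^1$) that $E^-(F)\subseteq F\cup\partial J^-(\gamma)$, so it suffices to show $E^-(F)\cap\partial J^-(\gamma)$ is compact. All the action then takes place inside the \emph{achronal} hypersurface $\partial J^-(\gamma)$: the past-inextendible null limit $c$ lies in $\partial J^-(\gamma)$, and the future extension is obtained for free from Corollary~\ref{Corollary: curvesinlightcone} as a future-inextendible null generator $\lambda$ of $\partial J^-(\gamma)$ through $c(0)$. Since $c\cup\lambda\subseteq\partial J^-(\gamma)$ is achronal, it is automatically maximising if it is an unbroken null geodesic, and if broken it would fail to be maximising, contradicting achronality of $\partial J^-(\gamma)$. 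This is the missing idea in your approach: reduce to $\partial J^-(\gamma)$ first, and let its achronality do the work of both producing the future extension and certifying maximality of the concatenation.
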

\begin{proof}
Due to Lemma \ref{Lemma:achronal_closed} we may assume that $A=\overline{A}$. Since $E^+(A)$ is always achronal and $F\subseteq E^+(A)$, it is also
achronal. By compactness of $E^+(A)$, $F$ is compact. Note that $E^-(F) \subseteq F \cup \partial
J^-(\gamma)$ by similar arguments as in \cite[Lem.\ 9.3.4]{Krie}, so to show compactness of $E^-(F)$,
it suffices to show compactness of $E^-(F) \cap \partial J^-(\gamma)$. 

Suppose now that $v \in TM|_F$ is past pointing causal and let $c_v:[0,a) \to M$ be any past
inextendible, past directed geodesic with $\dot c_v(0)=v$ (recall that in $C^1$-spacetimes we no longer
have unique solvability of the geodesic initial value problem). Then $c_v \subseteq \overline{J^-(\gamma)}$, and $\overline{J^-(\gamma)}=I^-(\gamma) \cup \partial J^-(\gamma)$. We show that $c_v$ meets
$I^-(\gamma)$: Suppose, to the contrary, that $c_v$ is a null geodesic entirely contained in $\partial J^-(\gamma)$. Since
$\gamma$ is a future inextendible timelike curve, we have $J^-(\gamma) = I^-(\gamma)$, hence $c_v$ lies
entirely in $\partial J^-(\gamma)\setminus J^-(\gamma)$.
By Corollary \ref{Corollary: curvesinlightcone}, there is a future directed, future inextendible null
geodesic $\lambda$ starting at $c_v(0)$ entirely in $\partial J^-(\gamma)$ ($\gamma$ is closed in $M$
by Corollary \ref{Corollary: imagesofcausalcurvesclosed} and Lemma \ref{Lemma: nontotalnonpartialimprisonment}). This means that either $c_v \lambda$ is 
an inextendible broken null geodesic, hence not maximising by Lemma \ref{Lemma: maxunbrokengeodesic}, or an
inextendible unbroken maximising null geodesic. By assumption such a geodesic cannot exist, hence
in either case $c_v \lambda$ cannot lie entirely in $\partial J^-(\gamma)$ and so $c_v$ must enter $I^-(\gamma)$.

Thus we have shown that for any $v$ and any geodesic $c_v$ as above there is $t_v = t_v(v,c_v)$ 
(for which $c_v$ is still defined) such that $c_v(t_v) \in I^-(\gamma)$. It remains to show that $E^-(F)\cap \pt J^-(\gamma)$ is compact.

\textit{$E^-(F) \cap \partial J^-(\gamma)$ is relatively compact:} Suppose there are
$q_k \in E^-(F) \cap \partial J^-(\gamma)$ with $q_k \to \infty$. Then there are past-directed
maximising null pre-geodesics $c_k:[0,t_k] \to M$ in $h$-unit speed parametrisation
with $c_k(0) = p_k \in F$ and $c_k(t_k) = q_k$. Since $q_k \to \infty$, also $t_k \to \infty$ 
By compactness, we may assume $p_k \to p \in F$. Since $q_k\to\infty$,
there is a neighbourhood $U$ which almost all $c_k$ leave and we are in a
position to apply the Theorem \ref{Theorem: limitcurvetheorem}. We get
a past-directed inextendible null limit curve $c:[0,\infty) \to M$ from $p$ and a subsequence of
the $c_k$ (w.l.o.g.\ $c_k$ itself) which converges to $c$ locally uniformly. It is
easy to see that all $c_k$ have to lie entirely in $\partial J^-(\gamma)$, hence so does the limit $c$. In particular, $c$ is everywhere
maximising and may be reparametrised as an inextendible geodesic $c:[0,a) \to
M$ entirely contained in $\partial J^-(\gamma)$. This is a contradiction since we
showed before that each past directed null geodesic with initial velocity in $TM|_F$
has to enter $I^-(\gamma)$.

\textit{$E^-(F) \cap \partial J^-(\gamma)$ is closed}: Let $q_j \in E^-(F) \cap \partial J^-(\gamma)$, $q_j
\to q$, where we may assume $q\not\in F$, otherwise there is nothing to prove. Connect $F \ni p_j \to q_j$ via maximising null geodesics. Up to a subsequence, their limit has to be a maximising null geodesic from $p$ to $q$, hence $q \in E^-(F) \cap \partial J^-(\gamma)$, because
otherwise one would again obtain an inextendible maximising null geodesic entirely in $\partial J^-(\gamma)$, which would lead to the
same contradiction as before.
\end{proof}

Now we are ready to give the $C^1$-version of the causal Hawking-Penrose theorem. It is the direct analog of \cite[p.\ 538, Thm.]{HP} and \cite[Thm.\ 7.4]{GGKS}, respectively.

\begin{Theorem}\label{thm: HPcausalityversion}
Let $(M,g)$ be a $C^1$-spacetime. Then the following conditions cannot all hold:
\begin{enumerate}
    \item $(M,g)$ is causal.
    \item No inextendible timelike geodesic in an open globally hyperbolic subset is everywhere maximising.
    \item No inextendible null geodesic is everywhere maximising.
    \item There is an achronal set $A \subseteq M$ such that $E^+(A)$ or $E^-(A)$ is compact.
\end{enumerate}
\end{Theorem}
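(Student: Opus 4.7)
The plan is to assume all four conditions hold and derive a contradiction by producing, in an open globally hyperbolic subset of $M$, an inextendible timelike geodesic that is everywhere maximising, thereby violating~(2). Time-reversal reduces the case of compact $E^-(A)$ to compact $E^+(A)$, which I therefore assume.

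The first step is to apply Lemma \ref{Lemma: EminusF}, which demands a future-inextendible timelike curve $\gamma$ in $D^+(E^+(A))^{\circ}$. Producing $\gamma$ is exactly the content of \cite[Prop.\ 2.13]{G20}: assuming~(3), null generators of the future Cauchy horizon $H^+(E^+(A))$ cannot be extended into inextendible maximising null geodesics, and this horizon-generator argument forces $D^+(E^+(A))^{\circ}$ to be non-empty and to accommodate such a $\gamma$. Every ingredient of that argument---achronal-boundary structure, push-up, the limit curve theorem, and the identification of causal maximisers with unbroken $C^2$-geodesics via Lemma \ref{Lemma: maxunbrokengeodesic}---is available in $C^1$ via Appendix \ref{app:C11causality}. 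Setting $F := E^+(A)\cap\overline{J^-(\gamma)}$, Lemma \ref{Lemma: EminusF} delivers $F$ achronal with $E^-(F)$ compact. Applying the same reasoning in reverse time to $F$, I obtain a past-inextendible timelike curve $\lambda\subseteq D^-(E^-(F))^{\circ}$ and a compact achronal $G := E^-(F)\cap\overline{J^+(\lambda)}$ with $E^+(G)$ compact.

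With $F$ and $G$ in hand, the third and final step extracts the timelike line. The open set $U := \mathrm{int}(D(F))\cap \mathrm{int}(D(G))$ is globally hyperbolic (by the standard causal characterisation, which carries over to $C^1$), and both $\gamma$ and $\lambda$ pass through it. Choosing $p_k\in\gamma$ escaping to the future and $q_k\in\lambda$ escaping to the past, for large $k$ one has $q_k\ll p_k$ with $J(q_k,p_k)\subseteq U$, so \cite[Prop.\ 6.5]{S14} (valid for $C^1$-spacetimes) provides $g$-maximising timelike geodesic segments $\sigma_k$ from $q_k$ to $p_k$. After $h$-arc length parametrisation, the limit curve theorem from Appendix \ref{app:C11causality} combined with Corollary \ref{Corollary: Hartmangeodesicversion} yields a subsequential $C^2_{\mathrm{loc}}$-limit $\sigma$: an inextendible $g$-geodesic that is everywhere maximising (Lemma \ref{Lemma: limitsofmaximizers}) and passes through both compact sets $F$ and $G$, hence lies in $U$. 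The separation $G\subseteq I^-(F)$ keeps $\sigma$ timelike, so $\sigma$ is an inextendible everywhere-maximising timelike geodesic in an open globally hyperbolic subset, contradicting~(2).

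The main obstacle I anticipate lies in the first step: faithfully transplanting the horizon-generator argument of \cite[Prop.\ 2.13]{G20} and the auxiliary fact that $\mathrm{int}(D(F))$ is globally hyperbolic to the $C^1$-setting; these rest on fairly subtle causal constructions that must be verified against the compendium of $C^1$-causality in Appendix \ref{app:C11causality}. By comparison, the limit-curve argument in the final step should be routine, since once the segments $\sigma_k$ are geodesics, convergence to a geodesic limit is handled by Corollary \ref{Corollary: Hartmangeodesicversion}, and the fact that maximality is preserved under limits is Lemma \ref{Lemma: limitsofmaximizers}.
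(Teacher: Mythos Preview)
Your overall architecture matches the paper's (produce $\gamma$, define $F$, produce $\lambda$, connect points on $\lambda$ and $\gamma$ by maximisers, pass to a limit), but several of the load-bearing identifications are off.

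First, \cite[Prop.\ 2.13]{G20} is not the horizon-generator argument that yields the future-inextendible timelike curve $\gamma\subseteq D^+(E^+(A))^\circ$; it is the Avez--Seifert-type statement that in a globally hyperbolic $C^1$-spacetime any two points with $p\ll q$ are joined by a maximising timelike geodesic. The paper uses it precisely where you invoke \cite[Prop.\ 6.5]{S14}, namely to manufacture the segments $\sigma_k$. The existence of $\gamma$ (and of $\lambda$) comes instead from Lemma~\ref{lemma:deplus}, which in turn needs strong causality---and you have omitted the step (Lemma~\ref{Lemma: sufficientconditionstrongcausality}) that (i)$+$(iii) imply strong causality; without it Lemma~\ref{Lemma: EminusF} is not even applicable.

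Second, your introduction of $G:=E^-(F)\cap\overline{J^+(\lambda)}$ and the set $U=\mathrm{int}(D(F))\cap\mathrm{int}(D(G))$ is a genuine deviation from the paper and, as stated, does not work. You assert that $\gamma$ and $\lambda$ lie in $U$, that $J(q_k,p_k)\subseteq U$, and that the limit $\sigma$ lies in $U$ ``because it passes through $F$ and $G$''; none of these follow. The paper avoids this by working in the single globally hyperbolic set $D(E^-(F))^\circ$ and spending a nontrivial step (via Proposition~\ref{Proposition: closurecauchydevelopment} and Lemma~\ref{Lemma: futureofhorizon}) to show $\gamma\subseteq D^+(E^-(F))^\circ$; the limit line is then only known to lie in $\overline{D(E^-(F))}$, and a further argument using the same two results is needed to reach the contradiction with (ii). Your claim $G\subseteq I^-(F)$ is also false: $G\subseteq E^-(F)=J^-(F)\setminus I^-(F)$, so $G\cap I^-(F)=\emptyset$. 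The correct reason the limit is timelike is simply that an inextendible maximising null geodesic is excluded by~(iii). Finally, Lemma~\ref{Lemma: limitsofmaximizers} concerns limits of $g_{\varepsilon_k}$-maximisers, not $g$-maximisers; here maximality of the limit comes directly from the limit curve theorem.
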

\begin{proof}
Suppose all of the above conditions hold. Then by Lemma \ref{Lemma: sufficientconditionstrongcausality}, (i) and (iii) imply that $(M,g)$ is strongly causal. W.l.o.g.\ we assume that $E^+(A)$ is compact.
By Lemma \ref{lemma:deplus} there exists a future inextendible (in M) timelike curve $\gamma \subseteq D^+(E^+(A))^\circ$. Set $F:= E^+(A)\cap \overline{J^-(\gamma)}$, which by Lemma \ref{Lemma: EminusF} is achronal and $E^-(F)$ is compact. Once more by Lemma \ref{lemma:deplus} there is some past-inextendible timelike curve $\lambda \subseteq D^-(E^-(F))^\circ$. 
Exactly as in the smooth case, see e.g.\ the second paragraph of\ \cite[Proof of Thm.\ 7.4]{GGKS}, one can show $\gamma \subseteq D^+(E^-(F))^\circ$, by invoking Proposition \ref{Proposition: closurecauchydevelopment} and Lemma \ref{Lemma: futureofhorizon}. So $\gamma, \lambda \subseteq D(E^-(F))^\circ$ which by Proposition \ref{Proposition: globhypcauchydevelopment} is globally hyperbolic. 
Now we pick sequences $p_k=\lambda(s_k)$ and $q_k=\gamma(t_k)$ with $t_k, s_k \nearrow \infty$, leaving every compact set, which is possible because $\gamma$ and $\lambda$ are inextendible curves in $M$ contained in a strongly causal set (Lemma \ref{Lemma: nontotalnonpartialimprisonment}). Also because $\lambda \subseteq J^-(E^-(F)) \subseteq J^-(F)\subseteq J^-(\overline{J^-(\gamma}))$ and because $\gamma$ is timelike, we can choose $p_k \in I^-(q_k)$.

By \cite[Prop.\ 2.13]{G20} there exist maximising timelike geodesics $\tilde\gamma_k$ in $D(E^-(F))^\circ$ from $p_k$ to $q_k$, which must intersect $E^-(F)$, say at $r_k$. By compactness of $E^-(F)$ we can assume $r_k \to r$ and by Corollary \ref{Corollary: Hartmangeodesicversion} there exists an inextendible, causal limit geodesic $\tilde\gamma$, which is also maximising (as a limit of maximisers).

By (iii), $\tilde\gamma$ cannot be null, so it must be timelike. As a limit of curves in $D(E^-(F))^\circ$ it is contained in $\overline{ D(E^-(F))}$. Again using Proposition \ref{Proposition: closurecauchydevelopment} and Lemma \ref{Lemma: futureofhorizon} one can now proceed as in the last step in \cite[Proof of Thm. 7.4]{GGKS}
to arrive at the desired contradiction (to (ii)).
\end{proof}

Finally, we formulate and prove the main result of this work, the analytical Hawking-Penrose theorem for maximally causally non-branching $C^1$-spacetimes.

\begin{Theorem}(The Hawking-Penrose singularity theorem for $C^1$-metrics)
\label{Theorem: HPC1}

\noindent Let $(M,g)$ be a $C^1$-spacetime of dimension $n$ with the following properties.
\begin{enumerate}
    \item $(M,g)$ is causal.
    \item $(M,g)$ satisfies the distributional timelike and null energy conditions.
    \item $(M,g)$ satisfies the distributional genericity condition along any inextendible causal geodesic.
    \item $(M,g)$ is maximally causally non-branching.
    \item $(M,g)$ contains one of the following.
    \begin{enumerate}
        \item a compact, achronal, edgeless set.
        \item a future trapped point.
        \item a future trapped $C^0$-submanifold of codimension $2$.
        \item a future trapped $C^0$-submanifold of codimension $2<m<n$ such that its support submanifolds satisfy the condition in 
        Proposition \ref{Proposition: lightraysfromsubmfd}.
    \end{enumerate}
\end{enumerate}
Then $(M,g)$ is causally geodesically incomplete.
\end{Theorem}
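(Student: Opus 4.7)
The plan is to argue by contradiction: assume $(M,g)$ is causally geodesically complete, and derive a contradiction by verifying all four mutually incompatible hypotheses of the causality-theoretic Theorem \ref{thm: HPcausalityversion}. Assumption (1) directly supplies hypothesis (i) of that theorem. Since MCNB implies both MTNB and MNNB, and since the distributional genericity and energy conditions are local and thus restrict to any open subset, Theorem \ref{Theorem: notimelikelines} applies to any open globally hyperbolic subset $N \subseteq M$ viewed as its own spacetime. Indeed, an inextendible timelike geodesic in $N$ that is everywhere maximising is, by causal geodesic completeness of $M$, complete as a geodesic into $M$; since $d_N \leq d_M$ while any curve in $N$ has length bounded above by $d_N$, maximisation in $M$ forces maximisation in $N$, so such a geodesic would be a complete timelike line in $N$, contradicting Theorem \ref{Theorem: notimelikelines}. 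This yields (ii). Condition (iii) follows directly from Theorem \ref{Theorem: nonulllines} applied to $M$ itself, whose hypotheses consist of (1), MNNB (from (4)), and the null parts of (2) and (3).

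For condition (iv) I treat the four alternatives of (5) separately, noting first that by Lemma \ref{Lemma: sufficientconditionstrongcausality} the combination of (1) and the just-established (iii) promotes causality to strong causality, which is needed by the propositions of Section \ref{sec:trapped}. In case (a), a standard causality argument for compact achronal edgeless $S$ (using that edgelessness prevents null generators from escaping $S$) yields compactness of $E^+(S)$, so $A := S$ works. In case (b), Proposition \ref{Proposition: trappedpoint}, whose null geodesic completeness hypothesis is supplied by our contradiction assumption, gives compactness of $E^+(p)$, and $A := \{p\}$ is trivially achronal. Cases (c) and (d) are both handled by Proposition \ref{Proposition: submfdnullcompletenessortrappedset}, in the $m = 2$ and $m \neq 2$ branches respectively---invoking the distributional null energy condition in the former, and feeding the assumed support-submanifold condition of (5)(d) into Proposition \ref{Proposition: lightraysfromsubmfd} in the latter. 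Its Corollary then furnishes an achronal $A \subseteq S$ with $E^+(A)$ compact.

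With all of (i)--(iv) verified, Theorem \ref{thm: HPcausalityversion} produces the desired contradiction. I anticipate no substantial unforeseen obstacle beyond careful bookkeeping: the analytic core---focusing from distributional curvature conditions via regularisation---has already been absorbed into the no-lines theorems of Section \ref{sec:max_geods} and into the initial-condition propositions of Section \ref{sec:trapped}, while the MCNB assumption was tailored precisely to make Proposition \ref{prop:approxmcnb} applicable, so that the maximising causal $g$-geodesics appearing in the proof of Theorem \ref{thm: HPcausalityversion} may be approximated by geodesics of the smooth regularisations $\check g_\eps$ used throughout those sections.
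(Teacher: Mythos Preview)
Your overall strategy coincides with the paper's: assume completeness, verify the four hypotheses of Theorem \ref{thm: HPcausalityversion} via Theorems \ref{Theorem: notimelikelines} and \ref{Theorem: nonulllines} for (ii)--(iii), use Lemma \ref{Lemma: sufficientconditionstrongcausality} to upgrade to strong causality, and then invoke Proposition \ref{Proposition: hypersufacecase} for case (a) and the results of Section \ref{sec:trapped} (Propositions \ref{Proposition: submfdnullcompletenessortrappedset} and \ref{Proposition: trappedpoint} together with the Corollary on achronality) for (b)--(d).

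Your explicit justification of hypothesis (ii), however, does not go through as written. A timelike geodesic that is inextendible \emph{in an open subset $N$} need not be inextendible in $M$---it may simply leave $N$---so causal geodesic completeness of $M$ does not force it to be complete. Your $d_N\le d_M$ remark also points in the wrong direction: it would let you deduce $N$-maximisation from $M$-maximisation, not the converse you need. Finally, assumption (iii) of the Theorem only grants genericity along geodesics inextendible in $M$, so it is not automatic that Theorem \ref{Theorem: notimelikelines} applies to $N$ with its full hypothesis on \emph{all} $N$-inextendible geodesics. The paper is equally terse here (it simply cites Theorems \ref{Theorem: notimelikelines} and \ref{Theorem: nonulllines}), but the argument survives because of how (ii) is actually used inside the proof of Theorem \ref{thm: HPcausalityversion}: the only timelike geodesic to which (ii) is ever applied is the limit geodesic $\tilde\gamma$, which is produced by Corollary \ref{Corollary: Hartmangeodesicversion} as an \emph{$M$-inextendible} maximiser lying inside the globally hyperbolic set $D(E^-(F))^\circ$. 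Under the completeness assumption $\tilde\gamma$ is then a complete timelike line in that subset, genericity holds along it by hypothesis, and Theorem \ref{Theorem: notimelikelines} delivers the contradiction.
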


\begin{proof}
Assuming causal geodesic completeness, we would like to deduce a contradiction by using Theorem \ref{thm: HPcausalityversion}.

Note that due to Theorems \ref{Theorem: notimelikelines} and
\ref{Theorem: nonulllines} the assumptions (ii) and (iii) of Theorem \ref{thm: HPcausalityversion}
are satisfied and it remains to establish the existence of a future or past trapped set, i.e.\ 
an achronal set with compact future or past horismos. This will be implied by (v), but we have
to distinguish the different cases:

By Proposition \ref{Proposition: hypersufacecase}, (a) implies the existence of a trapped set.
Note that by Lemma \ref{Lemma: sufficientconditionstrongcausality} (which can be applied by Theorem
\ref{Theorem: nonulllines} ) $(M,g)$ is strongly causal and hence Proposition 
\ref{Proposition: trappedpoint} covers the case (b), whereas Proposition
\ref{Proposition: submfdnullcompletenessortrappedset} covers cases (c) and (d).

\end{proof}

\appendix

\section{Appendix: Causality results in $C^1$-spacetimes}
\label{app:C11causality}

In this section, we collect various results on the causality of $C^1$-Lorentzian metrics that will be needed in the proof of the main result.
Unless stated otherwise, throughout this section $(M,g)$ consists of a connected smooth (Hausdorff and second countable) manifold $M$, and a $C^1$-Lorentzian metric $g$ that is time-orientable (i.e.\ there exists a smooth timelike vector field on $M$). Many of the results in this section hold in even greater generality, namely either for Lorentzian metrics that are merely Lipschitz or even in the general
setting of closed cone structures \cite{Min_closed_cone}. Although we will occasionally note this, in proving the results in this section we will stay close to smooth causality theory (mainly based on the comprehensive reference work \cite{MinguzziLivingReview}) and only make those changes from classical proofs that are 
required due to the absence of certain tools (e.g., convex normal neighbourhoods) in the $C^1$-setting.
\medskip\\
We begin with two elementary results, namely the openness of timelike futures and pasts and the push-up property. Both of these hold for causally plain spacetimes, which include spacetimes with Lipschitz continuous metrics \cite[Cor.\ 1.17]{CG}.
\begin{Lemma}
\label{Lemma: futureopen}
For any $A \subseteq M$, the sets $I^{\pm}(A)$ are open in $M$.
\end{Lemma}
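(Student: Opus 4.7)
The plan is to reduce the claim to showing $I^+(p)$ is open for each individual point $p \in M$: indeed $I^+(A) = \bigcup_{p \in A} I^+(p)$ is then a union of open sets, and $I^-(A)$ is handled by reversing time-orientation. Fix $q \in I^+(p)$ and choose a future-directed timelike curve from $p$ to $q$; invoking the equivalence of causality relations defined via locally Lipschitz or via piecewise smooth curves (recorded in the notational conventions in Section \ref{sec:intro}), I take this curve $\gamma:[0,1]\to M$ to be piecewise smooth with future-directed timelike tangent away from finitely many breakpoints.

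Pick a coordinate chart $(U,\phi)$ around $q$ and choose $t^*<1$ close enough to $1$ so that $\gamma([t^*,1])\subseteq U$ and $\gamma$ is smooth (unbroken) on $[t^*,1]$. Set $r:=\gamma(t^*)$, so that $p\ll r$ via $\gamma|_{[0,t^*]}$; it then suffices to exhibit an open neighborhood of $q$ lying inside $I^+(r)$, as any timelike curve from $r$ to a perturbed endpoint concatenates with $\gamma|_{[0,t^*]}$ to yield a timelike curve from $p$. For $x\in\R^n$ small, I propose to consider the linearly perturbed coordinate curve
\[
\gamma_x(t) := \phi(\gamma(t))+\frac{t-t^*}{1-t^*}\,x,\qquad t\in[t^*,1],
\]
satisfying $\gamma_x(t^*)=\phi(r)$ and $\gamma_x(1)=\phi(q)+x$, with coordinate tangent $\dot\gamma_x(t)=\dot{(\phi\circ\gamma)}(t)+x/(1-t^*)$ converging uniformly in $t$ to the tangent of $\phi\circ\gamma$ as $x\to 0$.

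Since $\dot{(\phi\circ\gamma)}$ is a continuous map from $[t^*,1]$ into the fibrewise open bundle of future-directed $g$-timelike vectors in $TU$, compactness of $[t^*,1]$ together with continuity of $g$ furnishes a $\delta>0$ such that for all $\|x\|<\delta$ the curve $\phi^{-1}\circ\gamma_x$ is everywhere future-directed $g$-timelike. Thus $\phi^{-1}(\phi(q)+B_\delta(0))$ is an open neighborhood of $q$ contained in $I^+(r)\subseteq I^+(p)$, which closes the argument.

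The main obstacle in lowering the regularity to $C^1$ is the absence of convex normal neighborhoods and of a well-defined exponential map, which precludes the classical proof based on pulling back the local timelike future through $\exp_p^{-1}$. The perturbation argument above sidesteps this: it draws only on the openness of the timelike cone bundle and the continuity of $g$, both of which remain automatic here, together with the piecewise smoothness of the connecting curve afforded by the equivalence of the two definitions of $\ll$.
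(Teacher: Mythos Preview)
Your argument is correct. The paper's own proof is simply a citation to \cite[Prop.\ 1.21]{CG}, which establishes the result in the broader setting of causally plain continuous metrics. Your direct coordinate-perturbation argument is the standard elementary route once the exponential map is abandoned: it uses only continuity of $g$, openness of the future timelike cone bundle, compactness of $[t^*,1]$, and the piecewise-smooth characterisation of $\ll$ recorded in Section~\ref{sec:intro}. This is essentially the same idea underlying the Chru\'sciel--Grant proof, though their treatment is phrased to accommodate rougher metrics. Your version has the merit of being self-contained within the paper's framework, and your closing remark correctly pinpoints why the classical $\exp_p^{-1}$ argument is unavailable in $C^1$.

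One small point worth making explicit (you implicitly rely on it): the perturbed curve $\gamma_x$ has basepoints $\gamma_x(t)$ that also move with $x$, so the quantity you must control is $g_{ij}(\gamma_x(t))\dot\gamma_x^i(t)\dot\gamma_x^j(t)$, which depends jointly continuously on $(t,x)$; your compactness argument then goes through verbatim, and the same reasoning handles future-directedness via the time-orienting vector field.
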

\begin{proof}
See \cite[Prop.\ 1.21]{CG}.
\end{proof}

\begin{Lemma}
\label{Lemma: pushup}
Let $p,q,r \in M$ such that $p \leq q \ll r$ or $p\ll q \leq r$. Then $p \ll r$.
\end{Lemma}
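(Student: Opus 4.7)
The plan is to invoke the Chrusciel--Grant framework of \emph{causally plain} spacetimes. Since $g$ is $C^1$, it is in particular locally Lipschitz, so $(M,g)$ is causally plain by \cite[Prop.\ 1.10 and Cor.\ 1.17]{CG}. For such spacetimes the push-up property is already established in that reference (this is the content of \cite[Lem.\ 1.15]{CG}), so the statement follows at once. Below I sketch the direct argument adapted to our setting, so as to keep the appendix self-contained in spirit.

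First I would treat the case $p \leq q \ll r$. Let $\alpha$ be a future directed causal curve from $p$ to $q$ and $\beta$ a future directed timelike curve from $q$ to $r$. By Lemma \ref{Lemma: futureopen}, $I^-(r)$ is an open neighbourhood of $q$. Pick a point $\tilde q$ slightly before $q$ on $\beta$; then $\tilde q \in I^-(r)$, i.e.\ $\tilde q \ll r$. The essential step, provided by the causal plainness of $(M,g)$, is that there exists a point $p'$ on $\alpha$ arbitrarily close to $q$ such that $p' \ll \tilde q$; equivalently, the relation $\leq$ followed by a small timelike push yields a timelike relation. Concatenating the piece of $\alpha$ from $p$ to $p'$ (which is causal) with a timelike curve from $p'$ to $\tilde q$ and then with the piece of $\beta$ from $\tilde q$ to $r$ gives a continuous causal curve from $p$ to $r$ that is timelike on a neighbourhood of $p'$, and hence produces $p \ll r$ by a standard short-cut argument (any causal curve containing a timelike subarc can be deformed into a timelike curve, again using openness of $I^+$).

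The symmetric case $p \ll q \leq r$ is handled identically, pushing down along $\beta$ from $r$ into $I^+(p)$ using openness of $I^+(p)$ and the dual form of the plainness statement. The main obstacle, and the entire reason this does not reduce to a one-line application of convex normal neighbourhoods as in the smooth case, is the local perturbation step $p' \ll \tilde q$ for points $p'$ close to $q$ on a causal curve through $q$. In the $C^1$ (in fact merely Lipschitz) regularity this is exactly the content of causal plainness as established in \cite{CG}, and it is the step that fails for general continuous Lorentzian metrics where bubbling phenomena occur. Since $C^1 \subset C^{0,1}$, we are safely within the regime where the Chrusciel--Grant result applies, and no further work is needed.
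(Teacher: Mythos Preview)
Your primary argument—invoking causal plainness of locally Lipschitz (hence $C^1$) metrics and citing the push-up result from \cite{CG}—is exactly the paper's approach; the paper's proof is the single line ``See \cite[Lem.\ 1.22]{CG}'' (note the correct reference is Lemma~1.22, not 1.15).

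Your supplementary sketch, however, is circular as written. After obtaining $p \le p'$ (along $\alpha$) and $p' \ll \tilde q \ll r$, you are left with $p \le p' \ll r$, which has the same shape as the original hypothesis: nothing has been reduced. The ``standard short-cut argument (any causal curve containing a timelike subarc can be deformed into a timelike curve)'' you invoke at the end \emph{is} the push-up lemma in disguise, so you are assuming what you set out to prove. The actual argument in \cite{CG} proceeds instead by an open--closed (connectedness) argument along the causal curve $\alpha$: the set $\{t : \alpha(t) \in I^-(r)\}$ is open by Lemma~\ref{Lemma: futureopen}, nonempty since it contains the endpoint, and closed by the \emph{local} push-up furnished by causal plainness in cylindrical neighbourhoods. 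Since you explicitly defer to \cite{CG} for the proof proper, this does not invalidate your proposal, but the sketch should be dropped or repaired.
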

\begin{proof}
See \cite[Lem.\ 1.22]{CG}.
\end{proof}

\begin{remark}
Note that from this one can show as in the smooth case, e.g.\ \cite[Lem. 14.03]{ON83}\footnote{the use of convex sets is not necessary once openness of $I^+$ in 
is established},
that the timelike relation is open, i.e., if $p\ll q$ then there are neighbourhoods 
$U_p, U_q$ of $p$ and $q$ respectively, such that for all $x \in U_p$ and 
all $y\in U_q$, we have $x \ll y$.
\end{remark}

\begin{Lemma}
\label{Lemma: maxunbrokengeodesic}
Let $(M,g)$ be a $C^1$-spacetime and let $\gamma:[a,b] \to M$ be a maximising causal curve. Then $\gamma$ is a (reparametrisation of a) causal geodesic. In particular, it is a $C^2$-curve and has a causal character.
\end{Lemma}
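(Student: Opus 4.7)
The plan is to invoke the established results on causal maximisers in low-regularity Lorentzian geometry, specifically \cite{LLS20,SS21}, which assert that any causal maximiser in a $C^1$-spacetime is (up to reparametrisation) a $C^2$-geodesic. I would outline the core steps of such a proof as follows.

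First, since $\gamma$ is causal, it is locally Lipschitz, so after $h$-arclength reparametrisation $\dot\gamma$ is defined almost everywhere with $\|\dot\gamma\|_h = 1$. Working locally in a chart around an interior parameter value $t_0\in(a,b)$, one considers smooth compactly supported variations $\gamma_s$ of $\gamma$ with fixed endpoints and variation field $V$ vanishing at the endpoints of its support. Since $\gamma$ maximises the Lorentzian length $L_g$, the first variation $\tfrac{d}{ds}\big|_{s=0} L_g(\gamma_s)$ must vanish for every such $V$. A direct computation (valid because $g\in C^1$, so that the Christoffel symbols $\Gamma^i_{jk}$ are continuous) shows that this first variation is an integral expression which, tested against all admissible $V$, is equivalent to a weak formulation of the geodesic equation $\ddot\gamma^i+\Gamma^i_{jk}(\gamma)\dot\gamma^j\dot\gamma^k=0$.

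Second, one performs a bootstrap: since $\Gamma^i_{jk}\in C^0$, the right-hand side $-\Gamma^i_{jk}(\gamma)\dot\gamma^j\dot\gamma^k$ is locally bounded, and the weak geodesic equation forces $\dot\gamma$ to coincide a.e.\ with an absolutely continuous function. After reparametrising to an affine parameter (the needed essential constancy of $g(\dot\gamma,\dot\gamma)$ being a by-product of the same variational identity), the geodesic equation holds pointwise, and the joint continuity of $\Gamma^i_{jk}$ and of $\dot\gamma$ upgrades $\gamma$ to a $C^2$-curve satisfying the geodesic equation classically. The causal character is then automatic: along any $C^2$-geodesic of a $C^1$-metric one has $\tfrac{d}{dt}g(\dot\gamma,\dot\gamma)=2g(\nabla_{\dot\gamma}\dot\gamma,\dot\gamma)=0$, so $g(\dot\gamma,\dot\gamma)$ is constant, and since $\dot\gamma$ is causal a.e.\ this constant is either strictly negative (so $\gamma$ is timelike throughout) or zero (so $\gamma$ is null throughout).

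The main obstacle, relative to the smooth or $C^{1,1}$-case, is the absence of both the exponential map and convex normal neighbourhoods in the $C^1$-setting, which prevents importing the classical argument (local maximisers from $p$ to $\exp_p(v)$ are radial geodesics). Overcoming this requires direct variational estimates in adapted charts that carefully exploit only $C^1$-regularity of $g$; this is the technical heart of \cite{LLS20,SS21} and is the step I would cite rather than reproduce in full detail here.
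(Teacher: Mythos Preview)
Your proposal matches the paper's proof exactly: the paper simply cites \cite[Thm.\ 3.3]{SS21}, and you correctly identify \cite{LLS20,SS21} as the source. One caveat on your supplementary sketch: the first-variation-of-length argument as written is problematic in the null case, since the integrand $\sqrt{-g(\dot\gamma,\dot\gamma)}$ vanishes identically and the usual Euler--Lagrange computation degenerates; the actual arguments in \cite{LLS20,SS21} proceed instead via local comparison with Minkowski space in adapted charts, but since you explicitly defer the technical core to those references this does not affect the validity of your proposal.
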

\begin{proof}
See \cite[Thm.\ 1.1]{LLS20} or \cite[Thm.\ 3.3]{SS21}.
\end{proof}

For any point $q \in M$, we denote by $E^+(q):=J^+(q) \setminus I^+(q)$ its \textit{future horismos}. Its \textit{past horismos} $E^-(q)$ is defined analogously.

\begin{Lemma}
\label{Lemma: nullgeodinlightcone}
If $p \in E^+(q)$ then there exists a maximising null geodesic segment from $q$ to $p$.
\end{Lemma}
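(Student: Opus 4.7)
The plan is to reduce the claim to Lemma \ref{Lemma: maxunbrokengeodesic} via a push-up argument that forces every causal curve from $q$ to $p$ to have vanishing Lorentzian length. Since $p \in J^+(q)$, I start by fixing any future-directed causal curve $\gamma : [0,1] \to M$ with $\gamma(0) = q$ and $\gamma(1) = p$; such a $\gamma$ exists by definition of $J^+$.

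The key intermediate step is to show that $d_g(q,p) = 0$. Suppose for contradiction some causal curve $\sigma$ from $q$ to $p$ has $L(\sigma) > 0$. In the $C^1$ (indeed even Lipschitz) regularity considered here, this is known to imply that there exist $0 \le a < b \le 1$ with $\sigma(a) \ll \sigma(b)$; this is the standard low-regularity push-up principle of Chru\'sciel--Grant (the result on which our Lemma \ref{Lemma: pushup} is based, cf.\ the discussion around \cite[Lem.\ 1.22]{CG}). Lemma \ref{Lemma: pushup}, applied twice to the chain
\begin{equation*}
q = \sigma(0) \le \sigma(a) \ll \sigma(b) \le \sigma(1) = p,
\end{equation*}
then gives $q \ll p$, contradicting $p \notin I^+(q)$. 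Hence no causal curve from $q$ to $p$ can have positive length, so $d_g(q,p) = 0$ and in particular $L(\gamma) = 0 = d_g(q,p)$. Consequently $\gamma$ is maximising.

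Lemma \ref{Lemma: maxunbrokengeodesic} now applies and identifies $\gamma$ with (a reparametrisation of) a $C^2$ causal geodesic of definite causal character. Since any timelike geodesic segment has strictly positive Lorentzian length while $L(\gamma) = 0$, this causal character must be null, and an affine reparametrisation yields the desired maximising null geodesic segment from $q$ to $p$. The only non-routine ingredient is the implication ``$L > 0 \Rightarrow$ timelike relation'' in the $C^1$ setting, which is the technical content of the low-regularity push-up cited above; once granted, everything else is immediate from the results already collected in the appendix.
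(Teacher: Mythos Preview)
Your proof is correct and follows essentially the same approach as the paper's: both take an arbitrary causal curve from $q$ to $p$, use push-up to force it to be null and maximising (since $p\notin I^+(q)$), and then invoke Lemma \ref{Lemma: maxunbrokengeodesic}. The paper compresses the argument into a single sentence (``$c$ is null and maximising \ldots\ because of push-up''), while you spell out the intermediate step $d_g(q,p)=0$ and deduce the null character after applying the lemma; the only minor quibble is that the implication ``$L(\sigma)>0 \Rightarrow \sigma(a)\ll\sigma(b)$'' is not literally the content of \cite[Lem.~1.22]{CG} but rather an easy consequence of the surrounding results there, as you yourself note.
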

\begin{proof}
Let $c$ be a future causal curve from $q$ to $p$. Since $p \notin I^+(q)$, $c$ is null and maximising from $q$ to $p$ (because of push-up), hence it is (a reparametrisation of) 
a maximising null geodesic by Lemma \ref{Lemma: maxunbrokengeodesic}.
\end{proof}

\begin{Lemma}
\label{Lemma: basicpropertiesoffutures}
For any subset $A \subseteq M$, we have $I^{\pm}(A) = J^{\pm}(A)^{\circ}$ and $\partial I^{\pm}(A) = \partial J^{\pm}(A)$. 
Moreover, the sets $E^{\pm}(A)$ and $\partial J^{\pm}(A)$ are achronal.
\end{Lemma}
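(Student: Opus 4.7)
My plan is to establish the four statements in sequence, relying only on the openness of chronological futures/pasts (Lemma~\ref{Lemma: futureopen}), the push-up property (Lemma~\ref{Lemma: pushup}), the openness of the chronological relation noted just after it, and the fact that $(M,g)$ carries a smooth timelike vector field (so that $q\in\overline{I^{\pm}(q)}$ for every $q$). I treat only the $+$ case; the $-$ case is symmetric.

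For the identity $I^+(A)=J^+(A)^\circ$, one inclusion is immediate since $I^+(A)\subseteq J^+(A)$ and $I^+(A)$ is open. Conversely, given $q\in J^+(A)^\circ$, pick an open neighbourhood $U\subseteq J^+(A)$ of $q$ and, flowing backwards along the global timelike vector field for a short time, choose $q'\in U\cap I^-(q)$. Since $q'\in J^+(A)$, there is some $p\in A$ with $p\le q'\ll q$, and push-up gives $p\ll q$, i.e.\ $q\in I^+(A)$.

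For $\partial I^+(A)=\partial J^+(A)$, in view of Step~1 it suffices to show $\overline{I^+(A)}=\overline{J^+(A)}$. The nontrivial inclusion is $J^+(A)\subseteq\overline{I^+(A)}$: for $q\in J^+(A)$ one has $I^+(q)\subseteq I^+(A)$ and $q\in\overline{I^+(q)}$ (again by flowing forward along the smooth timelike vector field), so $q\in\overline{I^+(A)}$. Combining $I^+(A)=J^+(A)^\circ$ with $\overline{I^+(A)}=\overline{J^+(A)}$ yields $\partial I^+(A)=\partial J^+(A)$.

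Finally, for achronality, suppose $p,q\in E^+(A)$ with $p\ll q$. Since $p\in J^+(A)$, push-up gives $q\in I^+(A)$, contradicting $q\in E^+(A)=J^+(A)\setminus I^+(A)$. For $\partial J^+(A)$, suppose $p,q\in\partial J^+(A)=\partial I^+(A)$ with $p\ll q$. The chronological relation being open, there is a neighbourhood $U$ of $p$ such that $u\ll q$ for every $u\in U$. Since $p\in\partial I^+(A)$, we may choose $u\in U\cap I^+(A)$; then $u\in I^+(A)$ and $u\ll q$, so push-up gives $q\in I^+(A)=J^+(A)^\circ$, contradicting $q\in\partial J^+(A)$. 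I do not anticipate any real obstacle here: the only subtlety beyond the smooth case is that convex normal neighbourhoods are unavailable, but this is circumvented by using the smooth background timelike vector field to produce nearby chronologically related points, which is exactly the ingredient replacing the classical normal-coordinate arguments.
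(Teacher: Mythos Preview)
Your proof is correct and follows essentially the same approach as the paper, which simply notes that the smooth-case arguments (from \cite{MinguzziLivingReview} and \cite{ON83}) carry over unchanged. You have spelled out those arguments explicitly, with appropriate care to replace any appeal to convex normal neighbourhoods by the use of the smooth timelike vector field, which is precisely the adaptation needed in the $C^1$-setting.
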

\begin{proof}
This follows by the same proofs as in the smooth case (cf.\ \cite[Thm.\ 2.27]{MinguzziLivingReview}, \cite[Cor.\ 14.27]{ON83}).
\end{proof}

\begin{definition}(Edge)
\label{definition: edge}

\noindent Let $A \subseteq M$ be achronal. The \textit{edge} of $A$, denoted by $\edge(A)$, is the set of all $x \in \overline{A}$ such that for any neighbourhood $U$ of $x$, there is a timelike curve from $I_U^-(x)$ to $I_U^+(x)$ that does not meet $A$.
\end{definition}

\begin{Lemma}
\label{Lemma: achronaledgehypersurface}
Let $(M,g)$ be a $C^1$-spacetime. An achronal set $A \subseteq M$ is a topological hypersurface if and only if $A \cap \edge(A) = \emptyset$. Moreover, $A$ is a closed topological hypersurface if and only if $\edge(A) = \emptyset$.
\end{Lemma}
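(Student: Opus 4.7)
The plan is to reduce everything to the local picture given by a smooth timelike vector field and its flow coordinates. Near any point $p \in M$, pick a smooth timelike $T$ and a chart in which $T = \partial_s$; a small neighbourhood then takes the form $U = B \times (-\delta,\delta)$, the vertical integral curves $s \mapsto (q',s)$ are timelike, and by achronality $A \cap U$ meets each such integral curve in at most one point. Thus $A \cap U$ is the partial graph of a function $f \colon D \to (-\delta,\delta)$ where $D = \pi(A \cap U) \subseteq B$. The entire argument will consist of analysing when $D$ is open and $f$ continuous.

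For the first statement: if $p \in A \setminus \edge(A)$, shrink $U$ so that every timelike curve in $U$ from $I^-_U(p)$ to $I^+_U(p)$ meets $A$; applying this to the integral curves of $T$ through points $q' \in B$ near $p' = \pi(p)$ (which are timelike and, by openness of the chronological relation, connect $I^-_U(p)$ to $I^+_U(p)$ for $q'$ sufficiently close to $p'$), I conclude that a whole neighbourhood of $p'$ lies in $D$, so $D$ is open near $p'$. Continuity of $f$ at $p'$ then follows from achronality: a jump $f(q_n') \to s \neq f(p')$ with $q_n' \to p'$ would, via the flow, place two $A$-points on a (nearly) vertical timelike segment, contradicting achronality. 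This gives the ``$\Leftarrow$'' direction of (i). Conversely, if $A$ is a topological hypersurface at $p$, invariance of domain applied to $\pi|_{A\cap U} \colon A \cap U \to B$ (which is continuous and injective) shows that $A \cap U$ is a graph of a continuous function $f$ on an open $D \subseteq B$ containing $p'$. For any timelike curve $\sigma \colon [0,1] \to U'$ with $\sigma(0) \in I^-_{U'}(p)$, $\sigma(1) \in I^+_{U'}(p)$, the function $H(t) := \sigma^n(t) - f(\pi(\sigma(t)))$ is continuous; using that for any past-directed timelike curve $\tau$ ending at $p$ one has $\tau^n(t) < p^n = f(\pi(\tau(t))) + o(1)$ near the endpoint (since past-timelike vectors at $p$ have negative $s$-component in flow coordinates), combined with achronality forcing $\tau$ to miss $A$ on $[0,1)$, I obtain $H(0) < 0$ and analogously $H(1) > 0$. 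The intermediate value theorem then produces an intersection of $\sigma$ with $A$, so $p \notin \edge(A)$.

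For the second statement I will prove the inclusion $\overline{A}\setminus A \subseteq \edge(A)$, which together with (i) finishes both directions. Let $p \in \overline{A}\setminus A$ and let $U$ be any neighbourhood of $p$. In flow coordinates on a small $V \subseteq U$, the integral curve of $T$ through $p$ is timelike and meets $A$ in at most one point by achronality; that point cannot be $p$ since $p \notin A$, so by shrinking $V$ I obtain a timelike curve in $V \subseteq U$ through $p$ from $I^-_V(p) \subseteq I^-_U(p)$ to $I^+_V(p) \subseteq I^+_U(p)$ which avoids $A$ entirely, witnessing $p \in \edge(A)$. Thus $\edge(A) = \emptyset$ forces $A$ to be closed, and combined with (i) gives the ``$\Leftarrow$'' of (ii); the converse is immediate since $\edge(A) \subseteq \overline{A} = A$, while (i) yields $\edge(A)\cap A = \emptyset$.

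The main obstacle in the $C^1$ setting is the absence of convex normal neighbourhoods: the classical proof (e.g.\ \cite[Prop.~14.25]{ON83}) tacitly uses the exponential map to compare $I^\pm(p)$ with a model cone. I replace this with the local structure provided by flow coordinates of a \emph{smooth} timelike vector field, together with openness of the chronological relation (Lemma \ref{Lemma: futureopen}) and the elementary fact that a past-directed timelike tangent vector at $p$ has strictly negative time-component in these flow coordinates. These ingredients are already available in the $C^1$-causality toolkit collected in the appendix, so the proof goes through essentially verbatim from the smooth case once the reliance on convex normal neighbourhoods is removed.
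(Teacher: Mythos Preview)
Your argument is correct and is essentially the same as the paper's, which simply observes that the proofs of \cite[Prop.\ 14.25, Cor.\ 14.26]{ON83} carry over verbatim to $C^1$-metrics. Your flow-coordinate argument \emph{is} O'Neill's proof: he too works in a chart where the last coordinate vector field is timelike, uses achronality to see $A$ as a partial graph, and applies the intermediate value theorem to conclude. One correction to your commentary: the classical proof in \cite{ON83} does \emph{not} rely on convex normal neighbourhoods or the exponential map for this particular result---it already uses precisely the flow-coordinate picture you describe---so there is no obstacle to remove. Your detailed write-up is a faithful unpacking of the cited proof rather than a genuinely different route.
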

\begin{proof}
The proofs for smooth metrics (cf.\ \cite[Prop.\ 14.25, Cor.\ 14.26]{ON83}) still hold in $C^1$.
\end{proof}

\begin{Lemma}
\label{Lemma: lightconetophsf}
For any $A \subseteq M$, $\partial J^+(A)$ is a (topologically) closed, achronal topological hypersurface.
\end{Lemma}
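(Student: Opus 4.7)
The statement has three components: closedness, achronality, and the topological hypersurface property. The first two are essentially immediate from results already available. Closedness is automatic because $\partial J^+(A)$ is, by definition, a topological boundary. Achronality is part of Lemma~\ref{Lemma: basicpropertiesoffutures}. Thus the entire task reduces to showing that $\partial J^+(A)$ is a topological hypersurface; and by the second half of Lemma~\ref{Lemma: achronaledgehypersurface}, since $\partial J^+(A)$ is already closed, it suffices to prove that $\edge(\partial J^+(A)) = \emptyset$.

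So I would fix an arbitrary $p \in \partial J^+(A)$, an arbitrary open neighbourhood $U$ of $p$, and an arbitrary timelike curve $\lambda:[0,1]\to U$ with $\lambda(0)=q\in I_U^-(p)$ and $\lambda(1)=r\in I_U^+(p)$, and show that $\lambda$ must meet $\partial J^+(A)$. This is stronger than what the edge definition requires, and it would show at once that $p \notin \edge(\partial J^+(A))$. The proof splits into two dichotomous claims:

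\emph{Claim A:} $r \in I^+(A)$. Since $\partial J^+(A)=\partial I^+(A)$ by Lemma~\ref{Lemma: basicpropertiesoffutures}, we have $p\in\overline{I^+(A)}$, so there is a sequence $p_n\to p$ with $p_n\in I^+(A)$. Since the chronology relation is open (Remark following Lemma~\ref{Lemma: pushup}) and $p\ll r$, we get $p_n\ll r$ for $n$ large, and then push-up (Lemma~\ref{Lemma: pushup}) gives $r\in I^+(A)\subseteq J^+(A)$.

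\emph{Claim B:} $q \notin J^+(A)$. Suppose otherwise, so there exists $a\in A$ with $a\le q$. Combined with $q\ll p$, Lemma~\ref{Lemma: pushup} gives $a\ll p$, i.e.\ $p\in I^+(A)$. But $I^+(A)$ is open (Lemma~\ref{Lemma: futureopen}) and contained in $J^+(A)$, so $p$ would be an interior point of $J^+(A)$, contradicting $p\in\partial J^+(A)$.

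With Claims A and B in hand, the set $\lambda^{-1}(J^+(A))$ contains $1$ but not $0$, while $\lambda^{-1}(M\setminus \overline{J^+(A)})$ contains $0$ but not $1$. Since $[0,1]$ is connected and these two sets cannot cover it, there must exist some $t_*\in[0,1]$ with $\lambda(t_*)\in \overline{J^+(A)}\setminus J^+(A)^{\circ}=\partial J^+(A)$, where the equality again uses Lemma~\ref{Lemma: basicpropertiesoffutures} to identify $J^+(A)^\circ = I^+(A)$ and thus $\overline{J^+(A)}\setminus J^+(A)^\circ = \partial J^+(A)$. Since $\lambda$ is contained in $U$, this intersection lies in $U$ as required. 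I expect no serious obstacle here; the only point requiring a little care is making sure the push-up lemma is applied in the correct direction in Claims A and B, and correctly invoking the identification $\partial J^+(A)=\partial I^+(A)$ so that the openness of $I^+(A)$ (rather than of $J^+(A)$, which we do not have) can be used.
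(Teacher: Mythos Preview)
Your approach is exactly the paper's: reduce the hypersurface claim to $\edge(\partial J^+(A))=\emptyset$ and invoke Lemma~\ref{Lemma: achronaledgehypersurface}; the paper simply says this ``is proven in precisely the same way as in the smooth case,'' while you write out that standard argument.

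One small wrinkle in your connectedness step: you assert $\lambda^{-1}(M\setminus\overline{J^+(A)})$ contains $0$, which needs $q\notin\overline{J^+(A)}$, slightly stronger than your Claim~B. This is immediate (if $q\in\partial J^+(A)$ then $q\ll p$ violates achronality), but as written the two sets you name are not both open, so connectedness does not directly apply. The clean fix is to replace $J^+(A)$ by $I^+(A)$ throughout that paragraph: then $\lambda^{-1}(I^+(A))$ and $\lambda^{-1}(M\setminus\overline{I^+(A)})$ are open, disjoint, and both nonempty (the latter by the strengthened Claim~B), so they cannot cover $[0,1]$, forcing some $\lambda(t_*)\in\partial I^+(A)=\partial J^+(A)$.
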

\begin{proof}
This follows from the fact that $\partial J^+(A)$ is achronal and $\edge(\partial J^+(A)) = \emptyset$, which is proven in precisely the same way as in the smooth case.
\end{proof}

Recall the notion of a limit maximising sequence of causal curves: A sequence $\gamma_k:[a_k,b_k] \to M$ of future directed causal curves is called \textit{limit maximising} if there is $\varepsilon_k \to 0$ such that
\begin{align*}
    L(\gamma_k) \geq d_g(\gamma(a_k),\gamma(b_k)) - \varepsilon_k,
\end{align*}
where $d_g$ is the time separation function induced by $g$. In particular a sequence of maximising curves is limit maximising. We shall employ the following version of the limit curve theorem:

\begin{Theorem}(Limit Curve Theorem)
\label{Theorem: limitcurvetheorem}

\noindent Let $h$ be a complete Riemannian metric on $M$ and let $\gamma_k:[a_k,b_k] \to M$ be future
directed, $h$-arc length parametrised causal curves with $0 \in [a_k,b_k]$ such that the sequence $\{\gamma_k(0)\}$ has
an accumulation point $y$. Suppose there are $a \leq 0$ and $b \geq 0$ such that $a_k \to a$,
$b_k \to b$. If there is a neighbourhood $U$ of $y$ such that almost all $\gamma_k$ leave $U$,
then a subsequence of $\gamma_k$ converges $h$-uniformly on compact sets to a future
directed causal Lipschitz curve $\gamma:[a,b] \to M$. $\gamma$ is future resp.\ past
inextendible iff $b = \infty$ resp.\ $a=-\infty$. If $\{\gamma_k\}$ is limit maximising, then
$\gamma$ is maximising.
\end{Theorem}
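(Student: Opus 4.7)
The proof plan follows the standard strategy for limit curve theorems in low-regularity causality theory (cf.\ \cite{S14,Min_closed_cone}), adapted to the $C^1$-setting at hand. The plan is to proceed in four steps: Arzel\`a--Ascoli extraction, verification of causality of the limit, the inextendibility dichotomy, and finally the limit-maximising property. Passing to a subsequence I may assume $\gamma_k(0)\to y$. Since each $\gamma_k$ is parametrised by $h$-arc length, it is $1$-Lipschitz with respect to $d_h$, and completeness of $h$ yields local compactness of closed metric balls via Hopf--Rinow. Hence on each compact subinterval of $[a,b]$ the $\gamma_k$ (which are defined there for $k$ large enough) form an equicontinuous, pointwise precompact family. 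Arzel\`a--Ascoli combined with a diagonal extraction over an exhaustion of the interior of $[a,b]$ by compact intervals then yields a subsequence converging $h$-uniformly on compact sets to a $1$-Lipschitz limit $\gamma$, which extends continuously to the finite endpoints by uniform Cauchy-ness.

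Next, to see that $\gamma$ is future directed causal: being $1$-Lipschitz with respect to $d_h$, $\gamma$ is absolutely continuous, so $\dot\gamma$ exists almost everywhere. Using closedness of the set of future-directed causal vectors at each point (which holds for continuous Lorentzian metrics, hence in our $C^1$-setting) together with a standard argument on difference quotients---one approximates $\dot\gamma(t_0)$ at a Lebesgue point by $(\gamma(t_0+\tau) - \gamma(t_0))/\tau$ taken in a chart, which is the $C^0$-limit of the corresponding quotients for $\gamma_k$, each of which lies in (a small thickening of) the future causal cone---one concludes that $\dot\gamma$ is a.e.\ future causal, cf.\ \cite[Prop.\ 1.24]{CG}.

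The inextendibility claim is an iff. If $b<\infty$ then $\gamma(b)\in M$ is well defined and $\gamma$ is not future-inextendible. Conversely, for $b=\infty$ assume toward contradiction that $\gamma$ is future extendible, so $\gamma(t)\to p\in M$ as $t\to\infty$. Then $\gamma([0,\infty))$ lies in a compact $d_h$-ball around $p$, and by $h$-uniform convergence on compact sets, for any fixed $T$ the curves $\gamma_k|_{[0,T]}$ eventually lie in a fixed compact neighbourhood of $p$. Combining this with the "leaves $U$" hypothesis and the fact that the $\gamma_k$ are $h$-arc length parametrised with $b_k\to\infty$, one produces a contradiction by tracking exit parameters $s_k$ at which $\gamma_k(s_k)\notin U$ and comparing with the location of the limit curve. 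The past case with $a=-\infty$ is symmetric. This step is the main technical point where the non-imprisonment content of the "leaves $U$" hypothesis enters, and is the principal obstacle in the proof.

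Finally, for the limit-maximising claim: by upper semicontinuity of the Lorentzian length $L$ under $C^0$-convergence of causal curves and lower semicontinuity of the time separation function $d_g$ (both of which hold for $C^1$-metrics, cf.\ \cite{KSSV,S14}), for any subinterval $[s,t]\subseteq[a,b]$ one has
\[
d_g(\gamma(s),\gamma(t)) \;\ge\; L(\gamma|_{[s,t]}) \;\ge\; \limsup_{k\to\infty} L(\gamma_k|_{[s,t]}) \;\ge\; d_g(\gamma(s),\gamma(t)),
\]
where the first inequality holds for any causal curve, the second is upper semicontinuity of $L$, and the third combines the limit-maximising bound $L(\gamma_k)\ge d_g(\gamma_k(s_k),\gamma_k(t_k))-\varepsilon_k$ with lower semicontinuity of $d_g$ at $(\gamma(s),\gamma(t))$. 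This forces equality throughout and establishes maximality of $\gamma$ on every subinterval.
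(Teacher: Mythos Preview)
The paper's own proof is a one-line citation to \cite[Thm.\ 3.1]{Minguzzi_LimitCurveThms}, \cite[Thm.\ 1.6]{CG}, and \cite[Thm.\ A.6]{GGKS}, so your outline is in the same spirit as what those references actually do. Steps 1, 2, and 4 are essentially correct (with the minor omission in step 4 that the limit-maximising bound is stated only for the full interval $[a_k,b_k]$, so one first needs the standard reverse-triangle-inequality argument to pass it down to subintervals).

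The genuine gap is in step 3. You misidentify the role of the ``leaves $U$'' hypothesis: $U$ is a neighbourhood of $y=\gamma(0)$, not of any putative future endpoint $p$, so tracking exit parameters $s_k$ from $U$ says nothing about the behaviour of $\gamma$ or $\gamma_k$ near $p$. Your proposed contradiction---that $\gamma_k|_{[0,T]}$ eventually lies in a fixed compact set while $\gamma_k$ leaves $U$---is not a contradiction at all, since $U$ can be strictly contained in that compact set. The ``leaves $U$'' hypothesis is actually used earlier, to guarantee that the limit $\gamma$ is nonconstant (hence genuinely causal and not a single point).

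The correct inextendibility argument at $b=\infty$ uses a purely local fact: every point $p$ has a neighbourhood $W$ (e.g.\ a globally hyperbolic one, cf.\ Lemma \ref{Lemma: baseofglobhypnbhds}, or a cylindrical chart for some $\hat g_\eps\succ g$) and a constant $L>0$ such that any causal curve contained in $W$ has $h$-length at most $L$. If $\gamma(t)\to p$, pick such $W,L$ and choose $T$ with $\gamma([T,\infty))\subseteq W$. Fix any $T'>T+L$. By uniform convergence on $[T,T']$, for $k$ large $\gamma_k([T,T'])\subseteq W$; but $\gamma_k$ is $h$-arclength parametrised, so this segment has $h$-length $T'-T>L$, a contradiction. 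This is the mechanism behind the inextendibility part of \cite[Thm.\ 3.1]{Minguzzi_LimitCurveThms}, and it is independent of the ``leaves $U$'' hypothesis.
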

\begin{proof}
This follows from \cite[Thm.\ 3.1]{Minguzzi_LimitCurveThms} and \cite[Thm.\ 1.6]{CG} by the same proof as in \cite[Thm.\ A.6]{GGKS}.
\end{proof}
For a much more general version of the limit curve theorem, valid in closed cone structures, we refer to \cite[Thm.\ 2.14]{Min_closed_cone}. We shall also require the following variant of the limit curve theorem for a sequence of curves with converging past and future endpoints, cf.\ \cite[Thm.\ 3.1(2)]{Minguzzi_LimitCurveThms}.

\begin{Theorem}(Limit Curve Theorem: two converging endpoints)
\label{Theorem: limitcurvetheoremtwoconvergingendpoints}

\noindent Let $\gamma_k:[0,b_k] \to M$ be a sequence of future-directed,
causal curves in $h$-arc length parametrisation connecting $x_k$ to
$z_k$, with $x_k \to x$ and $z_k \to z$. Suppose that $b_k \to b
\in (0,\infty)$. If there is a neighbourhood $U$ of $x$ such that
only a finite number of $\gamma_k$ are entirely contained in $U$,
then there is a Lipschitz future-directed, causal curve $\gamma:[0,b] \to M$
connecting $x$ to $z$ such that a subsequence of the $\gamma_k$
converges to $\gamma$ $h$-uniformly on compact sets. Moreover, if
$\{\gamma_k\}$ is limit maximising, then $\gamma$ is maximising.
\end{Theorem}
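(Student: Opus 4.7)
The plan is to deduce this from Theorem \ref{Theorem: limitcurvetheorem}. The hypotheses match that result with $y=x$: the accumulation point of $\{\gamma_k(0)\}=\{x_k\}$ is $x$, the parameter endpoints satisfy $a_k\equiv 0$ and $b_k\to b\in(0,\infty)$, and the assumption that only finitely many $\gamma_k$ stay entirely inside $U$ is exactly the escape-from-$U$ condition. Applying it yields a subsequence (not relabelled) and a future-directed causal Lipschitz curve $\gamma\colon[0,b]\to M$ with $\gamma(0)=x$ such that $\gamma_k\to\gamma$ $h$-uniformly on compact subsets of $[0,b]$.

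Next I would verify that the right endpoint matches, i.e., $\gamma(b)=z$. Since the $\gamma_k$ are $h$-arc length parametrised they are $1$-Lipschitz with respect to $d_h$, and this property passes to the limit $\gamma$. Fix $\varepsilon>0$ and pick $t\in[0,b)$ with $b-t<\varepsilon$. For $k$ large we have $t<b_k$, so $\gamma_k(t)$ is defined and $d_h(\gamma_k(t),z_k)=d_h(\gamma_k(t),\gamma_k(b_k))\le b_k-t$. Sending $k\to\infty$ using $\gamma_k(t)\to\gamma(t)$ and $z_k\to z$ yields $d_h(\gamma(t),z)\le b-t<\varepsilon$, and combining with $d_h(\gamma(t),\gamma(b))\le b-t<\varepsilon$ gives $d_h(\gamma(b),z)<2\varepsilon$. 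As $\varepsilon$ was arbitrary, $\gamma(b)=z$, so $\gamma$ connects $x$ to $z$ as required.

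For the maximising assertion, the hypothesis reads $L(\gamma_k)\ge d_g(x_k,z_k)-\varepsilon_k$ with $\varepsilon_k\to 0$. The two ingredients are upper semicontinuity of the Lorentzian length functional along the $h$-uniformly convergent subsequence (available in our $C^1$-setting by the same arguments used in \cite{Minguzzi_LimitCurveThms,Min_closed_cone} that underlie Theorem \ref{Theorem: limitcurvetheorem}) and lower semicontinuity of $d_g$ on $M\times M$ (which holds generally in closed cone structures). Chaining these gives
\[
d_g(x,z)\le\liminf_{k\to\infty}d_g(x_k,z_k)\le\liminf_{k\to\infty}\bigl(L(\gamma_k)+\varepsilon_k\bigr)\le\limsup_{k\to\infty}L(\gamma_k)\le L(\gamma),
\]
while the reverse inequality $L(\gamma)\le d_g(\gamma(0),\gamma(b))=d_g(x,z)$ is automatic for any causal curve. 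Hence $L(\gamma)=d_g(x,z)$, so $\gamma$ is maximising. The only delicate point is the upper semicontinuity of $L$ in this low regularity, but it is precisely the feature already invoked in Theorem \ref{Theorem: limitcurvetheorem}, so no new analytic input is needed.
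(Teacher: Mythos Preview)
Your proposal is correct and follows essentially the same route as the paper, which simply defers to \cite[Thm.\ 3.1(2)]{Minguzzi_LimitCurveThms} and notes that its proof carries over once Theorem \ref{Theorem: limitcurvetheorem} is in place. You have made that reduction explicit: apply the one-sided limit curve theorem to obtain $\gamma$ with $\gamma(0)=x$, then pin down the right endpoint via the $1$-Lipschitz estimate, and handle the maximising clause by the standard semicontinuity chain; all of these steps are valid for $C^1$-metrics for the reasons you indicate.
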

\begin{proof}
    This is contained in the statement of \cite[Thm.\ 3.1(2)]{Minguzzi_LimitCurveThms}, whose proof is easily seen to hold for $C^1$-metrics once Theorem \ref{Theorem: limitcurvetheorem} is established.
\end{proof}

\begin{Corollary}
\label{Corollary: curvesinlightcone}
Let $A \subseteq M$. For any point $x \in \partial J^+(A)\setminus \overline{A}$, there is a causal curve $\gamma \subseteq \partial J^+(A)$ with future endpoint $x$ that is either past inextendible and does not meet $\overline{A}$ or has a past endpoint in $\overline{A}$. It is a (reparametrisation of a) maximising null geodesic. If $A$ is closed and $x \notin J^+(A)$, then $\gamma \subseteq \partial J^+(A) \setminus J^+(A)$ and it is past inextendible.
\end{Corollary}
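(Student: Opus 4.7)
The idea is to approximate $x$ from within $I^+(A)$ by timelike curves starting in $A$, and extract a causal limit via the limit curve theorem. Since $x\in\partial J^+(A)=\partial I^+(A)$ (Lemma~\ref{Lemma: basicpropertiesoffutures}) and $x\notin\overline{A}$, I pick a neighbourhood $V$ of $x$ with $V\cap\overline{A}=\emptyset$ and a sequence $y_k\in I^+(A)\cap V$ with $y_k\to x$. For each $k$ there exists a future-directed timelike curve $\gamma_k$ from some $a_k\in A$ to $y_k$; I parametrize it in $h$-arclength on $[-L_k,0]$ so that $\gamma_k(0)=y_k$ and $\gamma_k(-L_k)=a_k$. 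Because $a_k\notin V$, each $\gamma_k$ exits a fixed neighbourhood of $x$, so Theorem~\ref{Theorem: limitcurvetheorem} applies and, after passing to a subsequence, yields a future-directed causal Lipschitz curve $\gamma:[-L,0]\to M$ with $\gamma(0)=x$ for some $L\in(0,\infty]$. If $L<\infty$ then the bound $d_h(a_k,y_k)\le L_k$ keeps $a_k$ in an $h$-compact ball, so a further subsequence gives $a_k\to a\in\overline{A}$ and $\gamma(-L)=a$; if $L=\infty$, $\gamma$ is past inextendible in $M$.

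Next, I would verify $\gamma\subseteq\partial J^+(A)$. For any $t\in[-L,0)$, $\gamma(t)$ is a limit of $\gamma_k(t)\in J^+(A)$, so $\gamma(t)\in\overline{J^+(A)}$; if $\gamma(t)\in I^+(A)$, then push-up (Lemma~\ref{Lemma: pushup}) along the future causal segment $\gamma|_{[t,0]}$ would give $x\in I^+(A)$, contradicting $x\in\partial J^+(A)$ (which is disjoint from the open set $I^+(A)$). Hence $\gamma(t)\in\overline{J^+(A)}\setminus I^+(A)=\partial J^+(A)$ (using $\overline{J^+(A)}=\overline{I^+(A)}=I^+(A)\sqcup\partial I^+(A)$), and $x=\gamma(0)\in\partial J^+(A)$ by closedness. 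Since $\partial J^+(A)$ is achronal (Lemma~\ref{Lemma: basicpropertiesoffutures}), no two points of $\gamma$ are chronologically related, so $\gamma$ is a maximising causal curve (of zero Lorentzian length between any two of its points), whence by Lemma~\ref{Lemma: maxunbrokengeodesic} it is a reparametrisation of a null geodesic.

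It remains to sort the two alternatives and the final assertion. If $L<\infty$ we already have the past endpoint $a\in\overline{A}$. If $L=\infty$ and $\gamma$ meets $\overline{A}$ at some $t_0<0$, we truncate $\gamma$ to $[t_0,0]$ to land again in the ``past endpoint in $\overline{A}$'' alternative; otherwise $\gamma$ is past inextendible and disjoint from $\overline{A}$. If in addition $A$ is closed and $x\notin J^+(A)$, any $\gamma(t)\in J^+(A)$ would give $a\le\gamma(t)\le x$ for some $a\in A$, placing $x\in J^+(A)$, a contradiction; so $\gamma\cap J^+(A)=\emptyset$ (in particular $\gamma\cap\overline{A}=\emptyset$), ruling out the truncation case and forcing past inextendibility. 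The main technical obstacle is the limit-curve step itself, namely the bookkeeping between $L<\infty$ and $L=\infty$ and the extraction of a past endpoint in $\overline A$ via $h$-completeness; once the limit exists, push-up and openness of $I^+$ force $\gamma\subseteq\partial J^+(A)$, and achronality of $\partial J^+(A)$ together with Lemma~\ref{Lemma: maxunbrokengeodesic} makes the null geodesic structure automatic.
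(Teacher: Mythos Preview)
Your proposal is correct and follows essentially the same route as the paper: the paper's proof simply refers to \cite[Prop.\ A.7]{GGKS}, invoking Theorem~\ref{Theorem: limitcurvetheorem} and Lemma~\ref{Lemma: nullgeodinlightcone}, and your argument is a faithful unpacking of that reference (using Lemma~\ref{Lemma: maxunbrokengeodesic} directly in place of the closely related Lemma~\ref{Lemma: nullgeodinlightcone}). The only point worth tightening is the $L<\infty$ endpoint bookkeeping---that $\gamma(-L)=\lim a_k$---which follows from the uniform $1$-Lipschitz bound of $h$-arclength parametrisation together with $|L_k-L|\to 0$; you flag this yourself as the main technical step.
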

\begin{proof} Using Theorem \ref{Theorem: limitcurvetheorem} and Lemma \ref{Lemma: nullgeodinlightcone}, this follows as in 
\cite[Prop.\ A.7]{GGKS}.
\end{proof}

\begin{Lemma}
\label{Lemma: limitsofmaximizers} 
Let $(M,g)$ be a $C^1$-spacetime and let $\{g_k\}$ be either the sequence $\{\hat{g}_{\varepsilon_k}\}$ or $\{\check{g}_{\varepsilon_k}\}$ with $\varepsilon_k \downarrow 0$. Let $p,q \in M$ and let $\gamma_k$ be a $g_k$-maximising curve from $p$ to $q$, so $\gamma_k$ is in particular a $g_k$-geodesic. Suppose that $\gamma_k$ converge to a curve $\gamma$ in $C^1_{\mathrm{loc}}$. If $\{g_k\} = \{\check{g}_{\varepsilon_k}\}$, suppose in addition that $(M,g)$ is globally hyperbolic and $p \ll_g q$.
Then $\gamma$ is a $g$-maximising geodesic from $p$ to $q$.
\end{Lemma}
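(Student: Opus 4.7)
The plan is to first verify that the limit $\gamma$ is a $g$-causal geodesic joining $p$ to $q$, then establish length convergence $L_{g_k}(\gamma_k)\to L_g(\gamma)$, and finally compare $L_g(\gamma)$ with $d_g(p,q)$ to conclude maximality. The argument then splits according to the inclusion of causal cones in the two regularisation schemes, and the case $g_k=\check g_{\varepsilon_k}$ will be the main obstacle.

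First I would use that each $\gamma_k$ satisfies the $g_k$-geodesic equation and the Christoffel symbols of $g_k$ converge to those of $g$ in $C^0_{\mathrm{loc}}$ (a consequence of Lemma \ref{Lemma: approximatingmetrics}(ii)), so that the $C^1_{\mathrm{loc}}$-convergence $\gamma_k\to\gamma$ permits passage to the limit in the geodesic equation; alternatively, one may invoke Corollary \ref{Corollary: Hartmangeodesicversion}. Thus $\gamma$ is a $g$-geodesic with endpoints $p$ and $q$, and passing to the limit in $g_k(\dot\gamma_k,\dot\gamma_k)\le 0$ yields that $\gamma$ is $g$-causal. The length convergence $L_{g_k}(\gamma_k)\to L_g(\gamma)$ then follows from uniform convergence $\dot\gamma_k\to\dot\gamma$ together with $g_k\to g$ uniformly on a compact neighbourhood of the image of $\gamma$.

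For $g_k=\hat g_{\varepsilon_k}$, the inclusion $g\prec\hat g_{\varepsilon_k}$ renders any $g$-causal curve $\sigma$ from $p$ to $q$ automatically $\hat g_{\varepsilon_k}$-causal, so by $g_k$-maximality of $\gamma_k$ one has $L_{g_k}(\sigma)\le L_{g_k}(\gamma_k)$. Letting $k\to\infty$ and using $L_{g_k}(\sigma)\to L_g(\sigma)$ together with the length convergence above yields $L_g(\sigma)\le L_g(\gamma)$, so $\gamma$ is $g$-maximising.

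The case $g_k=\check g_{\varepsilon_k}$ is the main obstacle, since the inclusion is reversed: a generic $g$-causal $\sigma$ need not be $\check g_{\varepsilon_k}$-causal, which blocks the previous comparison. The idea is to use the added hypotheses to produce a lower bound on $d_{g_k}(p,q)$. Namely, global hyperbolicity of $(M,g)$ together with $p\ll_g q$ yields, by an Avez--Seifert type argument valid in $C^1$, a $g$-maximising timelike geodesic $\mu$ from $p$ to $q$ with $L_g(\mu)=d_g(p,q)>0$. Since $g(\dot\mu,\dot\mu)<0$ on the compact image of $\mu$ and $\check g_{\varepsilon_k}\to g$ uniformly there, $\mu$ is $\check g_{\varepsilon_k}$-timelike for all large $k$, so $L_{g_k}(\mu)\to L_g(\mu)=d_g(p,q)$. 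Combining $L_{g_k}(\mu)\le d_{g_k}(p,q)=L_{g_k}(\gamma_k)$ with the length convergence gives $L_g(\gamma)\ge d_g(p,q)$, while the reverse inequality is automatic since $\gamma$ is $g$-causal from $p$ to $q$. Hence $L_g(\gamma)=d_g(p,q)$, so $\gamma$ is $g$-maximising (and in particular $g$-timelike).
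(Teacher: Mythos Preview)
Your proof is correct and follows essentially the same approach as the paper. For the $\hat g_{\varepsilon_k}$ case the paper simply cites \cite[Prop.\ 6.5]{S14}, whereas you spell out the direct comparison argument using $g\prec\hat g_{\varepsilon_k}$; for the $\check g_{\varepsilon_k}$ case both arguments proceed identically by producing a $g$-maximising timelike geodesic from $p$ to $q$ (your $\mu$, the paper's $c$), observing it is $\check g_{\varepsilon_k}$-timelike for large $k$, and comparing lengths.
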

\begin{proof}
    The case $\{g_k\} = \{\hat{g}_{\varepsilon_k}\}$ follows
    immediately from \cite[Prop.\ 6.5]{S14}, so we will only
    consider the case $\{g_k\} = \{\check{g}_{\varepsilon_k}\}$
    together with the additional assumptions of $(M,g)$ being
    globally hyperbolic and $p \ll_g q$ in this case. Due to global hyperbolicity, the space $C(p,q)$ of future directed causal curves from $p$ to $q$ (considered up to orientation preserving reparametrisations) is compact with respect to its natural topology (see \cite[Thm.\ 3.2]{S14}). 
    Consequently, the $h$-speeds of curves in $C(p,q)$ are uniformly bounded by
    some constant $C_1 >0$, where $h$ is some complete Riemannian metric on $M$.
    
    Since $(M,g)$ is globally hyperbolic, there is an $L_g$-maximising $g$-geodesic
    $c:[0,1] \to M$ with $c(0)=p$, $c(1)=q$ \cite[Prop.\ 6.4]{S14}. 
    Also, $p \ll_g q$ implies that $c$ is $g$-timelike. By compactness, there is a
    constant $C_2 > 0$ such that $g(\dot{c},\dot{c}) < -C_2$ on $[0,1]$, and hence
    $g_k(\dot{c},\dot{c}) < 0$ for $k$ large, implying that $c$ is $g_k$-timelike
    for such $k$. If we set $\delta_k:=d_h(g,g_k)$ (the $h$-distance of $g,
    g_k$ as in \cite[(1.6)]{CG}), 
    then
    \begin{align*}
        L_g(c) &= \int_0^1 \sqrt{-g(\dot{c},\dot{c})} dt \leq \int_0^1 \sqrt{-g_k(\dot{c},\dot{c}) + C_1^2 \delta_k} dt \leq L_{g_k}(c) + C_1 \sqrt{\delta_k}\\
        &\leq d_{g_k}(p,q) + C_1 \sqrt{\delta_k} = L_{g_k}(\gamma_k) + C_1 \sqrt{\delta_k} \to L_g(\gamma) \qquad (k\to \infty),
    \end{align*}
    so $L_g(\gamma) \geq L_g(c)$, which proves that also $\gamma$ is maximising from $p$ to $q$.
\end{proof}

\begin{Lemma}\label{Lemma:nulllimit}
    Let $\gamma_k \subseteq \overline{J^+(p)}$ be causal curves converging locally
    uniformly to a causal curve $\gamma$ with future endpoint $q$. If $d_g(p,q)=0$, then
    $\gamma$ is a maximising null geodesic.
\end{Lemma}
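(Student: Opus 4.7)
The plan is to show that the limit curve $\gamma$ is null and that it maximises between every pair of its points; Lemma \ref{Lemma: maxunbrokengeodesic} then finishes the proof. The standing observation I will use throughout is that $\overline{J^+(p)}$ is closed, so the locally uniform limit lies entirely in $\overline{J^+(p)}$, and by Lemma \ref{Lemma: basicpropertiesoffutures} this coincides with $\overline{I^+(p)}$; hence every point of $\gamma$ is a limit of points of $I^+(p)$.

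First I would prove that $\gamma$ is null. Writing $\gamma:[a,b]\to M$ with $\gamma(b)=q$ and $x:=\gamma(a)$, suppose for contradiction that $L(\gamma)>0$. Then a standard causality argument (positivity of length on a causal curve forces chronological relation of the endpoints in any causally plain spacetime, in particular for $C^1$-metrics, cf.\ \cite{CG}) gives $x\ll q$. By openness of the chronological relation (the remark after Lemma \ref{Lemma: pushup}) there is an open neighbourhood $U$ of $x$ with $U\subseteq I^-(q)$. Since $x\in\overline{I^+(p)}$, I can pick $p'\in U\cap I^+(p)$; then $p\ll p'\ll q$, and Lemma \ref{Lemma: pushup} yields $p\ll q$, contradicting $d_g(p,q)=0$. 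Hence $L(\gamma)=0$, i.e.\ $\gamma$ is null.

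Next I would show $\gamma$ is maximising. For any $s<t$ in $[a,b]$ the first step gives $L(\gamma|_{[s,t]})=0$, so it suffices to rule out $d_g(\gamma(s),\gamma(t))>0$. If this held, then $\gamma(s)\ll\gamma(t)$, and since $\gamma(t)\le q$ via $\gamma|_{[t,b]}$, Lemma \ref{Lemma: pushup} would give $\gamma(s)\ll q$. Now the same argument as before, applied with $\gamma(s)\in\overline{I^+(p)}$ in place of $x$, produces a point $p'\in I^+(p)$ with $p'\ll q$, hence $p\ll q$, contradicting $d_g(p,q)=0$. Thus $L(\gamma|_{[s,t]})=0=d_g(\gamma(s),\gamma(t))$, so $\gamma$ is maximising; being a maximising causal curve it is (a reparametrisation of) a causal geodesic by Lemma \ref{Lemma: maxunbrokengeodesic}, and it is null by the first step.

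The only delicate point is the implication ``a causal curve of positive length has chronologically related endpoints'', which can fail in general $C^0$-spacetimes (causality bubbles) but is valid in the causally plain $C^1$-setting; everything else is bookkeeping with push-up, openness of $\ll$, and the identification $\overline{J^+(p)}=\overline{I^+(p)}$ furnished by Lemma \ref{Lemma: basicpropertiesoffutures}.
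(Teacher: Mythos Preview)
Your argument is correct, but it takes a longer path than the paper's. The paper observes directly that $\gamma$ cannot meet $I^+(p)$: if $r\in\gamma\cap I^+(p)$, then $p\ll r\le q$ and push-up gives $p\ll q$, contradicting $d_g(p,q)=0$. Hence $\gamma\subseteq\overline{J^+(p)}\setminus I^+(p)=\partial J^+(p)$, which is achronal (Lemma \ref{Lemma: basicpropertiesoffutures}), so $\gamma$ is at once null and maximising, and Lemma \ref{Lemma: maxunbrokengeodesic} finishes. Your route via $\gamma\subseteq\overline{I^+(p)}$ and approximation by points $p'\in I^+(p)$ is a valid but unnecessary detour: once you know $\gamma\subseteq\overline{J^+(p)}$ and $\gamma\cap I^+(p)=\emptyset$, you have $\gamma\subseteq\partial J^+(p)$ and achronality gives everything in one stroke. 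The ``delicate point'' you flag (positive length implies chronologically related endpoints) is indeed the same ingredient that underlies the implication ``achronal causal curve $\Rightarrow$ null and maximising'', so neither proof avoids it; the paper's version simply packages it more efficiently.
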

\begin{proof}
    Clearly $\gamma \subseteq \overline{J^+(p)}$. If $\gamma$ would meet $I^+(p)$, then 
    there would exist some $r\in I^+(p)\cap J^-(q)$ and using push-up
    we could conclude that $q \in I^+(p)$, contradicting $d_g(p,q)=0$. Hence 
    $\gamma \subseteq \partial J^+(p)$, which is achronal and $\gamma$ is a 
    maximising null geodesic by Lemma \ref{Lemma: maxunbrokengeodesic}.
\end{proof}

\begin{Lemma}
\label{Lemma: nulllines}
Let $g_k=\check g_{\eps_k}$ and let $\gamma_k \subseteq \overline{J^+(p)}$  be $g_k$-null geodesics converging in $C^1_{\mathrm{loc}}$ to a $g$-geodesic $\gamma$ that ends in $q$. Assume that $d_g(p,q)=0$.
Then $\gamma$ is a maximising $g$-null geodesic.
\end{Lemma}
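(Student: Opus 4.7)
The plan is to reduce this directly to the preceding Lemma \ref{Lemma:nulllimit}, since its hypotheses are almost immediately satisfied once we read them through the relation $\check g_{\eps_k}\prec g$ and the $C^1_{\mathrm{loc}}$-convergence.

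First, I would verify that each $\gamma_k$ is a $g$-causal curve. By Lemma \ref{Lemma: approximatingmetrics}(i) we have $\check g_{\eps_k}\prec g$, so the light cone of $g_k$ is strictly contained in the light cone of $g$; in particular every $g_k$-null vector is $g$-timelike. Hence each $\gamma_k$ is a $g$-timelike (hence $g$-causal) curve contained in $\overline{J^+(p)}$. Next, $C^1_{\mathrm{loc}}$-convergence trivially implies locally uniform convergence of the $\gamma_k$ to $\gamma$, which by assumption has $q$ as future endpoint.

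Second, I would check that the limit $\gamma$ is itself a $g$-causal curve, in fact $g$-null. Since $g_k\to g$ in $C^0_{\mathrm{loc}}$ (by Lemma \ref{Lemma: approximatingmetrics}(ii)) and $\dot\gamma_k\to\dot\gamma$ in $C^0_{\mathrm{loc}}$, we can pass to the limit in $g_k(\dot\gamma_k,\dot\gamma_k)=0$ to obtain $g(\dot\gamma,\dot\gamma)=0$. Thus $\gamma$ is a $g$-null geodesic, in particular a $g$-causal curve.

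With these two observations, the hypotheses of Lemma \ref{Lemma:nulllimit} are met: a sequence of $g$-causal curves $\gamma_k\subseteq\overline{J^+(p)}$ converging locally uniformly to a $g$-causal curve $\gamma$ with future endpoint $q$, together with $d_g(p,q)=0$. The conclusion of that lemma then yields that $\gamma$ is a maximising $g$-null geodesic, as required. I do not anticipate any real obstacle here; the whole point is that the stronger regularity of the convergence and the specific form of the approximating metrics $\check g_{\eps_k}$ make the reduction essentially immediate, and the only thing worth stating carefully is why $\gamma_k$ and $\gamma$ qualify as $g$-causal in the sense used by Lemma \ref{Lemma:nulllimit}.
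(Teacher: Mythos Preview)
Your proposal is correct and follows exactly the paper's approach: the paper's proof simply reads ``This follows immediately from Lemma \ref{Lemma:nulllimit} upon noting that any $g_k$-null geodesic is $g$-causal,'' and you have spelled out precisely that reduction, together with the (easy) verification that the limit $\gamma$ is itself $g$-causal. Your additional observation that $\gamma$ is in fact $g$-null is a harmless elaboration, since it is part of the conclusion of Lemma \ref{Lemma:nulllimit} anyway.
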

\begin{proof}
This follows immediately from Lemma \ref{Lemma:nulllimit} upon noting that any $g_k$-null geodesic is $g$ causal.
\end{proof}

\begin{Lemma}\label{Lemma:achronal_closed}
Let $A$ be achronal. Then so is $\overline{A}$. Moreover, if $E^+(A)$ is closed, then $E^{\pm}(A) = E^{\pm}(\overline{A})$.
\end{Lemma}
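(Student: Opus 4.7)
My plan is to prove achronality of $\overline{A}$ by a density argument exploiting the openness of the chronological relation, and then to deduce the identity $E^{\pm}(A) = E^{\pm}(\overline{A})$ from two observations: $I^+(\overline{A}) = I^+(A)$ holds in full generality, while $\overline{A} \subseteq E^+(A) \subseteq J^+(A)$ follows from achronality together with closedness of $E^+(A)$.

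For the first part, suppose for contradiction that there exist $p, q \in \overline{A}$ with $p \ll q$. By the openness of $\ll$ (noted in the remark following Lemma \ref{Lemma: pushup}) there are neighborhoods $U_p \ni p$ and $U_q \ni q$ such that $x \ll y$ for all $x \in U_p$ and $y \in U_q$. Choosing $a \in U_p \cap A$ and $a' \in U_q \cap A$ produces $a, a' \in A$ with $a \ll a'$, contradicting achronality of $A$.

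The same openness yields $I^+(\overline{A}) \subseteq I^+(A)$: for any $q \in I^+(\overline{A})$ and any $p \in \overline{A}$ with $p \ll q$, some point $a$ of $A$ in the neighborhood of $p$ supplied by openness still satisfies $a \ll q$. The reverse inclusion is immediate, so $I^+(A) = I^+(\overline{A})$. Since $A$ is achronal one has $A \cap I^+(A) = \emptyset$, hence $A \subseteq J^+(A) \setminus I^+(A) = E^+(A)$, and closedness of $E^+(A)$ upgrades this to $\overline{A} \subseteq \overline{E^+(A)} = E^+(A) \subseteq J^+(A)$.

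With these ingredients in hand the identity follows quickly. The inclusion $E^+(A) \subseteq E^+(\overline{A})$ is immediate from $A \subseteq \overline{A}$ together with $I^+(A) = I^+(\overline{A})$. Conversely, for $q \in E^+(\overline{A})$ pick $p \in \overline{A}$ with $p \le q$; since $\overline{A} \subseteq J^+(A)$ there exists $a \in A$ with $a \le p$, whence $a \le q$ by transitivity and $q \in J^+(A)$. Combined with $q \notin I^+(\overline{A}) = I^+(A)$ this gives $q \in E^+(A)$. The corresponding statement for $E^-$ follows by time reversal, under the analogous hypothesis that $E^-(A)$ be closed. I anticipate no real obstacle here: every ingredient (openness of $\ll$, the consequence $A \subseteq E^+(A)$ of achronality, and transitivity of $\le$) is standard and fully available in the $C^1$ setting.
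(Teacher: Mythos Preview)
Your argument is correct and is the standard one; the paper itself does not spell out a proof but simply cites \cite[Lem.~A.8]{GGKS}, and your line of reasoning (openness of $\ll$ for achronality of the closure, $\overline{A}\subseteq E^+(A)$ from closedness, then transitivity of $\le$ for the horismos identity) is precisely what that citation is expected to contain. Your reading of the $\pm$ as a time-reversal shorthand---so that the $E^-$ identity requires the analogous hypothesis that $E^-(A)$ be closed---is also the intended interpretation.
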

\begin{proof}
This follows by the same proof as in \cite[Lem.\ A.8]{GGKS}.
\end{proof}

\begin{Lemma}
\label{Lemma: baseofglobhypnbhds}
Let $(M,g)$ be a $C^1$-spacetime. Then every point in $M$ has a neighbourhood base of globally hyperbolic neighbourhoods.
\end{Lemma}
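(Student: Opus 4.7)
The plan is to transfer the smooth basis of globally hyperbolic neighbourhoods to $g$ via the wider-cone smooth approximations $\hat g_\varepsilon$ of Lemma~\ref{Lemma: approximatingmetrics}. Fix $p\in M$ and a neighbourhood $V$ of $p$. For any $\varepsilon>0$, classical smooth causality theory applied to $(M,\hat g_\varepsilon)$ supplies an open set $U$ with $p\in U\subseteq V$ that is both $\hat g_\varepsilon$-causally convex in $M$ and $\hat g_\varepsilon$-globally hyperbolic (e.g.\ a small $\hat g_\varepsilon$-causal diamond inside a convex normal neighbourhood). It remains to verify that $U$ is automatically $g$-globally hyperbolic.

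For the causality of $(U,g)$: any $g$-causal curve is $\hat g_\varepsilon$-timelike because $g\prec\hat g_\varepsilon$, so a closed $g$-causal curve in $U$ would contradict the $\hat g_\varepsilon$-causality of $U$. For the compactness of the $g$-causal diamonds $J^+_U(x)\cap J^-_U(y)$ with $x,y\in U$, the same inclusion yields $J^+_U(x)\cap J^-_U(y)\subseteq K:=J^+_{\hat g_\varepsilon}(x)\cap J^-_{\hat g_\varepsilon}(y)\cap U$, which is compact in $U$ by $\hat g_\varepsilon$-global hyperbolicity together with $\hat g_\varepsilon$-causal convexity. Thus the $g$-diamond is relatively compact in $U$, and the remaining task is closedness.

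For closedness, take $z_k\in J^+_U(x)\cap J^-_U(y)$ with $z_k\to z\in U$ (the limit lies in the compact $K\subseteq U$); assume $z\neq x,y$, the other cases being trivial. Connect $x$ to $z_k$ by $g$-causal curves $\gamma_k\colon[0,b_k]\to U$ in $h$-arclength parametrisation; their images lie inside $K$, while $\hat g_\varepsilon$-strong causality of $U$ (implied by $\hat g_\varepsilon$-global hyperbolicity) bounds the $b_k$ uniformly from above, and $b_k\geq d_h(x,z_k)\to d_h(x,z)>0$ supplies a positive lower bound. Passing to a subsequence with $b_k\to b\in(0,\infty)$, Theorem~\ref{Theorem: limitcurvetheoremtwoconvergingendpoints} produces a $g$-causal limit curve from $x$ to $z$ with image in $K\subseteq U$, so $z\in J^+_U(x)$; the analogous argument for segments from $z_k$ to $y$ gives $z\in J^-_U(y)$. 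The principal subtlety, in the absence of a $C^1$-exponential map, is ensuring that this $g$-causal limit does not escape $U$; this is enforced by the $\hat g_\varepsilon$-causal convexity of $U$ together with the compactness of $K$.
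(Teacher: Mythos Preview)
Your argument is correct, but the paper takes a noticeably shorter route. Both proofs start the same way: pick a smooth metric $\hat g$ with $g\prec\hat g$ and use the smooth theory to obtain, around any point, a basis of $\hat g$-globally hyperbolic neighbourhoods $V$. From there the paper simply observes that any $\hat g$-Cauchy hypersurface in $V$ is also a $g$-Cauchy hypersurface in $V$ (a $g$-inextendible causal curve in $V$ is $\hat g$-causal and, since inextendibility is purely topological, $\hat g$-inextendible, hence meets the hypersurface), and then invokes \cite[Thm.\ 5.7]{S14} to conclude $g$-global hyperbolicity. You instead verify the defining conditions directly: causality is immediate, relative compactness of the $g$-diamonds comes from the inclusion into the $\hat g_\varepsilon$-diamonds, and closedness is established via the limit curve Theorem~\ref{Theorem: limitcurvetheoremtwoconvergingendpoints}, with $\hat g_\varepsilon$-causal convexity and compactness of $K$ keeping the limit inside $U$. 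Your approach is more self-contained (it avoids appealing to the Cauchy-hypersurface characterisation of global hyperbolicity for low-regularity metrics) but correspondingly longer; the paper's proof trades that work for one outside reference.
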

\begin{proof}
    Fix any smooth, time-orientable Lorentzian metric $\hat{g}$ with $g \prec \hat{g}$. For $x \in M$, \cite[Thm.\ 2.7]{MinguzziLivingReview} gives a neighbourhood base $V_m$ of neighbourhoods of $x$ that are globally hyperbolic with respect to $\hat{g}|_{V_m}$. Any Cauchy hypersurface for $\hat g$ in $V_m$ is also a Cauchy
    hypersurface for $g$, so the claim follows from \cite[Thm.\ 5.7]{S14}.
\end{proof}

\begin{Lemma}\label{Lemma:local_Cauchy_hypersurface} Let $S$ be a spacelike ($C^2$-) hypersurface and let $p\in S$. Then there exists a neighbourhood
$V$ of $p$ in $M$ such that $V\cap S$ is a Cauchy hypersurface in $V$.
\end{Lemma}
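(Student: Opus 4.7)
My plan is to realise $V$ as a small Alexandrov diamond $I^+(q_-)\cap I^-(q_+)$ around $p$ in coordinates adapted to $S$. Since $S$ is a $C^2$ hypersurface through $p$ that is $g$-spacelike at $p$, I first choose coordinates on a neighbourhood $W$ of $p$, centred at $p$, with $S\cap W=\{x^0=0\}$ and $g(p)=\eta$ (Minkowski), by straightening $S$ to a coordinate hyperplane and then applying a linear change of coordinates (preserving $\{x^0=0\}$) that orthonormalises $g(p)$. By continuity of $g$ and $g^{-1}$, after shrinking $W$ the covector $dx^0$ is $g$-timelike everywhere on $W$, so $x^0$ is a temporal function there: it is strictly monotone along every $g$-causal curve in $W$. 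For small $a>0$ I set $q_\pm:=(\pm a,0)\in W$ and $V:=I^+(q_-)\cap I^-(q_+)$, a relatively compact open neighbourhood of $p$ contained in $W$ that is a $C^0$-small perturbation of the Minkowski diamond $\{|x^0|+\|x'\|<a\}$. Combined with $x^0(q_\pm)=\pm a$, the temporal property forces $\partial V$ to split into a ``lower cone'' $\partial I^+(q_-)\cap\overline{J^-(q_+)}$ (lying close to $\{x^0\le 0\}$) and an ``upper cone'' $\overline{J^+(q_-)}\cap\partial I^-(q_+)$ (lying close to $\{x^0\ge 0\}$), meeting along an equator near $\{x^0=0,\,\|x'\|=a\}$.

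Next I verify the three defining properties of a Cauchy hypersurface for $S\cap V$ in $V$. Closedness of $S\cap V=\{x^0=0\}\cap V$ is immediate as a continuous level set in $V$, and acausality follows directly from the strict monotonicity of $x^0$ along $g$-causal curves. For the meeting property, let $\gamma:(\alpha,\beta)\to V$ be an inextendible-in-$V$ future-directed $g$-causal curve. Since $\overline V\subseteq W$ is compact and $x^0\circ\gamma$ is bounded and strictly monotone, the limits $\tau_\pm:=\lim_{t\to\alpha^+,\,\beta^-}x^0(\gamma(t))\in[-a,a]$ exist with $\tau_-<\tau_+$; after parametrising $\gamma$ by $x^0$ and noting that the spatial components $\phi(x^0)$ are uniformly Lipschitz in $x^0$ (by the uniform Minkowski-type bound on $g$-causal cones in $W$), they extend continuously to the endpoints, giving limit points $y_\pm\in\partial V$. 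Provided $\tau_-<0<\tau_+$, the intermediate value theorem applied to $x^0\circ\gamma$ yields $t_0\in(\alpha,\beta)$ with $\gamma(t_0)\in S\cap V$, as required.

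The main obstacle is thus to exclude $\tau_-\ge 0$ (the case $\tau_+\le 0$ being symmetric under time reversal). If $\tau_-\ge 0$ then $y_-$ lies on the upper cone or the equator of $\partial V$. Extracting a convergent subsequence of $h$-normalised tangent vectors $\dot\gamma(t_k)/\|\dot\gamma(t_k)\|_h$ at differentiability points $t_k\downarrow\alpha$ of the Lipschitz curve $\gamma$, and using that $\gamma$ approaches $y_-$ from within $V$, produces a nonzero future-directed $g$-causal vector $w\in T_{y_-}M$ which points into $V$ to first order. In the Minkowski case this is impossible: near $y_-\in\partial I^-(q_+)$ (with $y_-\ne q_+$, hence spatial part $y_-'\ne 0$), $V$ is locally the strict half-space $\{\delta x^0+\langle\hat y_-',\delta x'\rangle<0\}$ with $\hat y_-':=y_-'/\|y_-'\|$, and together with the causal-cone inequality $\|\delta x'\|\le\delta x^0$ this forces $\|\delta x'\|\le\delta x^0<-\langle\hat y_-',\delta x'\rangle\le\|\delta x'\|$, a strict contradiction; the argument at the equator uses both cones simultaneously and is even more restrictive. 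The strictness of these inequalities guarantees that the contradiction persists under sufficiently small $C^0$-perturbations of $g$ and of the cones $\partial I^\pm(q_\mp)$, which is achieved by taking $a$ small enough. Hence $\tau_-<0<\tau_+$ and $\gamma$ meets $S\cap V$, completing the proof.
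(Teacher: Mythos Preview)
Your handling of the ``main obstacle'' has a gap. The vector $w$ you extract is a subsequential limit of \emph{tangent} vectors $\dot\gamma(t_k)$ at interior points $\gamma(t_k)\in V$; there is no reason it should ``point into $V$ to first order'' from the boundary point $y_-$. What points into the closed tangent half-space of $V$ at $y_-$ is a limiting \emph{secant} direction $(\gamma(t_k)-y_-)/\|\gamma(t_k)-y_-\|_h$, and for a merely Lipschitz curve that need not agree with $w$. Even granting the half-space condition on $w$, your displayed chain of inequalities is not strict: passing to the limit you only obtain $\delta x^0+\langle\hat y_-',\delta x'\rangle\le 0$, and together with $\|\delta x'\|\le\delta x^0$ this is perfectly consistent with $w$ being the null generator of the cone (equality throughout). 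So there is no strictness to carry through a $C^0$-perturbation, and the argument does not close. The correct way to exclude $y_-\in\partial I^-(q_+)$ is push-up (valid for $C^1$-metrics): the causal extension of $\gamma$ to $y_-$ gives $y_-\le\gamma(t)\ll q_+$, hence $y_-\in I^-(q_+)$. But this still leaves $y_-\in\partial I^+_g(q_-)\cap I^-_g(q_+)$, and for a $C^1$-metric that set need not lie in $\{x^0<0\}$: the ``equator'' of the $g$-diamond is not tied to the coordinate level set $\{x^0=0\}$, and shrinking $a$ does not help since the discrepancy scales with $a$.

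The paper bypasses these issues by citing \cite[Lemma A.25]{hawkingc11}, and the clean argument runs as follows: pick a smooth $\hat g$ with $g\prec\hat g$ and $\hat g$ so $C^0$-close to $g$ that $S$ is still $\hat g$-spacelike near $p$; apply the smooth version of the lemma to $(\hat g,S,p)$ to obtain $V$ with $S\cap V$ a $\hat g$-Cauchy hypersurface; then every $g$-causal curve is $\hat g$-causal, inextendibility in $V$ is a purely topological notion, and $\hat g$-acausality of $S\cap V$ implies $g$-acausality. Hence $S\cap V$ is also a $g$-Cauchy hypersurface in $V$.
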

\begin{proof}
The proof of the analogous result \cite[Lemma A.25]{hawkingc11} also works in the $C^1$-setting.  
\end{proof}

\begin{definition}(Cauchy development and Cauchy horizon)

\noindent Let $A$ be an achronal set. The \textit{future Cauchy development} of $A$ is the set
\begin{align*}
    D^+(A):= \{x \in M: \text{ every past inextendible causal curve through }x\text{ meets }A\}.
\end{align*}
The \textit{future Cauchy horizon} of $A$ is defined as
\begin{align*}
    H^+(A):=\overline{D^+(A)} \setminus I^-(D^+(A)).
\end{align*}
The \textit{past Cauchy development and past Cauchy horizon} are defined analogously.
\end{definition}
The first part of the following result is partly a $C^1$-analogue of \cite[Lem.\ 14.51]{ON83}, where
it is proved using convex normal neighbourhoods, a tool that we do not have at our disposal in
the $C^1$-setting. The use of such neighbourhoods can, however, be avoided, as our arguments below illustrate.

\begin{Proposition}
\label{Proposition: closurecauchydevelopment}
Let $A$ be a closed, achronal set. Then
\begin{align*}
    \overline{D^+(A)} = \{x \in M: \text{every p.i. t.l. curve through }x\text{ meets }A\}
\end{align*}
and
\begin{align*}
    \partial D^+(A) = A \cup H^+(A).
\end{align*}
\end{Proposition}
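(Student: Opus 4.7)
The plan is to prove the two equations in turn, using the first to establish the second; throughout, convex normal neighbourhoods (unavailable at $C^1$-regularity) will be replaced by globally hyperbolic neighbourhoods from Lemma \ref{Lemma: baseofglobhypnbhds}. Denote the right-hand side of the first equation by $\tilde D^+(A)$. The inclusion $D^+(A) \subseteq \tilde D^+(A)$ is immediate. For $\overline{D^+(A)} \subseteq \tilde D^+(A)$, I would show $\tilde D^+(A)$ is closed: given $x_n \in \tilde D^+(A)$ with $x_n \to x$ and a past-inextendible timelike $\gamma$ through $x$, pick $y_k = \gamma(-1/k) \to x$, and by openness of $\ll$ form for large $n$ a timelike segment $\sigma_{n,k}$ from $y_k$ to $x_n$. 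Concatenating with $\gamma|_{(-\infty,-1/k]}$ yields a past-inextendible timelike curve through $x_n$ which meets $A$; if this meeting occurs on $\gamma|_{(-\infty,-1/k]}$ for some $k$ and infinitely many $n$, then $\gamma$ meets $A$, otherwise meeting points $a_{n,k}$ satisfy $y_k \ll a_{n,k} \ll x_n$ inside compact causal diamonds of a globally hyperbolic neighbourhood $V$ of $x$. These diamonds shrink to $\{x\}$ as $y_k, x_n \to x$, so a diagonal subsequence $a_{n(k),k} \to x$, forcing $x \in A$ by closedness of $A$, hence $\gamma \cap A \ne \emptyset$.

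For the reverse $\tilde D^+(A) \subseteq \overline{D^+(A)}$, I would take $x \in \tilde D^+(A)$, pick a future-directed timelike curve $c$ with $c(0) = x$, and set $x_n := c(1/n) \to x$, claiming $x_n \in D^+(A)$. Let $\mu$ be past-inextendible causal through $x_n$; assuming $\mu \cap A = \emptyset$ for contradiction, closedness of $A$ provides a tubular neighbourhood $U$ of $\mu$ with $U \cap A = \emptyset$. Using the global time-orientation and openness of the causal cones (a causal vector plus a small future-timelike vector is timelike), one perturbs $\mu$ inside $U$ into a past-inextendible timelike curve $\tilde\mu$ through $x$: near $x_n$ one connects $x$ to a point of $\mu$ by a short timelike segment (using $x \ll x_n$), and elsewhere one shifts the tangent of $\mu$ slightly into the interior of the cone. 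Then $\tilde\mu$ must meet $A$ by $x \in \tilde D^+(A)$, contradicting $\tilde\mu \subseteq U$, so $x_n \in D^+(A)$ and $x \in \overline{D^+(A)}$.

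For the second equation, the inclusion $A \cup H^+(A) \subseteq \partial D^+(A)$ is handled by achronality and pushup: if $p \in A \cap D^+(A)^\circ$, any $q \in D^+(A) \cap I^-(p)$ admits a past-inextendible timelike curve meeting $A$ at some $a \le q \ll p$, giving $a \ll p$ with both in $A$, contradicting achronality; and $H^+(A) \cap D^+(A)^\circ = \emptyset$ since any $q \in D^+(A) \cap I^+(p)$ would place $p$ in $I^-(D^+(A))$, against the definition of $H^+(A)$. For $\partial D^+(A) \subseteq A \cup H^+(A)$, let $p \in \partial D^+(A) \setminus A$; by the first equation $p \in \tilde D^+(A)$. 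Supposing $p \in I^-(D^+(A))$ with $p \ll q \in D^+(A)$, the critical step is $p \notin \overline{I^-(A)}$: otherwise $v_n \to p$ with $v_n \ll a_n \in A$, while a past-inextendible timelike curve from $p$ into the past meets $A$ at some $a' \ll p$ (using $p \in \tilde D^+(A) \setminus A$); openness of $I^+(a')$ gives $a' \ll v_n$ for large $n$, so $a' \ll v_n \ll a_n$, yielding $a' \ll a_n$ with both in $A$, contradicting achronality. Hence a neighbourhood $V$ of $p$ lies in $I^-(q) \setminus \overline{I^-(A)}$. For $v \in V$ and any past-inextendible causal $\nu$ through $v$, concatenation with a timelike $v \to q$ gives a past-inextendible causal curve through $q$ meeting $A$; meeting on the timelike segment would force $v \ll a \in A$, so $v \in I^-(A)$, contradicting $v \in V$, so $\nu$ meets $A$. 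Thus $V \subseteq D^+(A)$, contradicting $p \in \partial D^+(A)$.

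The principal difficulty I anticipate lies in the perturbation argument of the second direction of the first equation: constructing a past-inextendible timelike $\tilde\mu$ through $x$ that shadows $\mu$ within a prescribed tubular neighbourhood requires care at $C^1$-regularity, relying on the global timelike field and the openness of the causal cones in place of the unavailable convex normal neighbourhoods. A secondary delicate point is the step $p \notin \overline{I^-(A)}$ in the second equation, which combines openness of $\ll$, the characterization just obtained in the first equation, and achronality of $A$ in a subtle way that produces the required chronological chain violating achronality.
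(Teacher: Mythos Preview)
Your argument for the first inclusion ($\overline{D^+(A)}\subseteq \tilde D^+(A)$) and for the second equation is essentially correct and close in spirit to the paper's proof. However, there is a genuine error in your proof of the reverse inclusion $\tilde D^+(A)\subseteq \overline{D^+(A)}$.

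You take $x\in\tilde D^+(A)$ and set $x_n:=c(1/n)$ along a \emph{future}-directed timelike curve from $x$, so that $x\ll x_n$, and then claim $x_n\in D^+(A)$. This claim is false in general. In two-dimensional Minkowski space with $A=\{0\}\times[-1,1]$, the apex $x=(1,0)$ lies in $\tilde D^+(A)$, but $x_n=(1+1/n,0)$ is not in $D^+(A)$: the past-directed null ray from $x_n$ crosses $\{t=0\}$ at spatial coordinate $1+1/n>1$, hence misses $A$. Correspondingly, your perturbation construction cannot succeed here: every past-inextendible timelike curve from $x$ \emph{does} meet $A$, so no $\tilde\mu$ avoiding $A$ exists, even though $\mu$ (the null ray from $x_n$) avoids $A$. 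The underlying problem is directional: from $x\ll x_n$ and $\mu$ past-inextendible causal from $x_n$, one cannot conclude that $\mu$ ever enters $I^-(x)$, so there is no way to ``connect $x$ to a point of $\mu$'' by a past-directed timelike segment.

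The paper avoids this by arguing contrapositively: starting from $q\notin\overline{D^+(A)}$, one picks $r\in I^-(q)$ still outside $\overline{D^+(A)}$ and obtains a past-inextendible causal curve $\alpha$ from $r$ missing $A$. Now $q\gg r\ge\alpha(t)$ for all $t$, so by push-up $q\gg\alpha(t)$ for every $t$, and the standard zigzag construction (O'Neill, Lem.~14.30) produces a past-inextendible timelike curve from $q$ shadowing $\alpha$ in $M\setminus A$. Your approach becomes correct if you instead take $x_n$ in the \emph{past} of $x$ (i.e., $x_n:=c(-1/n)$); the resulting argument is then precisely the paper's with $q=x$ and $r=x_n$.
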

\begin{proof}
Let $T$ be the set on the right hand side in the first equation.

\noindent $\overline{D^+(A)}\subseteq T$: Suppose there is a point $p \in \overline{D^+(A)}\setminus T$. Then there is some past inextendible timelike (future directed) curve $\alpha:(a,0] \to M$ with $\alpha(0) = p$ not meeting $A$. Let $p_k \in D^+(A)$, $p_k \to p$ and let $U_k$ be open neighbourhoods of $p$ with $U_k \to \{p\}$. By choosing subsequences, we may assume that $p_k \in I^+_{U_k}(\alpha(-1/k))$ and $\alpha|_{[-1/k,0]} \subseteq U_k$ for all $k$.\\
Construct now past inextendible causal curves $\alpha_k$ by connecting $\alpha(-1/k)$ to $p_k$ in a timelike way such that these curves stay in $U_k$, and let the rest of $\alpha_k$ (i.e.\ the part to the past of $\alpha(-1/k)$) be the original curve $\alpha$. Then the $\alpha_k$ are all past inextendible timelike, and since $p_k \in D^+(A)$ they must meet $A$ at some $a_k$. Since $\alpha$ does not meet $A$, the $a_k$ have to lie on the replaced piece. But these pieces are entirely in $U_k$, which go to $\{p\}$, so we see that $a_k \to p$. Since $A$ is closed, we have $p \in A$, which is absurd since $p \notin T \supseteq A$ by assumption.

\noindent $\overline{D^+(A)}\supseteq T:$ Suppose $q \notin \overline{D^+(A)}$, and let $r \in I^-_{M\setminus \overline{D^+(A)}}(q)$. Since in particular $r \notin D^+(A)$, there is a past inextendible causal curve $\alpha$ from $r$ missing $A$. Precisely as in the smooth case \cite[Lem.\ 14.30(1)]{ON83}, noting that pushup still holds in $C^1$), there is a past inextendible timelike curve from $q$ not meeting $A$. Thus $q \notin T$.

Finally, $\partial D^+(A) = A \cup H^+(A)$ follows exactly as in the smooth case (cf.\ \cite[Lem.\ 14.52]{ON83}).
\end{proof}
We note that the previous result in fact remains true even for locally Lipschitz proper cone structures \cite[Thm.\ 2.35, 2.36]{Min_closed_cone}.

\begin{Lemma}
\label{Lemma: horizonclosedachronal}
Let $A$ be achronal. Then $H^+(A)$ is a closed, achronal set. 
\end{Lemma}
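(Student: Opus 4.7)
The plan is to verify the two properties separately, with each reducing to elementary topological/causal facts already established in the excerpt.

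For closedness, I would write $H^+(A) = \overline{D^+(A)} \cap (M \setminus I^-(D^+(A)))$. The first factor is closed by construction, and $I^-(D^+(A)) = \bigcup_{x \in D^+(A)} I^-(x)$ is open by Lemma \ref{Lemma: futureopen}, so its complement is closed. Hence $H^+(A)$ is the intersection of two closed sets and therefore closed. This step is completely routine and does not use any feature specific to $C^1$-regularity.

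For achronality, I would argue by contradiction. Suppose $p, q \in H^+(A)$ with $p \ll q$. Since the chronological relation is open (as noted in the remark after Lemma \ref{Lemma: pushup}, which follows once openness of $I^{\pm}$ is established), there exists an open neighbourhood $U$ of $q$ such that $p \ll x$ for every $x \in U$. Because $q \in \overline{D^+(A)}$, there is some $x_0 \in U \cap D^+(A)$. Then $p \ll x_0$ with $x_0 \in D^+(A)$, so $p \in I^-(D^+(A))$. But $p \in H^+(A) = \overline{D^+(A)} \setminus I^-(D^+(A))$, a contradiction. Therefore no two points of $H^+(A)$ can be chronologically related, which is exactly achronality.

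The main obstacle could have been the use of tools unavailable in $C^1$ (such as convex normal neighbourhoods), but neither part of the argument needs them: closedness is purely topological given openness of $I^-(\cdot)$, and achronality only uses the openness of the chronological relation, both of which have already been secured in the $C^1$-setting by Lemma \ref{Lemma: futureopen} and the subsequent remark. Thus the smooth-case proof (cf.\ \cite[Lem.\ 14.53]{ON83}) transfers verbatim, and the result in fact holds in the much more general setting of \cite{Min_closed_cone}.
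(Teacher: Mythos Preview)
Your proof is correct and is precisely the standard argument the paper defers to: the paper's own proof simply states that \cite[Prop.\ 3.15]{MinguzziLivingReview} carries over unchanged to $C^1$-metrics, and what you have written is that argument spelled out in detail. There is nothing to add or correct.
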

\begin{proof} The proof of \cite[Prop.\ 3.15]{MinguzziLivingReview} carries over unchanged to $C^1$ metrics.
\end{proof}

\begin{Lemma}
\label{Lemma: meetingtimelikepast}
Let $A$ be closed and achronal. If $x \in D^+(A) \setminus H^+(A)$, then every past inextendible causal curve through $x$ meets $I^-(A)$.
Moreover, if $x\in D(A)^\circ$, then every future (resp.\ past) inextendible  future (resp.\ past) directed causal curve emanating from $x$
intersects $I^+(A)$ (resp.\ $I^-(A)$).
\end{Lemma}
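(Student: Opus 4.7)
The plan is to prove the first part by combining push-up (Lemma \ref{Lemma: pushup}) with the structural consequence of $x \notin H^+(A)$, namely the existence of $y \in D^+(A)$ with $x \ll y$. Let $\alpha$ be a past-inextendible causal curve through $x$. If $y \in A$, then $x \in I^-(A) \cap \alpha$ gives the conclusion at once. Otherwise, I fix a timelike curve $\sigma$ from $x$ to $y$; any $p \in \sigma \cap A$ different from $x$ yields $x \ll p \in A$ by push-up, so I may assume $\sigma \cap A \subseteq \{x\}$. Concatenating $\alpha$ with $\sigma$ produces a past-inextendible causal curve $\beta$ through $y$, which by $y \in D^+(A)$ must meet $A$; the constraint on $\sigma$ then forces the intersection to lie on $\alpha$ (at $x$ or strictly past $x$). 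Let $q = \alpha(t_0)$ denote the earliest such intersection, which exists since $A$ is closed.

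In the main step, I analyze the past continuation $\alpha|_{[t_0, \infty)}$ emanating from $q \in A$. If it contains any timelike sub-arc, push-up delivers $\alpha(s) \ll q \in A$ for a suitable $s > t_0$, and we are done. Otherwise $\alpha|_{[t_0, \infty)}$ is everywhere null, and arguing by contradiction (assuming $\alpha \cap I^-(A) = \emptyset$) we have $\alpha(s) \not\ll q$ for every $s > t_0$, so every sub-arc is maximizing; Lemma \ref{Lemma: maxunbrokengeodesic} then forces $\alpha|_{[t_0, \infty)}$ to be an unbroken past-inextendible null geodesic emanating from $q$. Eliminating this residual configuration is the main obstacle. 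The plan is a deformation argument exploiting $q \ll y$ with $y \in D^+(A) \setminus A$: for small $\epsilon > 0$, openness of $\ll$ yields $\alpha(t_0+\epsilon) \ll y$, so a timelike curve $\sigma_\epsilon$ from $\alpha(t_0+\epsilon)$ to $y$ concatenated with $\alpha|_{[t_0+\epsilon, \infty)}$ is a past-inextendible causal curve through $y \in D^+(A)$, which must meet $A$ at some $p_\epsilon$. The case $p_\epsilon$ strictly inside $\sigma_\epsilon$ yields the desired conclusion $\alpha(t_0+\epsilon) \in I^-(A)$, while $p_\epsilon = y$ is excluded by $y \notin A$; the remaining cases $p_\epsilon = \alpha(t_0+\epsilon)$ or $p_\epsilon$ further past on $\alpha$ are ruled out by combining achronality of $A$ with the null-geodesic rigidity of $\alpha|_{[t_0, \infty)}$, potentially after refining $\sigma_\epsilon$ to produce a past-inextendible timelike curve through $y$ avoiding $A$ altogether, contradicting $y \in D^+(A)$ via Proposition \ref{Proposition: closurecauchydevelopment}.

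For the second part, the decomposition $D(A) = D^+(A) \cup D^-(A)$ with $D^+(A) \cap D^-(A) = A$ (a consequence of push-up and achronality) and the standard identifications $\mathrm{Int}(D^\pm(A)) = I^\pm(A) \cap I^\mp(D^\pm(A))$ partition $D(A)^\circ$ into $\mathrm{Int}(D^+(A))$, $\mathrm{Int}(D^-(A))$, and $A \cap D(A)^\circ$. In the first piece $x \in I^+(A)$, so every future-inextendible causal curve trivially contains $x \in I^+(A)$, while $x \in I^-(D^+(A))$ gives $x \notin H^+(A)$ so the first part handles the past direction; the second piece is symmetric by time-reversal. In the third piece, $x \in A$ admits an open neighborhood $V \subseteq D(A)$ which necessarily meets both $I^+(x) \cap D^+(A)$ and $I^-(x) \cap D^-(A)$ (the alternative placements being forbidden by $I^+(A) \cap J^-(A) = \emptyset$, itself a push-up consequence of achronality), placing $x$ in $(D^+(A) \setminus H^+(A)) \cap (D^-(A) \setminus H^-(A))$; the first part and its time-dual then conclude.
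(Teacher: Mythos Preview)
The paper does not give a self-contained proof here; it simply observes that the smooth arguments of \cite[Prop.~3.27, 3.42]{MinguzziLivingReview} carry over verbatim to $C^1$-metrics. Your attempt is therefore a genuinely different, direct approach, and your treatment of the second part (the decomposition of $D(A)^\circ$ using $D^+(A)\cap D^-(A)=A$ and $\mathrm{Int}(D^\pm(A))=I^\pm(A)\cap I^\mp(D^\pm(A))$) is correct and cleanly reduces everything to the first part and its time-dual.

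For the first part, however, there is a real gap. Having reduced to the situation where $\alpha|_{[t_0,\infty)}$ is a past-inextendible maximising null geodesic issuing from $q\in A$, you try to close the argument by concatenating a single timelike segment $\sigma_\epsilon$ from $\alpha(t_0+\epsilon)$ to $y$ with the null tail of $\alpha$. But nothing prevents this concatenated curve from meeting $A$ only on the $\alpha$-portion: achronality of $A$ does not forbid several points of $A$ on a single null geodesic, and the phrase ``null-geodesic rigidity of $\alpha|_{[t_0,\infty)}$'' does not by itself produce a contradiction. Your escape clause, ``potentially after refining $\sigma_\epsilon$ to produce a past-inextendible timelike curve through $y$ avoiding $A$'', is precisely the step that carries all the content, and it is not carried out.

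What is actually needed (and what the standard proof does from the outset, without the preliminary detour through finding $q$ and the null-geodesic dichotomy) is the construction of a past-inextendible \emph{timelike} curve $\gamma$ from $y$ with $\gamma\subseteq I^+(\alpha)$: one picks parameters $t_k\to\infty$, notes $\alpha(t_k)\le x\ll y$ so $\alpha(t_k)\ll y$, and inductively chooses points $y_k\in I^+(\alpha(t_k))$ close to $\alpha(t_k)$ with $y_{k+1}\ll y_k$, so that the concatenated timelike curve is past-inextendible (it leaves every compact set since $\alpha$ does). Then $y\in\overline{D^+(A)}$ together with Proposition~\ref{Proposition: closurecauchydevelopment} forces $\gamma$ to meet $A$ at some $a\in I^+(\alpha)$, whence $\alpha$ meets $I^-(a)\subseteq I^-(A)$. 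Once you supply this construction, your case analysis becomes superfluous; without it, the argument is incomplete.
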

\begin{proof}
This follows from \cite[Prop.\ 3.27]{MinguzziLivingReview} and \cite[Prop.\ 3.42]{MinguzziLivingReview}, which still hold for $C^1$-metrics. 
\end{proof}

\begin{Lemma}
\label{Lemma: meetinghorizon}
Let $A$ be closed, achronal. Let $x \in J^+(A) \setminus D^+(A)$ or $x \in I^+(A) \setminus D^+(A)^{\circ}$. Then every causal curve from $x$ to $A$ meets $H^+(A)$.
\end{Lemma}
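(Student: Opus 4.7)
The plan is to reparametrise the given causal curve as a future-directed $\alpha:[0,1]\to M$ with $\alpha(0)=a\in A$ and $\alpha(1)=x$, and to pinpoint the last parameter at which it sits in $\overline{D^+(A)}$. The set $T:=\{t\in[0,1]:\alpha(t)\in\overline{D^+(A)}\}$ contains $0$ (since $A\subseteq D^+(A)$) and is closed by continuity of $\alpha$ and closedness of $\overline{D^+(A)}$.

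The key technical claim I would establish first is a ``past pushup into $\overline{D^+(A)}$'': whenever $p\in J^+(A)$ and $p\ll q$ for some $q\in D^+(A)$, then $p\in\overline{D^+(A)}$. By the characterisation in Proposition \ref{Proposition: closurecauchydevelopment}, it suffices to show that every past inextendible timelike curve $\beta$ through $p$ meets $A$. Extending $\beta$ by prepending a future-directed timelike segment from $p$ to $q$ yields a past inextendible timelike curve $\tilde\beta$ through $q\in\overline{D^+(A)}$, so $\tilde\beta$ meets $A$ at some $a'$. If $a'\in\beta$ or $a'=p$, the conclusion is immediate; the remaining case $p\ll a'\in A$ (including $a'=q$) is excluded since achronality of $A$ together with Lemma \ref{Lemma: pushup} gives $I^-(A)\cap J^+(A)=\varnothing$, contradicting $p\in J^+(A)$.

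With this in hand, Case 1 ($x\in J^+(A)\setminus D^+(A)$) splits on $t_0:=\sup T$. If $t_0=1$, then $x\in\overline{D^+(A)}\setminus D^+(A)\subseteq\partial D^+(A)=A\cup H^+(A)$ by Proposition \ref{Proposition: closurecauchydevelopment}, and since $A\subseteq D^+(A)$ rules out $x\in A$, the endpoint $x$ already lies in $H^+(A)$. If $t_0<1$, then $\alpha(t_0)\in\overline{D^+(A)}$ by closedness of $T$. Assuming towards a contradiction that $\alpha(t_0)\in I^-(D^+(A))$, choose $q\in D^+(A)$ with $\alpha(t_0)\ll q$; openness of $\ll$ and continuity of $\alpha$ yield $\alpha(t)\ll q$ for $t$ slightly past $t_0$, and since $\alpha(t)\in J^+(A)$, the pushup claim forces $\alpha(t)\in\overline{D^+(A)}$, contradicting $t_0=\sup T$. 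Hence $\alpha(t_0)\in\overline{D^+(A)}\setminus I^-(D^+(A))=H^+(A)$.

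Case 2 ($x\in I^+(A)\setminus D^+(A)^\circ$) reduces to Case 1 when $x\notin D^+(A)$; otherwise $x\in D^+(A)\setminus D^+(A)^\circ\subseteq\partial D^+(A)=A\cup H^+(A)$, and achronality of $A$ combined with $x\in I^+(A)$ excludes $x\in A$, leaving $x\in H^+(A)$. I expect the only real obstacle to be the past pushup claim, since the extension $\tilde\beta$ has to be analysed carefully to rule out the degenerate scenario where the only intersection with $A$ lies on the prepended timelike segment; it is precisely the interplay of achronality of $A$ with the push-up lemma that closes this loophole, and no low-regularity subtleties enter beyond those already encoded in the causality toolbox of the appendix.
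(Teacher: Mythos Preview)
Your argument is correct and is essentially the standard proof that the paper defers to (it cites \cite[Lem.\ A.12]{GGKS} and notes that only Lemma \ref{Lemma: futureopen}, Lemma \ref{Lemma: pushup}, and Proposition \ref{Proposition: closurecauchydevelopment} are needed, which are precisely the tools you invoke). Your ``past pushup'' claim and the $\sup T$ analysis reproduce that argument faithfully.
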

\begin{proof}
    The proof for $C^{1,1}$-metrics \cite[Lem.\ A.12]{GGKS} still holds in $C^1$ because of Lemma \ref{Lemma: futureopen}, Lemma \ref{Lemma: pushup}, 
    and Proposition \ref{Proposition: closurecauchydevelopment}.
\end{proof}

Using these results, one establishes the following formula for the timelike future of the Cauchy horizon of a closed, achronal set precisely as in the $C^{1,1}$-case, cf.\ \cite[Lem.\ A.13]{GGKS}.

\begin{Lemma}
\label{Lemma: futureofhorizon}
Let $A$ be closed and achronal. Then $I^+(H^+(A)) = I^+(A) \setminus \overline{D^+(A)}$.
\end{Lemma}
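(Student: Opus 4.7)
The plan is to establish the two set-theoretic inclusions separately, in each case leaning heavily on Proposition \ref{Proposition: closurecauchydevelopment} (the characterization $\overline{D^+(A)}=\{x:\text{every p.i.\ t.l.\ curve through }x\text{ meets }A\}$ and $\partial D^+(A)=A\cup H^+(A)$) together with the push-up property (Lemma \ref{Lemma: pushup}) and the openness of the timelike relation (Lemma \ref{Lemma: futureopen}). No new $C^1$-specific machinery should be required; in fact the argument mirrors the smooth and $C^{1,1}$ versions once these basic tools are in place.

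For the inclusion $I^+(H^+(A))\subseteq I^+(A)\setminus\overline{D^+(A)}$, I would first pick $p\in H^+(A)$ with $p\ll q$. Since $p\in \overline{D^+(A)}$, Proposition \ref{Proposition: closurecauchydevelopment} applied to a past-inextendible timelike curve starting at $p$ produces a point of $A$ in the past of (or equal to) $p$, so $p\in J^+(A)\cup A\subseteq \overline J^+(A)$; push-up then delivers $q\in I^+(A)$. To see $q\notin \overline{D^+(A)}$, I argue by contradiction: if $q_k\in D^+(A)$ with $q_k\to q$ (or simply $q\in D^+(A)$), then by openness of $\ll$ we have $p\ll q_k$ for large $k$, which means $p\in I^-(D^+(A))$, contradicting $p\in H^+(A)=\overline{D^+(A)}\setminus I^-(D^+(A))$.

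For the reverse inclusion, take $q\in I^+(A)\setminus\overline{D^+(A)}$, so in particular $q\in I^+(A)\setminus D^+(A)$. Fix a future-directed timelike curve $\alpha$ from some $a\in A$ to $q$; reversing orientation gives a past-directed causal curve from $q$ to $A$. Lemma \ref{Lemma: meetinghorizon} applies (because $q\in I^+(A)\setminus D^+(A)^\circ$, as $q\notin \overline{D^+(A)}\supseteq D^+(A)^\circ$) and yields a point $r\in H^+(A)$ lying on $\alpha$. Since $q\notin \overline{D^+(A)}\supseteq H^+(A)$ we have $r\neq q$, so $r\ll q$ along $\alpha$ and hence $q\in I^+(H^+(A))$, completing the argument.

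The only mildly delicate step is the closure case in the forward direction, namely excluding $q\in \overline{D^+(A)}\setminus D^+(A)$; this is exactly where one invokes the openness of the timelike relation, which is known to hold in $C^1$-spacetimes via Lemma \ref{Lemma: futureopen} and the remark following Lemma \ref{Lemma: pushup}. Everything else is a straightforward translation of the smooth argument, and no appeal to convex normal neighbourhoods or other tools unavailable in $C^1$ is needed.
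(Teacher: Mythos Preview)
Your proof is correct and follows essentially the same route as the paper, which simply defers to \cite[Lem.\ A.13]{GGKS} (the $C^{1,1}$-case) after noting that the required ingredients---Proposition \ref{Proposition: closurecauchydevelopment}, Lemma \ref{Lemma: meetinghorizon}, push-up, and openness of $\ll$---are all available in $C^1$. Your explicit write-up of both inclusions matches that standard argument.
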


\begin{Lemma}
\label{Lemma: edgeofcauchyhorizon}
Let $A$ be closed and achronal. Then $\edge(H^+(A)) \subseteq \edge(A)$.
\end{Lemma}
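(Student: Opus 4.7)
The strategy is to establish the chain of inclusions $\edge(H^+(A)) \subseteq H^+(A) \cap A \subseteq \edge(A)$; since $H^+(A)$ is closed by Lemma \ref{Lemma: horizonclosedachronal}, the containment $\edge(H^+(A)) \subseteq H^+(A)$ is automatic, so it suffices to prove $\edge(H^+(A)) \subseteq A$ (Step 1) and $H^+(A) \cap A \subseteq \edge(A)$ (Step 2).

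For Step 1, I would argue by contradiction: suppose $x \in \edge(H^+(A)) \setminus A$. Since $A$ is closed, fix an open neighborhood $V$ of $x$ with $V \cap A = \emptyset$. The edge property of $H^+(A)$ at $x$ yields a timelike curve $\gamma \subseteq V$ from some $p \in I_V^-(x)$ to some $q \in I_V^+(x)$ avoiding $H^+(A)$; since $\gamma$ also avoids $A$, it avoids $\partial D^+(A) = A \cup H^+(A)$ by Proposition \ref{Proposition: closurecauchydevelopment}. Being connected, $\gamma$ then lies either in $D^+(A)^\circ$ or in $M \setminus \overline{D^+(A)}$; the first case would give $q \in D^+(A)$ and hence, via $x \ll q$, $x \in I^-(D^+(A))$, contradicting $x \in H^+(A)$. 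Thus $\gamma \subseteq M \setminus \overline{D^+(A)}$ and in particular $p \notin \overline{D^+(A)}$. I would then claim $I_V^-(x) \subseteq \overline{D^+(A)}$, which is the desired contradiction. For any $z \in I_V^-(x)$ pick a timelike curve $\tau \subseteq V$ from $z$ to $x$, and let $\sigma$ be any past-inextendible timelike curve emanating from $z$; the concatenation obtained by first traversing $\tau$ from $x$ backwards to $z$ and then continuing along $\sigma$ is a past-inextendible timelike curve through $x$, which meets $A$ by Proposition \ref{Proposition: closurecauchydevelopment} applied at $x \in H^+(A) \subseteq \overline{D^+(A)}$. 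Since $\tau \subseteq V$ is disjoint from $A$, this intersection must lie on $\sigma$, and Proposition \ref{Proposition: closurecauchydevelopment} then delivers $z \in \overline{D^+(A)}$.

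For Step 2, I would prove the contrapositive: if $a \in A \setminus \edge(A)$, then $a \notin H^+(A)$. Since $a \in A \subseteq \overline{D^+(A)}$, it is enough to exhibit some $y \in D^+(A)$ with $a \ll y$, so that $a \in I^-(D^+(A))$. By Lemma \ref{Lemma: achronaledgehypersurface}, $A$ is a topological hypersurface near $a$, and in a small globally hyperbolic neighborhood $V$ of $a$ (Lemma \ref{Lemma: baseofglobhypnbhds}) the set $A \cap V$ is locally a Cauchy hypersurface (the standard extension of Lemma \ref{Lemma:local_Cauchy_hypersurface} from $C^2$-spacelike to $C^0$-achronal topological hypersurfaces, valid in the $C^1$-setting). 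For any $y \in I^+(a) \cap V$ sufficiently close to $a$, every past-inextendible causal curve in $M$ emanating from $y$ enters $V$ and, by the Cauchy property, must cross $A \cap V \subseteq A$ before exiting $V$; hence $y \in D^+(A)$ as required. The principal technical point of the whole argument is the claim $I_V^-(x) \subseteq \overline{D^+(A)}$ in Step 1: it rests on the timelike-curve characterisation of $\overline{D^+(A)}$ in Proposition \ref{Proposition: closurecauchydevelopment} together with the disjointness $V \cap A = \emptyset$, which forces every $A$-intersection of the concatenated curve onto the past-inextendible part $\sigma$.
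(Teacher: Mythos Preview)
Your Step 1 is correct and is essentially the first half of the argument in \cite[Lem.\ A.14]{GGKS} that the paper invokes. The problem lies in Step 2: the inclusion $H^+(A)\cap A \subseteq \edge(A)$ is \emph{false} in general, so your overall chain of inclusions cannot work.

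A counterexample is the null hyperplane $A=\{(t,x):t=x\}$ in two-dimensional Minkowski space. This set is closed and achronal with $\edge(A)=\emptyset$ (any timelike curve from $I_U^-(a)$ to $I_U^+(a)$ must cross $\{t-x=0\}$ by the intermediate value theorem). One computes $D^+(A)=A$: for any point with $t>x$, the past-directed null ray in the direction $(-1,-1)$ never meets $A$. Hence $H^+(A)=A$, so $H^+(A)\cap A=A\not\subseteq\emptyset=\edge(A)$. The flaw in your argument for Step 2 is the appeal to a ``standard extension'' of Lemma \ref{Lemma:local_Cauchy_hypersurface} to achronal topological hypersurfaces: such an extension fails precisely because an achronal set may contain null segments and therefore need not be acausal, let alone a local Cauchy hypersurface. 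In the counterexample, no neighbourhood of $a\in A$ has $A\cap V$ as a Cauchy hypersurface, and indeed $I^+(a)\cap D^+(A)=\emptyset$.

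The paper's route (via \cite[Lem.\ A.14]{GGKS}) avoids this by arguing directly: once $x\in\edge(H^+(A))\cap A$ is established (your Step 1), for any neighbourhood $U$ one takes the timelike curve $\gamma\subseteq U$ from $p'\in I_U^-(x)$ to $q'\in I_U^+(x)$ avoiding $H^+(A)$. Since $x\in H^+(A)$, Lemma \ref{Lemma: futureofhorizon} gives $q'\in I^+(H^+(A))=I^+(A)\setminus\overline{D^+(A)}$. Then Lemma \ref{Lemma: meetinghorizon} forces any causal curve from $q'$ to $A$ to meet $H^+(A)$; as $\gamma$ avoids $H^+(A)$, it cannot meet $A$ either. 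Thus $\gamma$ itself witnesses $x\in\edge(A)$. This is why the paper singles out Lemmas \ref{Lemma: meetinghorizon} and \ref{Lemma: futureofhorizon} as the essential ingredients.
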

\begin{proof} Based on Lemmas \ref{Lemma: meetinghorizon} and \ref{Lemma: futureofhorizon}, the proof of \cite[Lem.\ A.14]{GGKS} carries over to the $C^1$-setting.
\end{proof}

\begin{Lemma}
\label{Lemma: horizonoflightcone}
Let $A$ be an achronal set, then $H^+(\partial J^+(A))$ is a closed, achronal, topological hypersurface.
\end{Lemma}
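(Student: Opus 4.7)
The plan is to chain together the structural lemmas that have already been established so that the statement reduces to an observation about edges. Set $B := \partial J^+(A)$. By Lemma \ref{Lemma: lightconetophsf}, $B$ is a closed, achronal topological hypersurface, and Lemma \ref{Lemma: achronaledgehypersurface} then immediately yields $\edge(B) = \emptyset$. This is the key input that allows us to use the edge-comparison lemma for Cauchy horizons.

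Next, I would invoke Lemma \ref{Lemma: horizonclosedachronal} applied to $B$: since $B$ is achronal, $H^+(B) = H^+(\partial J^+(A))$ is automatically closed and achronal. So the only nontrivial content left to verify is that $H^+(B)$ is a topological hypersurface, which by Lemma \ref{Lemma: achronaledgehypersurface} is equivalent to the vanishing of $\edge(H^+(B))$.

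For this last point I would apply Lemma \ref{Lemma: edgeofcauchyhorizon}, which requires $B$ to be closed and achronal (both granted by Lemma \ref{Lemma: lightconetophsf}). It gives the inclusion
\[
\edge\bigl(H^+(\partial J^+(A))\bigr) \;=\; \edge(H^+(B)) \;\subseteq\; \edge(B) \;=\; \emptyset,
\]
so $\edge(H^+(B)) = \emptyset$. By Lemma \ref{Lemma: achronaledgehypersurface} again, this forces $H^+(\partial J^+(A))$ to be a closed topological hypersurface, completing the proof.

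Since every step is a direct application of a previously recorded lemma, I do not anticipate a genuine obstacle; the only thing to be careful about is that Lemma \ref{Lemma: edgeofcauchyhorizon} is stated for closed achronal sets, which is exactly what Lemma \ref{Lemma: lightconetophsf} provides for $\partial J^+(A)$ (note that we do \emph{not} need $A$ itself to be closed or achronal here). In particular, no separate argument is required for closedness of $H^+(\partial J^+(A))$ or for its achronality, both of which are absorbed into Lemma \ref{Lemma: horizonclosedachronal}.
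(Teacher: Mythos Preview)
Your proof is correct and follows essentially the same route as the paper: closedness and achronality of $H^+(\partial J^+(A))$ via Lemma~\ref{Lemma: horizonclosedachronal}, then $\edge(H^+(\partial J^+(A)))\subseteq \edge(\partial J^+(A))=\emptyset$ via Lemmas~\ref{Lemma: edgeofcauchyhorizon}, \ref{Lemma: lightconetophsf} and \ref{Lemma: achronaledgehypersurface}, with the conclusion again by Lemma~\ref{Lemma: achronaledgehypersurface}. Your remark that only $\partial J^+(A)$, not $A$ itself, needs to be closed and achronal for Lemma~\ref{Lemma: edgeofcauchyhorizon} to apply is exactly the point.
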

\begin{proof}
$H^+(\partial J^+(A))$ is closed and achronal by Lemma \ref{Lemma: horizonclosedachronal}, and by the previous result, combined with
Lemmas \ref{Lemma: achronaledgehypersurface} and \ref{Lemma: lightconetophsf},
$\edge(H^+(\partial J^+(A)))\subseteq \edge(\partial J^+(A)) = \emptyset$, so the claim follows from Lemma \ref{Lemma: achronaledgehypersurface}.
\end{proof}

\begin{Lemma}
\label{Lemma: horizonofclosureofhorismos}
Let $A$ be closed, achronal. Then $H^+(\overline{E^+(A)})\subseteq H^+(\partial J^+(A))$.
\end{Lemma}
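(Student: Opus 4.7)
The plan is to verify the two conditions defining $H^+(\partial J^+(A))$ for a given $p \in H^+(\overline{E^+(A)})$. The first, $p \in \overline{D^+(\partial J^+(A))}$, will follow cheaply from monotonicity of the Cauchy development. The second, $p \notin I^-(D^+(\partial J^+(A)))$, is the real content, and I would attack it by contradiction with a push-up argument combined with the generator structure of $\partial J^+(A)$ from Corollary~\ref{Corollary: curvesinlightcone}.

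First I would establish the set-theoretic containment $\overline{E^+(A)} \subseteq \partial J^+(A)$. Since $I^+(A) = J^+(A)^\circ$ by Lemma~\ref{Lemma: basicpropertiesoffutures}, every point of $E^+(A) = J^+(A) \setminus I^+(A)$ lies in $\overline{J^+(A)} \setminus J^+(A)^\circ = \partial J^+(A)$, and $\partial J^+(A)$ is closed by Lemma~\ref{Lemma: lightconetophsf}. Monotonicity of $D^+$ in the seed set then yields $D^+(\overline{E^+(A)}) \subseteq D^+(\partial J^+(A))$, and hence $p \in \overline{D^+(\overline{E^+(A)})} \subseteq \overline{D^+(\partial J^+(A))}$, delivering the first required condition.

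For the second condition I would proceed by contradiction: assume some $q \in D^+(\partial J^+(A))$ satisfies $p \ll q$. Using openness of chronology (cf.\ the remark after Lemma~\ref{Lemma: pushup}), I pick an intermediate point $q' \in I^+(p) \cap I^-(q)$ close to $q$ and aim to prove $q' \in D^+(\overline{E^+(A)})$; once this is done, $p \ll q'$ exhibits $p \in I^-(D^+(\overline{E^+(A)}))$, contradicting $p \in H^+(\overline{E^+(A)})$. To verify $q' \in D^+(\overline{E^+(A)})$, I take any past-inextendible causal curve $\beta$ from $q'$ and prepend a short timelike segment $\tau$ from $q$ to $q'$; the concatenation is past-inextendible from $q$ and therefore meets $\partial J^+(A)$ at some $r$. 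By closedness of $\partial J^+(A)$, $\tau$ can be chosen short enough to avoid $\partial J^+(A)$ whenever $q \notin \partial J^+(A)$, so $r \in \beta$. If $r \in J^+(A)$, then $r \in J^+(A) \cap \partial J^+(A) = E^+(A) \subseteq \overline{E^+(A)}$ and we are done; otherwise Corollary~\ref{Corollary: curvesinlightcone} furnishes a past-inextendible null generator of $\partial J^+(A)$ through $r$ lying in $\partial J^+(A) \setminus J^+(A)$.

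The main obstacle is precisely this second subcase, where $r$ may a priori sit in the ``phantom'' part $\partial J^+(A) \setminus \overline{E^+(A)}$ built from past-inextendible generators. To rule it out, I would combine achronality of $\partial J^+(A)$ (Lemma~\ref{Lemma: basicpropertiesoffutures}) with an approximation argument: pick $p_n \in D^+(\overline{E^+(A)})$ with $p_n \to p$ so that $p_n \ll q'$ for large $n$, and use Lemma~\ref{Lemma: meetingtimelikepast} together with the fact that the generator of $\partial J^+(A) \setminus J^+(A)$ through $r$ cannot intersect $\overline{J^+(A) \cap I^+(p_n)}$ without violating achronality. The residual edge case $q \in \partial J^+(A)$ is handled by first replacing $q$ by a slight past timelike perturbation, which still lies in $D^+(\partial J^+(A))$ by the push-up property (Lemma~\ref{Lemma: pushup}) and avoids $\partial J^+(A)$ by achronality, reducing to the generic case already treated.
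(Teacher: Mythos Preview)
Your structure is right: the inclusion $\overline{E^+(A)}\subseteq \partial J^+(A)$ and the resulting $p\in\overline{D^+(\partial J^+(A))}$ are fine, and you correctly isolate the hard direction as ruling out $p\in I^-(D^+(\partial J^+(A)))$. The gap is in your treatment of the subcase $r\in\partial J^+(A)\setminus J^+(A)$. Your proposed resolution --- picking $p_n\in D^+(\overline{E^+(A)})$ with $p_n\to p$, invoking Lemma~\ref{Lemma: meetingtimelikepast}, and asserting that the null generator through $r$ ``cannot intersect $\overline{J^+(A)\cap I^+(p_n)}$ without violating achronality'' --- does not form a coherent argument. The generator lies in the achronal set $\partial J^+(A)$ and there is no a priori chronological relation between it and the $p_n$; nothing you have written forces a violation of achronality, nor does Lemma~\ref{Lemma: meetingtimelikepast} (which concerns past-inextendible curves \emph{through} the $p_n$, not through $r$ or $q'$) connect to the situation. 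More fundamentally, you are trying to prove that a \emph{specific} $q'$ lies in $D^+(\overline{E^+(A)})$, but in the hard subcase you switch to seeking a contradiction to the case arising at all; these are different goals and you do not complete either.

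The paper's proof (carried over from \cite[Lem.~A.16]{GGKS}) uses, in addition to Corollary~\ref{Corollary: curvesinlightcone} and Proposition~\ref{Proposition: closurecauchydevelopment}, also Lemmas~\ref{Lemma: meetinghorizon} and~\ref{Lemma: futureofhorizon}, neither of which appears in your argument. The latter gives directly that any $q\in I^+(p)$ with $p\in H^+(\overline{E^+(A)})$ satisfies $q\in I^+(\overline{E^+(A)})\setminus\overline{D^+(\overline{E^+(A)})}$; combined with Proposition~\ref{Proposition: closurecauchydevelopment} this produces a past-inextendible \emph{timelike} curve from $q$ missing $\overline{E^+(A)}$, which (by achronality of $\partial J^+(A)$) meets $\partial J^+(A)$ exactly once. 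This is a cleaner starting point than your concatenation $\tau\cdot\beta$, and Lemma~\ref{Lemma: meetinghorizon} then supplies the missing link between the horizon and the generator structure that your achronality sketch does not. You should rework the hard subcase around these two lemmas rather than the ad hoc $p_n$-approximation.
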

\begin{proof}
The proof of \cite[Lem.\ A.16]{GGKS} carries over, using Corollary \ref{Corollary: curvesinlightcone}, Proposition \ref{Proposition: closurecauchydevelopment} 
and Lemmas \ref{Lemma: meetinghorizon}, \ref{Lemma: futureofhorizon}.
\end{proof}

\begin{Proposition}
\label{Proposition: globhypcauchydevelopment} 

Let $(M,g)$ be a $C^1$-spacetime and let $A\subseteq M$ be a closed, achronal set. Then the interior of its Cauchy development, i.e.\ the set $D(A)^{\circ} = (D^+(A) \cup D^-(A))^{\circ}$, if nonempty, is a globally hyperbolic $C^1$-spacetime when considered with the induced metric.
\end{Proposition}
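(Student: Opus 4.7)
The plan is to verify directly the two defining conditions of global hyperbolicity for $D(A)^\circ$ equipped with its induced $C^1$-Lorentzian metric and time-orientation (which make it a $C^1$-spacetime since $D(A)^\circ$ is open in $M$): $D(A)^\circ$ is causal, and every intrinsic causal diamond $J^+_{D(A)^\circ}(p)\cap J^-_{D(A)^\circ}(q)$ is compact.

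\emph{Causality.} Assume, for contradiction, a closed causal curve $\sigma$ of period $b>0$ through some $p\in D(A)^\circ$; iterating periodically gives a bi-inextendible causal curve $\tilde\sigma\colon\mathbb R\to M$ with $\tilde\sigma(kb)=p$ for all $k\in\mathbb Z$. Applying Lemma \ref{Lemma: meetingtimelikepast} to its past- and future-inextendible halves starting at $p\in D(A)^\circ$ produces $s_1,s_2>0$ and $a_1,a_2\in A$ with $\tilde\sigma(-s_1)\ll a_1$ and $a_2\ll\tilde\sigma(s_2)$. Choosing $k$ with $-s_1+kb>s_2$ and using periodicity $\tilde\sigma(-s_1+kb)=\tilde\sigma(-s_1)$, one has the causal relation $\tilde\sigma(s_2)\le\tilde\sigma(-s_1)$ along $\tilde\sigma$; two applications of push-up (Lemma \ref{Lemma: pushup}) to the chain $a_2\ll\tilde\sigma(s_2)\le\tilde\sigma(-s_1)\ll a_1$ yield $a_2\ll a_1$, contradicting the achronality of $A$.

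\emph{Compact diamonds.} Fix $p,q\in D(A)^\circ$ and a sequence $x_n\in J^+_{D(A)^\circ}(p)\cap J^-_{D(A)^\circ}(q)$, joined by causal curves $\sigma_n\colon[0,b_n]\to D(A)^\circ$ in $h$-arclength parametrization with $\sigma_n(s_n)=x_n$. I first show $(b_n)$ is bounded: otherwise, Theorem \ref{Theorem: limitcurvetheorem} applied in the given and in the reversed parametrizations produces, along a subsequence, a future-inextendible causal limit $\gamma\colon[0,\infty)\to M$ from $p$ and a past-inextendible causal limit $\gamma'\colon(-\infty,0]\to M$ into $q$. For fixed $T>0$ and $n$ large, the relation $\sigma_n(T)\le\sigma_n(b_n-T)$ persists in the $C^0_{\mathrm{loc}}$-limit as $\gamma(T)\le\gamma'(-T)$. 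Lemma \ref{Lemma: meetingtimelikepast} then yields $t_1,t_2>0$ and $a_1,a_2\in A$ with $a_1\ll\gamma(t_1)$ and $\gamma'(-t_2)\ll a_2$; for $T\ge\max(t_1,t_2)$, push-up applied to $a_1\ll\gamma(t_1)\le\gamma(T)\le\gamma'(-T)\le\gamma'(-t_2)\ll a_2$ gives $a_1\ll a_2$, again contradicting achronality.

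With $(b_n)$ bounded, Theorem \ref{Theorem: limitcurvetheoremtwoconvergingendpoints} extracts, along a subsequence, a causal limit $\sigma\colon[0,b]\to M$ from $p$ to $q$ with $x_n\to x:=\sigma(s)$ for some $s\in[0,b]$. \emph{The main obstacle} is to verify $\sigma\subseteq D(A)^\circ$, so that $x$ lies in the intrinsic causal diamond. Any potential boundary point $\sigma(t^*)\in\partial D(A)\subseteq A\cup H^+(A)\cup H^-(A)$ (by Proposition \ref{Proposition: closurecauchydevelopment}) is excluded by a case analysis. I handle $\sigma(t^*)\in H^+(A)$ in detail; the case $H^-(A)$ is symmetric. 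Assuming WLOG $q\in D^+(A)^\circ$, openness of $D(A)^\circ$ at $q$ yields $w\in D^+(A)$ with $q\ll w$, and push-up on $\sigma(t^*)\le q\ll w$ gives $\sigma(t^*)\ll w\in D^+(A)$, i.e., $\sigma(t^*)\in I^-(D^+(A))$, contradicting the definition $H^+(A)=\overline{D^+(A)}\setminus I^-(D^+(A))$. The case $\sigma(t^*)\in A\cap\partial D(A)$ is handled by combining the achronality-induced fact $A\cap D^\pm(A)^\circ=\emptyset$ (verified by push-up: $a\in A\cap D^+(A)^\circ$ would yield $p'\in D^+(A)^\circ$ with $p'\ll a$, whose past-inextensible curve meets $A$ at some $a''\neq p'$, so $a''\le p'\ll a$ forces $a''\ll a$ in $A$), openness at $p$ and $q$, push-up, and the causality of $D(A)^\circ$ already established to produce either a forbidden closed timelike curve in $D(A)^\circ$ or a forbidden timelike relation within $A$. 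Hence $\sigma\subseteq D(A)^\circ$ and the intrinsic diamond is sequentially compact.
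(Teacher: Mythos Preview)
Your strategy (verify causality, then compactness of causal diamonds via limit-curve arguments and Lemma \ref{Lemma: meetingtimelikepast}) is exactly what the paper does by deferring to \cite[Thm.\ 3.45]{MinguzziLivingReview}. However, several steps in your write-up are not justified in the $C^1$-setting.

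First, in the unbounded-$b_n$ case you claim that $\sigma_n(T)\le\sigma_n(b_n-T)$ ``persists in the $C^0_{\mathrm{loc}}$-limit as $\gamma(T)\le\gamma'(-T)$.'' This uses closedness of the causal relation, which need not hold for $C^1$-metrics (global hyperbolicity is what you are trying to prove). The repair is easy: once Lemma \ref{Lemma: meetingtimelikepast} produces $a_1\ll\gamma(t_1)$ and $\gamma'(-t_2)\ll a_2$, use openness of $I^\pm$ to get $a_1\ll\sigma_n(t_1)$ and $\sigma_n(b_n-t_2)\ll a_2$ for large $n$, then push up along $\sigma_n$ itself; no limiting causal relation is needed.

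Second, the step ``Assuming WLOG $q\in D^+(A)^\circ$'' is not legitimate: $q\in D(A)^\circ$ does not decompose into $q\in D^+(A)^\circ$ or $q\in D^-(A)^\circ$ (think of $q$ on a Cauchy surface). What you actually need is that $q\in D(A)^\circ$ implies $q\in I^-(D^+(A))$, which follows by taking $w\in I^+(q)\cap D(A)$ and, if $w\in D^-(A)$, using $w\in J^-(A)$ together with $A\subseteq D^+(A)$ and push-up.

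Third, the case $\sigma(t^*)\in A\cap\partial D(A)$ is only gestured at; the phrase ``to produce either a forbidden closed timelike curve in $D(A)^\circ$ or a forbidden timelike relation within $A$'' is not a proof, and I do not see how causality of $D(A)^\circ$ enters. A clean way to dispose of this case is to observe that if $a\in A\setminus H^+(A)$ then $a\in \overline{D^+(A)}\setminus H^+(A)\subseteq I^-(D^+(A))$, and symmetrically for $H^-(A)$; hence $A\cap\partial D(A)\subseteq H^+(A)\cup H^-(A)$, and the two horizon cases already treated cover everything. Without this (or an equivalent argument), the compactness part of your proof is incomplete.
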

\begin{proof} Based on Theorem \ref{Theorem: limitcurvetheorem} and Lemma \ref{Lemma: meetingtimelikepast}, this follows from the same
proof as in \cite[Thm.\ 3.45]{MinguzziLivingReview}.
\end{proof}

\begin{definition}(Strong causality)
\label{definition: strongcausality}

\noindent A $C^1$-spacetime $(M,g)$ is called \textit{strongly causal at} $p \in M$ if for any neighbourhood $U$ of $p$ there is a neighbourhood $V$ of $p$ with $V \subseteq U$ such that any causal curve starting and ending in $V$ is contained in $U$. $(M,g)$ is called \textit{strongly causal} if it is strongly causal at every point.
\end{definition}

\begin{Lemma}
\label{Lemma: equivalentdefofstrongcausality}
$(M,g)$ is strongly causal at $p$ if and only if for any neighbourhood $U$ of $p$ there is a neighbourhood $V$ of $p$ with $V \subseteq U$ such that no causal curve leaving $V$ ever returns.
\end{Lemma}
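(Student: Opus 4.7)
My plan is to prove the two implications separately, with the reverse being essentially tautological and the forward one reduced to constructing a small Alexandrov diamond around $p$ via push-up.

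For the ``$\Leftarrow$'' direction, I would argue directly: suppose $V \subseteq U$ has the property that no causal curve leaving $V$ returns. Then any causal curve $\sigma$ with both endpoints in $V$ cannot contain a point outside $V$, since such a point would witness $\sigma$ leaving and then returning. Hence $\sigma \subseteq V \subseteq U$, which is exactly strong causality at $p$ relative to the given $U$.

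For ``$\Rightarrow$'', the plan is to realise the desired $V$ as an Alexandrov diamond $I^+(q) \cap I^-(q')$ for $q \ll p \ll q'$ chosen deep inside two nested ``strongly causal'' layers. Given $U$, apply the hypothesis at $p$ twice in succession to obtain open neighbourhoods $U_2 \subseteq U_1 \subseteq U$ of $p$ such that every causal curve with endpoints in $U_1$ lies in $U$, and every causal curve with endpoints in $U_2$ lies in $U_1$. Time-orientability provides short timelike curves through $p$ in both directions, so we may pick $q \in I^-(p) \cap U_2$ and $q' \in I^+(p) \cap U_2$. Set $V := I^+(q) \cap I^-(q')$, which is an open neighbourhood of $p$ by Lemma \ref{Lemma: futureopen}.

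To check $V \subseteq U$: for any $z \in V$, concatenating timelike future-directed curves from $q$ to $z$ and from $z$ to $q'$ yields a causal curve from $q$ to $q'$ through $z$; since $q,q' \in U_2$ the nesting places this curve in $U_1 \subseteq U$, so $z \in U$. To check the no-return property, let $\sigma\colon [a,b]\to M$ be any future-directed causal curve with $\sigma(a),\sigma(b)\in V$. For any $t\in (a,b)$ we have $q \ll \sigma(a) \leq \sigma(t) \leq \sigma(b) \ll q'$, so two applications of Lemma \ref{Lemma: pushup} give $q \ll \sigma(t) \ll q'$, i.e.\ $\sigma(t) \in V$. Thus $\sigma\subseteq V$, and in particular $\sigma$ cannot leave $V$ and return.

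There is no serious obstacle here, since the two ingredients we rely on---openness of chronological futures/pasts (Lemma \ref{Lemma: futureopen}) and the push-up property (Lemma \ref{Lemma: pushup})---have already been established in the $C^1$-setting in the appendix. The only subtle point is why a single application of strong causality does not suffice: it would give us $V \subseteq U$ with causal curves forced merely into $U$, not into $V$. The double nesting is what lets us upgrade this: the outer layer $U_1$ controls the size of the diamond (keeping it inside $U$), while the inner layer $U_2$ provides the room to pick $q,q'$ so that the diamond itself is causally convex by push-up alone.
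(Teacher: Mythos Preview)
Your proof is correct; the paper's own proof simply cites \cite[Lem.\ 3.21]{ladder} and notes that the argument there carries over to $C^1$, so you have supplied the details the paper defers to the literature. The Alexandrov-diamond construction you give is in fact the standard argument behind that reference.

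One minor remark: the double nesting $U_2\subseteq U_1\subseteq U$ is not actually needed. A single application of strong causality (producing $U_1\subseteq U$ such that causal curves with endpoints in $U_1$ stay in $U$) already suffices: choosing $q,q'\in U_1$ gives $V=I^+(q)\cap I^-(q')\subseteq U$ by the same concatenation argument, and your push-up argument for the causal convexity of $V$ uses no nesting whatsoever---it relies only on the transitivity encoded in Lemma \ref{Lemma: pushup}. Your closing explanation of why two layers are required is therefore not quite right, though the redundancy does no harm to the proof itself.
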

\begin{proof}
    This follows from \cite[Lem.\ 3.21]{ladder}, which continues to hold in the $C^1$-case.
\end{proof}

\begin{Lemma}
\label{Lemma: sufficientconditionstrongcausality}
Let $(M,g)$ be chronological and suppose there are no inextendible maximising null geodesics in $M$. Then $(M,g)$ is strongly causal.
\end{Lemma}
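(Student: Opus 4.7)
I would argue by contradiction: assume $(M,g)$ is chronological with no inextendible maximising null geodesic, yet fails to be strongly causal at some $p \in M$. The target is an inextendible causal curve through $p$ that is achronal, so that by Lemma \ref{Lemma: maxunbrokengeodesic} it is an unbroken maximising null geodesic---contradicting the hypothesis.

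By the negation of Definition \ref{definition: strongcausality} (together with Lemma \ref{Lemma: equivalentdefofstrongcausality}), one first extracts a relatively compact neighbourhood $U$ of $p$, a nested family $V_k\searrow\{p\}$ with $V_k\subseteq U$, and $h$-arclength parametrised future-directed causal curves $\gamma_k:[0,b_k]\to M$ with $p_k := \gamma_k(0), q_k := \gamma_k(b_k)\in V_k$, each leaving $U$. I would then split on the size of $b_k$. If a subsequence satisfies $b_k\to b\in(0,\infty)$, Theorem \ref{Theorem: limitcurvetheoremtwoconvergingendpoints} produces a closed causal limit curve $\gamma:[0,b]\to M$ with $\gamma(0)=\gamma(b)=p$. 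If $b_k\to\infty$, Theorem \ref{Theorem: limitcurvetheorem}, applied once to $\gamma_k$ and once to the shifted curves $\beta_k(s):=\gamma_k(s+b_k)$ on $[-b_k,0]$, yields a future-inextendible causal limit $\sigma:[0,\infty)\to M$ from $p$ and a past-inextendible causal limit $\tau:(-\infty,0]\to M$ ending at $p$.

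The decisive step is achronality. The prototypical move is this: if $p\ll\sigma(s)$ for some $s>0$, openness of $\ll$ (the remark after Lemma \ref{Lemma: pushup}) yields $q_k\ll\gamma_k(s)$ for all large $k$; combined with the causal relation $\gamma_k(s)\le q_k$ available along $\gamma_k$ itself, push-up (Lemma \ref{Lemma: pushup}) forces $\gamma_k(s)\ll\gamma_k(s)$, contradicting chronology. Variants of this trick rule out any chronological relation between two points of $\sigma$, of $\tau$, or of $\gamma$, as well as the cross-case between $\sigma$ and $\tau$ (handled by first establishing $\tau(s')\le p\le\sigma(s)$ from bounded-length limit curve arguments applied to $\beta_k|_{[s',0]}$ and $\gamma_k|_{[0,s]}$, then chaining through $p$ and invoking push-up again).

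With achronality in hand, Lemma \ref{Lemma: maxunbrokengeodesic} promotes the limit to an unbroken maximising null geodesic. In the unbounded case this is already a bidirectionally inextendible maximising null geodesic. In the bounded case, periodic iteration of $\gamma$ yields one, with unbrokenness across the seams enforced once more by Lemma \ref{Lemma: maxunbrokengeodesic} (since broken causal curves cannot maximise). Either outcome contradicts the hypothesis. The main technical hurdle I anticipate is the achronality argument itself: systematically exploiting the ``return-to-$p$'' structure of $\gamma_k$ to manufacture chronology violations requires careful combinations of push-up with the openness of $\ll$, applied to well-chosen approximating points along the $\gamma_k$.
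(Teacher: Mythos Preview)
Your proposal is correct and follows precisely the standard limit-curve argument that the paper invokes (the paper's own proof merely cites the identical argument from \cite[Lem.~A.19]{GGKS}, noting that Theorem~\ref{Theorem: limitcurvetheorem} carries it over to $C^1$). Two minor remarks: in the cross-case $\tau(s')\ll\sigma(s)$, the relation $\tau(s')\le p\le\sigma(s)$ is immediate from $\tau,\sigma$ being causal through $p$---no auxiliary limit-curve step is needed there; the contradiction comes (as you indicate) by passing to the approximants $\beta_k(s')=\gamma_k(s'+b_k)$ and $\gamma_k(s)$, then using $\gamma_k(s)\le\gamma_k(s'+b_k)$ for large $k$ together with push-up.
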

\begin{proof} Using Theorem \ref{Theorem: limitcurvetheorem}, this follows by the same argument as \cite[Lem.\ A.19]{GGKS}.
\end{proof}

\begin{Lemma}
\label{Lemma: nontotalnonpartialimprisonment}
A strongly causal $C^1$-spacetime $(M,g)$ is non-totally and non-partially imprisoning: No future or past inextendible causal curve is contained in a compact set and no future or past inextendible causal curve returns to a compact set infinitely often.
\end{Lemma}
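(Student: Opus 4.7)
The plan is to establish non-total imprisonment first and then derive non-partial imprisonment from it; the past-directed statements follow by time reversal, so I focus on future inextendible causal curves $\gamma$, parametrized by $h$-arc length on $[0,L)$ with $L\le\infty$. If the image of $\gamma$ lies in a compact set $K$, completeness of $h$ together with the Lipschitz bound $d_h(\gamma(s),\gamma(t))\le|t-s|$ forces $L=\infty$: otherwise $\gamma(t)$ would be $d_h$-Cauchy as $t\to L^-$ and converge in $M$ to a future endpoint, contradicting inextendibility.

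For non-total imprisonment, assume $\gamma([0,\infty))\subseteq K$. Choose $t_n\to\infty$ with $\gamma(t_n)\to p\in K$. By strong causality at $p$ together with Lemma \ref{Lemma: equivalentdefofstrongcausality}, every neighbourhood $U$ of $p$ contains an open $V\ni p$ with the property that no causal curve that leaves $V$ ever returns. Since $\gamma(t_n)\in V$ for all sufficiently large $n$ and $\gamma$ cannot exit $V$ and re-enter, there exists $N$ with $\gamma|_{[t_N,\infty)}\subseteq V\subseteq U$. As $U$ was arbitrary, this yields $\gamma(t)\to p$ as $t\to\infty$, exhibiting $p$ as a future endpoint of $\gamma$ and contradicting inextendibility.

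For non-partial imprisonment, suppose $\gamma$ returns to a compact set $K$ infinitely often, i.e.\ there exist $t_n\to L^-$ with $\gamma(t_n)\in K$. By compactness, pass to a subsequence with $\gamma(t_{n_k})\to p\in K$, and choose $V\ni p$ with $\overline V$ compact and enjoying the no-return property above. For large $k$, $\gamma(t_{n_k})\in V$, so the no-return property forces the tail $\gamma|_{[t_{n_{k_0}},L)}$ to be entirely contained in $V\subseteq\overline V$ for some $k_0$. But this tail is itself a future inextendible causal curve (any future endpoint of the tail would also be one for $\gamma$), so non-total imprisonment applied to the tail delivers a contradiction.

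The only real subtlety is that in the $C^1$ setting one must rely on the purely topological formulation of curve inextendibility (endpoint-based, since geodesic uniqueness is lost and causal curves are only Lipschitz) and on the equivalent characterisation of strong causality in Lemma \ref{Lemma: equivalentdefofstrongcausality}, rather than on convex normal neighbourhoods; once these replacements are made, the argument is essentially the classical causal-theoretic one.
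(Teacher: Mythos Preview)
Your argument is correct and is precisely the classical proof the paper invokes by citing \cite[Lem.\ 14.13]{ON83}: accumulate at a point $p$, use the no-return neighbourhoods supplied by strong causality (your Lemma~\ref{Lemma: equivalentdefofstrongcausality}) to trap the tail, and conclude that $p$ is an endpoint; then bootstrap non-partial imprisonment from non-total by trapping the tail in a relatively compact no-return neighbourhood. Your closing remark that convex normal neighbourhoods are not needed is exactly the point the paper is making when it says the smooth proof ``holds even for continuous metrics''.
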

\begin{proof}
    The classical proof for the smooth case (see \cite[Lem.\ 14.13]{ON83}) holds even for continuous metrics.
\end{proof}

\begin{Corollary}
\label{Corollary: imagesofcausalcurvesclosed}
Let $(M,g)$ be a non-totally and non-partially imprisoning $C^1$-spacetime. Then the image of every inextendible causal curve is closed in $M$.
\end{Corollary}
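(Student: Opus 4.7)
The plan is a short argument directly from Lemma \ref{Lemma: nontotalnonpartialimprisonment}. Let $\gamma:I \to M$ be an inextendible causal curve, and let $p \in \overline{\gamma(I)}$; I would show $p \in \gamma(I)$. Pick $t_n \in I$ with $\gamma(t_n) \to p$, and pass to a monotone subsequence so that $t_n \to t^{*} \in \overline{I}$. If $t^{*} \in I$ we are done by continuity: $\gamma(t^{*}) = p$. Otherwise $t^{*}$ is an endpoint of $I$ not contained in $I$, and by symmetry I may assume $t^{*} = \sup I$ (the past-endpoint case is identical).

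Next, fix a compact neighbourhood $K$ of $p$, e.g.\ a closed $h$-ball $\overline{B_h(p,\eta)}$ whose interior contains $p$. For $n$ large, $\gamma(t_n)$ lies in the interior of $K$. I split into two sub-cases. First, suppose there exists $T < t^{*}$ with $T \in I$ such that $\gamma([T,t^{*})) \subseteq K$. Then $\gamma|_{[T,t^{*})}$ is again future inextendible, since any continuous extension to $t^{*}$ would extend all of $\gamma$, contradicting inextendibility of the original curve. Thus $\gamma|_{[T,t^{*})}$ is a future-inextendible causal curve entirely contained in the compact set $K$, contradicting non-total imprisonment.

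Otherwise, there exists a sequence $s_n \in I$ with $s_n \to t^{*}$ and $\gamma(s_n) \notin K$. Interleaving with the $t_n$, I obtain arbitrarily late parameter values at which $\gamma$ alternates between the interior of $K$ and its complement. By continuity the preimage $\gamma^{-1}(K) \cap [T,t^{*})$ therefore has infinitely many connected components accumulating at $t^{*}$ for every $T<t^{*}$, so $\gamma$ returns to the compact set $K$ infinitely often as one approaches the future endpoint. This contradicts non-partial imprisonment, completing the argument.

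There is no real obstacle here; the only point that requires a moment of care is justifying that $\gamma|_{[T,t^{*})}$ inherits future inextendibility from $\gamma$, which is immediate from the definition. The entire proof uses nothing beyond continuity of $\gamma$, compactness of $K$, and the two imprisonment statements collected in Lemma \ref{Lemma: nontotalnonpartialimprisonment}, so it applies verbatim in the $C^1$-setting.
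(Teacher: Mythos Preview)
Your argument is correct. The only cosmetic wrinkle is the phrase ``$t_n \to t^{*} \in \overline{I}$'': if $I$ is unbounded, $t^{*}$ may be $\pm\infty$, which is not literally in $\overline{I}\subseteq\R$. But you immediately pass to ``$t^{*} = \sup I$'', which covers this case, so nothing is actually missing.

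The paper takes a different route. It uses that a manifold is compactly generated, so it suffices to check that $\gamma(I)\cap K$ is closed in $K$ for every compact $K$. It then observes that non-total and non-partial imprisonment force $\gamma(I)\cap K$ to be a \emph{finite} union of closed arcs of $\gamma$, hence closed. Your proof is the direct sequential version: you pick a limit point, chase parameters, and derive a contradiction from the two imprisonment hypotheses one case at a time. The paper's argument is shorter and packages both imprisonment conditions into a single structural statement about $\gamma^{-1}(K)$, but it tacitly leaves the ``finite union of closed segments'' claim to the reader; your version unpacks exactly that claim, trading brevity for explicitness and avoiding any appeal to compact generation.
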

\begin{proof}
    Let $\gamma$ be any inextendible causal curve. Since $M$ is a manifold, the topology on $M$ is compactly generated. Let $K \subseteq M$ be any compact subset. By non-total and non-partial imprisonment, $\gamma \cap K$ is a finite union of closed segments of $\gamma$, hence $\gamma \cap K$ is closed in $K$. Since $K$ was arbitrary, $\gamma$ is closed in $M$.
\end{proof}

\begin{Lemma}\label{Lemma:stronglycausalHplus}
Let $(M,g)$ be a strongly causal $C^1$-spacetime and let $A \subseteq M$ be a closed, achronal subset. If $H^+(\overline{E^+(A)})$ is nonempty, then it is noncompact.
\end{Lemma}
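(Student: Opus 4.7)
I argue by contradiction: suppose $H := H^+(\overline{E^+(A)})$ is nonempty and compact, and set $B := \overline{E^+(A)}$ for brevity. The strategy is to construct, through an arbitrarily chosen $p \in H$, a past-directed causal curve $\gamma$ that is past-inextendible in $M$ and entirely contained in $H$; once this is achieved, compactness of $H$ will contradict the non-total imprisonment property of strongly causal spacetimes (Lemma \ref{Lemma: nontotalnonpartialimprisonment}), yielding the result.

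To construct the generator $\gamma$, I would take a sequence $p_n \in I^+(p)$ with $p_n \to p$, obtained for instance by flowing $p$ along any smooth future-directed timelike vector field. Since $p \in H = \overline{D^+(B)} \setminus I^-(D^+(B))$ and $I^+(p)$ is open (Lemma \ref{Lemma: futureopen}), no point of $I^+(p)$ can lie in $\overline{D^+(B)}$: otherwise $p$ would chronologically precede some element of $D^+(B)$, contradicting $p \in H$. In particular $p_n \notin D^+(B)$, and the definition of the future Cauchy development then furnishes past-inextendible causal curves $\alpha_n$ through $p_n$ disjoint from $B$. Parametrising each by $h$-arclength, non-total imprisonment ensures the hypotheses of Theorem \ref{Theorem: limitcurvetheorem} are satisfied, so I would extract a subsequence converging locally uniformly to a past-inextendible causal limit curve $\gamma$ from $p$.

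The crucial remaining step is to verify that $\gamma \subseteq H$. A point of $\alpha_n$ lying in $D^+(B)$ would give a past-inextendible subcurve missing $B$, violating the definition of $D^+$; hence each $\alpha_n$ lies in $M \setminus D^+(B)$ and the limit satisfies $\gamma \subseteq M \setminus D^+(B)^\circ$. Using Proposition \ref{Proposition: closurecauchydevelopment} to identify the boundary $\partial D^+(B) = B \cup H$, and combining this with $p \in \overline{D^+(B)}$, the push-up principle (Lemma \ref{Lemma: pushup}) to prevent $\gamma$ from escaping $\overline{D^+(B)}$, and the avoidance of $B$ by the approximants, one concludes $\gamma \subseteq H$. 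Past-inextendibility of $\gamma$ together with its containment in the compact set $H$ then contradicts Lemma \ref{Lemma: nontotalnonpartialimprisonment}.

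I expect the main obstacle to be the final containment $\gamma \subseteq H$: in the $C^1$ setting, without the refined generator theory available for smooth spacetimes, one must argue purely via Theorem \ref{Theorem: limitcurvetheorem} and careful causal tracking of the approximants, ensuring that the limit neither escapes $\overline{D^+(B)}$ nor collapses onto $B$ at interior parameters. All other steps are straightforward applications of the causality results collected in the appendix.
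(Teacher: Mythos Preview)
Your overall framework---construct a past-inextendible causal limit curve through $p\in H$ from approximants $\alpha_n$ avoiding $B$, then invoke Lemma \ref{Lemma: nontotalnonpartialimprisonment}---is the standard one and is what the paper's reference to \cite[Lem.\ 7.1]{GGKS} points to. But there is a real gap in your sketch of the containment $\gamma\subseteq H$, and it is not just a matter of careful tracking. As written, your argument uses only that $B$ is closed achronal, that the $\alpha_n$ avoid $B$, that $\partial D^+(B)=B\cup H^+(B)$, and push-up. For a generic closed achronal $B$ the conclusion is simply false: take $B$ a compact spacelike disk in Minkowski space; then $H^+(B)$ is a nonempty \emph{compact} null cone, and the very construction you describe yields a past-inextendible $\gamma$ that runs down a null generator of this cone, meets $\edge(B)$, and then exits $\overline{D^+(B)}$ into $I^-(B)$. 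In particular, ``avoidance of $B$ by the approximants'' does not pass to the limit because $B$ has empty interior, and push-up does not stop $\gamma$ from leaving $\overline{D^+(B)}$ once it reaches the edge.

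What is missing is exactly the input that distinguishes $\overline{E^+(A)}$ from an arbitrary closed achronal set: one must rule out that the generator terminates at an edge point. The route the paper relies on goes through Lemmas \ref{Lemma: horizonofclosureofhorismos}, \ref{Lemma: horizonoflightcone} and \ref{Lemma: edgeofcauchyhorizon}: using $H^+(\overline{E^+(A)})\subseteq H^+(\partial J^+(A))$ and the fact that $\partial J^+(A)$ is an \emph{edgeless} closed achronal hypersurface, the horizon generators are forced to be past-inextendible, and only then does the imprisonment contradiction go through. You should build this edge-freeness (equivalently, the specific form $B=\overline{E^+(A)}$) into the argument explicitly; without it the step you flagged as ``the main obstacle'' cannot be closed.
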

\begin{proof}
The proof of the smooth case goes through, because all the necessary preliminary results continue to hold in $C^1$ (see the outline in \cite[Lem.\ 7.1]{GGKS}).
\end{proof}

\begin{Lemma}\label{lemma:deplus}
Let $(M,g)$ be a strongly causal $C^1$-spacetime. Let $A \subseteq M$ be an achronal subset such that $E^+(A)$ is compact. Then there is a future inextendible timelike curve $\gamma$ contained in $D^+(E^+(A))^{\circ}$.
\end{Lemma}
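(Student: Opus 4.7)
Without loss of generality, $A$ is closed: Lemma \ref{Lemma:achronal_closed} gives achronality of $\overline{A}$, and since $E^+(A)$ is compact hence closed, the same lemma yields $E^+(A)=E^+(\overline{A})$. Writing $S:=E^+(A)$, the set $S$ is compact, closed, and achronal, and by Lemma \ref{Lemma: lightconetophsf}, $S\subseteq\partial J^+(A)$ with the latter being an edgeless topological hypersurface. Strong causality together with Lemma \ref{Lemma:stronglycausalHplus} implies that $H^+(S)$ is either empty or non-compact.

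I would realise the desired curve as a forward integral curve of a smooth, future-directed, complete timelike vector field $X$ on $M$ (complete after a suitable positive rescaling), starting from an appropriately chosen point. Completeness gives a global flow $\Phi$, and the orbit's future inextendibility in $M$ then follows from Lemma \ref{Lemma: nontotalnonpartialimprisonment}, since a strongly causal spacetime is non-partially imprisoning and the timelike orbit cannot accumulate at a point in $M$. First one checks $D^+(S)^\circ\neq\emptyset$: assuming $S\neq\emptyset$ (else the lemma is vacuous), pick $p\in S$ lying in the relative interior of $S$ inside the topological hypersurface $\partial J^+(A)$. For small $t>0$, $\Phi_t(p)\in I^+(S)$, and any past-inextendible causal curve from $\Phi_t(p)$ must first traverse $\partial J^+(A)$ in a small neighbourhood of $p$ lying inside $S$, so that $\Phi_t(p)\in D^+(S)$; this conclusion is open in $\Phi_t(p)$, placing it in $D^+(S)^\circ$.

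The core step is to find a starting point whose forward $\Phi$-orbit remains in $D^+(S)^\circ$ for all $t\ge 0$. If $H^+(S)=\emptyset$, this is immediate: by Proposition \ref{Proposition: closurecauchydevelopment} one has $\partial D^+(S)=S$, and a future-directed timelike orbit already inside $I^+(S)$ cannot return to the achronal set $S$; hence any starting point in $D^+(S)^\circ\cap I^+(S)$ produced above works. If $H^+(S)\neq\emptyset$, I would argue by contradiction: suppose every forward orbit entering $D^+(S)^\circ$ exits at some finite time $T(p)\in(0,\infty)$. Then by Proposition \ref{Proposition: closurecauchydevelopment} and achronality of $S$ the exit must lie in $H^+(S)$. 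Using lower semicontinuity of $T$ (from openness of $D^+(S)^\circ$) together with the fact that backward $X$-flow from any $q\in H^+(S)$ stays in $D^+(S)^\circ$ until it reaches the past boundary $S$, one realises $H^+(S)$ as the continuous image of a compact subset of $S$ under a return map built out of $\Phi$ and $T$. This forces $H^+(S)$ to be compact, contradicting Lemma \ref{Lemma:stronglycausalHplus}.

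The main obstacle is making the ``return map from $S$ to $H^+(S)$'' argument rigorous: one must ensure that backward $X$-flow from an arbitrary point of $H^+(S)$ genuinely terminates on $S$ rather than escaping through some edge-like locus, and that the map $p\mapsto \Phi_{T(p)}(p)$ is continuous on a compact domain covering $H^+(S)$. The choice of starting points in the relative interior of $S$ inside $\partial J^+(A)$, together with the transversality of the timelike field $X$ to this hypersurface, should furnish a tubular neighbourhood picture in which both the exit-time function is jointly continuous and the backward flow from $H^+(S)$ is forced to land on $S$.
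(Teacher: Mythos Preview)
Your proposal is correct and follows essentially the same approach as the paper, which simply defers to the classical smooth proof (Hawking--Penrose \cite[Lem.~2.12]{HP}, Kriele \cite[Lem.~9.3.3]{Krie}) combined with Lemma~\ref{Lemma:achronal_closed}. Your outline---reducing to closed $A$, flowing along a complete timelike vector field, and contradicting Lemma~\ref{Lemma:stronglycausalHplus} via a flow-induced surjection onto $H^+(E^+(A))$---is precisely that classical argument, and the technical obstacle you flag (continuity and surjectivity of the return map) is exactly the point where the standard references do the work.
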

\begin{proof}
The proof of the smooth case goes through in $C^1$, see e.g.\ \cite[Lem.\ 2.12]{HP}, or also \cite[Lem.\ 9.3.3]{Krie}, combined with 
Lemma \ref{Lemma:achronal_closed}.
\end{proof}

\begin{Proposition}
\label{Proposition: hypersufacecase}
Let $(M,g)$ be a $C^1$-spacetime and let $A \subseteq M$ be achronal and edgeless. Then $E^+(A) = A$, in particular if $A$ is compact so is $E^+(A)$.
\end{Proposition}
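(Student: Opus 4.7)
The inclusion $A \subseteq E^+(A)$ is immediate: any $p \in A$ lies in $J^+(A)$, and achronality gives $p \notin I^+(A)$. The content is the reverse inclusion $E^+(A) \subseteq A$, which I plan to prove by contradiction.

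Assume $q \in E^+(A)\setminus A$ and pick a future-directed causal curve $\gamma:[0,1]\to M$ from some $p\in A$ to $q$. Since $\edge(A)=\emptyset$, Lemma \ref{Lemma: achronaledgehypersurface} yields that $A$ is closed, so the set $\{t:\gamma(t)\in A\}$ is closed in $[0,1]$ and its supremum $t_0$ is strictly less than $1$. After replacing $p$ by $\gamma(t_0)$ and restricting, I may assume $\gamma(0)=p\in A$ with $\gamma(t)\notin A$ for all $t>0$. The structural input is then the local separation property at a non-edge point: since $p\in A\setminus\edge(A)$, the classical argument of \cite[Prop.\ 14.25]{ON83}---which only relies on flowing along a smooth timelike vector field (available by time-orientability of our $C^1$-spacetime) together with openness and push-up of the causal relations---produces a neighborhood $U$ of $p$ such that $A\cap U$ is a topological hypersurface with $U\setminus A = U^+ \sqcup U^-$, where $U^{\pm}=I^{\pm}_U(A\cap U)$. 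By continuity and connectedness, $\gamma((0,\delta))$ lies entirely in $U^+$ or entirely in $U^-$ for some $\delta>0$.

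The endgame is a short case distinction. If $\gamma(t)\in U^+\subseteq I^+(A)$ for some $t\in(0,\delta)$, then $\gamma(t)\le q$ together with Lemma \ref{Lemma: pushup} forces $q\in I^+(A)$, contradicting $q\in E^+(A)$. If instead $\gamma(t)\in U^-$, there exists $a\in A\cap U$ with $\gamma(t)\ll a$; combined with $p\le\gamma(t)$ and push-up this yields $p\ll a$. When $a\ne p$ this directly contradicts achronality of $A$; the residual possibility $a=p$, i.e.\ $p\ll p$, is excluded by openness of $\ll$ (Lemma \ref{Lemma: futureopen} and the remark following it), which would produce a neighborhood of $p$ in which every two points are chronologically related, and since $A$ is a topological hypersurface through $p$ this neighborhood meets $A$ in distinct points, again contradicting achronality. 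The ``in particular'' statement then follows immediately from $E^+(A)=A$.

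The only step requiring care in the $C^1$ setting is the local separation construction, but since it rests entirely on a smooth timelike vector field (guaranteed by the $C^1$-spacetime convention), the openness of $I^{\pm}$ and push-up, the classical argument transfers verbatim. I therefore expect no substantial analytical obstacle beyond book-keeping in the case analysis.
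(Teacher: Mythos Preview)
Your argument is correct and is essentially the classical proof the paper points to (the paper itself gives no argument, only citing \cite[Cor.~2.145]{MinguzziLivingReview} and asserting it carries over to $C^1$). One minor simplification: your separate treatment of the subcase $a=p$ is unnecessary, since $p\ll p$ with $p\in A$ already gives $p\in A\cap I^+(A)$, directly contradicting achronality.
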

\begin{proof}
This was shown e.g.\ in \cite[Cor. 2.145]{MinguzziLivingReview} and the proof carries over to $C^1$ spacetimes.
\end{proof}

	\section{Appendix: Extending vector fields in $C^1$-spacetimes uniformly}\label{app:b}
	When applying the genericity condition (Definition \ref{definition: strongdistributionalgenericity})
	we need to extend vectors or vector fields along curves to neighbourhoods, 
	due to the fact that distributional (or also $L^\infty$) curvature quantities
	are not well defined on points or along curves (as these are sets of measure zero).
	It is rather straightforward to extend a vector	field along a curve to a neighbourhood. For $C^1$-metrics one can still use parallel transport along certain curves spanning an open set. Alternatively, one can also use ``cylindrically constant'' (see below) extensions in coordinates around a (part of a) curve. 
	
	In our case, we need to simultaneously extend vector fields along a sequence of converging curves. To be more precise, let $\gamma_k$ be a sequence converging 
	in $C^1$ to some $\gamma$ and let $V_k$ be vector fields along $\gamma_k$ and $V$ along $\gamma$ such that $V_k \to V$ in $C^1$ as $k\to \infty$.
	Then given a point on $\gamma$, does it possess a neighbourhood $U$ such that all $V_k$ can be
	extended to vector fields on $U$ and such that their extensions converge in $C^1$ to the extension of $V$? 
	
	To begin with, we consider the situation of extending vector fields around a point of a curve to a neighbourhood by 
	cylindrically constant extension in coordinates. 

	\begin{Lemma}\label{lem:A1} (Cylindrically constant extension of vector fields on curves)
	
	\noindent Let $\gamma:I\to M$ be a $C^1$-curve which is regular at $0$, i.e.\ $\dot{\gamma}(0)\neq 0$ and set $p:=\gamma(0)$. 
	Then there exists a chart $(U,(x^1,\dots,x^n))$ around $p$ with the 
	following property: For any $C^1$ vector field $V$ along $\gamma$ there exists a  $C^1$ extension 
	$\tilde V$ of $V$ to $U$ such that, in these coordinates, $\tilde V$ is independent of $x^2,\dots,x^n$.
	\end{Lemma}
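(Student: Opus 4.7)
The plan is to use a $C^1$-version of the flow-box/straightening argument so that in appropriate coordinates $\gamma$ becomes a coordinate line, after which cylindrically constant extension is immediate.

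First, I would fix an arbitrary smooth chart $(W,\psi)$ around $p$ with $\psi(p)=0$, and choose vectors $e_2,\dots,e_n \in T_pM$ such that together with $\dot\gamma(0)$ they form a basis of $T_pM$. On a small neighbourhood of $0 \in \R^n$ I would then define
\[
\Phi(x^1,\dots,x^n) := \psi^{-1}\!\left(\psi(\gamma(x^1)) + \sum_{i=2}^n x^i\, \psi_*e_i\right).
\]
Since $\gamma$ is $C^1$, $\Phi$ is $C^1$, and its differential at $0$ sends $\partial_{x^1}\mapsto \dot\gamma(0)$ and $\partial_{x^i}\mapsto e_i$ for $i\ge 2$, hence is invertible. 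By the $C^1$ inverse function theorem, $\Phi$ is a $C^1$-diffeomorphism from a neighbourhood of $0$ onto a neighbourhood $U$ of $p$, and we take $(U,\Phi^{-1})$ as our chart with coordinates $(x^1,\dots,x^n)$. In these coordinates $\gamma$ is, near $0$, parametrised by $t\mapsto (t,0,\dots,0)$, so it coincides with the $x^1$-axis.

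Given any $C^1$ vector field $V$ along $\gamma$, write in this chart $V(t) = V^i(t)\,\partial_{x^i}|_{\gamma(t)}$ with $V^i\in C^1(I\cap\Phi(\cdot,0,\dots,0)^{-1}(U);\R)$. I would then define the extension
\[
\tilde V(x^1,\dots,x^n) := V^i(x^1)\,\partial_{x^i},
\]
which by construction is independent of $x^2,\dots,x^n$, agrees with $V$ along $\gamma$, and is of regularity $C^1$ on $U$ (since each $V^i$ is $C^1$ in $x^1$ and constant in the other variables, and the $\partial_{x^i}$ are $C^0$ smooth basis fields of the $C^1$-chart).

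The only mildly subtle point is that the straightening chart is only of regularity $C^1$, not smooth, so one must check that $\tilde V$, viewed back on $M$ via the transition to the original smooth atlas, is still $C^1$. This is automatic: $\Phi$ is a $C^1$-diffeomorphism, so pushing forward a $C^1$ vector field from $\R^n$ via $\Phi$ yields a $C^1$ field on $U$, which is all that is required.
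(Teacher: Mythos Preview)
There is a genuine gap. Your straightening chart $\Phi$ is only a $C^1$-diffeomorphism, and pushing a vector field through a $C^1$-diffeomorphism costs one derivative. Concretely: to extract the components $V^i(t)$ of $V$ in your chart you must apply $(D\Phi)^{-1}$, which is only $C^0$, so the $V^i$ are in general only $C^0$, not $C^1$. And even granting $C^1$ components, the coordinate frame $\partial_{x^i}=D\Phi\cdot e_i$ is only $C^0$ with respect to the smooth structure on $M$ (indeed $\partial_{x^1}|_{\gamma(t)}=\dot\gamma(t)$ is only $C^0$), so $\tilde V=V^i(x^1)\,\partial_{x^i}$ is a product of at best $C^1$ scalars with $C^0$ fields, hence only $C^0$. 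Your final claim, that the $\Phi$-pushforward of a $C^1$ field on $\R^n$ is $C^1$ on $U$, is exactly the step that fails: $\Phi_* X = D\Phi\cdot(X\circ\Phi^{-1})$ involves $D\Phi$, which is $C^0$ when $\Phi$ is merely $C^1$.

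The paper avoids this by \emph{not} straightening $\gamma$. It stays in a smooth chart (so the coordinate frame is smooth), observes that after a linear change one component $\gamma^1$ has $(\gamma^1)'>c>0$ and is therefore a $C^1$-diffeomorphism of intervals, and then sets $\tilde V(x):=\big(x,\,v((\gamma^1)^{-1}(x^1))\big)$, where $v$ are the components of $V$ in the smooth chart. Here $v$ is genuinely $C^1$ and $(\gamma^1)^{-1}$ is $C^1$, so the composition is $C^1$; no change of frame is involved. The price is that $\gamma$ is not the $x^1$-axis in these coordinates, but $\tilde V$ still depends only on $x^1$ and restricts to $V$ along $\gamma$, which is all the lemma asserts.
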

\begin{proof}
	Since the claim is local we may assume that $M=\R^n$, that $p=0$, and that $\gamma=(\gamma^1,\dots,\gamma^n)$
	with $\gamma_1'(t) > c > 0$ for some $c$ and all $t\in (a,b)$, where $a <0<b$. Thus $\gamma^1$ is a $C^1$-diffeomorphism from $(a,b)$ onto some interval $J$.
	Then we set $U:= J \times B^{n-1}_R(0)$, where $R>0$ is such that $\gamma((a,b))\sse U$. Given a $C^1$-vector field $V$ along $\gamma$, i.e., 
	a $C^1$-map $V:(a,b)\ni t \mapsto (\gamma(t),v(t))$ with $v\in C^1((a,b),\R^n)$, set $\tilde V: U \to U\times \R^{n}$,
	\[
	\tilde V(x) := (x,v((\gamma^1)^{-1}(x_1)))
	\]
	to obtain the desired extension.
\end{proof}

Now that we have set up a way of extending vector fields in a cylindrically constant fashion, we can 
implement an analogous procedure uniformly on a sequence of curves.

\begin{Lemma}\label{lem:A2}
	Let $\gamma_k:[-1,1]\to M$ be a sequence of $C^1$-curves converging in $C^1([-1,1])$ to the regular
	$C^1$-curve $\gamma:[-1,1]\to M$ and let $V_k$ and $V$ be $C^1$ vector fields along
	$\gamma_k$ and 	$\gamma$, respectively. Further, let $V_k\to V$ in $C^1$, i.e., both $V_k\to V$ and $V_k'\to V'$ in $TM$,
	uniformly on $[-1,1]$.
	Then there exists some open neighbourhood $U$ of $\gamma(0)$ and
	$C^1$ extensions $\tilde V_k$ of $V_k$ and $\tilde V$ of $V$ to $U$
	such that $\tilde V_k \to \tilde V$ in $C^1_{\mathrm{loc}}(U)$.
\end{Lemma}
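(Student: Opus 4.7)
The plan is to combine the cylindrically constant extension of Lemma \ref{lem:A1} with a quantitative uniform version of the inverse function theorem for the component $\gamma_k^1$.

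First, since the statement is local and $\gamma$ is regular at $0$, I would work in a chart around $p=\gamma(0)=0$ chosen so that $\dot\gamma^1(0)>0$. By the assumed $C^1$-convergence of $\gamma_k$ to $\gamma$ on $[-1,1]$, there exist $c>0$, $k_0\in\N$ and a (possibly smaller) interval $I_0=[-s,s]$ such that $\dot\gamma_k^1(t)>c$ for all $t\in I_0$ and $k\geq k_0$. Consequently each $\gamma_k^1\colon I_0\to\gamma_k^1(I_0)$ is a $C^1$-diffeomorphism. Since $\gamma_k^1(0)\to\gamma^1(0)=0$, the images $\gamma_k^1(I_0)$ all contain a common open interval $J=(-a,a)$ around $0$ for $k$ large (after possibly increasing $k_0$).

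Second, I would show that the inverses $\psi_k:=(\gamma_k^1)^{-1}\colon J\to I_0$ converge in $C^1(J)$ to $\psi:=(\gamma^1)^{-1}$. Uniform convergence of $\psi_k$ to $\psi$ follows from uniform convergence of $\gamma_k^1$ together with the uniform lower bound $\dot\gamma_k^1\geq c$: for $x\in J$,
\[
c\,|\psi_k(x)-\psi(x)|\leq |\gamma_k^1(\psi_k(x))-\gamma_k^1(\psi(x))|=|x-\gamma_k^1(\psi(x))|\leq \|\gamma_k^1-\gamma^1\|_\infty.
\]
For the derivatives, $\psi_k'(x)=1/\dot\gamma_k^1(\psi_k(x))$, so the $C^0$-convergence of $\psi_k$ together with the uniform $C^0$-convergence of $\dot\gamma_k^1$ and the lower bound $\dot\gamma_k^1\geq c$ give $\psi_k'\to \psi'$ uniformly on $J$.

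Third, with $v_k,v\colon [-1,1]\to\R^n$ the coordinate representatives of $V_k,V$, I define on the open set $U:=J\times B^{n-1}_R(0)$ (with $R>0$ small enough that $\gamma(I_0)\subseteq U$, which in particular guarantees $\gamma_k(\psi_k(J))\subseteq U$ for $k$ large) the cylindrically constant extensions
\[
\tilde V_k(x^1,x')\;:=\;v_k\bigl(\psi_k(x^1)\bigr),\qquad \tilde V(x^1,x')\;:=\;v\bigl(\psi(x^1)\bigr).
\]
By construction $\tilde V_k(\gamma_k(t))=v_k(\psi_k(\gamma_k^1(t)))=v_k(t)$ for $t\in I_0$, and similarly for $\tilde V$, so these are genuine $C^1$-extensions.

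Finally, I would verify $\tilde V_k\to\tilde V$ in $C^1_{\mathrm{loc}}(U)$. On a compact $K=[-a',a']\times\overline{B^{n-1}_{R'}(0)}\subset U$, the difference $\tilde V_k-\tilde V=v_k\circ\psi_k-v\circ\psi$ is estimated by
\[
\|v_k\circ\psi_k-v\circ\psi\|_{\infty,K}\leq \|v_k-v\|_\infty+\|v\circ\psi_k-v\circ\psi\|_{\infty,K},
\]
the first term vanishing by the hypothesis $V_k\to V$ in $C^1$, the second by uniform continuity of $v$ on $I_0$ and $\psi_k\to\psi$ uniformly. For the partial derivatives, $\partial_i\tilde V_k=0$ for $i\geq 2$ (so nothing to check there), while
\[
\partial_1\tilde V_k(x^1,x')\;=\;v_k'(\psi_k(x^1))\,\psi_k'(x^1),
\]
and an analogous splitting yields $\partial_1\tilde V_k\to\partial_1\tilde V$ uniformly on $K$ using $v_k'\to v'$ uniformly (hypothesis), $\psi_k'\to\psi'$ uniformly (Step two), and uniform continuity of $v'$ on $I_0$. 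This gives $C^1_{\mathrm{loc}}(U)$-convergence.

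The only genuinely non-routine point is the uniform $C^1$-control of the inverses $\psi_k$, which is where the regularity of $\gamma$ (via the uniform positive lower bound on $\dot\gamma_k^1$) is essential; everything else is a straightforward composition-continuity argument.
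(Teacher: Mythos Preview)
Your proposal is correct and follows essentially the same route as the paper's proof: reduce to a chart, use $C^1$-convergence to get a uniform lower bound on $\dot\gamma_k^1$, invert $\gamma_k^1$ on a common interval $J$, and define cylindrically constant extensions via $v_k\circ(\gamma_k^1)^{-1}$. The only difference is that where the paper invokes \cite[Cor.\ 1]{BDF91} for the $C^1_{\mathrm{loc}}$-convergence of the inverses $(\gamma_k^1)^{-1}\to(\gamma^1)^{-1}$, you supply a direct elementary argument using the mean value estimate and the explicit formula $\psi_k'=1/(\dot\gamma_k^1\circ\psi_k)$; this is a self-contained substitute for that citation and is entirely adequate.
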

\begin{proof} Again we may assume that $M=\R^n$. As in the previous Lemma we write
$\gamma=(\gamma^1,\dots,\gamma^n)$ and can assume without loss of generality that $\gamma(0)=0$, and that  $(\gamma^1)'(t) > c > 0$
for some $c$ and all $t\in (a,b)$, where $a <0<b$. Due to the assumption on the $C^1$-convergence of the $\gamma_k$,
we may additionally suppose that the same inequality holds on $(a,b)$ for each $(\gamma^1_k)'$. Thus each $\gamma^1_k$
and $\gamma^1$ itself are $C^1$-diffeomorphisms from $(a,b)$ onto their respective image. In addition (restricting to $n$ large if necessary), 
there exists a nontrivial interval $J$ around $0$ that is contained in $\bigcap_k \gamma^1_k((a,b)) \cap \gamma((a,b))$. 
Then we set $U:= J \times B^{n-1}_R(0)$, 
where $R>0$ is chosen such that $\bigcup_k \gamma_k((a,b)) \cup \gamma((a,b))\sse U$. We can write $V$ 
in the form $t \mapsto (\gamma(t),v(t))$ with $v\in C^1((a,b),\R^n)$, and analogously $V_k(t) = (\gamma_k(t),v_k(t))$,
with $v_k \to v$ in $C^1$. For $x\in U$ we set
\[
\tilde V(x) := (x,v((\gamma^1)^{-1}(x^1))) \qquad \tilde V_k(x) := (x,v_k((\gamma_k^1)^{-1}(x^1))),
\]
giving $C^1$-extensions of $V$ resp.\ $V_k$ to $U$. By \cite[Cor.\ 1]{BDF91} we have that $(\gamma_k^1)^{-1}$ converges to 
$(\gamma^1)^{-1}$ locally uniformly, and that the same is true for the first derivatives. Consequently, 
$\tilde V_k \to \tilde V$ in $C^1_{\mathrm{loc}}(U)$.
\end{proof}
As the proof shows, if the $V_k$ converge to $V$ merely in $C^0_{\mathrm{loc}}$ then one can still find $C^1$-extensions to $U$ that
also converge in $C^0_{\mathrm{loc}}$.

\vskip1em

\noindent{\em Acknowledgements.} This work was supported by project P 33594 of the Austrian Science Fund FWF and by the Uni:Docs program of the University of Vienna. We thank Eduard A.\ Nigsch for helpful discussions.


\begin{thebibliography}{00}

\bibitem{AGKS} Alexander, S., Graf, M., Kunzinger, M., S\"amann, C.,
Generalized cones as Lorentzian length spaces: Causality, curvature, and singularity theorems,
{\em Comm.\ Anal.\ Geom.}, to appear.

\bibitem{BDF91}
Barvinek, E., Daler, I., Francocircu, J.,
Convergence of sequences of inverse functions,
{\em Arch. Math. (Brno)}, 27B:201--204, 1991. 

\bibitem{Bou74}
Bourbaki, N.,
{\em Elements of mathematics. Algebra, Part I: Chapters 1-3.}
Hermann, Paris, 1974.

\bibitem{CV20} Cavalletti, F., Mondino, A.,
Optimal transport in Lorentzian synthetic spaces, synthetic timelike Ricci curvature lower bounds and applications,
{\tt arxiv: 2004.08934}

\bibitem{CG} Chru\'sciel, P.T., Grant, J.D.E., On Lorentzian causality with
continuous metrics, \emph{Classical Quantum Gravity} 29(14) 145001, 32 pp.\ 2012.

\bibitem{CGKM:18}
Chru\'sciel, P.T., Grant, J.D.E., Kunzinger M, and Minguzzi E.,
{\em {N}on-regular spacetime geometry}, Journal of Physics: Conference Series 968, 2018.

\bibitem{G20}  Graf, M., Singularity theorems for $C^1$-Lorentzian metrics,
\emph{Comm.\ Math.\ Phys.} 378 (2020), no. 2, 1417--1450. 

\bibitem{GGKS} Graf, M., Grant. J.D.E., Kunzinger, M., Steinbauer, R.,
The Hawking--Penrose singularity theorem for $C^{1,1}$-Lorentzian metrics
{\em Comm.\ Math.\ Phys.} 360, 1009--1042, 2018.

\bibitem{Sbierski2} Galloway, G., Ling, E., Sbierski, J.,
Timelike completeness as an obstruction to \(C^{0}\)-extensions,
{\em Commun.\ Math.\ Phys.} 359 (3), 937--949, 2018.
 
\bibitem{GS} Galloway, G., Senovilla, J.,
 Singularity theorems based on trapped submanifolds of arbitrary co-dimension.
 {\em Classical Quantum Gravity} 27(15), 152002, 10 pp, 2010.

\bibitem{GT87}
Geroch, R., Traschen, J., Strings and other distributional sources in general relativity
{\em Phys. Rev.} D {\bf 36} 1017--1031, 1987.

\bibitem{GKS20}
Grant. J.D.E., Kunzinger, M., S\"{a}mann, C., Steinbauer, R.,
The future is not always open. {\em Lett.\ Math.\ Phys.}, 110(1):83--103, 2020.

\bibitem{GKOS} Grosser, M., Kunzinger, M.,  Oberguggenberger, M.,
Steinbauer, R., \emph{Geometric theory of generalized functions},
Kluwer, Dordrecht, 2001. 

\bibitem{HE} Hawking, S.W., Ellis, G.F.R., {\em The large scale structure of
space-time.} Cambridge University Press, 1973.

\bibitem{Har02}
Hartman, P., {\em Ordinary differential equations}, volume~38 of {\em Classics in
Applied Mathematics}, SIAM 2002

\bibitem{HW51}
Hartman, P., Wintner, A.,
On the problems of geodesics in the small,
\newblock {\em Amer.\ J.\ Math.}, 73:132--148, 1951.

\bibitem{HP} Hawking, S.\ W., Penrose, R., The singularities of gravitational collapse and cosmology,
{\em Proc.\ Roy.\ Soc.\ Lond.\ A.} 314, 529--548, 1970.

\bibitem{HMSSS} Hedicke, J.,  Minguzzi, E., Schinnerl, B.,  Steinbauer, R., Suhr, S.,
Causal simplicity and (maximal) null pseudoconvexity, submitted, 
{\tt arxiv:2105.08998 [gr-qc]}

\bibitem{Hoe90} H\"ormander, L., {\em The Analysis of Linear Partial Differential Operators I} Springer, Berlin, 1990.

\bibitem{H66} Horváth, J., {\em Topological vector spaces and distributions. Vol. I.} Addison-Wesley Publishing Co., Reading, 1966.

\bibitem{Krie} Kriele, M., {\em Spacetime.} Springer, Berlin, 2001.

\bibitem{KS18}  Kunzinger, M., S\"amann, C., Lorentzian length spaces. 
{\em Ann.\ Global Anal.\ Geom.} 54 (2018), no. 3, 399--447.

\bibitem{KSS} Kunzinger, M., Steinbauer, R., Stojkovi\'c, M., The exponential
 map of a $C^{1,1}$-metric. {\em Differential Geom. Appl.} 34, 14--24, 2014.

\bibitem{KSSV} Kunzinger, M., Steinbauer, R., Stojkovi\'c, M., Vickers, J.A.,
A regularisation approach to causality theory for $C^{1,1}$-Lorentzian metrics.
{\em Gen.\ Relativ.\ Gravit.} 46:1738, 18 pp.\ 2014.

\bibitem{hawkingc11} Kunzinger, M., Steinbauer, R., Stojkovi\'c, Vickers, J.A.,
Hawking's singularity theorem for $C^{1,1}$-metrics.
{\em Classical Quantum Gravity} 32, 075012, 19pp, 2015

\bibitem{penrosec11} Kunzinger, M., Steinbauer, R., Vickers, J.A.,
The Penrose singularity theorem in regularity $C^{1,1}$
{\em Classical Quantum Gravity 32}, 155010, 12pp, 2015

\bibitem{LLS20}
Lange, C., Lytchak, A., S\"amann C.,
Lorentz meets Lipschitz, {\em Adv.\ Theoret.\ and Math.\ Physics}, to appear.

\bibitem{LeFMar07}LeFloch, P.\ G., Mardare, C.,
Definition and stability of Lorentzian manifolds with distributional curvature,
{\em Port.\ Math.\ (N.S.)} 64 (2007), no. 4, 535--573.

\bibitem{Mar68} Marsden, J.\ E., Generalized Hamiltonian mechanics: A mathematical exposition of non-smooth dynamical systems and 
classical Hamiltonian mechanics. {\em Arch.\ Rational Mech.\ Anal.} 28 (1967/68), 323--361.

\bibitem{Minguzzi_LimitCurveThms}
Minguzzi, E., Limit curve theorems in Lorentzian geometry, {\em  J.\ Math.\ Phys.} 49 (2008),
no.\ 9, 092501.

\bibitem{M}
 Minguzzi, E., Convex neighborhoods for {L}ipschitz connections and sprays,
  {\em Monatsh. Math.} 177, no.~4, 569--625, 2015.

\bibitem{MinSan08}
 Minguzzi, E., Sanchez, M., The causal hierarchy of spacetimes, {\em Recent
  developments in pseudo Riemannian geometry}, ESI Lect.\ Math.\ Phys.,
  299--358, p.~0609119, 2008.

\bibitem{ladder} Minguzzi, E., S\'anchez, M.,
 The causal hierarchy of spacetimes. Recent developments in pseudo-Riemannian geometry, 299--358,
 ESI Lect.\ Math.\ Phys., Eur.\ Math.\ Soc., Z\"urich, 2008.

\bibitem{MinguzziLivingReview} Minguzzi, E., Lorentzian causality theory, 
{\em Living reviews in relativity}, vol.\ 22:3, 1--202, Springer, 2019.

\bibitem{Min_closed_cone}  Minguzzi, E., Causality theory for closed cone structures with applications, 
\emph{Rev.\ Math.\ Phys.} 31 (2019), no. 5, 1930001, 139 pp.

\bibitem{N13} Nigsch, E.\ A., Bornologically isomorphic representations of distributions on manifolds,
\emph{Monatsh.\ Math.} (2013) 170:49--63.

\bibitem{O86} Oberguggenberger, M., \"Uber Folgenkonvergenz in lokalkonvexen R\"aumen, 
\emph{Math.\ Nachr.} 129 (1986), 219--234.

\bibitem{ohta14} Ohta, S.,
Examples of spaces with branching geodesics satisfying the curvature-dimension condition,
\emph{Bull.\ Lond.\ Math.\ Soc.} 46 (2014), no. 1, 19--25.

\bibitem{ON83} O'Neill, B., {\em Semi-Riemannian Geometry. With Applications to
Relativity.}\ Pure and Applied Mathematics 103. Academic Press, New York, 1983.

\bibitem{P65} Penrose, R., Gravitational collapse and space-time singularities.
{\em Phys.\ Rev.\ Lett.} 14, 57--59, 1965.

\bibitem{S14}
S\"amann, C., Global hyperbolicity for spacetimes with continuous
metrics, {\em Ann.\ Henri Poincar\'e} 17, no.~6, 1429--1455, 2016.

\bibitem{SS18} S\"amann, C., Steinbauer, R., On geodesics in low regularity,
{\em J.\ Phys.\ Conf.\ Ser.} 968 (2018), 012010, 14 pp.

\bibitem{SS21}
Schinnerl, B., Steinbauer, R.,
A Note on the Gannon-Lee Theorem,
Lett.\ Math.\ Phys., 111, 142 (2021).

\bibitem{shio} Shiohama, K.,
An introduction to the geometry of Alexandrov spaces,
Lecture Notes Series, 8. Seoul National University, 1993.

\bibitem{Sbierski1}
Sbierski, J., The $C^0$-inextendibility of the Schwarzschild spacetime and the spacelike diameter in Lorentzian geometry. \emph{J.\ Differential Geom.} 108, 319--378, 2018.

\bibitem{Seno1} Senovilla, J.M.M., Singularity Theorems and Their Consequences. {\em Gen.\ Rel.\ Grav.} 30(5), 701-848, 1998.

\bibitem{Seno2} Senovilla, J.M.M., A critical appraisal of the singularity theorems. {\em arXiv:2108.07296 [gr-qc]}, 2021.

\bibitem{SenGar}
Senovilla, J.M.M., Garfinkle, D., The 1965 {P}enrose singularity theorem, {\em Classical Quantum Gravity} 32, no.~12, 124008, 45, 2015.

\bibitem{S08}  Steinbauer, R., A note on distributional semi-Riemannian geometry. {\em Novi Sad J. Math.} 38 (2008), no. 3, 189--199.

\bibitem{SV09} Steinbauer, R., Vickers, J., On the Geroch-Traschen class of metrics. {\em Classical Quantum Gravity} 26(6), 065001, 19pp, 2009. 

\bibitem{vil09}  Villani, C., Optimal transport. Old and new, Grundlehren der Mathematischen Wissenschaften, Springer-Verlag, Berlin, 2009.
 
\end{thebibliography}
\end{document}